\DocumentMetadata{
  lang=en-US,
  pdfversion=2.0,
  testphase={sec,toc,title,table}
}
\documentclass[12pt]{article}

\usepackage{comment}
\usepackage[T1]{fontenc}
\usepackage{calc}
\usepackage{lmodern}
\usepackage{microtype}
\usepackage[margin=1in]{geometry}
\usepackage{amsmath,amssymb,mathtools,amsthm}
\usepackage{booktabs,array,tabularx,longtable,multirow}
\usepackage{float}
\usepackage{graphicx}
\usepackage{enumitem}
\usepackage{xspace}
\usepackage{aliascnt}
\usepackage[numbers,sort&compress]{natbib}
\usepackage{xcolor}
\usepackage{tikz}
\usetikzlibrary{arrows.meta,positioning,fit,calc,decorations.pathreplacing}
\usepackage{pdflscape}
\usepackage{tagpdf}
\usepackage{hyperref}
\usepackage[nameinlink,capitalise,noabbrev]{cleveref}

\newcommand{\MaxNfourDM}{\textnormal{\textsc{Max-N4DM}}\xspace}

\hypersetup{
  colorlinks=true,
  bookmarksopen=true,
  bookmarksopenlevel=2,
  bookmarksnumbered=true,
  linkcolor=blue!45!black,
  citecolor=blue!45!black,
  urlcolor=blue!45!black,
  pdftitle={Symmetric Numerical Three-Dimensional Matching: Intractability and Inapproximability},
  pdfauthor={Zhi-Long Chen and Nicholas G. Hall},
  pdfkeywords={numerical three-dimensional matching, symmetric item classes, strong NP-completeness, unary encoding, approximation gap, L-reduction, APX-hardness},
  pdfdisplaydoctitle=true
}

\setlist{nosep,leftmargin=2em}
\allowdisplaybreaks
\newtheorem{theorem}{Theorem}
\newaliascnt{lemma}{theorem}
\newtheorem{lemma}[lemma]{Lemma}
\aliascntresetthe{lemma}
\newaliascnt{proposition}{theorem}
\newtheorem{proposition}[proposition]{Proposition}
\aliascntresetthe{proposition}
\newaliascnt{corollary}{theorem}
\newtheorem{corollary}[corollary]{Corollary}
\aliascntresetthe{corollary}
\theoremstyle{definition}
\newaliascnt{definition}{theorem}
\newtheorem{definition}[definition]{Definition}
\aliascntresetthe{definition}
\newaliascnt{problem}{theorem}

\aliascntresetthe{problem}
\newaliascnt{remark}{theorem}
\newtheorem{remark}[remark]{Remark}
\aliascntresetthe{remark}
\newaliascnt{observation}{theorem}

\aliascntresetthe{observation}

\newcommand{\N}{\mathbb N}
\newcommand{\NN}{\mathbb N}
\newcommand{\uplusm}{\mathbin{\uplus}}
\newcommand{\mset}[1]{\{\!\!\{#1\}\!\!\}}
\newcommand{\wt}{\operatorname{wt}}
\newcommand{\degH}{\operatorname{deg}_{\mathcal H}}
\newcommand{\OPT}{\operatorname{OPT}}
\newcommand{\val}{\operatorname{val}}
\newcommand{\sat}{\operatorname{sat}}
\newcommand{\pos}[1]{(#1)_{+}}
\newcommand{\ThreeDM}{\textnormal{\textsc{3DM}}\xspace}
\newcommand{\MaxThreeDM}{\textnormal{\textsc{Max-3DM}}\xspace}
\newcommand{\MaxThreeDMThree}{\textnormal{\textsc{Max-3DM-3}}\xspace}
\newcommand{\NthreeDM}{\textnormal{\textsc{N3DM}}\xspace}
\newcommand{\MaxNthreeDM}{\textnormal{\textsc{Max-N3DM}}\xspace}
\newcommand{\SNthreeDM}{\textnormal{\textsc{SN3DM}}\xspace}
\newcommand{\SymNthreeDM}{\textnormal{\textsc{SN3DM}}\xspace}
\newcommand{\MaxSNthreeDM}{\textnormal{\textsc{Max-SN3DM}}\xspace}
\newcommand{\APX}{\textnormal{APX}\xspace}
\newcommand{\NP}{\textnormal{NP}\xspace}
\newcommand{\Rset}{\mathcal R}
\newcommand{\Pairset}{\mathcal P}
\newcommand{\barrier}{\bot}
\newcommand{\Kvec}{\mathbf K}
\newcommand{\wvec}{\boldsymbol\omega}
\newcommand{\omegavec}{\boldsymbol\omega}

\newcommand{\VersionNumber}{1.0}
\newcommand{\VersionDate}{July 31, 2026}

\numberwithin{equation}{part}
\numberwithin{table}{part}
\numberwithin{figure}{part}
\numberwithin{lemma}{part}
\numberwithin{theorem}{part}
\numberwithin{definition}{part}
\numberwithin{proposition}{part}
\numberwithin{corollary}{part}
\numberwithin{problem}{part}
\numberwithin{remark}{part}
\numberwithin{observation}{part}

\tagpdfsetup{
  activate,
  para/tagging,
  activate/spaces,
  table/tagging=false
}

\begin{document}

\hypersetup{pageanchor=false}

\begin{titlepage}
\vspace*{0.55in}

{\centerline{\LARGE\bfseries Symmetric Numerical Three-Dimensional Matching:}}
\vspace{0.22in}
{\centerline{\LARGE\bfseries Intractability and Inapproximability}}

\vspace{1.75in}
\centerline{{\Large Zhi-Long Chen$^*$}}
\vspace{0.3in}
\centerline{{\Large Nicholas G. Hall$^{**}$}}

\vspace{1.2in}
\noindent {\large $^*$ Smith School of Business, University of Maryland; zlchen@umd.edu\par}
\vspace{0.1in}
\noindent {\large $^{**}$ Fisher College of Business, The Ohio State University; hall.33@osu.edu\par}

\vfill

 \centerline{{\Large Version 1.0: July 31, 2026 \par}}

\end{titlepage}

\baselineskip=20pt
\pagestyle{empty}
\newpage
	
\section*{Results at a Glance}

\bigskip
\bigskip

\Large

\noindent Symmetric Numerical Three-Dimensional Matching has three disjoint labeled classes whose weight multisets are identical and one common target. 

\bigskip

\noindent Part~I proves unary \NP-completeness, and hence strong \NP-completeness. 

\bigskip

\noindent Part~II proves a unary perfect-completeness gap, excludes a PTAS unless P=NP, and gives a self-contained polynomial-time $3$-approximation.  

\bigskip

\noindent Part~III gives a standard L-reduction, proving standard APX-hardness and, together with the $3$-approximation, APX-completeness.

\normalsize
	
\newpage

\baselineskip=20pt

\section*{Abstract}
Symmetric Numerical Three-Dimensional Matching (\SNthreeDM) asks whether three disjoint labeled classes that carry identical weight multisets can be partitioned into class-transversal triples having one common target sum.  Across the three parts, the common conceptual problem is role recovery under marginal symmetry: the three destination classes expose identical numerical catalogues, so the asymmetric source roles must be reconstructed internally from incidence structure.  This tutorial develops three complementary hardness results for that symmetry restriction.  Part~I gives a unary-polynomial reduction from classical \NthreeDM.  It represents the three asymmetric source roles by ports inside one common occurrence set, uses private items and a uniquely forced filler system to reserve one main incidence at every port, restores the three output-class labels by bipartite edge coloring, and packages four logical coordinates into positive integers by a no-carry mixed-radix encoding.  Consequently, \SNthreeDM is strongly \NP-complete.

Part~II studies the maximum-cardinality problem \MaxSNthreeDM.  Strong \NP-hardness alone separates a perfect value $M$ only from at most $M-1$ and therefore does not exclude a polynomial-time approximation scheme.  Starting from Petrank's perfect-completeness gap for bounded three-dimensional matching, two explicit numerical compilers yield a unary perfect-completeness gap for \MaxNthreeDM.  A defect-stability lemma for the symmetric construction proves that a symmetric matching of size $13n-d$ yields a source matching of size at least $n-21d$, where $n$ is the cardinality of the given multisets  and $d$ is a symmetric defect.  Hence there is a constant $\varepsilon_{\mathrm{SN}}>0$ for which it is \NP-hard to distinguish perfect symmetric instances from instances whose optimum is at most $(1-\varepsilon_{\mathrm{SN}})$ times perfect.  Thus \MaxSNthreeDM has no polynomial-time approximation scheme unless $\mathrm P=\mathrm{NP}$.  Every maximal legal triple matching is a $3$-approximation, so the problem belongs to \APX.

Part~III supplies a standard approximation-preserving reduction that Part~II does not claim.  An exact pair compiler and a one-live-port separation map degree-three Maximum Three-Dimensional Matching to unary \MaxSNthreeDM with
\[
  \OPT_{\mathrm{SN}}=\Gamma +\OPT_{3\mathrm{DM}},
\]
where $\Gamma$ is a fixed destination offset and every destination solution is decoded with one-for-one optimum-error transfer.  The resulting L-reduction has the explicit conservative constants $\alpha=764$ and $\beta=1$.  Consequently, \MaxSNthreeDM is APX-hard under a standard L-reduction and, together with the $3$-approximation, is APX-complete under that convention.  Parts~II and III strengthen the approximation analysis in different dimensions.  Part~II gives the broader grouped defect-stability theorem and a direct perfect-completeness gap; Part~III gives the actual-optimum inequalities of a standard L-reduction on the specially separated instances it constructs.  Neither result contains the other.  Detailed ``yes'' and ``no'' instances in all three parts expose the constructions and audit the arithmetic.

\vspace{0.35in}
\noindent\textbf{Keywords:} numerical three-dimensional matching; symmetric item classes; strong \NP-completeness; unary encoding; perfect-completeness gap; L-reduction; APX-hardness; PTAS-reduction; mixed-radix encoding; bipartite edge coloring; one-live-port separation.

\newpage
\pagestyle{plain}
\pagenumbering{arabic}
\hypersetup{pageanchor=true}
\setcounter{tocdepth}{1}
\hypertarget{maincontents}{}
\tableofcontents
\par\bigskip
\noindent\textbf{Direct part links:}
\hyperref[part:exact]{Part I: exact intractability}\quad\textbar\quad
\hyperref[part:gap]{Part II: direct approximation gap}\quad\textbar\quad
\hyperref[part:lred]{Part III: standard L-reduction}.
\clearpage

\section{Purpose, Audience, and Reading Paths}\label{sec:purpose}

This document has three linked mathematical purposes.  Part~I establishes the exact intractability of \SNthreeDM, the restriction of Numerical Three-Dimensional Matching in which the three item classes remain disjoint and labeled but have identical numerical weight multisets.  Part~II proves a unary perfect-completeness approximation gap for the natural maximum-cardinality version.  Part~III proves standard APX-hardness by an explicit L-reduction and gives an exact affine relationship between source and destination optima.

The document is also intended as a pedagogical online tutorial on the design and auditing of hardness reductions.  The exact proof explains how asymmetric source roles can be represented inside three numerically identical output classes.  The gap proof explains why an exact reverse implication is not enough for inapproximability: a nearly perfect output solution must be decoded quantitatively, with every missing triple charged to only a constant number of lost source triples.  The L-reduction explains a still stronger reduction requirement: error must be measured from the actual destination optimum for every feasible destination solution, not merely from a theoretical perfect upper bound.  The presentation therefore separates role admissibility, source arithmetic, locality, incidence counting, recoloring, mixed-radix packing, defect repair, affine offsets, and solution decoding rather than treating the large encoded integers as a black box.

The intended audience includes researchers and graduate students in theoretical computer science, combinatorial optimization, discrete mathematics, scheduling complexity, and related areas.  Familiarity with \NP-completeness and polynomial-time reductions is assumed, but prior experience with numerical matching gadgets, perfect-completeness gaps, or L-reductions is not.  Detailed explanations follow the principal figures and tables, and the worked examples are designed to support both conceptual reading and complete arithmetic verification.

The tutorial has two complementary layers.  The \emph{logical layer} explains why each reduction works and identifies the invariant established at every stage.  The \emph{audit layer} records the finite role certificates, polynomial bounds, complete numerical tables, and example calculations needed for independent verification.  A first reading may follow the logical layer and treat the long tables as certificates; a complete verification can then return to the audit layer without changing the conceptual order of the proof.

The work makes three principal contributions.  Part~I proves unary \NP-completeness, and hence strong \NP-completeness, of \SNthreeDM.  Part~II proves a direct constant perfect-completeness gap for \MaxSNthreeDM, the resulting no-PTAS theorem, and membership in \APX.  Part~III gives an L-reduction from \MaxThreeDMThree with constants $\alpha=764$ and $\beta=1$, thereby proving standard APX-hardness and APX-completeness.  Across the three parts, the document develops a reusable framework based on ports, private items, role filters, two-moment locality, degree budgets, copy-forgetting and recoloring, no-carry encoding, deficit equations, occurrence-level repair, pair compilers, one-live-port separation, and exact affine offsets.

\subsection*{A global map of the tutorial}

\begingroup
\renewcommand{\thefigure}{\arabic{figure}}
\begin{figure}[H]
\centering
\begin{tikzpicture}[
  main/.style={draw,rounded corners,align=center,minimum height=1.12cm,text width=4.25cm,fill=blue!3,font=\small},
  branch/.style={draw,rounded corners,align=center,minimum height=1.35cm,text width=5.05cm,fill=gray!5,font=\small},
  result/.style={draw,rounded corners,align=center,minimum height=1.00cm,text width=5.25cm,fill=blue!3,font=\small},
  arr/.style={-{Latex[length=2.3mm]},thick}
]
\node[main] (p1) at (0,0) {Part~I: exact feasibility\\unary and strong \NP-completeness};
\node[branch] (p2) at (-3.45,-3.0) {Part~II: perfect-bound stability\\direct endpoint gap, no PTAS, and a $3$-approximation};
\node[branch] (p3) at (3.45,-3.0) {Part~III: actual-optimum stability\\standard L-reduction and exact affine error transfer};
\node[result] (all) at (0,-5.7) {Combined classification\\\MaxSNthreeDM is APX-complete under the stated convention};
\draw[arr] (p1.south west) -- node[left,font=\scriptsize,align=right]{stabilize defects from\\the perfect upper bound} (p2.north);
\draw[arr] (p1.south east) -- node[right,font=\scriptsize,align=left]{redesign for\\actual-optimum transfer} (p3.north);
\draw[arr] (p2) -- (all);
\draw[arr] (p3) -- (all);
\end{tikzpicture}
\caption{Logical relationship among the three parts.  The two lower branches strengthen the exact reduction in different ways and are not successive containment steps.}\label{fig:three-part-map}
\end{figure}
\endgroup

\paragraph{How to read \cref{fig:three-part-map}.}
Part~I is the common exact foundation.  The left branch preserves a constant distance from a theoretical perfect upper bound and supports a direct promise-gap theorem.  The right branch changes the construction so that every feasible destination solution can be compared with the actual destination optimum, as an L-reduction requires.  The arrows into the final box combine the $3$-approximation supplied by Part~II and the standard APX-hardness supplied by Part~III; the two branches should not be read as one theorem subsuming the other.

\subsection*{Proof-dependency matrix}
\addcontentsline{toc}{subsection}{Proof-dependency matrix}

\begingroup
\renewcommand{\thetable}{\arabic{table}}
\scriptsize
\renewcommand{\arraystretch}{1.16}
\begin{longtable}{@{}>{\raggedright\arraybackslash}p{0.14\textwidth}>{\raggedright\arraybackslash}p{0.25\textwidth}>{\raggedright\arraybackslash}p{0.25\textwidth}>{\raggedright\arraybackslash}p{0.25\textwidth}@{}}
\caption{Which mechanisms support exact feasibility, a perfect-bound gap, and actual-optimum error transfer.}\label{tab:proof-dependency}\\
\toprule
Mechanism & Part~I: exact feasibility & Part~II: perfect-bound gap & Part~III: actual-optimum transfer \\
\midrule
\endfirsthead
\caption[]{Which mechanisms support exact feasibility, a perfect-bound gap, and actual-optimum error transfer. (continued)}\\
\toprule
Mechanism & Part~I: exact feasibility & Part~II: perfect-bound gap & Part~III: actual-optimum transfer \\
\midrule
\endhead
\midrule\multicolumn{4}{r}{\emph{continued on next page}}\\
\endfoot
\bottomrule
\endlastfoot
Role recovery & Role codes and ports internalize the source's first, second, and third roles. & The same finite role certificate remains valid when some incidences are missing. & Compiler roles and separated live ports expose exactly the choices needed by the two-stage decoder. \\
No-carry encoding & Four logical coordinates are packed into positive integers without allowing carries. & Both numerical compilers and the symmetric scaffold are scalarized separately; a magnitude ledger preserves unary size. & A six-coordinate pair compiler and a separate four-coordinate symmetric scalarization preserve two exact modules. \\
Locality & The group and group-square coordinates force the private items of every filler into one group. & The same two-moment lock continues to localize fillers in a defective partial solution. & One-live groups and the local filler certificate isolate every charge to one useful port role. \\
Recoloring & A $3$-regular incidence graph decomposes into three perfect matchings that restore the outer labels. & A maximum-degree-three incidence graph is edge-colored to lift a partial hypergraph without cardinality loss. & The same partial recoloring theorem lifts every repaired target hypergraph without changing its value. \\
Defect accounting & Zero defect forces the exact filler multiplicities and leaves one main incidence at every port. & A deficit of $d$ symmetric triples causes at most $21d$ source loss, including occurrence-level repair for repeated values. & Missing fillers bound port overuse; disjoint one-live groups give one-for-one destination-to-source error transfer. \\
Exact pair completion & Not needed: the source is already numerical matching. & Not required: the endpoint compilers preserve perfect completeness but do not create an affine optimum identity. & A fallback contributes one baseline unit, whereas a completed pair contributes two; hence $\OPT_{\mathrm N}=r+\OPT_{3\mathrm{DM}}$. \\
One-live-port separation & Not needed for exact recovery. & Deliberately not imposed, so the theorem applies to the broader grouped image of the original symmetric construction. & Exactly one useful port per group prevents a single local filler deficit from being charged to several source roles. \\
Affine objective identity & The conclusion is a feasibility equivalence. & Error is measured from the perfect bound $M=13n$. & The fixed offset gives $\OPT_{\mathrm{SN}}=36m+r+\OPT_{3\mathrm{DM}}$ and cancels in the second L-inequality. \\
\end{longtable}
\endgroup

\paragraph{How to read \cref{tab:proof-dependency}.}
Read across a row to see how one mechanism changes as the required reverse guarantee becomes stronger.  ``Not needed'' means that the mechanism is unnecessary for that part's stated theorem, not that it is unavailable in principle.  The lower four rows explain the main distinction between Parts~II and III: Part~II controls loss from a perfect upper bound on a broader grouped image, whereas Part~III adds exact pair completion, one-live separation, and an affine offset to compare every solution with the actual optimum.

\clearpage
\begingroup
\renewcommand{\thefigure}{\arabic{figure}}
\begin{figure}[H]
\centering
\begin{tikzpicture}[
  objectbox/.style={draw,rounded corners,align=center,inner sep=7pt,text width=4.1cm,minimum height=1.35cm,font=\small},
  objectarrow/.style={-{Latex[length=2.2mm]},thick},
  labeltext/.style={font=\scriptsize,align=center}
]
\node[objectbox] (source) at (-5.2,0) {bounded three-dimensional matching source\\\scriptsize Parts~II and III begin here};
\node[objectbox] (numerical) at (0,0) {numerical matching instance\\\scriptsize Part~I begins here};
\node[objectbox] (groups) at (5.2,0) {ports and private items in one common occurrence set};
\node[objectbox] (incidence) at (-2.6,-2.8) {degree-three target-sum incidence system};
\node[objectbox] (symmetric) at (2.6,-2.8) {three identical labeled output classes};
\draw[objectarrow] (source) -- (numerical);
\draw[objectarrow] (numerical) -- (groups);
\draw[objectarrow] (groups.south west) -- (incidence.north east);
\draw[objectarrow] (incidence) -- (symmetric);
\end{tikzpicture}
\caption{Object genealogy shared by the three proofs.  The scalar integers package a logical construction whose decisive work is performed by roles, locality, incidence budgets, and the decoder.}\label{fig:object-genealogy}
\end{figure}
\endgroup

\paragraph{How to read \cref{fig:object-genealogy}.}
The arrows separate tasks that should not be conflated.  The source instance supplies the combinatorial choice.  The numerical layer records that choice in source-sum or pair-completion equations.  The symmetric scaffold hides the asymmetric roles inside one common catalogue copied identically into the three labeled classes.  Finally, the reverse analysis determines whether the result is exact recovery, bounded loss from a perfect upper bound, or error transfer from the actual optimum.

\subsection*{Notation at a glance}
\addcontentsline{toc}{subsection}{Notation at a glance}

The three parts use related but not identical reductions, and several symbols have part-specific roles.  The following table records the principal global quantities, optimum functions, defect parameters, and affine offsets.  Local indices and temporary construction variables remain defined where they are first used.

\begingroup
\renewcommand{\thetable}{\arabic{table}}
\small
\renewcommand{\arraystretch}{1.15}
\begin{longtable}{@{}>{\raggedright\arraybackslash}p{0.11\textwidth}>{\raggedright\arraybackslash}p{0.24\textwidth}>{\raggedright\arraybackslash}p{0.57\textwidth}@{}}
\caption{Principal notation across the three parts.}\label{tab:notation-at-glance}\\
\toprule
Scope & Symbol & Meaning \\
\midrule
\endfirsthead
\caption[]{Principal notation across the three parts. (continued)}\\
\toprule
Scope & Symbol & Meaning \\
\midrule
\endhead
\midrule\multicolumn{3}{r}{\emph{continued on next page}}\\
\endfoot
\bottomrule
\endlastfoot
All parts & $S,S^X,S^Y,S^Z$ & The common underlying occurrence set and its three disjoint labeled output copies.  The three copies have identical weight multisets. \\
All parts & ports and private items & Ports carry source choices; private items close the local filler scaffold and do not occur in main triples. \\
All parts & $F_0,F_1,\ldots,F_9$ & The ten permitted role patterns.  $F_0$ is the main pattern; $F_1,\ldots,F_9$ are filler patterns. \\
Parts~I--II & $A,B,C,t,n$ & The three source occurrence classes, their common target, and the number of source items in each class. \\
Parts~I--II & $\Kvec,K^\star,L$ & The logical vector target, its scalar encoded target, and the mixed-radix base used to prevent carries. \\
Part~II & $M=13n$ & The natural perfect upper bound on the number of symmetric triples in the constructed instance. \\
Part~II & $d$ & The symmetric defect: a partial solution has value $M-d$ and therefore leaves $3d$ output incidences unused. \\
Part~II & $\gamma,\varepsilon_{\mathrm N},\varepsilon_{\mathrm{SN}}$ & Respectively, the bounded matching source-gap constant, the compiled numerical gap, and the final symmetric gap. \\
Part~III & $H=(U,V,W;E),p,k^\star$ & The degree-three source hypergraph, its number of edges, and its maximum matching size. \\
Part~III & $\Pairset,r$ & The set of represented $(u_i,v_j)$ pairs and its cardinality.  Each pair contributes a fixed baseline unit in the exact compiler. \\
Part~III & $\OPT_{3\mathrm{DM}},\OPT_{\mathrm N},$\newline $\OPT_{\mathrm{SN}}$ & The source, intermediate numerical, and final symmetric optimum functions in the two affine identities. \\
Part~III & $m,n=3m$ & The padded numerical class size and the number of separated symmetric groups.  Here $n$ is local to Part~III and is not the Part~I--II source-class size. \\
Part~III & $\Gamma=36m+r;\ \alpha,\beta$ & The fixed destination offset and the two L-reduction constants.  The proof uses $\alpha=764$ and $\beta=1$. \\
\end{longtable}
\endgroup

\paragraph{How to read \cref{tab:notation-at-glance}.}
Begin with the scope column.  A symbol should be interpreted only within the listed part or parts; in particular, $n$ is reused locally in Part~III after the pair compiler has produced padded class size $m$.  The table records quantities that recur across sections.  One-time indices, shifts, and temporary variables remain beside the equations that define them so that this map does not become a second glossary.

\subsection*{Accessibility and table-reading conventions}
\addcontentsline{toc}{subsection}{Accessibility and table-reading conventions}

The distributed PDF is built with document-language and title metadata and a tagged structure tree for headings, the table of contents, paragraphs, and interword spaces.  The diagrams are accompanied by descriptive captions and linear ``How to read'' explanations so that their logical content is available without relying on geometry or color alone.  The tables remain live, selectable, and searchable text.

The largest audit tables combine repeated headers, multi-page continuation logic, mathematical notation, and tightly related columns.  Current automated tagging does not reliably expose every such longtable as a complete cell-level accessibility structure, so this edition does not claim independent PDF/UA certification.  Instead, every dense table is introduced by surrounding proof text and, where useful, followed by an adjacent ``How to read'' paragraph that gives a linear first-pass route through its rows and columns.  The prose route states what the table proves; the table itself remains available for detailed numerical verification.

\subsection*{Terminological convention}
An \emph{item} is a labeled occurrence; different items may have the same numerical weight.  A \emph{source edge} is an edge of a three-dimensional matching instance.  A \emph{numerical triple} is a class-transversal target-sum triple in \NthreeDM or \SNthreeDM.  After the three outer class labels are forgotten, the same selected object is viewed as a three-incidence \emph{hyperedge}.  The adjectives \emph{main} and \emph{filler} describe the object's logical role and may therefore modify either ``triple'' or ``hyperedge'' according to the representation currently being used.  Thus ``edge,'' ``triple,'' and ``hyperedge'' are not global synonyms; the chosen term signals the current representation.

\subsection*{Relationship between Parts II and III}

Parts~II and III answer related but nonidentical questions, and neither theorem should be described as containing the other.  Their main differences are summarized in \cref{tab:partII-partIII}.

\begingroup
\renewcommand{\thetable}{\arabic{table}}
\begin{table}[H]
\centering
\caption{The complementary contributions of Parts II and III.}\label{tab:partII-partIII}
\small
\renewcommand{\arraystretch}{1.18}
\begin{tabularx}{\textwidth}{@{}p{0.19\textwidth}XX@{}}
\toprule
Feature & Part~II: direct gap theorem & Part~III: standard L-reduction \\
\midrule
Source statement & A hard perfect-completeness gap, compiled into unary \MaxNthreeDM & APX-hard degree-three \MaxThreeDMThree \\
Destination family & The broader grouped numerical inputs accepted by the original symmetric construction & Deliberately one-live-port-separated instances produced by the new compiler \\
Reverse guarantee & $13n-d$ symmetric triples yield at least $n-21d$ source triples; error is measured from the perfect upper bound & Every feasible destination solution has a decoded source solution whose error is at most the destination optimum error \\
Characteristic result & An explicit perfect-completeness promise gap and a direct no-PTAS theorem & Exact affine optimum identity, standard APX-hardness, and an explicit PTAS-reduction \\
Contribution profile & Broader grouped image; direct endpoint gap and a transparent quantitative threshold & Separated image family; standard L-reduction and actual-optimum, solution-by-solution transfer \\
\bottomrule
\end{tabularx}
\end{table}
\endgroup

\paragraph{How to read \cref{tab:partII-partIII}.}
First read down each result column to understand one theorem on its own.  Then compare across a row: the source and destination rows describe scope, the reverse-guarantee row states the quantitative invariant, and the final rows state the theorem that should be cited.  The table deliberately does not place Parts~II and III on one strength scale; it shows why the broader endpoint-gap theorem and the more specialized actual-optimum reduction are complementary.

The comparison is dimensional rather than a single ranking.  Part~II does not establish the two affine inequalities of an L-reduction, because its defect is measured from the natural perfect bound rather than from the actual destination optimum.  Part~III changes the construction by separating the source roles so that each local filler deficit is charged at most once; this yields an exact one-for-one decoder, but on the more specialized one-live-port-separated image family.  Conversely, Part~III does not replace Part~II's broader defect-stability theorem or its specifically formulated perfect-completeness promise gap.  A direct gap proof can also give a more transparent, and potentially numerically sharper, explicit inapproximability ratio than the constant inherited through an L-reduction.  Part~II should therefore be cited for the direct endpoint-gap theorem and Part~III for standard APX-hardness and solution-by-solution error transfer.  Neither theorem contains the other.

An L-reduction is a PTAS-transfer mechanism, not by itself a no-PTAS theorem.  Suppose an L-reduction from a maximization problem $A$ to a maximization problem $B$ has constants $\alpha,\beta>0$, and $Y$ is a $(1-\delta)$-approximate solution of the constructed $B$-instance.  Then
\begin{align*}
 \OPT_A-\val_A(g(Y))
 &\le \beta\bigl(\OPT_B-\val_B(Y)\bigr)\\
 &\le \beta\delta\OPT_B
 \le \alpha\beta\delta\OPT_A.
\end{align*}
Hence $\val_A(g(Y))\ge(1-\alpha\beta\delta)\OPT_A$, and choosing $\delta=\varepsilon/(\alpha\beta)$ transfers a PTAS for $B$ to a PTAS for $A$.  The conclusion that Part~III rules out a PTAS therefore uses the approximation hardness of its particular source: bounded Maximum Three-Dimensional Matching is MAX SNP-complete and has no PTAS unless $\mathrm P=\mathrm{NP}$ \citep{Kann1991,ChlebikChlebikova2006}.  With $\alpha=764$ and $\beta=1$, a destination error parameter $\delta=\varepsilon/764$ would yield a source error parameter $\varepsilon$.

Different readers may follow different paths.
\begin{itemize}
  \item Readers interested mainly in the statements and significance of the results may read this front material, the theorem statements near the beginning of each part, and the three concluding sections.
  \item Readers checking the exact reduction should follow Part~I through the role certificate, degree equations, unary-size argument, and its fully worked examples.
  \item Readers interested in the direct inapproximability theorem should read the overview, source-gap transformation, defect-stability lemma, and gap-transfer calculation in Part~II; the one-group examples then provide a compact numerical audit.
  \item Readers interested in standard approximation-preserving reductions should begin with the Part~III proof roadmap, then study the exact pair compiler, one-live-port separation, local filler certificate, affine identity, and the source-to-destination examples.
  \item Readers adapting the methods to another symmetric problem should focus on the shortcuts and checklist below, then compare the zero-defect reverse proof in Part~I, the perfect-bound stability proof in Part~II, and the actual-optimum error transfer in Part~III.
\end{itemize}

The authors' interest in symmetric numerical matching is motivated in part by work on vehicle platooning, where repeated route segments and repeated position-dependent energy effects can produce identical marginal numerical profiles while preserving difficult coordination decisions \citep{ChenHallVehicle2026}.  The present results isolate the abstract matching structure behind that motivation; they do not claim that every application with repeated data has the full hardness construction.

This deposited edition is fixed under the version identifier given below. Corrections and substantive revisions will be issued under new version numbers, with the earlier editions retained for reference. Comments are invited.

\paragraph{2020 Mathematics Subject Classification.}
Primary: 68Q17 (computational difficulty of problems).  Secondary: 68Q15 (complexity classes), 68Q25 (analysis of algorithms and problem complexity), 68W25 (approximation algorithms), 90C27 (combinatorial optimization), 05C70 (factorization and matching).

\section{Introduction}\label{sec:introduction}

Numerical Three-Dimensional Matching (\NthreeDM) is a classical exact-sum matching problem.  An instance has three disjoint labeled classes of equally many item occurrences, a positive integer weight on each occurrence, and one common target.  A feasible solution partitions all occurrences into triples that contain one item from each class and whose weights sum to the target.  The problem is a standard source of strong \NP-completeness reductions because the common target simultaneously enforces three-way assignment and exact arithmetic \citep{GareyJohnson1978,GareyJohnson1979}.

This tutorial studies the restriction in which the three displayed weight multisets are identical.  The restriction is substantial but does not identify the classes: an $X$-occurrence, a $Y$-occurrence, and a $Z$-occurrence remain distinct objects even when they have the same weight.  A direct copy of an ordinary source instance is therefore inadequate.  It creates three copies of every source role in every output class, while the source reduction requires its first, second, and third roles to behave differently.

Part~I resolves the exact problem by moving those asymmetric roles inside one common weighted occurrence set.  Three port types carry the source values, private items regulate local incidence counts, and a finite role-code certificate permits one main pattern and nine filler patterns.  Two locality coordinates prevent private items from different source groups from splicing together.  After the output-class labels are temporarily forgotten, the private-degree equations force two filler incidences at every port and all three incidences at every private item.  One main incidence remains at each port and recovers the source matching.  Bipartite edge coloring then restores the three class labels, while a no-carry mixed-radix encoding converts the vector construction into ordinary positive integers.

Exact intractability is not the end of the analysis.  For the maximum-cardinality version, the decision reduction separates a perfect solution of size $M$ only from a solution of size at most $M-1$.  The relative difference $1/M$ vanishes, so strong \NP-hardness by itself is compatible with a polynomial-time approximation scheme.  Part~II supplies the missing quantitative ingredients.  A hard perfect endpoint for bounded three-dimensional matching is transferred through polynomially bounded numerical encodings, and the exact reverse proof is stabilized by introducing incidence deficits and an explicit repair for repeated-value overuse.  The resulting constant gap gives a direct unary inapproximability theorem and excludes a PTAS unless $\mathrm P=\mathrm{NP}$.

Part~III asks for a different kind of quantitative statement.  A standard L-reduction must compare the decoded source error with error from the \emph{actual} destination optimum for every feasible destination solution.  The Part~II estimate from the theoretical perfect upper bound does not provide that condition.  Part~III therefore introduces an exact pair compiler and one-live-port separation.  The separation ensures that a local filler deficit cannot subsidize excess capacity in several source roles.  This yields the affine identity $\OPT_{\mathrm{SN}}=36m+r+\OPT_{3\mathrm{DM}}$ and a decoder with no multiplicative loss.  The result gives standard APX-hardness while preserving unary size.

Parts~II and III are complementary.  Part~II proves the broader grouped defect-stability theorem and a direct perfect-completeness gap.  Part~III proves the actual-optimum inequalities of a standard L-reduction on a deliberately separated image family.  Neither statement formally contains the other, so both proofs and both families of examples are retained; the precise division of labor is summarized in \cref{tab:partII-partIII}.

The rest of the front material is organized as follows.  \Cref{sec:litreview} reviews the classical problem, exact intractability, and approximability.  \Cref{sec:shortcuts} explains why several tempting reductions or implications are insufficient.  \Cref{sec:reddes} gives a checklist for exact, gap-preserving, and L-reduction-style symmetry-preserving reductions.  Part~I proves strong \NP-completeness and gives full source-to-output examples.  Part~II proves the perfect-completeness gap, the no-PTAS consequence, and a $3$-approximation upper bound.  Part~III gives the standard L-reduction and fully worked ``yes'' and ``no'' instances.  The appendices provide a glossary, acknowledgments, citation information, and the references.

\section{Literature Review}\label{sec:litreview}

\subsection{Numerical Three-Dimensional Matching}

Three-dimensional matching entered the classical catalogue of combinatorial \NP-complete problems through the foundational reduction literature, including Karp's treatment of three-dimensional matching \citep{Karp1972}.  Numerical Three-Dimensional Matching appears in Garey and Johnson as a strongly \NP-complete numerical problem with one common target \citep{GareyJohnson1979}.  Its usefulness as a reduction source comes from combining exact three-way assignment with a uniform arithmetic requirement.

N3DM has been used to transmit common-capacity or synchronization constraints into a variety of structured optimization models.  Examples include two-machine flow shops with delays and unit-time operations \citep{YuHoogeveenLenstra2004}, equal processing times on one machine \citep{WeiYuan2019}, shared multi-processor scheduling \citep{DereniowskiKubiak2017}, periodic due-date scheduling \citep{ChoiMinPark2019}, finite pinwheel scheduling \citep{KanellopoulosEtAl2026}, and optimistic bilevel scheduling on parallel machines \citep{SchauEtAl2026}.  These applications illustrate why N3DM remains useful even when a receiving problem has repeated times, pooled resources, or periodic structure.

Such regularity does not settle the restriction studied here.  Equal processing times, repeated service opportunities, or identical machines in a receiving problem do not require the three input classes of N3DM to carry identical complete weight multisets.  The present symmetry question concerns the marginal numerical data of the matching instance itself, while retaining the disjoint labels and exact-use requirement.

\subsection{Intractability}

Strong \NP-completeness distinguishes hardness that persists when all numerical values are polynomially bounded, or equivalently when the relevant numerical data are written in unary \citep{GareyJohnson1978,GareyJohnson1979}.  This distinction is essential for the present work: the mixed-radix construction must preserve its logical coordinates without hiding exponential magnitude in the encoded integers.

The main structural ingredient beyond numerical encoding is bipartite matching theory.  Hall's theorem supplies perfect matchings in regular bipartite multigraphs \citep{Hall1935}; standard matching theory develops the resulting decompositions \citep{LovaszPlummer1986}; and direct algorithms are available for edge-coloring bipartite multigraphs \citep{Alon2003}.  In Part~I, this machinery restores the three output-class labels after the gadget has been analyzed as an unlabeled degree-three target-sum hypergraph.

The gap closed by Part~I is precise.  Classical N3DM allows its three source classes to carry visibly different numerical roles.  Part~I proves that this visible asymmetry is unnecessary: the source roles can be internalized in one common occurrence set, copied identically into all three labeled classes, and recovered from role, locality, degree, and source-sum constraints.  Thus identical weight multisets do not make exact N3DM tractable.

\subsection{Approximability}

Approximation classes and approximation-preserving reductions were systematized by Papadimitriou and Yannakakis \citep{PapadimitriouYannakakis1991}; a broad reference is Ausiello et al. \citep{AusielloEtAl1999}.  The PCP theorem established the general connection between verification gaps and hardness of approximation \citep{AroraEtAl1998}.  The present proof does not insert a generic PCP construction directly into the numerical gadget.  Instead, it starts from a perfect-completeness gap already located at the endpoint needed by the symmetric construction.

Maximum bounded three-dimensional matching was shown MAX SNP-complete by Kann \citep{Kann1991}.  Petrank proved a hard gap at the perfect-matching endpoint for bounded three-dimensional matching \citep{Petrank1994}.  That endpoint is load-bearing here because both the exact symmetric construction and its quantitative extension preserve perfect completeness.  Related local-improvement work treats three-dimensional matching within set packing and provides approximation context \citep{HurkensSchrijver1989,HazanSafraSchwartz2006}, but it does not resolve the identical-weight, common-target restriction.

Parts~II and III develop two distinct approximation-hardness routes.  Part~II first compiles the bounded perfect-endpoint gap into unary \MaxNthreeDM and then proves stability of the symmetric reverse implication: a small deficit from the perfect destination bound causes only a linearly bounded deficit in the recovered source matching.  Its formal conclusion is a direct perfect-completeness gap and a no-PTAS theorem.  Part~III starts instead from degree-three Maximum Three-Dimensional Matching, whose bounded versions are MAX SNP-complete and remain hard to approximate \citep{Kann1991,ChlebikChlebikova2006}.  An exact numerical compiler and one-live-port separation then establish the two affine inequalities of an L-reduction, giving standard APX-hardness and exact solution-by-solution transfer.  The two routes address different proof obligations; \cref{tab:partII-partIII} records their relationship.

The reductions below combine several established ingredients, but none of those ingredients by itself resolves the symmetry restriction.  Classical N3DM hardness supplies an asymmetric source; edge coloring restores labels only after a suitable incidence structure has been created; endpoint-gap machinery supplies a source separation but not its preservation through the numerical construction; and the abstract definition of an L-reduction supplies the desired inequalities but not the exact compiler needed to prove them.  The new work lies in constructing and auditing the interfaces among these ingredients while forcing all three destination classes to have identical numerical catalogues.

\begingroup
\renewcommand{\thetable}{\arabic{table}}
\small
\renewcommand{\arraystretch}{1.16}
\begin{longtable}{@{}>{\raggedright\arraybackslash}p{0.20\textwidth}>{\raggedright\arraybackslash}p{0.29\textwidth}>{\raggedright\arraybackslash}p{0.43\textwidth}@{}}
\caption{Prior ingredients, their use here, and the remaining step supplied by this tutorial.}\label{tab:prior-ingredients}\\
\toprule
Prior ingredient & Use in this tutorial & New remaining step addressed here \\
\midrule
\endfirsthead
\caption[]{Prior ingredients, their use here, and the remaining step supplied by this tutorial. (continued)}\\
\toprule
Prior ingredient & Use in this tutorial & New remaining step addressed here \\
\midrule
\endhead
\midrule\multicolumn{3}{r}{\emph{continued on next page}}\\
\endfoot
\bottomrule
\endlastfoot
Classical 3DM and N3DM hardness \citep{Karp1972,GareyJohnson1979} & Supplies asymmetric matching and common-target source problems. & Internalize the three asymmetric roles while making the complete numerical catalogue identical in all three output classes. \\
Strong \NP-completeness and unary reductions \citep{GareyJohnson1978,GareyJohnson1979} & Provides the standard criterion for proving that numerical hardness survives unary encoding. & Give explicit polynomial bounds for every shift, radix, encoded value, and output multiplicity in each construction. \\
Hall's theorem and bipartite edge coloring \citep{Hall1935,LovaszPlummer1986,Alon2003} & Restores the forgotten $X,Y,Z$ copy labels after the logical target hypergraph has been constructed. & Build an incidence system whose degrees make recoloring applicable and prove that the reverse decoder survives the forgetting step. \\
Petrank's perfect-completeness gap \citep{Petrank1994} & Supplies a bounded source promise with a constant perfect-endpoint separation. & Preserve the gap through two numerical compilers and a symmetric defect analysis, including repair of repeated numerical values. \\
Bounded Maximum Three-Dimensional Matching hardness \citep{Kann1991,ChlebikChlebikova2006} & Supplies the standard APX-hard source used in Part~III. & Construct an exact pair compiler and a one-live separated symmetric image with a solution-by-solution decoder. \\
L-reductions \citep{PapadimitriouYannakakis1991,AusielloEtAl1999} & Specifies the two optimum and error inequalities required for standard approximation-preserving hardness. & Prove the affine identities, the fixed offset, and the explicit constants $\alpha=764$ and $\beta=1$ rather than inferring them from a perfect upper bound. \\
\end{longtable}
\endgroup

\paragraph{How to read \cref{tab:prior-ingredients}.}
Each row separates three questions: which theorem or technique is inherited, how it is used, and which new obligation remains after that use.  The table is therefore a novelty-boundary map, not a claim that the general ingredients themselves are new.  Reading the third column from top to bottom also gives a compact account of why the symmetry, gap-stability, and L-reduction arguments require distinct proof modules.

\section{Why the Most Natural Shortcuts Fail}\label{sec:shortcuts}

Several tempting arguments are insufficient.  First, simply copying all three source classes into every output class destroys the source roles.  Numerical equality alone does not guarantee that one selected occurrence plays the $A$-role, one the $B$-role, and one the $C$-role.  The construction must encode those roles internally while leaving the three marginal weight lists identical.

Second, one linear locality coordinate does not force three private items to come from a common source group.  A weighted average of two different indices can equal a third.  The additional group-square coordinate supplies a second moment; together the two equations force zero weighted variance and hence a common index for every private filler pattern.

Third, edge coloring is not a cosmetic final step.  Forgetting output-class labels makes the degree analysis possible, but it also suppresses the one-item-from-each-class requirement.  A proper three-edge-coloring of the incidence bipartite multigraph is what assigns the three physical copies back to the incidences without reusing a class at an item or within a triple.

Fourth, strong \NP-hardness of the perfect-matching decision problem does not imply approximation hardness of the maximum-cardinality problem.  The exact reduction distinguishes $M$ from $M-1$, whose ratio tends to one.  A constant relative gap must be supplied by a gap source and preserved through every transformation.

Fifth, the zero-defect reverse proof cannot simply be quoted for nearly perfect solutions.  Missing triples produce nonregular incidence degrees, perturb the filler equations, and may allow repeated numerical values to be requested too often.  The quantitative proof therefore introduces explicit deficits, bounds positive deviations of the filler multiplicities, and deletes only a linearly bounded number of main triples to restore occurrence-level feasibility.  Without that stability analysis, no constant approximation gap follows.

Sixth, neither a direct constant gap nor an L-reduction should be assigned consequences that depend on a missing hypothesis.  A perfect-completeness gap can rule out a PTAS without satisfying the actual-optimum error inequality of an L-reduction.  Conversely, an L-reduction is only a PTAS-transfer mechanism: it rules out a PTAS for the destination only when the chosen source is itself known not to have one.  Part~II and Part~III verify these two routes separately rather than treating their terminology as interchangeable.

\section{A Reduction-Design Checklist}\label{sec:reddes}

The following checklist summarizes the main obligations in an exact, gap-preserving, or L-reduction-style symmetric numerical reduction.
\begin{enumerate}
  \item \textbf{Problem version.} State whether the destination asks for a perfect partition or a maximum-cardinality partial matching, and identify the natural upper bound.
  \item \textbf{Reduction conclusion.} State separately whether the goal is exact hardness, a direct promise gap, a PTAS-reduction, or a standard L-reduction; do not treat these notions as synonyms.
  \item \textbf{Source approximation status.} If a no-PTAS conclusion is transferred, verify that the source itself has no PTAS under the stated complexity assumption.
  \item \textbf{Source roles.} Identify the asymmetric source roles that must survive after the output classes receive identical numerical catalogues.
  \item \textbf{Role representation.} Specify whether roles are carried by codes, ports, multiplicities, positions, targets, or incidence patterns, and prove that unintended roles cannot imitate them.
  \item \textbf{Redundant copies and degree budgets.} Explain how all copies created by symmetry are used and which local constraints reserve the incidences needed for the global decision.
  \item \textbf{Unintended triples.} Give a finite certificate or exhaustive argument for every role pattern that can attain the target.
  \item \textbf{Locality.} Prove that private components cannot splice across source groups; check whether one coordinate is insufficient and whether a second moment is required.
  \item \textbf{Multiplicity locking.} Solve the nonnegative integral degree equations rather than merely displaying the intended filler counts.
  \item \textbf{Repeated values.} Formulate the reverse proof at the occurrence-multiplicity level.  Do not infer equality of indices from equality of weights unless distinctness is an explicit assumption.
  \item \textbf{Class restoration.} If labels are forgotten, identify the theorem or algorithm that restores one item from each labeled class in every triple without loss.
  \item \textbf{Numerical encoding.} Verify positivity, polynomial bounds, shifts, target digits, and a no-carry base so that scalar equality is equivalent to coordinatewise equality.
  \item \textbf{Quantitative stability.} For a direct gap, define defect from the perfect upper bound and prove a constant source-defect bound.  For an L-reduction, measure error from the actual optimum for every destination solution.
  \item \textbf{Charging uniqueness.} Check whether one local deficit can be charged in more than one role.  If it can, strengthen the construction, for example by one-live-port separation, rather than only loosening a global constant.
  \item \textbf{Affine offsets and decoder.} Identify every fixed objective offset, prove the exact optimum identity when claimed, and describe a polynomial decoder rather than only an existential reverse argument.
  \item \textbf{Gap and size ledger.} Track additive losses, expansion factors, approximation constants, and unary magnitudes through the complete reduction chain.
  \item \textbf{Bidirectional examples.} Audit ``yes'' and ``no'' instances.  For a gap proof, display the incidence-deficit ledger; for an L-reduction, display both optimum values, the fixed offset, and the error inequality.
\end{enumerate}

Used systematically, the checklist separates construction from interpretation.  The forward direction must show that intended source solutions can be represented.  The reverse direction must prove that every feasible or nearly feasible destination solution has the required meaning, with exact or quantitatively bounded loss as appropriate.  It also prevents a correct theorem from being described with a stronger reduction label than the proof actually establishes.

\newpage 

\part{Intractability of Symmetric Numerical Three-Dimensional Matching}\label{part:exact}

\setcounter{theorem}{0}
\renewcommand{\thesection}{\arabic{section}}
\renewcommand{\theHsection}{n3dm.sec.\arabic{section}}

\noindent\hyperlink{maincontents}{\small\textit{Return to main contents}}\par\medskip
\section*{Proof navigation for Part I}\addcontentsline{toc}{section}{Proof navigation for Part I}
\textbf{Forward implication.} Begin with a feasible source N3DM matching. For every source triple, insert one main port edge. Around each source group, insert the prescribed local filler edges. The coordinate identities verify that all of these edges attain the vector target, and the degree ledger verifies that every port has two filler incidences and one main incidence while every private item has three filler incidences. The resulting 3-regular target-sum triple system is then recolored to obtain legal triples among the three labeled output classes. The construction is best understood as reserving one of the three incidence slots of every port for the global source decision.

\textbf{Reverse implication.} Begin with any feasible matching of the symmetric output instance and forget its outer class labels. The reverse proof proceeds through four locks. The role coordinate first restricts every edge to one of the intended main or filler patterns. The group and group-square coordinates prevent private items from unrelated groups from mixing. The private-item degree equations then force the exact filler multiplicities, so precisely one incidence of every port remains for a main edge. Finally, the source coordinate proves that those main edges use the source multisets with the required common sum. Edge coloring is used in the opposite conceptual direction to justify that forgetting and restoring labels loses no feasible solutions.

\textbf{What could otherwise go wrong.} Without the role filter, unintended triples could satisfy the arithmetic target. Without the two locality coordinates, private components from different groups could splice together. Without the unique solution of the degree equations, ports might devote too many or too few incidences to main edges. Without the no-carry base choice, an error in one logical coordinate could be concealed by a carry into the next coordinate. The formal proof assigns one of these possible failures to each layer of the construction.

\section{Introduction}

Numerical Three-Dimensional Matching (\NthreeDM) is a classical strongly \NP-complete number problem; it appears as problem SP16 in Garey and Johnson~\cite{GareyJohnson1979}.  An instance has three labeled classes of equally many item occurrences, a positive integer weight for every occurrence, and one target value.  A solution partitions all occurrences into triples that contain one item from each class and have the target sum.

This work studies the following apparently severe restriction.  The classes remain distinct and labeled, but their weight multisets are required to be identical.  In other words, the input consists of three disjoint copies of one weighted occurrence set.  We prove that this symmetry does not make the problem easier under unary encoding.

The main technical point is that literal duplication creates three copies of every role, whereas an ordinary reduction needs different source roles to behave differently.  We resolve this by passing through a degree-constrained triple-system representation.  Forgetting the outer class labels turns a symmetric matching into a target-sum triple system in which every underlying occurrence has degree three.  Conversely, a regular bipartite edge-coloring restores one incidence of each color at every item and at every triple.  A local filler gadget then spends exactly two of the three incidences of every source port, leaving exactly one incidence for the source matching.

The reduction is self-contained apart from the standard unary \NP-completeness of \NthreeDM and the elementary fact that a regular bipartite multigraph decomposes into perfect matchings; see, for example,~\cite{LovaszPlummer1986, Alon2003}.  A restricted form of \NthreeDM is also known to be strongly \NP-hard~\cite{YuHoogeveenLenstra2004}.

\section{Problem definition and main theorem}

Because repeated numerical values are allowed, it is important to distinguish an \emph{item occurrence} from its weight.

\begin{definition}[Numerical Three-Dimensional Matching]
An instance of \NthreeDM is a tuple
\[
  I=(X,Y,Z,w,T),
\]
where $X,Y,Z$ are pairwise disjoint finite sets with $|X|=|Y|=|Z|=m$, the weight function is $w:X\uplus Y\uplus Z\to\NN_{>0}$, and $T\in\NN_{>0}$ is a target.  A feasible matching is a partition $\mathcal M=\{M_1,\ldots,M_m\}$ of $X\uplus Y\uplus Z$ such that, for every $r=1,\ldots,m$,
\[
 |M_r\cap X|=|M_r\cap Y|=|M_r\cap Z|=1
 \quad\text{and}\quad
 \sum_{u\in M_r}w(u)=T.
\]
\end{definition}
 Since the $m$ triples of a feasible matching partition all of $X\uplus Y\uplus Z$ and each sums to $T$, the target is necessarily
\[
 T=\frac{1}{m}\sum_{u\in X\uplus Y\uplus Z} w(u),
\]
the total weight of all elements divided by $m$.

Equivalently, after indexing $X=\{x_1,\ldots,x_m\}$, a solution consists of two permutations $\pi,\sigma$ of $(1, \ldots, m)$, with
\[
  w(x_i)+w(y_{\pi(i)})+w(z_{\sigma(i)})=T
  \qquad (i=1,\ldots,m).
\]
Thus the requirement ``one item from each class'' is part of the problem definition, not something inferred from the numerical values.

\begin{definition}[Identical weight classes]
Index the three classes as $X=\{x_1,\ldots,x_m\}$, $Y=\{y_1,\ldots,y_m\}$, and $Z=\{z_1,\ldots,z_m\}$.  An \NthreeDM{} instance has \emph{identical weight classes} if there exist permutations $\pi,\sigma$ of $(1, \ldots, m)$ such that
\[
 w(x_i)=w(y_{\pi(i)})=w(z_{\sigma(i)})
 \qquad\text{for every }i=1,\ldots,m;
\]
equivalently, the three classes carry the same multiset of weights.  The restriction to such instances is denoted by \SymNthreeDM.  We occasionally write $X=Y=Z$ as shorthand for equality of the three \emph{weight multisets}; the three classes themselves remain disjoint and labeled.
\end{definition}

\begin{definition}[Unary encoding]
In the unary version of either problem, every weight and the target are represented in unary.  Item labels and delimiters use the usual polynomial-size binary or symbolic representation.  A problem is \emph{unary NP-hard} if its unary-encoded restriction is \NP-hard.
\end{definition}

\begin{theorem}[Main theorem]\label{n3dm:thm:main}
Problem~SN3DM is \NP-complete when all numerical data are encoded in unary.  Consequently, Numerical Three-Dimensional Matching remains strongly \NP-complete when its three classes have identical weight multisets.
\end{theorem}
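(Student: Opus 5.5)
Membership in \NP{} is immediate: a certificate is the pair of permutations $\pi,\sigma$, checked in time polynomial in the unary input length. For hardness I will reduce from unary \NthreeDM{} with source classes $A,B,C$ of size $n$ and target $t$, whose unary \NP-completeness is classical \cite{GareyJohnson1979}. The reduction follows the plan sketched in the Part~I navigation. I build one common weighted occurrence set $S$. Each output class $S^X,S^Y,S^Z$ is a labeled copy of $S$, so the identical-multiset condition holds by construction.

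For every source group $i$, $S$ contains three \emph{ports} carrying the values $a_i,b_i,c_i$, together with a bounded number of private items. Each item receives a logical vector with four coordinates: a role code, the group index $i$, the square $i^2$, and a source value. The role codes are chosen so that only one main pattern $F_0$ (an $A$-port, a $B$-port, and a $C$-port) and nine filler patterns $F_1,\ldots,F_9$ sum to the role component of the target vector $\Kvec$. I will certify this with a finite case check over all multisets of three role codes. In a main triple, the group coordinates are zeroed or balanced so that ports from different groups may combine, and its source coordinate must sum to $t$. Filler patterns use private items, and their group and group-square coordinates must satisfy $\sum \lambda_j i_j=3i$ and $\sum \lambda_j i_j^2=3i^2$. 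This forces zero variance, so every filler is local to a single group.

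In the forward direction, a source solution gives one main edge per source triple, plus the prescribed local fillers in each group. This yields a target-sum triple system in which every port has two filler incidences and one main incidence, and every private item has three filler incidences. The system is therefore $3$-regular. Its incidence bipartite multigraph (items versus triples) is $3$-regular, so it decomposes into three perfect matchings \cite{LovaszPlummer1986,Alon2003}. Assigning colors $X,Y,Z$ then gives a legal symmetric matching.

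In the reverse direction, I forget the labels of a symmetric solution, which gives a $3$-regular target-sum system. The role filter and locality restrict every edge to $F_0$ or to a local filler. Next I write the nonnegative integer degree equations at the private items of a group. Showing that their unique solution is the intended filler multiplicity vector is the step I expect to be the main obstacle. It requires designing the nine filler patterns so that the private-degree system has full rank and a nonnegative integral solution, and then checking that every port receives exactly two filler incidences. I would first try a triangular design, in which each filler pattern contains a private item appearing in no later pattern, so the multiplicities can be solved in sequence. Once these multiplicities are fixed, each port has exactly one main incidence. The main edges then form a perfect source matching summing to $t$, and I will argue this at the occurrence level, so repeated values cause no difficulty.

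Finally, I scalarize each vector $v$ as $\sum_k (v_k+s_k)L^k$. The shifts $s_k$ make every value positive, and the base $L$ exceeds three times the largest shifted coordinate, so no carries occur and scalar equality with $K^\star$ is equivalent to coordinatewise equality. Because the coordinates are $O(n^2+\max(t))$ and there are four digits, $K^\star=O(L^4)$ is polynomial in the unary source size. The construction is therefore a unary-polynomial reduction, which gives strong \NP-completeness.
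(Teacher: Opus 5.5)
Your outline follows the paper's architecture: ports and private items in one common set, a role filter, two-moment locality, private-degree equations, recoloring, and no-carry packing. However, two load-bearing steps are either wrong as stated or not carried out.

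\textbf{Ports cannot be localized, so the reverse argument must be value-level.} A main edge $(P_{1,i},P_{2,j},P_{3,k})$ must be legal for arbitrary $i,j,k$, so the ports need zero group and group-square coordinates. In a filler that attaches a port to private items, the locality coordinates therefore synchronize only the private items, and the port may come from any group. Hence ``every filler is local to a single group'' fails for the port-attaching fillers. Knowing $x_{1,i}=x_{2,i}=x_{3,i}=2$ then does not give each port exactly one main incidence, and with repeated values that conclusion is in fact false. Suppose $a_i=a_j$. The two $F_1$-fillers localized at $i$ may both use $P_{1,j}$, and those localized at $j$ may use $P_{1,j}$ and $P_{1,i}$. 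This leaves $P_{1,j}$ with no main incidence and $P_{1,i}$ with two. The missing ingredient has two parts.
\begin{itemize}
\item Give the private items of a port-attaching filler source coordinates summing to $3t-3a_i$ (respectively $3t-3b_i$, $3t-3c_i$). The source coordinate then pins the attached port's \emph{value} to $a_i$.
\item Count per value. The $r_A(\alpha)$ ports of value $\alpha$ offer $3r_A(\alpha)$ incidences, and fillers consume exactly $2r_A(\alpha)$ of them. So the $P_1$-slots of the main edges carry exactly the multiset $A$, and likewise $B$ and $C$. Equal-valued occurrences are then assigned arbitrarily.
\end{itemize}
Your remark about arguing ``at the occurrence level'' points the wrong way: port identity is exactly what the gadget cannot recover.

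\textbf{The triangular design you propose cannot work, and the gadget is never built.} The step you flag as the main obstacle is where the proof actually lives. Take the first pattern in a triangular order. Its pivot private item $q$ occurs in no other pattern, so its degree equation reads $m\,x=3$, where $m$ is the multiplicity of $q$ in that pattern. Integrality forces one of two cases.
\begin{itemize}
\item $x=3$ with $m=1$. Then every other member of the pattern is saturated by that pattern alone.
\item The degenerate pattern $(q,q,q)$.
\end{itemize}
Either way the pattern is a closed block sharing no role with other patterns. So the next pivot is also exclusive to its own pattern, and by induction every pattern is a closed block. Any port inside such a block receives three filler incidences and none for a main edge.

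A working gadget must be coupled rather than triangular. In the paper, each of $Q_1,\dots,Q_6$ lies in one port-attaching pattern and one private-only pattern, and each of $Q_7,\dots,Q_{10}$ lies in three private-only patterns. The paired equations give $x_4=x_5$, $x_6=x_7$, $x_8=x_9$. The remaining four become $2a+b=3$, $a+b+c=3$, $a+2c=3$, $2b+c=3$. Their unique solution $a=b=c=1$ yields $x_1=x_2=x_3=2$.

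You would also still need to produce three further ingredients.
\begin{itemize}
\item Concrete role codes, with an exhaustive certificate that only the ten intended patterns hit the role target. The paper uses $31,8,261,184,\dots$ with a complementary-pair table.
\item Explicit cancellation terms making every local filler reach $3t$.
\item Group coefficients that are nonzero and sum to zero in each filler. A condition like $\sum\lambda_j i_j=3i$ cannot be imposed with one common target; the locality digits of the target must be constants such as $0$.
\end{itemize}
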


Membership in \NP is immediate: a collection of $m$ triples is a polynomially checkable certificate.  The remainder of Part~I proves unary \NP-hardness.

\section{Proof overview}

We reduce from unary \NthreeDM, which is \NP-complete by the strong \NP-completeness of the classical problem~\cite{GareyJohnson1979}.  Let the source classes be represented by the occurrence multisets
\[
 A=\mset{a_1,\ldots,a_n},\qquad B=\mset{b_1,\ldots,b_n},\qquad C=\mset{c_1,\ldots,c_n},
\]
with common target $t$. Here $\mset{\cdot}$ denotes a multiset: repeated values are permitted and are counted with multiplicity, so equal weights at different indices remain distinct occurrences.

An arbitrary source \NthreeDM instance need not have identical weight classes, and no such assumption is made.  The reduction does not identify the source classes $A,B,C$ with the three output classes.  Instead, the three source roles are represented internally by the port types $P_1,P_2,P_3$ in one common occurrence set $S$, and the output consists of three labeled copies of that entire set.  Consequently, the output classes have identical weight multisets, while the role and source coordinates retain the asymmetry of the original data.  The local filler gadget consumes two of the three incidences of every source port, leaving one incidence for a main $P_1P_2P_3$ triple.  The main triples therefore reproduce the original $A$--$B$--$C$ matching constraints.  The copy-forgetting and recoloring lemma then translates between this degree-three incidence representation and an ordinary matching among the three identical output classes.

For each source index $i$, the reduction creates a group containing three \emph{ports} $P_{1,i},P_{2,i},P_{3,i}$ and ten \emph{private items} $Q_{1,i},\ldots,Q_{10,i}$.  These two kinds of occurrence play opposite roles.  A port is an interface occurrence: it carries a source value ($P_{1,i},P_{2,i},P_{3,i}$ carry $a_i,b_i,c_i$, respectively) and is the only occurrence type permitted to enter a \emph{main} triple $(P_{1,i},P_{2,j},P_{3,k})$, which is what reproduces a source equation $a_i+b_j+c_k=t$.  A private item never appears in a main triple; it exists only inside the group's local \emph{filler} triples, and its sole job is to absorb incidences so as to regulate how many of each port's three incidences are used up locally --- leaving exactly one free for a main triple.  All $13n$ occurrences form one common set $S$, and the output classes are three labeled copies of $S$.  Four vector coordinates serve different purposes:
\begin{enumerate}[label=(\roman*)]
  \item a fixed role coordinate permits only one main role pattern and nine filler role patterns;
  \item a source coordinate makes a main triple encode $a_i+b_j+c_k=t$ and makes each intended filler triple attain the target;
  \item two group coordinates, proportional to $i$ and $i^2$, force the private items of every filler triple to come from one common group.
\end{enumerate}
The local filler equations force two uses of each port in filler triples and three uses of every private item.  Because every underlying occurrence must have total degree three, exactly one incidence of each port remains for main triples.  Those main triples recover a source matching.  A mixed-radix encoding finally replaces the four-dimensional weights by ordinary positive integers.

The following lemma explains why a degree-three triple system is the correct intermediate object.

\begin{lemma}[Copy-forgetting and recoloring]\label{n3dm:lem:copy-forgetting}
Let $S$ be a finite occurrence set with weight function $q:S\to\NN_{>0}$, and let $K\in\NN_{>0}$.  Form a symmetric \NthreeDM{} instance from three labeled copies $S^X,S^Y,S^Z$, assigning all three copies of $s$ the weight $q(s)$ and using target $K$.

This instance is feasible if and only if there is a finite multiset $\mathcal H$ of three-incidence hyperedges on $S$ such that
\begin{enumerate}[label=(\alph*)]
  \item every hyperedge has total $q$-weight $K$; and
  \item every occurrence $s\in S$ has incidence degree $\degH(s)=3$, where $\degH(s)$ is the number of times $s$ shows up across all the hyperedges of $\mathcal H$, counting every appearance separately, including when $s$ shows up more than once in the same hyperedge.
\end{enumerate}
Parallel hyperedges, and repeated incidences within a hyperedge, are permitted in this intermediate representation.
\end{lemma}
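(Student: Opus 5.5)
The two directions are of very different difficulty, so I treat them separately.

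\emph{Forward direction (feasible instance gives $\mathcal H$).} Let $\mathcal M=\{M_1,\ldots,M_{|S|}\}$ be a feasible matching of the symmetric instance. Each $M_r$ consists of copies $s_1^X,s_2^Y,s_3^Z$ of occurrences $s_1,s_2,s_3\in S$, which need not be distinct. I would map $M_r$ to the hyperedge with incidences $s_1,s_2,s_3$, forgetting the copy labels, and let $\mathcal H$ be the resulting multiset. Condition (a) holds because $q(s_1)+q(s_2)+q(s_3)=K$. For (b), the matching partitions $S^X\uplus S^Y\uplus S^Z$, so each of the three copies of $s$ lies in exactly one triple. Hence $s$ receives exactly three incidences, counted with repetition.

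\emph{Reverse direction ($\mathcal H$ gives a feasible instance).} First, counting incidences gives $3|\mathcal H|=\sum_s\degH(s)=3|S|$, so $|\mathcal H|=|S|$. I then form the bipartite incidence multigraph $G$. Its left side is $S$ and its right side is $\mathcal H$, with hyperedges counted with multiplicity, so parallel hyperedges become distinct vertices. Each incidence contributes one edge, so an occurrence repeated inside a hyperedge yields parallel edges. Every vertex of $G$ then has degree exactly $3$: occurrences by (b), and hyperedges because they have three incidences.

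By Hall's theorem, or K\"onig's edge-coloring theorem for bipartite multigraphs, a $3$-regular bipartite multigraph decomposes into three perfect matchings. This is the standard fact cited above \cite{LovaszPlummer1986,Alon2003}. I color these matchings $X,Y,Z$. For each hyperedge $h$, I then replace its $X$-colored incidence at $s$ by the copy $s^X$, and similarly for the $Y$ and $Z$ incidences.

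Properness at $h$ ensures that $h$ becomes a class-transversal triple, and its weight is still $K$. Properness at $s$ ensures that each copy $s^X,s^Y,s^Z$ is used exactly once. The resulting triples therefore partition the three copies and form a feasible matching.

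\emph{Main obstacle.} The only delicate point is handling repeated incidences and parallel hyperedges correctly in the multigraph. Regularity and the decomposition theorem must hold for multigraphs; in particular, parallel edges must receive distinct colors. König's theorem guarantees this. Since $S$ is finite and the edge coloring can be computed in polynomial time, the construction is also effective.
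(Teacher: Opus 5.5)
Your proposal is correct and follows the paper's proof essentially step for step. In the forward direction, forgetting the copy labels gives degree three at every occurrence. In the reverse direction, you build the same one-edge-per-incidence-slot bipartite multigraph, decompose the resulting $3$-regular multigraph into three perfect matchings, and use the colors to select the physical copies; the paper obtains the decomposition via Hall's condition and repeated removal, and your count $3|\mathcal H|=3|S|$ is its remark that the two sides have equal cardinality.
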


\begin{proof}
Suppose first that the symmetric instance has a feasible matching.  In every matching triple, forget the superscripts $X,Y,Z$ and retain only the underlying occurrences in $S$.  Each triple becomes a three-incidence hyperedge of weight $K$.  Since the $X$-, $Y$-, and $Z$-copy of every $s\in S$ is used exactly once, the resulting incidence degree of $s$ is three.

Conversely, suppose that $\mathcal H$ satisfies (a) and (b).  Construct the incidence bipartite multigraph $G$ whose left vertices are the occurrences in $S$ and whose right vertices are the hyperedges of $\mathcal H$, placing \emph{one edge for each incidence-slot}: a hyperedge that fills $k$ of its three slots with the same occurrence $s$ contributes $k$ parallel edges between $s$ and that hyperedge.  With this convention each right vertex (a hyperedge) has degree three, since it has three slots; and each left vertex $s$ has degree $\degH(s)=3$ by condition (b).  Hence the two sides have equal cardinality and $G$ is a $3$-regular bipartite multigraph (a multigraph is \emph{$k$-regular} if every vertex has degree exactly $k$, that is, exactly $k$ edge-ends meet at each vertex, parallel edges counted with multiplicity).

A $k$-regular bipartite multigraph has a perfect matching for the following reason. Let $N(U)$ be the \emph{neighborhood} of a vertex set $U$ (i.e., the set of all vertices adjacent to at least one vertex of $U$). Given any  vertex set $U$ on one side, the $k|U|$ incident edges of $U$ all enter $N(U)$, which, however, are only a subset of   all the $k|N(U)|$ edges incident to the vertices in $N(U)$. This means that  $k|N(U)|\ge k|U|$ so Hall's \cite{Hall1935} inequality $|N(U)|\ge |U|$ holds.  Removing a perfect matching preserves regularity with degree reduced by one.  Applying this argument three times decomposes $E(G)$ (the edge multiset of $G$) into three perfect matchings.  Color them $X,Y,Z$.

At every item vertex, its three incidences now have distinct colors, and the same is true at every hyperedge vertex.  Replace an incidence of color $R\in\{X,Y,Z\}$ at item $s$ by the copy $s^R$.  Each resulting triple contains exactly one item from each labeled class, every copy is used exactly once, and every triple still has weight $K$.  This is a feasible symmetric \NthreeDM{} matching.
\end{proof}

\paragraph{An example.}
Let $S=\{s_1,\ldots,s_5\}$ with weights $q(s_i)=i$ and target $K=9$; the output classes $S^X,S^Y,S^Z$ are the labeled copies.  Consider the multiset of five hyperedges
\[
\begin{aligned}
 e_1&=\mset{s_1,s_3,s_5}, &
 e_2&=\mset{s_1,s_3,s_5},\\
 e_3&=\mset{s_1,s_4,s_4}, &
 e_4&=\mset{s_2,s_2,s_5},\\
 e_5&=\mset{s_2,s_3,s_4}.&&
\end{aligned}
\]
Here $e_1$ and $e_2$ are parallel, $e_3$ reuses $s_4$, and $e_4$ reuses $s_2$.  Each hyperedge has $q$-weight $9$, and each occurrence has total incidence degree three (for instance $s_3$ appears once each in $e_1,e_2,e_5$, and $s_4$ appears twice in $e_3$ and once in $e_5$).  Building $G$ with one edge per incidence-slot gives a $3$-regular bipartite multigraph, and decomposing it into three perfect matchings $X,Y,Z$ yields, as one valid coloring,
\[
\begin{aligned}
 e_1&\mapsto(s_{3,X},s_{5,Y},s_{1,Z}), &
 e_2&\mapsto(s_{5,X},s_{1,Y},s_{3,Z}), &
 e_3&\mapsto(s_{1,X},s_{4,Y},s_{4,Z}),\\
 e_4&\mapsto(s_{2,X},s_{2,Y},s_{5,Z}), &
 e_5&\mapsto(s_{4,X},s_{3,Y},s_{2,Z}). & &
\end{aligned}
\]
Each class then lists every occurrence exactly once --- $S^X$ receives $s_3,s_5,s_1,s_2,s_4$; $S^Y$ receives $s_5,s_1,s_4,s_2,s_3$; $S^Z$ receives $s_1,s_3,s_4,s_5,s_2$ --- and each triple sums to $9$, so this is a feasible symmetric matching produced directly by the proof's procedure.  The decomposition is not unique, so this is one coloring the algorithm may produce, not the only one.  The reused occurrences illustrate the following: the two $s_4$-edges of $e_3$ are parallel edges at one hyperedge vertex, so a proper coloring is forced to give them different colors ($Y$ and $Z$), turning $\mset{s_1,s_4,s_4}$ into the legitimate triple $(s_{1,X},s_{4,Y},s_{4,Z})$; likewise the two $s_2$-edges of $e_4$ receive colors $X$ and $Y$.

\paragraph{Why edge coloring is needed.}
Forgetting the superscripts $X,Y,Z$ simplifies the gadget analysis, but it also suppresses the one-item-from-each-class condition.  Edge coloring is the step that restores exactly that information.  A color on an incidence specifies which physical copy of the underlying item is used in that position.  Because the three color classes are perfect matchings, every item vertex receives one incidence of each color, so each of $s^X,s^Y,s^Z$ is used exactly once; every hyperedge vertex also receives one incidence of each color, so each reconstructed triple contains one item from each labeled class.  Edge coloring does not enforce the source-sum equations or the filler equations.  Its essential role is to prove that every degree-three unlabeled triple system admitted by the gadget can be lifted to a genuine feasible matching in the three symmetric labeled classes.

\begin{figure}[H]
\centering
\begin{tikzpicture}[
  box/.style={draw,rounded corners,align=center,minimum height=1.05cm,text width=2.15cm,fill=blue!3},
  note/.style={draw,rounded corners,align=center,minimum height=.9cm,text width=2.55cm,fill=gray!5},
  arr/.style={-{Latex[length=2.3mm]},thick},
  node distance=6mm,
  font=\small
]
\node[box] (src) {Unary source\\\NthreeDM\\$A,B,C;t$};
\node[box,right=of src] (groups) {For each $i$:\\3 ports $+$\\10 private items};
\node[box,right=of groups] (common) {Common set $S$\\$|S|=13n$};
\node[box,right=of common] (copies) {Labeled copies\\$S^X,S^Y,S^Z$};
\node[box,right=of copies] (out) {Symmetric \NthreeDM\\target $K^\star$};
\node[note,below=8mm of groups] (coords) {role $\mid$ source\\group $\mid$ group-square};
\node[note,below=8mm of out] (factor) {3-regular target-sum\\triple system};
\draw[arr] (src) -- (groups);
\draw[arr] (groups) -- (common);
\draw[arr] (common) -- (copies);
\draw[arr] (copies) -- (out);
\draw[arr] (groups) -- node[right,font=\scriptsize]{four coordinates} (coords);
\draw[<->,thick] (out) -- node[left,align=right,font=\scriptsize]{forget/recolor\\class incidences} (factor);
\end{tikzpicture}
\caption{Architecture of the reduction.  The copy-forgetting lemma translates the one-item-per-class matching condition into degree three at every underlying occurrence.  The four coordinates then force the desired local and main hyperedges.}
\label{n3dm:fig:overview}
\end{figure}

\paragraph{How to read \cref{n3dm:fig:overview}.}
The top row separates the reduction into conceptually different operations.  The first arrow does not copy the source classes directly.  It translates the asymmetric source data into the internal port roles $P_1,P_2,P_3$ and adds private items that will regulate how often those ports may be used.  The next arrow simply gathers all of those roles into one common occurrence set $S$, and the following arrow makes three labeled copies of the entire set.  That last copying operation is what guarantees identical output weight multisets: each output class contains the same catalogue of encoded roles and values.

The lower-left box records the four logical coordinates attached to each occurrence before they are packed into one integer.  The lower-right box is not a second constructed instance; it is an unlabeled way to view a solution of the symmetric instance.  Moving downward along the two-headed arrow forgets whether an incidence came from $S^X$, $S^Y$, or $S^Z$, leaving three incidences at every underlying item.  Moving upward uses edge coloring to assign those incidences back to the three labeled copies consistently.  Thus the figure distinguishes two ideas that are easy to conflate: the top row constructs a symmetric numerical instance, whereas the vertical equivalence supplies a convenient representation for proving that the construction behaves correctly.

\section{The constructed instance}\label{n3dm:sec:construction}

\subsection{Source instance and common occurrence set}

Let
\[
 I=(A,B,C,t)
\]
be a unary \NthreeDM{} instance, where $A=\mset{a_1,\ldots,a_n}$, $B=\mset{b_1,\ldots,b_n}$, and $C=\mset{c_1,\ldots,c_n}$ consist of positive integer occurrences.  Repeated values are allowed.

Given the \NthreeDM{} instance as defined, we now start to construct an instance of \SymNthreeDM. For each $i\in[n]=\{1,\ldots,n\}$, define
\[
 S_i=\{P_{1,i},P_{2,i},P_{3,i},Q_{1,i},\ldots,Q_{10,i}\},
 \qquad
 S=\biguplus_{i=1}^n S_i.
\]
The occurrences $P_{1,i},P_{2,i},P_{3,i}$ are the ports representing $a_i,b_i,c_i$, respectively.  The $Q$-occurrences are private filler items.  The output classes are
\[
 X^\star=S^X,\qquad Y^\star=S^Y,\qquad Z^\star=S^Z,
\]
three disjoint labeled copies of $S$.  They therefore have identical weight multisets by construction.

\subsection{Vector weights}

Let the set of roles be
\[
 \Rset=\{P_1,P_2,P_3,Q_1,\ldots,Q_{10}\}.
\]
For each role $R\in\Rset$, fix a role code $\rho_R$ and a group coefficient $h_R$.  The source-coordinate expression $\eta_{R,i}$ depends on the source values at index $i$.  All data are given in \cref{n3dm:tab:weights}.

\begin{table}[H]
\centering
\caption{Role data for the vector construction.}
\label{n3dm:tab:weights}
\small
\renewcommand{\arraystretch}{1.12}
\begin{tabular}{@{}c r r l@{}}
\toprule
Role $R$ & $\rho_R$ & $h_R$ & $\eta_{R,i}$ \\
\midrule
$P_1$    & 31  & 0  & $3a_i$ \\
$P_2$    & 8   & 0  & $3b_i$ \\
$P_3$    & 261 & 0  & $3c_i$ \\
\midrule
$Q_1$    & 184 & 2  & $-3b_i$ \\
$Q_2$    & 85  & $-2$ & $3t-3a_i+3b_i$ \\
$Q_3$    & 177 & $-2$ & $3t-2a_i+b_i+c_i$ \\
$Q_4$    & 115 & 2  & $2a_i-4b_i-c_i$ \\
$Q_5$    & 23  & 2  & $a_i-2b_i-2c_i$ \\
$Q_6$    & 16  & $-2$ & $3t-a_i+2b_i-c_i$ \\
$Q_7$    & 27  & 1  & $2a_i-b_i-c_i$ \\
$Q_8$    & 89  & $-3$ & $3t-2a_i+4b_i+c_i$ \\
$Q_9$    & 188 & 1  & $a_i-2b_i+c_i$ \\
$Q_{10}$ & 96  & 1  & $0$ \\
\bottomrule
\end{tabular}
\end{table}

\paragraph{How to read Table~\ref{n3dm:tab:weights}.}
Each row is a recipe for every occurrence of one role. The role code
$\rho_R$ is a fixed numerical ``barcode'' for the type of item; it is the
same in every source group and is used only to decide which three role types
may coexist. The source expression $\eta_{R,i}$ is the part that carries the
input values $a_i,b_i,c_i,t$. The coefficient $h_R$ is multiplied by both
$i$ and $i^2$ in the last two coordinates, so it acts as a locality marker
for the private items. The three ports have $h_R=0$ because a main edge must
be free to combine an $A$-port from group $i$, a $B$-port from group $j$,
and a $C$-port from group $k$. The private roles have nonzero coefficients
because unintended mixing among their groups must be detected.

For example, the rows for $P_1,Q_1,Q_2$ are designed to work together.
Their role codes add to $31+184+85=300$; their source expressions cancel
to $3t$; and their group coefficients add to $0+2-2=0$. If all three
occurrences use group $i$, the group-square coefficients cancel as well.
The apparently irregular constants in the table should therefore be viewed
not as meaningful source weights but as a finite system of tags chosen so
that the intended role triples pass all four tests and unintended triples
fail at least one test.

For an occurrence $R_i\in S_i$, define its four-dimensional integer
weight by
\begin{equation}
  \omega(R_i)
  = \bigl(\rho_R,\eta_{R,i},h_R i,h_R i^2\bigr),
  \label{eq:vector-weight}
\end{equation}
and define the vector target
\begin{equation}
  \mathbf{K}=(300,3t,0,0).
  \label{eq:vector-target}
\end{equation}
The four coordinates are called the \emph{role}, \emph{source},
\emph{group}, and \emph{group-square} coordinates.  These vectors are an
intermediate bookkeeping device: the actual integer weight of each
occurrence $R_i$ is a weighted sum of the components of $\omega(R_i)$, with
the weights being powers of a base $L$ defined in
\cref{n3dm:subsec:encoding}; similarly, the actual target is the same
weighted sum of the components of $\mathbf K$.

\paragraph{Role code.}
The role code $\rho_R$ is a fixed, input-independent integer label for
the item type $R$. It does not encode any source value. Instead, the
first-coordinate target $300$ acts as a type filter: the codes are chosen
so that an unordered triple of roles has total role code $300$ exactly
for the ten patterns $F_0,\ldots,F_9$ listed in the next subsection.
Thus the role coordinate determines which combinations of ports and
private items are allowed to form a target-sum triple.

\paragraph{Source-coordinate expression.}
The expression $\eta_{R,i}$ is the second-coordinate value assigned to
role $R$ in source group $i$. Unlike the role code, it depends on the
input through $a_i,b_i,c_i$, and $t$. For the three ports it is $3a_i$,
$3b_i$, and $3c_i$, so a main triple
$(P_{1,i},P_{2,j},P_{3,k})$ has source-coordinate sum
$3(a_i+b_j+c_k)$ and reaches the target $3t$ exactly when
$a_i+b_j+c_k=t$. The expressions for the private items are chosen so
that their terms cancel in every intended local filler pattern, making
its source-coordinate sum identically $3t$. The source coordinate
therefore carries the numerical content of the source instance after
the role coordinate has selected an admissible triple type.

\subsection{Permitted role patterns}

The intended role patterns are
\begin{align*}
 F_0&=(P_1,P_2,P_3),\\
 F_1&=(P_1,Q_1,Q_2), & F_2&=(P_2,Q_3,Q_4), & F_3&=(P_3,Q_5,Q_6),\\
 F_4&=(Q_1,Q_7,Q_8), & F_5&=(Q_2,Q_7,Q_9), & F_6&=(Q_3,Q_7,Q_{10}),\\
 F_7&=(Q_4,Q_8,Q_{10}), & F_8&=(Q_5,Q_8,Q_9), & F_9&=(Q_6,Q_9,Q_{10}).
\end{align*}


We may write $F_r(\cdot,\cdot,\cdot)$ for the occurrence triple obtained by
assigning the listed group indices to the three roles of $F_r$, in the order
in which the roles are displayed above.  In general the three indices may
differ; in particular, a \emph{main} occurrence triple is written as
\[
  F_0(i,j,k)=(P_{1,i},P_{2,j},P_{3,k}).
\]
When all three roles come from one common group $i$, we abbreviate
$F_r(i,i,i)$ to $F_r(i)$.  A \emph{local} filler triple takes all three
roles from one group and is therefore written $F_r(i)$ for $r\ge 1$.

\begin{lemma}[Role filter]\label{n3dm:lem:role-filter}
An unordered multiset of three roles has role-code sum $300$ if and only if it is one of $F_0,F_1,\ldots,F_9$.
\end{lemma}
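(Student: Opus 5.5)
The lemma is a finite statement about the thirteen fixed role codes in \cref{n3dm:tab:weights}, so I will prove it by a short exhaustive check. It does not depend on the source data. There are $\binom{15}{3}=455$ unordered multisets of three roles, repetitions allowed. Rather than list them all, I will organize the search by the largest code in the multiset. The \emph{if} direction is direct addition: for instance $31+8+261=300$ for $F_0$ and $31+184+85=300$ for $F_1$, and similarly for $F_2,\ldots,F_9$.

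For the \emph{only if} direction, sort the codes:
\[
8,16,23,27,31,85,89,96,115,177,184,188,261,
\]
belonging to the roles $P_2,Q_6,Q_5,Q_7,P_1,Q_2,Q_8,Q_{10},Q_4,Q_3,Q_1,Q_9,P_3$ respectively. Suppose the largest code in the multiset is $c$. Then the other two codes are at most $c$ and sum to $300-c$. I will treat each possible $c$ in turn.

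\begin{itemize}
\item $c=261$: we need a pair summing to $39$. The pairs $8+31$ and $16+23$ work, giving $F_0$ and $F_3$.
\item $c=188$: we need $112$. The pairs $16+96$, $23+89$ and $27+85$ work, giving $F_9$, $F_8$ and $F_5$. A repeated $188$ would need the third code to be $-76$, which is impossible.
\item $c=184$: we need $116$. The pairs $31+85$ and $27+89$ give $F_1$ and $F_4$.
\item $c=177$: we need $123$. The pairs $8+115$ and $27+96$ give $F_2$ and $F_6$.
\item $c=115$: we need $185$ with both codes at most $115$. Only $89+96$ works, giving $F_7$.
\item $c\le 96$: the sum is at most $3\cdot 96=288<300$, so there is no solution.
\end{itemize}

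In each case I will confirm that no other pair works by scanning the complements $300-c-x$ for every admissible $x\le c$ and checking membership in the code list. This includes the diagonal cases $x=y$ and $x=c$, which handle repeated roles.

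The main obstacle is clerical rather than conceptual. The risk is missing a pair, especially a repeated-role multiset such as $\{Q_{10},Q_{10},x\}$. I will guard against this by writing, in each case, the full list of complements $300-c-x$ and checking each one against the sorted code list. Because the codes are distinct, each admissible sum identifies the role multiset uniquely, which completes the identification with $F_0,\ldots,F_9$.
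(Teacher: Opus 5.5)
Your proof is correct and follows the paper's argument: the paper also proves the converse by a finite complementary-pair check (its appendix table lists, for each role, every pair of roles completing it to $300$). Fixing the largest code rather than an arbitrary role is only a streamlining. Each admissible multiset is found exactly once, and the bound $3\cdot 96<300$ disposes of eight roles at once, whereas the paper's table deliberately lists each allowed triple three times as a symmetric cross-check.
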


\begin{proof}
Direct addition shows that every listed pattern has sum $300$.  Conversely, there are only $\binom{13+3-1}{3}=455$ unordered triples with repetition.  The complementary-pair certificate in \cref{app:role-codes} lists, for each possible first role, every role pair that completes it to $300$; it yields exactly the ten patterns above and no others.
\end{proof}

\subsection{Target identities}

A \emph{target identity} is an algebraic equality showing that the coordinates of an intended filler pattern add to the target for every permissible choice of the source data.  It is called an identity because it is built into the formulas in \cref{n3dm:tab:weights}; it is not an additional condition that the source instance must happen to satisfy.  For a local filler triple, the role and source coordinates must equal $300$ and $3t$, while the group and group-square coordinates must equal zero.  By contrast, a main $P_1P_2P_3$ triple reaches the source target only when the selected source values satisfy the original equation $a_i+b_j+c_k=t$.

Every local filler triple has vector sum $\Kvec$.  For example,
\begin{align*}
 \eta_{P_1,i}+\eta_{Q_1,i}+\eta_{Q_2,i}
 &=3a_i-3b_i+(3t-3a_i+3b_i)=3t,\\
 \eta_{Q_1,i}+\eta_{Q_7,i}+\eta_{Q_8,i}
 &=-3b_i+(2a_i-b_i-c_i)\\
 &\qquad +(3t-2a_i+4b_i+c_i)=3t.
\end{align*}
The remaining source-coordinate identities, together with the group coefficients, are summarized in \cref{n3dm:tab:patterns}.  Since all occurrences in $F_r(i, i, i)$ have the same index $i$, a zero sum of the $h_R$ coefficients makes both group coordinates vanish.

\begin{table}[H]
\centering
\caption{Coordinate identities for the filler patterns.  In every row the role-code sum is $300$, the source-coordinate sum is $3t$, and the listed group coefficients sum to zero.}
\label{n3dm:tab:patterns}
\small
\renewcommand{\arraystretch}{1.1}
\begin{tabular}{@{}c l c@{}}
\toprule
Pattern & Roles & Group coefficients \\
\midrule
$F_1$ & $P_1,Q_1,Q_2$ & $0,2,-2$ \\
$F_2$ & $P_2,Q_3,Q_4$ & $0,-2,2$ \\
$F_3$ & $P_3,Q_5,Q_6$ & $0,2,-2$ \\
$F_4$ & $Q_1,Q_7,Q_8$ & $2,1,-3$ \\
$F_5$ & $Q_2,Q_7,Q_9$ & $-2,1,1$ \\
$F_6$ & $Q_3,Q_7,Q_{10}$ & $-2,1,1$ \\
$F_7$ & $Q_4,Q_8,Q_{10}$ & $2,-3,1$ \\
$F_8$ & $Q_5,Q_8,Q_9$ & $2,-3,1$ \\
$F_9$ & $Q_6,Q_9,Q_{10}$ & $-2,1,1$ \\
\bottomrule
\end{tabular}
\end{table}

\paragraph{How to read \cref{n3dm:tab:patterns}.}
Each row is a blueprint for one allowed filler edge.  The first three rows attach one port to two private items; these are the edges that will consume two of the three available incidences of each port.  The remaining six rows connect only private items and close the local degree-balancing network.  The final column is a quick locality check.  Because the listed coefficients add to zero, taking all private roles from the same group $i$ makes both the group-coordinate sum $i\sum h_R$ and the group-square sum $i^2\sum h_R$ equal zero.

The table is best read as a verification that every intended filler pattern succeeds, not as the proof that no other pattern succeeds.  The role-filter lemma supplies that converse by showing that only these nine filler patterns and the one main pattern have role-code sum $300$.  The source-expression identities then verify the arithmetic target, and the two group coordinates prevent private roles from unrelated groups from being spliced together.  In this way the four coordinates divide the work: role codes select the pattern, source expressions check its numerical meaning, and group data enforce locality.

For a main triple, the port coefficients $h_{P_1},h_{P_2},h_{P_3}$ are all zero, and therefore
\begin{equation}\label{n3dm:eq:main-equivalence}
 \wvec(P_{1,i})+\wvec(P_{2,j})+\wvec(P_{3,k})=\Kvec
 \quad\Longleftrightarrow\quad
 a_i+b_j+c_k=t.
\end{equation}

\subsection{Encoding the vector weights by positive integers}\label{n3dm:subsec:encoding}

Recall that $S=\biguplus_{i=1}^n S_i$ is the common occurrence set, with $|S|=13n$; these  occurrences have four-dimensional weights $\omega(R_i)$, which we now encode as ordinary positive integers.  We index the four coordinates by $r=0,1,2,3$, corresponding to the role, source, group, and group-square coordinates in that order.

Some source and group coordinates in Equation~(\ref{eq:vector-weight}) may be negative, so we first shift each coordinate to be nonnegative.  For each coordinate $r$, choose a shift
\[
 D_r\ge \max\bigl\{0,-\min_{u\in S}\wvec_r(u)\bigr\},
\]
and define the \emph{shifted weight} and shifted target in coordinate $r$ by
\[
 \overline\wvec_r(u)=\wvec_r(u)+D_r,
 \qquad
 \overline K_r=K_r+3D_r
 \qquad (r=0, 1,2,3).
\]
Uniform shifting preserves triple equalities because every candidate contains exactly three items.  The role coordinate needs no shift, so $D_0=0$; the other shifts depend on the instance.  For example, in a small instance with target $t=9$ and groups $(a_i,b_i,c_i)=(2,3,4)$ and $(1,5,3)$, the coordinate minima over all occurrences are $(8,-21,-6,-12)$, giving $D=(0,21,6,12)$.  In general one may take $D_r=\max\{0,-\min_{u}\wvec_r(u)\}$.

Choose an integer base $L$ satisfying
\begin{equation}\label{n3dm:eq:base-choice}
 L>
 \max_{0\le r\le 3}
 \left\{\overline K_r,
 3\max_{u\in S}\overline\wvec_r(u)\right\}.
\end{equation}
Any integer above this bound works; in the small instance above the bound evaluates to $783$, so $L=1000$ is a convenient choice.  Define the ordinary item weight and target by
\begin{equation}\label{n3dm:eq:mixed-radix}
 W(u)=\sum_{r=0}^3 \overline\wvec_r(u)\,L^{\,r},
 \qquad
 K^\star=\sum_{r=0}^3 \overline K_r\,L^{\,r}.
\end{equation}
Every $W(u)$ is positive because its role digit is positive.

\paragraph{The item weights explicitly.}
Writing the shifted weight of each coordinate out, the integer weight of an occurrence of role $R$ in group $i$ is
\[
 W(R_i)=\overline\wvec_0(R_i)+\overline\wvec_1(R_i)\,L+\overline\wvec_2(R_i)\,L^2+\overline\wvec_3(R_i)\,L^3,
\]
a weighted sum of powers of $L$.  Substituting the role data of \cref{n3dm:tab:weights} and the shifts $D=(0,D_1,D_2,D_3)$ gives, role by role,
\begin{align*}
 W(P_{1,i})&=31+(3a_i+D_1)L+D_2L^2+D_3L^3,\\
 W(P_{2,i})&=8+(3b_i+D_1)L+D_2L^2+D_3L^3,\\
 W(P_{3,i})&=261+(3c_i+D_1)L+D_2L^2+D_3L^3,\\
 W(Q_{1,i})&=184+(-3b_i+D_1)L+(2i+D_2)L^2+(2i^2+D_3)L^3,\\
 W(Q_{2,i})&=85+(3t-3a_i+3b_i+D_1)L+(-2i+D_2)L^2+(-2i^2+D_3)L^3,\\
 W(Q_{3,i})&=177+(3t-2a_i+b_i+c_i+D_1)L+(-2i+D_2)L^2+(-2i^2+D_3)L^3,\\
 W(Q_{4,i})&=115+(2a_i-4b_i-c_i+D_1)L+(2i+D_2)L^2+(2i^2+D_3)L^3,\\
 W(Q_{5,i})&=23+(a_i-2b_i-2c_i+D_1)L+(2i+D_2)L^2+(2i^2+D_3)L^3,\\
 W(Q_{6,i})&=16+(3t-a_i+2b_i-c_i+D_1)L+(-2i+D_2)L^2+(-2i^2+D_3)L^3,\\
 W(Q_{7,i})&=27+(2a_i-b_i-c_i+D_1)L+(i+D_2)L^2+(i^2+D_3)L^3,\\
 W(Q_{8,i})&=89+(3t-2a_i+4b_i+c_i+D_1)L+(-3i+D_2)L^2+(-3i^2+D_3)L^3,\\
 W(Q_{9,i})&=188+(a_i-2b_i+c_i+D_1)L+(i+D_2)L^2+(i^2+D_3)L^3,\\
 W(Q_{10,i})&=96+D_1L+(i+D_2)L^2+(i^2+D_3)L^3,
\end{align*}
and the target is $K^\star=300+(3t+3D_1)L+3D_2L^2+3D_3L^3$.  These are the concrete positive integers on which \SymNthreeDM{} is finally posed.

\paragraph{What a carry means.}
In base-$L$ arithmetic, a \emph{carry} occurs when the sum in one digit position reaches at least $L$: one full block of size $L$ is removed from that position and transferred as one unit to the next higher position.  A carry would be dangerous here because an excess in one logical coordinate could then alter the digit representing the next coordinate and potentially conceal a mismatch.  The inequality in \cref{n3dm:eq:base-choice} makes every sum of three item digits strictly smaller than $L$.  Hence each coordinate can be compared independently, exactly as if the four coordinates were still written separately.

\begin{lemma}[No-carry encoding]\label{n3dm:lem:no-carry}
For any three occurrences $u,v,z\in S$,
\[
 W(u)+W(v)+W(z)=K^\star
 \quad\Longleftrightarrow\quad
 \wvec(u)+\wvec(v)+\wvec(z)=\Kvec.
\]
\end{lemma}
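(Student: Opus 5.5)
The plan is to pass to shifted coordinates and then read off base-$L$ digits. Because each of $u,v,z$ contributes the shift $D_r$ exactly once in coordinate $r$, we have, for every $r$,
\[
 \overline\wvec_r(u)+\overline\wvec_r(v)+\overline\wvec_r(z)=\wvec_r(u)+\wvec_r(v)+\wvec_r(z)+3D_r .
\]
Since $\overline K_r=K_r+3D_r$, the vector identity $\wvec(u)+\wvec(v)+\wvec(z)=\Kvec$ holds if and only if the shifted digit sums $s_r:=\overline\wvec_r(u)+\overline\wvec_r(v)+\overline\wvec_r(z)$ equal $\overline K_r$ for $r=0,1,2,3$. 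So it suffices to prove that $W(u)+W(v)+W(z)=K^\star$ is equivalent to $s_r=\overline K_r$ for all $r$.

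For the direction ($\Leftarrow$), substitute into \eqref{n3dm:eq:mixed-radix}. This gives $W(u)+W(v)+W(z)=\sum_r s_rL^r=\sum_r\overline K_rL^r=K^\star$.

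For the direction ($\Rightarrow$), the key point is uniqueness of base-$L$ representations. Each shifted coordinate is nonnegative by the choice of $D_r$, so $0\le s_r\le 3\max_{u\in S}\overline\wvec_r(u)<L$ by \eqref{n3dm:eq:base-choice}. Likewise $0\le\overline K_r<L$, again by \eqref{n3dm:eq:base-choice}; nonnegativity of $\overline K_r$ holds because $K_0=300$, $K_1=3t>0$, $K_2=K_3=0$, and $D_r\ge0$. Both $(s_0,\dots,s_3)$ and $(\overline K_0,\dots,\overline K_3)$ are therefore legitimate base-$L$ digit strings for the same integer. Uniqueness of base-$L$ expansion then forces $s_r=\overline K_r$ for every $r$.

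If one prefers to avoid citing uniqueness, the same conclusion follows by induction on $r$. Reducing the equation modulo $L$ gives $s_0\equiv\overline K_0\pmod L$, and since both lie in $[0,L)$ they are equal. Subtracting this digit, dividing by $L$, and repeating handles $r=1,2,3$.

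The only step needing care is verifying that every digit, on both sides, lies in $[0,L)$: nonnegativity comes from the shifts, and the upper bound is exactly what the base choice \eqref{n3dm:eq:base-choice} was designed to guarantee. The rest is routine.
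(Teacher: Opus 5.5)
Your proposal is correct and is essentially the paper's proof: shift every coordinate to a nonnegative digit, use the base bound \eqref{n3dm:eq:base-choice} to keep every three-item digit sum and every target digit in $\{0,\ldots,L-1\}$ so that no carry occurs, compare base-$L$ digits, and then subtract $3D_r$ to recover the signed vector equality. Your explicit check that $\overline K_r\ge 0$ and your mod-$L$ induction fill in small details that the paper leaves implicit.
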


\begin{proof}
By \cref{n3dm:eq:base-choice}, every digit of the sum of three encoded item weights lies in $\{0,\ldots,L-1\}$, as does every target digit.  Hence no carry occurs in any coordinate, and equality of the encoded integers is equivalent to coordinatewise equality of the shifted vectors.  Subtracting $3D_r$ in coordinates $r=1,2,3$ gives the original vector equality.
\end{proof}

The output instance is now the ordinary \NthreeDM{} instance
\[
 I^\star=(X^\star,Y^\star,Z^\star,W,K^\star),
\]
where all three classes carry the same multiset $\mset{W(u):u\in S}$.

\begin{figure}[H]
\centering
\begin{tikzpicture}[
  slot/.style={draw,rounded corners=1pt,minimum width=10mm,minimum height=7.5mm,fill=gray!7},
  item/.style={anchor=east,font=\small},
  font=\small
]
\node[font=\small\bfseries] at (1.05,2.0) {three incidence slots};
\node[item] at (-0.7,1.25) {$P_{1,i}$};
\node[item] at (-0.7,0.45) {$P_{2,i}$};
\node[item] at (-0.7,-0.35) {$P_{3,i}$};
\node[item] at (-0.7,-1.15) {$Q_{1,i},\ldots,Q_{10,i}$};
\foreach \y in {1.25,0.45,-0.35}{
  \node[slot] at (0,\y) {L};
  \node[slot] at (1.05,\y) {L};
  \node[slot,fill=blue!4] at (2.10,\y) {M};
}
\node[slot] at (0,-1.15) {L};
\node[slot] at (1.05,-1.15) {L};
\node[slot] at (2.10,-1.15) {L};
\node[anchor=west,align=left,font=\small] at (3.0,0.65)
  {$L$: incidence in a local filler edge\\[2pt]
   $M$: incidence in a main edge};
\end{tikzpicture}
\caption{Degree budget of one source group.  The private-degree equations force the pattern $L,L,M$ at every port and $L,L,L$ at every private item.  Hence each port has exactly one incidence available for a main triple.}
\label{n3dm:fig:degree-budget}
\end{figure}

\paragraph{How to read \cref{n3dm:fig:degree-budget}.}
Each row represents one underlying occurrence after the outer labels $X,Y,Z$ have been forgotten, and the three boxes represent its three available incidence slots.  The letters $L$ and $M$ describe functions, not predetermined class labels: $L$ marks a use in a local filler edge, while $M$ marks a use in a main edge that encodes one source triple.  Edge coloring later decides which of the three physical copies supplies each slot.

For a port, the desired pattern is $L,L,M$.  Two local uses make the surrounding private gadget close correctly, while the single remaining main use allows that source occurrence to participate once in the recovered matching.  A private item has pattern $L,L,L$, so all of its incidence capacity is absorbed inside the local gadget and it can never leak into a main triple.  The picture is therefore a visual summary of what the degree equations will prove; it is not an extra assumption imposed on a solution.

\section{Correctness of the reduction}

\subsection{The ``if'' case}\label{n3dm:sec:if}

Assume that the source instance has a feasible matching.  Thus there are index triples
\[
 \mathcal M=\{(i_r,j_r,k_r):r=1,\ldots,n\}
\]
such that each index in every source class occurs exactly once and
\begin{equation}\label{n3dm:eq:source-solution}
 a_{i_r}+b_{j_r}+c_{k_r}=t
 \qquad (r=1,\ldots,n).
\end{equation}
We construct a 3-regular target-sum triple system $\mathcal H$ on $S$.

For every group $i$, include two parallel copies of each of $F_1(i)$, $F_2(i)$, and $F_3(i)$, and one copy of each of $F_4(i),\ldots,F_9(i)$.  For every source matching triple $(i_r,j_r,k_r)$, include the main hyperedge
\[
 F_0(i_r,j_r,k_r)=\bigl(P_{1,i_r},P_{2,j_r},P_{3,k_r}\bigr).
\]
Every filler edge has vector sum $\Kvec$ by \cref{n3dm:tab:patterns}, and every main edge has vector sum $\Kvec$ by \cref{n3dm:eq:source-solution,n3dm:eq:main-equivalence}.  By \cref{n3dm:lem:no-carry}, every edge has ordinary weight $K^\star$.

It remains to check degrees.  Each port occurs twice in its corresponding local filler edge and once in the unique main edge selected by the source solution.  Hence every port has degree three.  For the private items, \cref{n3dm:tab:forward-private-degrees} lists, for each $Q_{\ell,i}$, the filler triples that contain it and the number of copies of each; the incidences always total three.

\begin{table}[H]
\centering
\caption{Incidence ledger of the forward construction: each private item $Q_{\ell,i}$, the local filler triples containing it (with the number of copies), and its total incidence degree.}
\label{n3dm:tab:forward-private-degrees}
\small
\renewcommand{\arraystretch}{1.15}
\begin{tabular}{@{}c l c@{}}
\toprule
Private item & Filler triples containing it (copies) & Total incidences \\
\midrule
$Q_{1,i}$ & $F_1$ (two copies), $F_4$ & $3$ \\
$Q_{2,i}$ & $F_1$ (two copies), $F_5$ & $3$ \\
$Q_{3,i}$ & $F_2$ (two copies), $F_6$ & $3$ \\
$Q_{4,i}$ & $F_2$ (two copies), $F_7$ & $3$ \\
$Q_{5,i}$ & $F_3$ (two copies), $F_8$ & $3$ \\
$Q_{6,i}$ & $F_3$ (two copies), $F_9$ & $3$ \\
$Q_{7,i}$ & $F_4$, $F_5$, $F_6$ & $3$ \\
$Q_{8,i}$ & $F_4$, $F_7$, $F_8$ & $3$ \\
$Q_{9,i}$ & $F_5$, $F_8$, $F_9$ & $3$ \\
$Q_{10,i}$ & $F_6$, $F_7$, $F_9$ & $3$ \\
\bottomrule
\end{tabular}
\end{table}

 \paragraph{How to read \cref{n3dm:tab:forward-private-degrees}.}
The table is an incidence ledger.  For instance, the row for $Q_{1,i}$ records that it appears in each of the two copies of $F_1$ and in the one copy of $F_4$, giving total degree three; the row for $Q_{7,i}$ records that it appears once in each of $F_4,F_5,F_6$, again degree three.  Where a private item appears in ``two copies'' of a pattern, these are two distinct hyperedge occurrences (parallel edges), not two copies of the item inside one hyperedge.  Reading every row this way confirms that the forward construction uses each private occurrence exactly three times.

The multiset $\mathcal H$ therefore satisfies the conditions of \cref{n3dm:lem:copy-forgetting}.  Recoloring its incidence graph produces a feasible matching among $X^\star,Y^\star,Z^\star$.  Therefore $I^\star$ is a ``yes'' instance.

\subsection{The ``only if'' case}\label{n3dm:sec:only-if}

Assume that the constructed symmetric instance $I^\star$ is feasible.  By \cref{n3dm:lem:copy-forgetting}, after forgetting the outer class labels we obtain a multiset $\mathcal H$ of target-sum triples on $S$ such that every occurrence of $S$ has degree three.  By \cref{n3dm:lem:no-carry}, every hyperedge has vector sum $\Kvec$.

\subsubsection{Every edge has an intended role pattern}

The first coordinate of every edge sums to $300$.  The role filter, \cref{n3dm:lem:role-filter}, therefore implies that every edge has one of the role patterns $F_0,F_1,\ldots,F_9$.  An $F_0$-edge is called a \emph{main edge}; all other edges are \emph{filler edges}.  In particular, no private $Q$-item can occur in a main edge, and each port type $P_r$ can occur only in $F_0$ or its corresponding filler pattern $F_r$.

\subsubsection{The group coordinates synchronize private items}

We first isolate the elementary two-moment argument used by the gadget.
\begin{lemma}[Two moments force a common group]\label{n3dm:lem:two-moments}
Let $\alpha,\beta,\gamma$ be nonzero real numbers satisfying $\alpha+\beta+\gamma=0$.  If indices $i,j,k$ satisfy
\[
 \alpha i+\beta j+\gamma k=0,
 \qquad
 \alpha i^2+\beta j^2+\gamma k^2=0,
\]
then $i=j=k$.
\end{lemma}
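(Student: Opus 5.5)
The plan is to eliminate one coefficient using the first-moment relation, and then turn the two moment equations into a statement about products of differences of indices. Since $\gamma=-(\alpha+\beta)$, the two hypotheses become
\[
 \alpha(i-k)+\beta(j-k)=0,
 \qquad
 \alpha(i^2-k^2)+\beta(j^2-k^2)=0 .
\]
I will write $x=i-k$ and $y=j-k$. The first relation gives $\alpha x=-\beta y$. Factor the second relation as $\alpha x(i+k)+\beta y(j+k)=0$ and substitute $\alpha x=-\beta y$. This yields
\[
 \beta y\bigl((j+k)-(i+k)\bigr)=\beta y(j-i)=0 .
\]

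Because $\beta\neq0$, we get $y=0$ or $j=i$. I will treat the two cases separately.

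If $y=0$, then $j=k$, and the first relation reduces to $\alpha x=0$. Since $\alpha\ne0$, this gives $x=0$, so $i=k$ and all three indices agree.

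If $j=i$, then $x=y$, and the first relation becomes $(\alpha+\beta)x=0$. Here $\alpha+\beta=-\gamma\neq0$, so $x=0$. Hence $i=k=j$.

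There is no real obstacle in this argument. The only point needing care is that every one of the three nonzero hypotheses is actually used: $\beta$ when cancelling in the derived product, $\alpha$ in the first case, and $\gamma$ in the second. This matters because dropping any one of them would allow a splice between two groups. For example, with $\gamma=0$ one could take $i=j\neq k$.

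An equivalent way to phrase the argument is as a weighted variance. Normalize so that one of the weights is isolated, say $\gamma=-(\alpha+\beta)$ with $\alpha,\beta$ of the same sign after relabeling. Then the two moment equations say that $k$ is the weighted mean of $i$ and $j$, and also that $k^2$ is the weighted mean of $i^2$ and $j^2$. That forces the weighted variance $\tfrac{\alpha\beta}{(\alpha+\beta)^2}(i-j)^2$ to vanish. I will nonetheless present the algebraic case split above, because it avoids any sign normalization.
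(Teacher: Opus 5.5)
Your argument is correct; it differs from the paper's only in how you use the second-moment equation. The paper normalizes signs so that $\alpha,\beta>0$ and $\gamma=-(\alpha+\beta)$. It then solves the first equation for $k=(\alpha i+\beta j)/(\alpha+\beta)$ and substitutes to obtain the single perfect square $\frac{\alpha\beta}{\alpha+\beta}(i-j)^2=0$. That gives $i=j$ at once, and then $k=i$.

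You instead factor the second equation as a difference of squares and reach the product $\beta(j-k)(j-i)=0$, which costs a two-case analysis. The two computations are the same underneath. The first equation gives $j-k=\frac{\alpha}{\alpha+\beta}(j-i)$, and substituting this into your product turns it into exactly the paper's square. Your case split is the price of not making that last substitution.

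In exchange, your version involves no division and shows explicitly where each nonzero hypothesis enters. The paper's version gives the zero-variance reading that it reuses elsewhere, for example for the fallback triples of the Part~III compiler. The paper's sign normalization is only expository, since its identity needs just $\alpha\beta\neq0$ and $\alpha+\beta\neq0$. So avoiding the normalization is an advantage of your presentation, not of your algebra.
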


\begin{proof}
Since the coefficients are nonzero and sum to zero, two of them share one sign and the third has the opposite sign.  After permuting the terms and multiplying all coefficients by $-1$ if necessary, assume $\alpha,\beta>0$ and $\gamma=-(\alpha+\beta)$.    The first equation gives $k=(\alpha i+\beta j)/(\alpha+\beta)$.  Substituting into the second equation gives
\[
 0=\alpha i^2+\beta j^2-(\alpha+\beta)k^2
   =\frac{\alpha\beta}{\alpha+\beta}(i-j)^2.
\]
Hence $i=j$, and the first equation then yields $k=i$.
\end{proof}

An $F_1$-edge means a hyperedge whose three role types are $P_1,Q_1,Q_2$; before the group coordinates are analyzed, its three occurrences may in principle carry different group indices.  In such an edge, the \emph{private occurrences} are the two $Q$-type occurrences.  They are called private because they are internal components of the filler gadget, unlike the port occurrence $P_1$, which is the interface through which the group can participate in a main source-encoding edge.

Consider an $F_1$-edge with private occurrences $Q_{1,i}$ and $Q_{2,j}$.  Its group coordinate is $2i-2j$, so equality to zero implies $i=j$.  The same argument applies to the private pairs in $F_2$ and $F_3$. For each of $F_4,\ldots,F_9$, the three nonzero group coefficients are one of $(2,1,-3)$, $(-2,1,1)$, $(2,-3,1)$.  In each case, the coefficients are nonzero, sum to zero, and have two entries of one sign and one of the opposite sign, so the two-moment lemma forces all three private occurrences of such an edge to have one common group index.

Consequently, every filler edge has all of its private $Q$-items in a uniquely determined group.  In $F_1,F_2,F_3$, the port index is not yet forced to equal that group index; this will be controlled by the source coordinate.

\subsubsection{Private degrees force the filler multiplicities}

A \emph{filler multiplicity} is the number of times a specified filler-edge pattern occurs in the unlabeled triple system, counting parallel hyperedge occurrences separately.  It is a count of edges, not an item weight and not a count of physical $X$-, $Y$-, or $Z$-copies.  Determining these multiplicities tells us how much of each port's degree budget is consumed locally.

Fix a group $i$ and a filler pattern $F_r$ with $r\in\{1,\ldots,9\}$.
Look through the triple system $\mathcal H$ and select the edges that
(i) have pattern $F_r$, and (ii) have every one of their private items
in group $i$.  Then $x_{r,i}$ is simply the number of such edges:
\[
  x_{r,i}
  \;=\;
  \bigl|\{\,e\in\mathcal H \;:\;
  e \text{ has pattern } F_r,\ \text{and all private items of } e
  \text{ lie in group } i\,\}\bigr|.
\]
In other words, among all the filler edges of type $F_r$ present in the
solution, $x_{r,i}$ counts how many are ``localized'' at group $i$ in the
sense that all their private items come from that group.  Three remarks
make this unambiguous.

\begin{itemize}
\item It counts \emph{edges}, not incidences: a single $F_r$-edge
      contributes $1$ to $x_{r,i}$, no matter how many private items it
      contains (two for $F_1,F_2,F_3$, three for $F_4,\ldots,F_9$).
\item The condition asks that \emph{all} of an edge's private items lie
      in group $i$, not merely one of them.  This is meaningful because
      the previous subsection showed that the group coordinates force
      every filler edge's private items to share a single common group
      index, so each filler edge is localized at exactly one group.
\item For $F_1,F_2,F_3$ the condition constrains only the two private
      items; the port of such an edge may belong to a different group,
      and its value is pinned separately in the next subsection.
\end{itemize}

 Before writing the equations, it helps to have a concrete picture of
what $x_{r,i}$ measures.  Consider the intended solution --- the one the
forward construction produces.  There, each group $i$ contributes two
copies of each of the port-attaching patterns $F_1,F_2,F_3$ and one copy
of each of the private-only patterns $F_4,\ldots,F_9$; that is,
\[
  x_{1,i}=x_{2,i}=x_{3,i}=2,
  \qquad
  x_{4,i}=x_{5,i}=\cdots=x_{9,i}=1 .
\]
For instance, $x_{7,i}=1$ records that a \emph{single} $F_7$-edge is
localized at group $i$.  That one edge, $F_7=(Q_4,Q_8,Q_{10})$, meets
three private items, contributing one incidence to each of
$Q_{4,i},Q_{8,i},Q_{10,i}$ --- which is why $x_{7,i}$ appears in the
degree equation of each of those three items below.  These are the
values we expect; the equations that follow prove they are the
\emph{only} nonnegative-integer values consistent with the degree
constraints.
Each private occurrence has total degree three, so
\begin{align}
 x_{1,i}+x_{4,i}&=3, & x_{1,i}+x_{5,i}&=3,\label{n3dm:eq:q12}\\
 x_{2,i}+x_{6,i}&=3, & x_{2,i}+x_{7,i}&=3,\label{n3dm:eq:q34}\\
 x_{3,i}+x_{8,i}&=3, & x_{3,i}+x_{9,i}&=3,\label{n3dm:eq:q56}\\
 x_{4,i}+x_{5,i}+x_{6,i}&=3,\label{n3dm:eq:q7}\\
 x_{4,i}+x_{7,i}+x_{8,i}&=3,\label{n3dm:eq:q8}\\
 x_{5,i}+x_{8,i}+x_{9,i}&=3,\label{n3dm:eq:q9}\\
 x_{6,i}+x_{7,i}+x_{9,i}&=3.\label{n3dm:eq:q10}
\end{align}

\paragraph{Collective meaning of \eqref{n3dm:eq:q12}--\eqref{n3dm:eq:q10}.}
These are the reverse-direction versions of the incidence ledger in \cref{n3dm:tab:forward-private-degrees}.  Each equality is the degree-balance equation for one private occurrence $Q_{\ell,i}$.  Its left-hand side adds the multiplicities of all filler patterns that contain that role, and its right-hand side is $3$ because every underlying occurrence has total degree three.  For instance, $Q_{1,i}$ can appear only in $F_1$ and $F_4$, which gives $x_{1,i}+x_{4,i}=3$; $Q_{7,i}$ can appear only in $F_4,F_5,F_6$, which gives $x_{4,i}+x_{5,i}+x_{6,i}=3$.

The first three numbered lines pair the balances for $Q_1,Q_2$, then $Q_3,Q_4$, then $Q_5,Q_6$.  The final four lines give the balances for $Q_7,Q_8,Q_9,Q_{10}$.  Because the role filter excludes every other use of a private item, these seven numbered lines contain all ten private-degree constraints and no hidden terms.  Notice also that no source value appears: this is a purely combinatorial locking system.  Its unique nonnegative integral solution forces two copies of each port-attaching pattern $F_1,F_2,F_3$ and one copy of each private-only pattern $F_4,\ldots,F_9$.

The paired equations imply
\[
 x_{4,i}=x_{5,i}=:a,\qquad x_{6,i}=x_{7,i}=:b,\qquad x_{8,i}=x_{9,i}=:c.
\]

Equations \eqref{n3dm:eq:q7}--\eqref{n3dm:eq:q10} become
\[
 2a+b=3,
 \qquad a+b+c=3,
 \qquad a+2c=3,
 \qquad 2b+c=3.
\]
Subtracting the second equation from the first gives $a=c$, and subtracting the second from the third gives $c=b$.  Hence $a=b=c$, and $a+b+c=3$ yields
\[
 a=b=c=1.
\]
Returning to \eqref{n3dm:eq:q12}--\eqref{n3dm:eq:q56}, we obtain the forced multiplicities
\begin{equation}\label{eq:forced-multiplicities}
 x_{1,i}=x_{2,i}=x_{3,i}=2,
 \qquad
 x_{4,i}=\cdots=x_{9,i}=1.
\end{equation}
Thus every group consumes exactly two filler incidences of each port type and all three incidences of every private item, precisely as suggested by \cref{n3dm:fig:degree-budget}.

\subsubsection{The source coordinate identifies the port values}

Consider an $F_1$-edge whose private items come from group $i$.  Their source-coordinate sum is
\begin{equation}\label{eq:f1-private-sum}
 \eta_{Q_1,i}+\eta_{Q_2,i}
 =-3b_i+(3t-3a_i+3b_i)
 =3t-3a_i.
\end{equation}
If the port in this edge is $P_{1,j}$, the total source coordinate equals $3t$ if and only if
\[
 3a_j+(3t-3a_i)=3t,
\]
that is, if and only if $a_j=a_i$.

Similarly,
\begin{align}
 \eta_{Q_3,i}+\eta_{Q_4,i}&=3t-3b_i,\label{eq:f2-private-sum}\\
 \eta_{Q_5,i}+\eta_{Q_6,i}&=3t-3c_i.\label{eq:f3-private-sum}
\end{align}
Therefore an $F_2$-edge associated with group $i$ can use only a port $P_{2,j}$ with $b_j=b_i$, and an $F_3$-edge associated with group $i$ can use only a port $P_{3,j}$ with $c_j=c_i$.

Equal-valued source occurrences may be exchanged among groups, which is harmless.  What matters is that the multiplicities of each numerical value are preserved.

\subsubsection{The main edges recover the source matching}

Everything so far has concerned filler edges.  This final step shows that what the fillers \emph{leave behind} is exactly a source matching.

\emph{Setup.}  Recall that $A=\mset{a_1,\ldots,a_n}$ is the multiset of source $A$-values, and that the $P_1$-ports carry precisely these values (similarly $P_2\leftrightarrow B$ and $P_3\leftrightarrow C$).  Fix any source value $\alpha$, and let
\[
 r_A(\alpha)=|\{i\in[n]:a_i=\alpha\}|
\]
be the number of groups whose $A$-value equals $\alpha$.  Because the previous subsection pinned only a port's \emph{value} and not its group, we count per value rather than per group.

\emph{Incidences available.}  There are $r_A(\alpha)$ ports $P_1$ of value $\alpha$, each of degree three, so they supply $3r_A(\alpha)$ incidences of value $\alpha$ in total.

\emph{Incidences consumed by fillers.}  A $P_1$-port can appear only in an $F_1$-edge or in the main pattern $F_0$ (role filter).  Each value-$\alpha$ group contributes exactly two $F_1$-edges --- the forced multiplicity $x_{1,i}=2$ of \cref{eq:forced-multiplicities} --- and by \cref{eq:f1-private-sum} each of those edges uses a $P_1$-port of value $\alpha$.  So filler edges consume $2r_A(\alpha)$ incidences of value $\alpha$.

\emph{Incidences left for main edges.}  Thus, 
\[
 3r_A(\alpha)-2r_A(\alpha)=r_A(\alpha)
\]
incidences of value-$\alpha$ $P_1$-ports remain for main edges.  Since this holds for every value $\alpha$, the multiset of $A$-values in the $P_1$ positions of the main edges is exactly $A$.  The identical count with $(F_2,P_2)$ and $(F_3,P_3)$ shows the $P_2$ positions contain exactly $B$ and the $P_3$ positions exactly $C$.

\emph{Reading off the matching.}  In particular there are exactly $n$ main edges, each of the form
\[
 F_0(i,j,k)=(P_{1,i},P_{2,j},P_{3,k}),
\]
whose source-coordinate equation $3a_i+3b_j+3c_k=3t$ gives $a_i+b_j+c_k=t$.  The $n$ main edges therefore use the source multisets $A,B,C$ exactly once each and meet the common target, so they form a feasible matching of the source \NthreeDM{} instance.  If a value repeats, its equal-valued labeled occurrences may be assigned to the corresponding main-edge slots in any way; the per-value counts above guarantee the numbers agree.

This completes the ``Only if'' direction.

\section{Unary size and completion of the theorem}\label{n3dm:sec:size}

Let
\[
 U=\max\bigl(\{t\}\cup\{a_i,b_i,c_i:i\in[n]\}\bigr).
\]
The role coordinate is bounded by an absolute constant.  Every source-coordinate expression in \cref{n3dm:tab:weights} has magnitude $O(U)$.  Since $|h_R|\le 3$ and $i\le n$, the group and group-square coordinates have magnitudes $O(n)$ and $O(n^2)$, respectively.  The shifts $D_r$ and the base $L$ in \cref{n3dm:eq:base-choice} can therefore be chosen with
\[
 D_r=O(U+n^2),
 \qquad
 L=O(U+n^2).
\]
Every encoded item weight and the encoded target are smaller than $L^4$, and hence are bounded by
\[
 O\bigl((U+n^2)^4\bigr).
\]

The common occurrence set has $13n$ elements, and the ordinary output lists three copies, for $39n$ item occurrences in total.  Its unary length is therefore
\[
 O\bigl(n(U+n^2)^4\bigr),
\]
which is polynomial in the unary length of the source instance because both $n$ and $U$ are bounded by that length.  All shifts, base choices, and encoded weights can be computed in polynomial time.

Sections~\ref{n3dm:sec:if} and~\ref{n3dm:sec:only-if} prove
\[
 I\text{ is feasible}
 \quad\Longleftrightarrow\quad
 I^\star\text{ is feasible}.
\]
Thus unary \NthreeDM{} reduces in polynomial time to unary \SymNthreeDM.  The latter is in \NP, so it is unary \NP-complete.  This proves \cref{n3dm:thm:main}.  \qed

\section{Additional comments}\label{sec:comments}

\begin{remark}[The meaning of $X=Y=Z$]
The theorem concerns the standard three-class problem.  The symbols $X,Y,Z$ denote disjoint labeled classes, and feasibility always requires exactly one item from each class.  Equality means only that the three classes carry the same multiset of numerical weights.  The copy-forgetting lemma does not discard this requirement: it translates it into degree three and then restores the three labels by edge coloring.
\end{remark}

\begin{remark}[Repeated values]
No distinctness assumption is used.  The reverse direction works with multiplicity functions such as $r_A(\alpha)$, so equal-valued occurrences may be exchanged without changing the recovered matching.
\end{remark}

\begin{remark}[Parallel hyperedges]
The forward construction uses two parallel copies of $F_1(i)$, $F_2(i)$, and $F_3(i)$.  Parallelism exists only in the intermediate triple system.  The incidence-coloring step assigns the parallel edge occurrences different combinations of labeled copies as needed, producing an ordinary partition of the three output classes.
\end{remark}

\begin{remark}[Fixed finite certificates]
The role codes are fixed constants independent of the input.  Their sole purpose is to realize the ten permitted role patterns as exactly the triples with role sum $300$.  Section~\ref{app:role-codes} gives an explicit finite certificate.  A short independent verification program may also accompany the source distribution; it is not needed for the asymptotic argument.
\end{remark}

\begin{remark}[Scope]
A different problem in which one partitions a single unlabeled multiset into arbitrary triples is not the standard \NthreeDM{} problem and is not covered by this theorem.  Such a variant would require the numerical construction itself to encode the missing class labels.
\end{remark}

\section{Complementary-pair certificate for the role codes}\label{app:role-codes}

For a role $R$, \cref{n3dm:tab:complements} lists every unordered role pair $\{R',R''\}$ satisfying
\[
 \rho_R+\rho_{R'}+\rho_{R''}=300.
\]
Checking the displayed additions is immediate.  Since every unordered triple has at least one role that can be selected as $R$, the table is an exhaustive certificate for \cref{n3dm:lem:role-filter}.  The same allowed triple appears in three rows, once for each of its roles.

\begin{table}[H]
\centering
\caption{All complementary role pairs for target role sum $300$.}
\label{n3dm:tab:complements}
\small
\renewcommand{\arraystretch}{1.12}
\begin{tabularx}{\textwidth}{@{}c X@{}}
\toprule
$R$ & All pairs $\{R',R''\}$ with $\rho_R+\rho_{R'}+\rho_{R''}=300$ \\
\midrule
$P_1$ & $\{P_2,P_3\}$; $\{Q_1,Q_2\}$ \\
$P_2$ & $\{P_1,P_3\}$; $\{Q_3,Q_4\}$ \\
$P_3$ & $\{P_1,P_2\}$; $\{Q_5,Q_6\}$ \\
$Q_1$ & $\{P_1,Q_2\}$; $\{Q_7,Q_8\}$ \\
$Q_2$ & $\{P_1,Q_1\}$; $\{Q_7,Q_9\}$ \\
$Q_3$ & $\{P_2,Q_4\}$; $\{Q_7,Q_{10}\}$ \\
$Q_4$ & $\{P_2,Q_3\}$; $\{Q_8,Q_{10}\}$ \\
$Q_5$ & $\{P_3,Q_6\}$; $\{Q_8,Q_9\}$ \\
$Q_6$ & $\{P_3,Q_5\}$; $\{Q_9,Q_{10}\}$ \\
$Q_7$ & $\{Q_1,Q_8\}$; $\{Q_2,Q_9\}$; $\{Q_3,Q_{10}\}$ \\
$Q_8$ & $\{Q_1,Q_7\}$; $\{Q_4,Q_{10}\}$; $\{Q_5,Q_9\}$ \\
$Q_9$ & $\{Q_2,Q_7\}$; $\{Q_5,Q_8\}$; $\{Q_6,Q_{10}\}$ \\
$Q_{10}$ & $\{Q_3,Q_7\}$; $\{Q_4,Q_8\}$; $\{Q_6,Q_9\}$ \\
\bottomrule
\end{tabularx}
\end{table}

\paragraph{How to read \cref{n3dm:tab:complements}.}
Fix a role in the first column and subtract its code from the target $300$.  The second column lists every unordered pair of role codes that supplies the remaining amount.  For example, the $P_1$ row says that after choosing code $31$, the only completing pairs are $P_2,P_3$, whose codes sum to $269$, and $Q_1,Q_2$, whose codes also sum to $269$.  These produce exactly the main pattern $F_0$ and filler pattern $F_1$ involving $P_1$.

The same allowed role triple appears in three different rows because any one of its roles can be selected as the first role.  That redundancy is useful: it makes the certificate symmetric and easy to audit.  Exhaustiveness follows because any unordered triple with total code $300$ must, after one of its roles is fixed, appear as a complementary pair in that role's row.  Hence the absence of any additional pair in the table certifies that no unintended role triple can pass the role coordinate.  Since the codes and this certificate are fixed independently of the source instance, this check contributes only constant-size logic to the reduction.

\section{Fully worked examples}

\subsection{Guide to the examples}

\begin{center}
\setlength{\fboxsep}{9pt}
\fbox{%
\begin{minipage}{0.92\textwidth}
\textbf{How to read the full Part I examples}

\smallskip
The examples support two useful reading paths.  Both begin with the same
conceptual question: how can three asymmetric source roles survive after the
construction replaces the source classes by three identical copies of one
common occurrence set?

\medskip
\textbf{Conceptual path.}
Read the source matching and the definition of the groups $S_i$; then follow
one representative port weight, one representative filler identity, the
construction of the main edges, and the reverse degree argument.  At this
level, focus on four mechanisms:
\begin{enumerate}[label=(\roman*),leftmargin=2em,itemsep=0.15em,topsep=0.25em]
  \item $P_1,P_2,P_3$ preserve the three source roles inside the common set
        $S$;
  \item the filler patterns consume exactly two incidences of each port,
        leaving one incidence for a main edge;
  \item the role coordinate enforces pattern type, the source coordinate
      enforces the required source sum, and the group and group-square
      coordinates jointly enforce locality; and

  \item copy-forgetting and bipartite edge coloring translate between the
        degree-three triple system and a legal matching in
        $X^\star,Y^\star,Z^\star$.
\end{enumerate}
The long weight tables may be treated initially as certificates for these
claims rather than read entry by entry.

\medskip
\textbf{Full verification path.}
Check every mechanical layer in order: the $4$ source groups; the $52$
underlying occurrences in $S$; the identical $52$-weight catalogues in each
of the three output classes; every shift, base digit, encoded weight, and the
target $K^\star$; the nine filler identities in every group; the $48$ filler
hyperedge occurrences and $4$ main hyperedges; degree three at every
underlying occurrence; and all $52$ target equalities after recoloring.  Then
follow the reverse equations to verify that the filler multiplicities are
forced and that the four remaining main edges recover a source matching.  The
``no'' instance should be read as a stress test of the same logic: the local
construction remains valid, but the required system of four disjoint main
edges cannot be completed.

\medskip
\textbf{Completion checkpoint.}
The forward direction is complete only after all $156$ labeled output
occurrences have been used exactly once in $52$ legal triples of weight
$K^\star$.  The reverse direction is complete only after an arbitrary feasible
complete output matching has been shown to yield a feasible matching of the
original source instance.
\end{minipage}}
\end{center}

We provide fully worked examples of a ``yes'' instance and a ``no'' instance. The first starts from the classical N3DM ``yes'' instance
\[
A=\mset{3,4,5,6},\qquad B=\mset{13,11,9,7},\qquad C=\mset{20,21,22,23},
\]
and the second starts from the ``no'' instance obtained by changing the middle class to
\(B=\mset{13,12,8,7}\).  Both examples follow the same construction, mixed-radix encoding, filler-degree analysis, copy-forgetting argument, and recoloring logic used in Part I of the tutorial.  All source coordinates, shifts, encoded item weights, filler checks, main-edge checks, and reverse-direction arguments are given explicitly.

\subsection{Construction used in both examples}
The source classes have four labeled occurrences each.  For each source index \(i\in\{1,2,3,4\}\), create
\[
S_i=\{P_{1,i},P_{2,i},P_{3,i},Q_{1,i},\ldots,Q_{10,i}\},
\qquad S=\biguplus_{i=1}^4 S_i.
\]
Thus \(|S_i|=13\) and \(|S|=52\).  The constructed SN3DM classes are the three disjoint labeled copies
\[
X^\star=S^X,\qquad Y^\star=S^Y,\qquad Z^\star=S^Z.
\]
Each output class therefore contains 52 occurrences, the full output contains 156 labeled occurrences, and the three output weight multisets are identical.

For an occurrence of role \(R\) in source group \(i\), the four-dimensional weight is
\[
\omega(R_i)=\bigl(\rho_R,\eta_{R,i},h_R i,h_R i^2\bigr),
\]
with vector target
\[
K=(300,3t,0,0).
\]
The fixed role data are reproduced in Table~\ref{tab:roledata} so that the examples are self-contained.
\begin{table}[htbp]
\centering
\small
\caption{Role data used in both examples.}\label{tab:roledata}
\begin{tabular}{c r r l}
\toprule
role $R$ & $\rho_R$ & $h_R$ & $\eta_{R,i}$ \\
\midrule
$P_{1}$ & 31 & 0 & $3a_i$ \\
$P_{2}$ & 8 & 0 & $3b_i$ \\
$P_{3}$ & 261 & 0 & $3c_i$ \\
$Q_{1}$ & 184 & 2 & $-3b_i$ \\
$Q_{2}$ & 85 & -2 & $3t-3a_i+3b_i$ \\
$Q_{3}$ & 177 & -2 & $3t-2a_i+b_i+c_i$ \\
$Q_{4}$ & 115 & 2 & $2a_i-4b_i-c_i$ \\
$Q_{5}$ & 23 & 2 & $a_i-2b_i-2c_i$ \\
$Q_{6}$ & 16 & -2 & $3t-a_i+2b_i-c_i$ \\
$Q_{7}$ & 27 & 1 & $2a_i-b_i-c_i$ \\
$Q_{8}$ & 89 & -3 & $3t-2a_i+4b_i+c_i$ \\
$Q_{9}$ & 188 & 1 & $a_i-2b_i+c_i$ \\
$Q_{10}$ & 96 & 1 & $0$ \\
\bottomrule
\end{tabular}
\end{table}

The permitted role patterns are
\begin{align*}
F_0&=(P_1,P_2,P_3),\\
F_1&=(P_1,Q_1,Q_2),&F_2&=(P_2,Q_3,Q_4),&F_3&=(P_3,Q_5,Q_6),\\
F_4&=(Q_1,Q_7,Q_8),&F_5&=(Q_2,Q_7,Q_9),&F_6&=(Q_3,Q_7,Q_{10}),\\
F_7&=(Q_4,Q_8,Q_{10}),&F_8&=(Q_5,Q_8,Q_9),&F_9&=(Q_6,Q_9,Q_{10}).
\end{align*}
The role codes admit exactly these ten patterns at role sum 300.  A local filler occurrence \(F_r(i)\), \(r\geq 1\), takes all its roles from group \(i\).  A main edge may combine three different source indices and is written
\[
F_0(i,j,k)=(P_{1,i},P_{2,j},P_{3,k}).
\]
After coordinate shifts \(D_0,D_1,D_2,D_3\) and a no-carry base \(L\), the ordinary integer weight is
\[
W(u)=\sum_{r=0}^3 \bar\omega_r(u)L^r,
\qquad
K^\star=\sum_{r=0}^3 \bar K_rL^r.
\]
A vector target equality and an integer target equality are then equivalent coordinate by coordinate.  Finally, a 3-regular target-sum triple system on \(S\) is converted to a legal matching in \(X^\star,Y^\star,Z^\star\) by decomposing its incidence bipartite graph into three perfect matchings and using them as the colors \(X,Y,Z\).

\subsection{A source-to-output ``yes'' instance}
\subsubsection{The classical source instance}
Take
\[
A=\mset{3,4,5,6},\qquad
B=\mset{13,11,9,7},\qquad
C=\mset{20,21,22,23}.
\]
Index each class in the displayed order.  The class sums are
\[
\sum_i a_i=18,\qquad \sum_i b_i=40,\qquad \sum_i c_i=86.
\]
A solution has four triples, so the common target is forced by total weight:
\[
4t=18+40+86=144,\qquad t=36.
\]
The diagonal index triples form a feasible source matching.
\begin{table}[htbp]
\centering
\small
\caption{A feasible source matching for the ``yes'' instance.}\label{tab:yes-source-match}
\begin{tabular}{c c c c c c}
\toprule
$r$ & index triple $(i,j,k)$ & $a_i$ & $b_j$ & $c_k$ & sum \\
\midrule
1 & $(1,1,1)$ & 3 & 13 & 20 & 36 \\
2 & $(2,2,2)$ & 4 & 11 & 21 & 36 \\
3 & $(3,3,3)$ & 5 & 9 & 22 & 36 \\
4 & $(4,4,4)$ & 6 & 7 & 23 & 36 \\
\bottomrule
\end{tabular}
\end{table}

In \cref{tab:yes-source-match}, every row sums to 36, and every index \(1,2,3,4\) appears exactly once in each source position.  Notice that the three source classes are not numerically identical.  Their asymmetry is represented internally by the port roles \(P_1,P_2,P_3\), whereas each output class contains a copy of every role.

\subsubsection{The occurrence sets and identical output classes}
For each \(i\in\{1,2,3,4\}\), the set \(S_i\) contains three ports and ten private items.  Hence
\[
|S|=4\cdot 13=52,
\qquad |X^\star|=|Y^\star|=|Z^\star|=52,
\qquad |X^\star\uplus Y^\star\uplus Z^\star|=156.
\]
The construction assigns one integer weight \(W(u)\) to every underlying occurrence \(u\in S\), then gives the same weight to \(u^X,u^Y,u^Z\).  Therefore the three displayed weight catalogues are identical even though \(A,B,C\) are different.

\subsubsection{All specialized source-coordinate values}
Here \(3t=108\).  The port coordinates are \(3a_i,3b_i,3c_i\), and the private coordinates are obtained by substituting the displayed \((a_i,b_i,c_i)\) into Table~\ref{tab:roledata}.

\begin{table}[htbp]
\centering
\small
\caption{Port source-coordinate values for the ``yes'' instance.}\label{tab:yes-ports}
\begin{tabular}{c r r r r r r}
\toprule
$i$ & $a_i$ & $b_i$ & $c_i$ & $\eta_{P_1,i}$ & $\eta_{P_2,i}$ & $\eta_{P_3,i}$ \\
\midrule
1 & 3 & 13 & 20 & 9 & 39 & 60 \\
2 & 4 & 11 & 21 & 12 & 33 & 63 \\
3 & 5 & 9 & 22 & 15 & 27 & 66 \\
4 & 6 & 7 & 23 & 18 & 21 & 69 \\
\bottomrule
\end{tabular}
\end{table}
\begin{table}[h!]
\centering
\scriptsize
\setlength{\tabcolsep}{3.5pt}
\caption{All private source-coordinate values for the ``yes'' instance.}\label{tab:yes-private}
\begin{tabular}{c c r r r r r r r r r r}
\toprule
$i$ & $(a_i,b_i,c_i)$ & $Q_1$ & $Q_2$ & $Q_3$ & $Q_4$ & $Q_5$ & $Q_6$ & $Q_7$ & $Q_8$ & $Q_9$ & $Q_{10}$ \\
\midrule
1 & $(3,13,20)$ & -39 & 138 & 135 & -66 & -63 & 111 & -27 & 174 & -3 & 0 \\
2 & $(4,11,21)$ & -33 & 129 & 132 & -57 & -60 & 105 & -24 & 165 & 3 & 0 \\
3 & $(5,9,22)$ & -27 & 120 & 129 & -48 & -57 & 99 & -21 & 156 & 9 & 0 \\
4 & $(6,7,23)$ & -21 & 111 & 126 & -39 & -54 & 93 & -18 & 147 & 15 & 0 \\
\bottomrule
\end{tabular}
\end{table}
For example, in group 1,
\[
\eta_{Q_2,1}=108-3(3)+3(13)=138,
\qquad
\eta_{Q_4,1}=2(3)-4(13)-20=-66,
\]
and
\[
\eta_{Q_8,1}=108-2(3)+4(13)+20=174.
\]
The irregular-looking values are deliberate cancellation terms: every intended local filler pattern has source-coordinate sum 108.

\subsubsection{Mixed-radix shifts, base, target, and every item weight}
The most negative source coordinate is \(-66\), attained by \(Q_{4,1}\).  The most negative group coordinate is \(-12\), and the most negative group-square coordinate is \(-48\), both attained by \(Q_{8,4}\).  We therefore take the minimal shifts
\[
D_0=0,\qquad D_1=66,\qquad D_2=12,\qquad D_3=48.
\]
The shifted target digits are
\[
\bar K=(300,108+3\cdot 66,3\cdot 12,3\cdot 48)=(300,306,36,144).
\]
The largest shifted item digits are
\[
261,\qquad 240,\qquad 20,\qquad 80.
\]
Consequently,
\[
3\cdot261=783,\quad 3\cdot240=720,\quad 3\cdot20=60,\quad 3\cdot80=240,
\]
and every target digit is at most 306.  The smallest admissible integer base is therefore
\[
L=784>783.
\]
Its required powers are
\[
L^2=614{,}656,\qquad L^3=481{,}890{,}304.
\]
The encoded target is
\[
\begin{aligned}
K^\star
&=300+306(784)+36(614{,}656)+144(481{,}890{,}304)\\
&=69{,}414{,}571{,}596.
\end{aligned}
\]
For example,
\[
\begin{aligned}
W(P_{1,1})
&=31+75(784)+12(614{,}656)+48(481{,}890{,}304)\\
&=23{,}138{,}169{,}295,
\end{aligned}
\]
whereas \(Q_{8,4}\) has shifted digits \((89,213,0,0)\), so
\[
W(Q_{8,4})=89+213(784)=167{,}081.
\]
The small latter weight is not an error: both high-order locality digits become zero after shifting.
\begingroup
\scriptsize
\renewcommand{\arraystretch}{0.78}
\setlength{\tabcolsep}{4pt}
\begin{longtable}{c c c c c r}
\caption{All 52 vector weights, shifted digit vectors, and encoded positive integer weights for the ``yes'' instance.}\label{tab:yes-weights}\\
\toprule
$i$ & $(a_i,b_i,c_i)$ & item $u$ & $\omega(u)$ & $\bar\omega(u)$ & $W(u)$ \\
\midrule
\endfirsthead
\caption[]{All 52 vector weights, shifted digit vectors, and encoded positive integer weights for the ``yes'' instance. (continued)}\\
\toprule
$i$ & $(a_i,b_i,c_i)$ & item $u$ & $\omega(u)$ & $\bar\omega(u)$ & $W(u)$ \\
\midrule
\endhead
\midrule\multicolumn{6}{r}{\emph{continued on next page}}\\
\endfoot
\bottomrule
\endlastfoot
1 & $(3,13,20)$ & $P_{1,1}$ & $(31,9,0,0)$ & $(31,75,12,48)$ & {23{,}138{,}169{,}295} \\
1 & $(3,13,20)$ & $P_{2,1}$ & $(8,39,0,0)$ & $(8,105,12,48)$ & {23{,}138{,}192{,}792} \\
1 & $(3,13,20)$ & $P_{3,1}$ & $(261,60,0,0)$ & $(261,126,12,48)$ & {23{,}138{,}209{,}509} \\
1 & $(3,13,20)$ & $Q_{1,1}$ & $(184,-39,2,2)$ & $(184,27,14,50)$ & {24{,}103{,}141{,}736} \\
1 & $(3,13,20)$ & $Q_{2,1}$ & $(85,138,-2,-2)$ & $(85,204,10,46)$ & {22{,}173{,}260{,}565} \\
1 & $(3,13,20)$ & $Q_{3,1}$ & $(177,135,-2,-2)$ & $(177,201,10,46)$ & {22{,}173{,}258{,}305} \\
1 & $(3,13,20)$ & $Q_{4,1}$ & $(115,-66,2,2)$ & $(115,0,14,50)$ & {24{,}103{,}120{,}499} \\
1 & $(3,13,20)$ & $Q_{5,1}$ & $(23,-63,2,2)$ & $(23,3,14,50)$ & {24{,}103{,}122{,}759} \\
1 & $(3,13,20)$ & $Q_{6,1}$ & $(16,111,-2,-2)$ & $(16,177,10,46)$ & {22{,}173{,}239{,}328} \\
1 & $(3,13,20)$ & $Q_{7,1}$ & $(27,-27,1,1)$ & $(27,39,13,49)$ & {23{,}620{,}646{,}027} \\
1 & $(3,13,20)$ & $Q_{8,1}$ & $(89,174,-3,-3)$ & $(89,240,9,45)$ & {21{,}690{,}783{,}833} \\
1 & $(3,13,20)$ & $Q_{9,1}$ & $(188,-3,1,1)$ & $(188,63,13,49)$ & {23{,}620{,}665{,}004} \\
1 & $(3,13,20)$ & $Q_{10,1}$ & $(96,0,1,1)$ & $(96,66,13,49)$ & {23{,}620{,}667{,}264} \\
2 & $(4,11,21)$ & $P_{1,2}$ & $(31,12,0,0)$ & $(31,78,12,48)$ & {23{,}138{,}171{,}647} \\
2 & $(4,11,21)$ & $P_{2,2}$ & $(8,33,0,0)$ & $(8,99,12,48)$ & {23{,}138{,}188{,}088} \\
2 & $(4,11,21)$ & $P_{3,2}$ & $(261,63,0,0)$ & $(261,129,12,48)$ & {23{,}138{,}211{,}861} \\
2 & $(4,11,21)$ & $Q_{1,2}$ & $(184,-33,4,8)$ & $(184,33,16,56)$ & {26{,}995{,}717{,}576} \\
2 & $(4,11,21)$ & $Q_{2,2}$ & $(85,129,-4,-8)$ & $(85,195,8,40)$ & {19{,}280{,}682{,}373} \\
2 & $(4,11,21)$ & $Q_{3,2}$ & $(177,132,-4,-8)$ & $(177,198,8,40)$ & {19{,}280{,}684{,}817} \\
2 & $(4,11,21)$ & $Q_{4,2}$ & $(115,-57,4,8)$ & $(115,9,16,56)$ & {26{,}995{,}698{,}691} \\
2 & $(4,11,21)$ & $Q_{5,2}$ & $(23,-60,4,8)$ & $(23,6,16,56)$ & {26{,}995{,}696{,}247} \\
2 & $(4,11,21)$ & $Q_{6,2}$ & $(16,105,-4,-8)$ & $(16,171,8,40)$ & {19{,}280{,}663{,}488} \\
2 & $(4,11,21)$ & $Q_{7,2}$ & $(27,-24,2,4)$ & $(27,42,14,52)$ & {25{,}066{,}933{,}947} \\
2 & $(4,11,21)$ & $Q_{8,2}$ & $(89,165,-6,-12)$ & $(89,231,6,36)$ & {17{,}351{,}920{,}073} \\
2 & $(4,11,21)$ & $Q_{9,2}$ & $(188,3,2,4)$ & $(188,69,14,52)$ & {25{,}066{,}955{,}276} \\
2 & $(4,11,21)$ & $Q_{10,2}$ & $(96,0,2,4)$ & $(96,66,14,52)$ & {25{,}066{,}952{,}832} \\
3 & $(5,9,22)$ & $P_{1,3}$ & $(31,15,0,0)$ & $(31,81,12,48)$ & {23{,}138{,}173{,}999} \\
3 & $(5,9,22)$ & $P_{2,3}$ & $(8,27,0,0)$ & $(8,93,12,48)$ & {23{,}138{,}183{,}384} \\
3 & $(5,9,22)$ & $P_{3,3}$ & $(261,66,0,0)$ & $(261,132,12,48)$ & {23{,}138{,}214{,}213} \\
3 & $(5,9,22)$ & $Q_{1,3}$ & $(184,-27,6,18)$ & $(184,39,18,66)$ & {31{,}815{,}854{,}632} \\
3 & $(5,9,22)$ & $Q_{2,3}$ & $(85,120,-6,-18)$ & $(85,186,6,30)$ & {14{,}460{,}542{,}965} \\
3 & $(5,9,22)$ & $Q_{3,3}$ & $(177,129,-6,-18)$ & $(177,195,6,30)$ & {14{,}460{,}550{,}113} \\
3 & $(5,9,22)$ & $Q_{4,3}$ & $(115,-48,6,18)$ & $(115,18,18,66)$ & {31{,}815{,}838{,}099} \\
3 & $(5,9,22)$ & $Q_{5,3}$ & $(23,-57,6,18)$ & $(23,9,18,66)$ & {31{,}815{,}830{,}951} \\
3 & $(5,9,22)$ & $Q_{6,3}$ & $(16,99,-6,-18)$ & $(16,165,6,30)$ & {14{,}460{,}526{,}432} \\
3 & $(5,9,22)$ & $Q_{7,3}$ & $(27,-21,3,9)$ & $(27,45,15,57)$ & {27{,}477{,}002{,}475} \\
3 & $(5,9,22)$ & $Q_{8,3}$ & $(89,156,-9,-27)$ & $(89,222,3,21)$ & {10{,}121{,}714{,}489} \\
3 & $(5,9,22)$ & $Q_{9,3}$ & $(188,9,3,9)$ & $(188,75,15,57)$ & {27{,}477{,}026{,}156} \\
3 & $(5,9,22)$ & $Q_{10,3}$ & $(96,0,3,9)$ & $(96,66,15,57)$ & {27{,}477{,}019{,}008} \\
4 & $(6,7,23)$ & $P_{1,4}$ & $(31,18,0,0)$ & $(31,84,12,48)$ & {23{,}138{,}176{,}351} \\
4 & $(6,7,23)$ & $P_{2,4}$ & $(8,21,0,0)$ & $(8,87,12,48)$ & {23{,}138{,}178{,}680} \\
4 & $(6,7,23)$ & $P_{3,4}$ & $(261,69,0,0)$ & $(261,135,12,48)$ & {23{,}138{,}216{,}565} \\
4 & $(6,7,23)$ & $Q_{1,4}$ & $(184,-21,8,32)$ & $(184,45,20,80)$ & {38{,}563{,}552{,}904} \\
4 & $(6,7,23)$ & $Q_{2,4}$ & $(85,111,-8,-32)$ & $(85,177,4,16)$ & {7{,}712{,}842{,}341} \\
4 & $(6,7,23)$ & $Q_{3,4}$ & $(177,126,-8,-32)$ & $(177,192,4,16)$ & {7{,}712{,}854{,}193} \\
4 & $(6,7,23)$ & $Q_{4,4}$ & $(115,-39,8,32)$ & $(115,27,20,80)$ & {38{,}563{,}538{,}723} \\
4 & $(6,7,23)$ & $Q_{5,4}$ & $(23,-54,8,32)$ & $(23,12,20,80)$ & {38{,}563{,}526{,}871} \\
4 & $(6,7,23)$ & $Q_{6,4}$ & $(16,93,-8,-32)$ & $(16,159,4,16)$ & {7{,}712{,}828{,}160} \\
4 & $(6,7,23)$ & $Q_{7,4}$ & $(27,-18,4,16)$ & $(27,48,16,64)$ & {30{,}850{,}851{,}611} \\
4 & $(6,7,23)$ & $Q_{8,4}$ & $(89,147,-12,-48)$ & $(89,213,0,0)$ & {167{,}081} \\
4 & $(6,7,23)$ & $Q_{9,4}$ & $(188,15,4,16)$ & $(188,81,16,64)$ & {30{,}850{,}877{,}644} \\
4 & $(6,7,23)$ & $Q_{10,4}$ & $(96,0,4,16)$ & $(96,66,16,64)$ & {30{,}850{,}865{,}792} \\
\end{longtable}
\endgroup

\paragraph{How to read \cref{tab:yes-weights}.}
Read each row from left to right: the first two columns identify the source group, the next column names the occurrence, the signed vector records its logical role, the shifted vector gives its nonnegative radix digits, and the last column is the resulting scalar weight.  For a first audit, verify one port and one private item in each group and compare the remaining rows with the same role formulas.  The proof uses the digitwise equalities and the no-carry bound; the decimal size of a weight has no separate logical meaning.

The base choice gives the complete no-carry audit.  In every digit position, the sum of any three item digits is at most 783, strictly below \(L=784\), and every target digit is also below \(L\).  Therefore
\[
W(u)+W(v)+W(z)=K^\star
\quad\Longleftrightarrow\quad
\omega(u)+\omega(v)+\omega(z)=(300,108,0,0).
\]

\normalsize
\baselineskip=20pt

\subsubsection{All filler patterns and their numerical checks}
The nine target identities are shown in Table~\ref{tab:yes-identities}.  In each row, \(a,b,c\) denote the three source values belonging to one common group.  Since the three roles in a local pattern have one index \(i\), a zero sum of the \(h\)-coefficients simultaneously gives group-coordinate sum \(i\sum h_R=0\) and group-square sum \(i^2\sum h_R=0\).
\begin{table}[h!]
\centering
\scriptsize
\setlength{\tabcolsep}{5pt}
\caption{Local filler identities for target source coordinate $3t=108$.}\label{tab:yes-identities}
\begin{tabularx}{\textwidth}{@{}c c >{\raggedright\arraybackslash}X c@{}}
\toprule
pattern & role-code calculation & source-coordinate calculation & $h$-sum \\
\midrule
$F_1$ & $31+184+85=300$ & $3a-3b+(108-3a+3b)=108$ & $0+2-2=0$ \\
$F_2$ & $8+177+115=300$ & $3b+(108-2a+b+c)+(2a-4b-c)=108$ & $0-2+2=0$ \\
$F_3$ & $261+23+16=300$ & $3c+(a-2b-2c)+(108-a+2b-c)=108$ & $0+2-2=0$ \\
$F_4$ & $184+27+89=300$ & $-3b+(2a-b-c)+(108-2a+4b+c)=108$ & $2+1-3=0$ \\
$F_5$ & $85+27+188=300$ & $(108-3a+3b)+(2a-b-c)+(a-2b+c)=108$ & $-2+1+1=0$ \\
$F_6$ & $177+27+96=300$ & $(108-2a+b+c)+(2a-b-c)+0=108$ & $-2+1+1=0$ \\
$F_7$ & $115+89+96=300$ & $(2a-4b-c)+(108-2a+4b+c)+0=108$ & $2-3+1=0$ \\
$F_8$ & $23+89+188=300$ & $(a-2b-2c)+(108-2a+4b+c)+(a-2b+c)=108$ & $2-3+1=0$ \\
$F_9$ & $16+188+96=300$ & $(108-a+2b-c)+(a-2b+c)+0=108$ & $-2+1+1=0$ \\
\bottomrule
\end{tabularx}
\end{table}
\scriptsize
\setlength{\tabcolsep}{4pt}
\begin{longtable}{c c l p{7cm} r}
\caption{Every one of the 36 local filler pattern instances for the ``yes'' example.}\label{tab:yes-localinstances}\\
\toprule
$i$ & pattern & occurrences & source-coordinate calculation & encoded sum \\
\midrule
\endfirsthead
\caption[]{Every one of the 36 local filler pattern instances for the ``yes'' example. (continued)}\\
\toprule
$i$ & pattern & occurrences & source-coordinate calculation & encoded sum \\
\midrule
\endhead
\midrule\multicolumn{5}{r}{\emph{continued on next page}}\\
\endfoot
\bottomrule
\endlastfoot
1 & $F_{1}(1)$ & $P_{1,1}$, $Q_{1,1}$, $Q_{2,1}$ & $9 + (-39) + 138=108$ & {69{,}414{,}571{,}596} \\
1 & $F_{2}(1)$ & $P_{2,1}$, $Q_{3,1}$, $Q_{4,1}$ & $39 + 135 + (-66)=108$ & {69{,}414{,}571{,}596} \\
1 & $F_{3}(1)$ & $P_{3,1}$, $Q_{5,1}$, $Q_{6,1}$ & $60 + (-63) + 111=108$ & {69{,}414{,}571{,}596} \\
1 & $F_{4}(1)$ & $Q_{1,1}$, $Q_{7,1}$, $Q_{8,1}$ & $(-39) + (-27) + 174=108$ & {69{,}414{,}571{,}596} \\
1 & $F_{5}(1)$ & $Q_{2,1}$, $Q_{7,1}$, $Q_{9,1}$ & $138 + (-27) + (-3)=108$ & {69{,}414{,}571{,}596} \\
1 & $F_{6}(1)$ & $Q_{3,1}$, $Q_{7,1}$, $Q_{10,1}$ & $135 + (-27) + 0=108$ & {69{,}414{,}571{,}596} \\
1 & $F_{7}(1)$ & $Q_{4,1}$, $Q_{8,1}$, $Q_{10,1}$ & $(-66) + 174 + 0=108$ & {69{,}414{,}571{,}596} \\
1 & $F_{8}(1)$ & $Q_{5,1}$, $Q_{8,1}$, $Q_{9,1}$ & $(-63) + 174 + (-3)=108$ & {69{,}414{,}571{,}596} \\
1 & $F_{9}(1)$ & $Q_{6,1}$, $Q_{9,1}$, $Q_{10,1}$ & $111 + (-3) + 0=108$ & {69{,}414{,}571{,}596} \\
2 & $F_{1}(2)$ & $P_{1,2}$, $Q_{1,2}$, $Q_{2,2}$ & $12 + (-33) + 129=108$ & {69{,}414{,}571{,}596} \\
2 & $F_{2}(2)$ & $P_{2,2}$, $Q_{3,2}$, $Q_{4,2}$ & $33 + 132 + (-57)=108$ & {69{,}414{,}571{,}596} \\
2 & $F_{3}(2)$ & $P_{3,2}$, $Q_{5,2}$, $Q_{6,2}$ & $63 + (-60) + 105=108$ & {69{,}414{,}571{,}596} \\
2 & $F_{4}(2)$ & $Q_{1,2}$, $Q_{7,2}$, $Q_{8,2}$ & $(-33) + (-24) + 165=108$ & {69{,}414{,}571{,}596} \\
2 & $F_{5}(2)$ & $Q_{2,2}$, $Q_{7,2}$, $Q_{9,2}$ & $129 + (-24) + 3=108$ & {69{,}414{,}571{,}596} \\
2 & $F_{6}(2)$ & $Q_{3,2}$, $Q_{7,2}$, $Q_{10,2}$ & $132 + (-24) + 0=108$ & {69{,}414{,}571{,}596} \\
2 & $F_{7}(2)$ & $Q_{4,2}$, $Q_{8,2}$, $Q_{10,2}$ & $(-57) + 165 + 0=108$ & {69{,}414{,}571{,}596} \\
2 & $F_{8}(2)$ & $Q_{5,2}$, $Q_{8,2}$, $Q_{9,2}$ & $(-60) + 165 + 3=108$ & {69{,}414{,}571{,}596} \\
2 & $F_{9}(2)$ & $Q_{6,2}$, $Q_{9,2}$, $Q_{10,2}$ & $105 + 3 + 0=108$ & {69{,}414{,}571{,}596} \\
3 & $F_{1}(3)$ & $P_{1,3}$, $Q_{1,3}$, $Q_{2,3}$ & $15 + (-27) + 120=108$ & {69{,}414{,}571{,}596} \\
3 & $F_{2}(3)$ & $P_{2,3}$, $Q_{3,3}$, $Q_{4,3}$ & $27 + 129 + (-48)=108$ & {69{,}414{,}571{,}596} \\
3 & $F_{3}(3)$ & $P_{3,3}$, $Q_{5,3}$, $Q_{6,3}$ & $66 + (-57) + 99=108$ & {69{,}414{,}571{,}596} \\
3 & $F_{4}(3)$ & $Q_{1,3}$, $Q_{7,3}$, $Q_{8,3}$ & $(-27) + (-21) + 156=108$ & {69{,}414{,}571{,}596} \\
3 & $F_{5}(3)$ & $Q_{2,3}$, $Q_{7,3}$, $Q_{9,3}$ & $120 + (-21) + 9=108$ & {69{,}414{,}571{,}596} \\
3 & $F_{6}(3)$ & $Q_{3,3}$, $Q_{7,3}$, $Q_{10,3}$ & $129 + (-21) + 0=108$ & {69{,}414{,}571{,}596} \\
3 & $F_{7}(3)$ & $Q_{4,3}$, $Q_{8,3}$, $Q_{10,3}$ & $(-48) + 156 + 0=108$ & {69{,}414{,}571{,}596} \\
3 & $F_{8}(3)$ & $Q_{5,3}$, $Q_{8,3}$, $Q_{9,3}$ & $(-57) + 156 + 9=108$ & {69{,}414{,}571{,}596} \\
3 & $F_{9}(3)$ & $Q_{6,3}$, $Q_{9,3}$, $Q_{10,3}$ & $99 + 9 + 0=108$ & {69{,}414{,}571{,}596} \\
4 & $F_{1}(4)$ & $P_{1,4}$, $Q_{1,4}$, $Q_{2,4}$ & $18 + (-21) + 111=108$ & {69{,}414{,}571{,}596} \\
4 & $F_{2}(4)$ & $P_{2,4}$, $Q_{3,4}$, $Q_{4,4}$ & $21 + 126 + (-39)=108$ & {69{,}414{,}571{,}596} \\
4 & $F_{3}(4)$ & $P_{3,4}$, $Q_{5,4}$, $Q_{6,4}$ & $69 + (-54) + 93=108$ & {69{,}414{,}571{,}596} \\
4 & $F_{4}(4)$ & $Q_{1,4}$, $Q_{7,4}$, $Q_{8,4}$ & $(-21) + (-18) + 147=108$ & {69{,}414{,}571{,}596} \\
4 & $F_{5}(4)$ & $Q_{2,4}$, $Q_{7,4}$, $Q_{9,4}$ & $111 + (-18) + 15=108$ & {69{,}414{,}571{,}596} \\
4 & $F_{6}(4)$ & $Q_{3,4}$, $Q_{7,4}$, $Q_{10,4}$ & $126 + (-18) + 0=108$ & {69{,}414{,}571{,}596} \\
4 & $F_{7}(4)$ & $Q_{4,4}$, $Q_{8,4}$, $Q_{10,4}$ & $(-39) + 147 + 0=108$ & {69{,}414{,}571{,}596} \\
4 & $F_{8}(4)$ & $Q_{5,4}$, $Q_{8,4}$, $Q_{9,4}$ & $(-54) + 147 + 15=108$ & {69{,}414{,}571{,}596} \\
4 & $F_{9}(4)$ & $Q_{6,4}$, $Q_{9,4}$, $Q_{10,4}$ & $93 + 15 + 0=108$ & {69{,}414{,}571{,}596} \\
\end{longtable}
\normalsize

\paragraph{How to read \cref{tab:yes-identities,tab:yes-localinstances}.}
The shorter table gives the nine symbolic identities once; the long table then substitutes the four concrete source groups into those templates.  Read one pattern, such as $F_1$, first in symbolic form and then down its four numerical instances.  Repeating that comparison for the other patterns verifies legality.  These tables do not by themselves prove that no other triples are possible or that the displayed multiplicities are forced; those conclusions come from the role certificate, the locality equations, and the degree ledger.

\baselineskip=20pt

For the forward construction, each group contributes the multiset
\[
\mathcal F_i=
2F_1(i)\uplus2F_2(i)\uplus2F_3(i)\uplus F_4(i)\uplus\cdots\uplus F_9(i).
\]
Thus each group contributes 12 filler-edge occurrences, and the four groups contribute 48 filler-edge occurrences in total.  Table~\ref{tab:yes-localinstances} lists each of the 36 distinct local pattern instances once; \(F_1,F_2,F_3\) are each taken twice in the actual hyperedge multiset.

\subsubsection{The ``if'' direction in the example}
Add the four main edges corresponding to the source matching in Table~\ref{tab:yes-source-match}.
\begin{table}[h!]
\centering
\small
\caption{The four main edges and their encoded sums for the ``yes'' example.}\label{tab:yes-mainedges}
\begin{tabular}{c l c r}
\toprule
edge & occurrences & source values & encoded sum \\
\midrule
$M_1$ & $P_{1,1}$, $P_{2,1}$, $P_{3,1}$ & $3+13+20=36$ & {69{,}414{,}571{,}596} \\
$M_2$ & $P_{1,2}$, $P_{2,2}$, $P_{3,2}$ & $4+11+21=36$ & {69{,}414{,}571{,}596} \\
$M_3$ & $P_{1,3}$, $P_{2,3}$, $P_{3,3}$ & $5+9+22=36$ & {69{,}414{,}571{,}596} \\
$M_4$ & $P_{1,4}$, $P_{2,4}$, $P_{3,4}$ & $6+7+23=36$ & {69{,}414{,}571{,}596} \\
\bottomrule
\end{tabular}
\end{table}

The resulting unlabeled hyperedge multiset has \(48+4=52\) hyperedges.  Each port occurs twice in its corresponding local filler pattern and once in its unique main edge.  The private incidence ledger in every group is
\[
\begin{array}{ll@{\qquad\qquad}ll}
Q_1: & F_1\ (\text{two copies}),\ F_4 & Q_2: & F_1\ (\text{two copies}),\ F_5\\
Q_3: & F_2\ (\text{two copies}),\ F_6 & Q_4: & F_2\ (\text{two copies}),\ F_7\\
Q_5: & F_3\ (\text{two copies}),\ F_8 & Q_6: & F_3\ (\text{two copies}),\ F_9\\
Q_7: & F_4,\ F_5,\ F_6 & Q_8: & F_4,\ F_7,\ F_8\\
Q_9: & F_5,\ F_8,\ F_9 & Q_{10}: & F_6,\ F_7,\ F_9
\end{array}
\]

\normalsize
\baselineskip=20pt

Every one of the 52 underlying occurrences therefore has degree three.  The incidence graph is a 3-regular bipartite graph with 52 item vertices, 52 hyperedge vertices, and 156 incidence edges.  Decomposing its edges into three perfect matchings yields the explicit recoloring in Tables~\ref{tab:yes-recolor-filler} and~\ref{tab:yes-recolor-main}.  Each row contains one occurrence from each labeled output class and has encoded sum \(K^\star\).
\begingroup
\scriptsize
\renewcommand{\arraystretch}{0.78}
\setlength{\tabcolsep}{5pt}
\begin{longtable}{l c c c r}
\caption{The 48 recolored filler triples in one explicit feasible output matching for the ``yes'' instance.}\label{tab:yes-recolor-filler}\\
\toprule
unlabeled edge & $X^\star$ occurrence & $Y^\star$ occurrence & $Z^\star$ occurrence & sum \\
\midrule
\endfirsthead
\caption[]{The 48 recolored filler triples in one explicit feasible output matching for the ``yes'' instance. (continued)}\\
\toprule
unlabeled edge & $X^\star$ occurrence & $Y^\star$ occurrence & $Z^\star$ occurrence & sum \\
\midrule
\endhead
\midrule\multicolumn{5}{r}{\emph{continued on next page}}\\
\endfoot
\bottomrule
\endlastfoot
$F_{1}^{(a)}(1)$ & $P^{X}_{1,1}$ & $Q^{Y}_{1,1}$ & $Q^{Z}_{2,1}$ & {69{,}414{,}571{,}596} \\
$F_{1}^{(b)}(1)$ & $Q^{X}_{1,1}$ & $Q^{Y}_{2,1}$ & $P^{Z}_{1,1}$ & {69{,}414{,}571{,}596} \\
$F_{2}^{(a)}(1)$ & $Q^{X}_{4,1}$ & $P^{Y}_{2,1}$ & $Q^{Z}_{3,1}$ & {69{,}414{,}571{,}596} \\
$F_{2}^{(b)}(1)$ & $Q^{X}_{3,1}$ & $Q^{Y}_{4,1}$ & $P^{Z}_{2,1}$ & {69{,}414{,}571{,}596} \\
$F_{3}^{(a)}(1)$ & $Q^{X}_{6,1}$ & $P^{Y}_{3,1}$ & $Q^{Z}_{5,1}$ & {69{,}414{,}571{,}596} \\
$F_{3}^{(b)}(1)$ & $P^{X}_{3,1}$ & $Q^{Y}_{5,1}$ & $Q^{Z}_{6,1}$ & {69{,}414{,}571{,}596} \\
$F_{4}(1)$ & $Q^{X}_{7,1}$ & $Q^{Y}_{8,1}$ & $Q^{Z}_{1,1}$ & {69{,}414{,}571{,}596} \\
$F_{5}(1)$ & $Q^{X}_{2,1}$ & $Q^{Y}_{7,1}$ & $Q^{Z}_{9,1}$ & {69{,}414{,}571{,}596} \\
$F_{6}(1)$ & $Q^{X}_{10,1}$ & $Q^{Y}_{3,1}$ & $Q^{Z}_{7,1}$ & {69{,}414{,}571{,}596} \\
$F_{7}(1)$ & $Q^{X}_{8,1}$ & $Q^{Y}_{10,1}$ & $Q^{Z}_{4,1}$ & {69{,}414{,}571{,}596} \\
$F_{8}(1)$ & $Q^{X}_{5,1}$ & $Q^{Y}_{9,1}$ & $Q^{Z}_{8,1}$ & {69{,}414{,}571{,}596} \\
$F_{9}(1)$ & $Q^{X}_{9,1}$ & $Q^{Y}_{6,1}$ & $Q^{Z}_{10,1}$ & {69{,}414{,}571{,}596} \\
$F_{1}^{(a)}(2)$ & $Q^{X}_{2,2}$ & $P^{Y}_{1,2}$ & $Q^{Z}_{1,2}$ & {69{,}414{,}571{,}596} \\
$F_{1}^{(b)}(2)$ & $Q^{X}_{1,2}$ & $Q^{Y}_{2,2}$ & $P^{Z}_{1,2}$ & {69{,}414{,}571{,}596} \\
$F_{2}^{(a)}(2)$ & $P^{X}_{2,2}$ & $Q^{Y}_{4,2}$ & $Q^{Z}_{3,2}$ & {69{,}414{,}571{,}596} \\
$F_{2}^{(b)}(2)$ & $Q^{X}_{3,2}$ & $P^{Y}_{2,2}$ & $Q^{Z}_{4,2}$ & {69{,}414{,}571{,}596} \\
$F_{3}^{(a)}(2)$ & $P^{X}_{3,2}$ & $Q^{Y}_{5,2}$ & $Q^{Z}_{6,2}$ & {69{,}414{,}571{,}596} \\
$F_{3}^{(b)}(2)$ & $Q^{X}_{5,2}$ & $Q^{Y}_{6,2}$ & $P^{Z}_{3,2}$ & {69{,}414{,}571{,}596} \\
$F_{4}(2)$ & $Q^{X}_{7,2}$ & $Q^{Y}_{1,2}$ & $Q^{Z}_{8,2}$ & {69{,}414{,}571{,}596} \\
$F_{5}(2)$ & $Q^{X}_{9,2}$ & $Q^{Y}_{7,2}$ & $Q^{Z}_{2,2}$ & {69{,}414{,}571{,}596} \\
$F_{6}(2)$ & $Q^{X}_{10,2}$ & $Q^{Y}_{3,2}$ & $Q^{Z}_{7,2}$ & {69{,}414{,}571{,}596} \\
$F_{7}(2)$ & $Q^{X}_{4,2}$ & $Q^{Y}_{8,2}$ & $Q^{Z}_{10,2}$ & {69{,}414{,}571{,}596} \\
$F_{8}(2)$ & $Q^{X}_{8,2}$ & $Q^{Y}_{9,2}$ & $Q^{Z}_{5,2}$ & {69{,}414{,}571{,}596} \\
$F_{9}(2)$ & $Q^{X}_{6,2}$ & $Q^{Y}_{10,2}$ & $Q^{Z}_{9,2}$ & {69{,}414{,}571{,}596} \\
$F_{1}^{(a)}(3)$ & $Q^{X}_{1,3}$ & $Q^{Y}_{2,3}$ & $P^{Z}_{1,3}$ & {69{,}414{,}571{,}596} \\
$F_{1}^{(b)}(3)$ & $P^{X}_{1,3}$ & $Q^{Y}_{1,3}$ & $Q^{Z}_{2,3}$ & {69{,}414{,}571{,}596} \\
$F_{2}^{(a)}(3)$ & $P^{X}_{2,3}$ & $Q^{Y}_{4,3}$ & $Q^{Z}_{3,3}$ & {69{,}414{,}571{,}596} \\
$F_{2}^{(b)}(3)$ & $Q^{X}_{3,3}$ & $P^{Y}_{2,3}$ & $Q^{Z}_{4,3}$ & {69{,}414{,}571{,}596} \\
$F_{3}^{(a)}(3)$ & $Q^{X}_{5,3}$ & $P^{Y}_{3,3}$ & $Q^{Z}_{6,3}$ & {69{,}414{,}571{,}596} \\
$F_{3}^{(b)}(3)$ & $Q^{X}_{6,3}$ & $Q^{Y}_{5,3}$ & $P^{Z}_{3,3}$ & {69{,}414{,}571{,}596} \\
$F_{4}(3)$ & $Q^{X}_{7,3}$ & $Q^{Y}_{8,3}$ & $Q^{Z}_{1,3}$ & {69{,}414{,}571{,}596} \\
$F_{5}(3)$ & $Q^{X}_{2,3}$ & $Q^{Y}_{7,3}$ & $Q^{Z}_{9,3}$ & {69{,}414{,}571{,}596} \\
$F_{6}(3)$ & $Q^{X}_{10,3}$ & $Q^{Y}_{3,3}$ & $Q^{Z}_{7,3}$ & {69{,}414{,}571{,}596} \\
$F_{7}(3)$ & $Q^{X}_{4,3}$ & $Q^{Y}_{10,3}$ & $Q^{Z}_{8,3}$ & {69{,}414{,}571{,}596} \\
$F_{8}(3)$ & $Q^{X}_{8,3}$ & $Q^{Y}_{9,3}$ & $Q^{Z}_{5,3}$ & {69{,}414{,}571{,}596} \\
$F_{9}(3)$ & $Q^{X}_{9,3}$ & $Q^{Y}_{6,3}$ & $Q^{Z}_{10,3}$ & {69{,}414{,}571{,}596} \\
$F_{1}^{(a)}(4)$ & $P^{X}_{1,4}$ & $Q^{Y}_{1,4}$ & $Q^{Z}_{2,4}$ & {69{,}414{,}571{,}596} \\
$F_{1}^{(b)}(4)$ & $Q^{X}_{1,4}$ & $Q^{Y}_{2,4}$ & $P^{Z}_{1,4}$ & {69{,}414{,}571{,}596} \\
$F_{2}^{(a)}(4)$ & $Q^{X}_{3,4}$ & $P^{Y}_{2,4}$ & $Q^{Z}_{4,4}$ & {69{,}414{,}571{,}596} \\
$F_{2}^{(b)}(4)$ & $Q^{X}_{4,4}$ & $Q^{Y}_{3,4}$ & $P^{Z}_{2,4}$ & {69{,}414{,}571{,}596} \\
$F_{3}^{(a)}(4)$ & $P^{X}_{3,4}$ & $Q^{Y}_{6,4}$ & $Q^{Z}_{5,4}$ & {69{,}414{,}571{,}596} \\
$F_{3}^{(b)}(4)$ & $Q^{X}_{5,4}$ & $P^{Y}_{3,4}$ & $Q^{Z}_{6,4}$ & {69{,}414{,}571{,}596} \\
$F_{4}(4)$ & $Q^{X}_{7,4}$ & $Q^{Y}_{8,4}$ & $Q^{Z}_{1,4}$ & {69{,}414{,}571{,}596} \\
$F_{5}(4)$ & $Q^{X}_{2,4}$ & $Q^{Y}_{9,4}$ & $Q^{Z}_{7,4}$ & {69{,}414{,}571{,}596} \\
$F_{6}(4)$ & $Q^{X}_{10,4}$ & $Q^{Y}_{7,4}$ & $Q^{Z}_{3,4}$ & {69{,}414{,}571{,}596} \\
$F_{7}(4)$ & $Q^{X}_{8,4}$ & $Q^{Y}_{4,4}$ & $Q^{Z}_{10,4}$ & {69{,}414{,}571{,}596} \\
$F_{8}(4)$ & $Q^{X}_{9,4}$ & $Q^{Y}_{5,4}$ & $Q^{Z}_{8,4}$ & {69{,}414{,}571{,}596} \\
$F_{9}(4)$ & $Q^{X}_{6,4}$ & $Q^{Y}_{10,4}$ & $Q^{Z}_{9,4}$ & {69{,}414{,}571{,}596} \\
\end{longtable}
\endgroup
\begin{table}[htbp]
\centering
\small
\caption{The 4 recolored main triples in one explicit feasible output matching for the ``yes'' instance.}\label{tab:yes-recolor-main}
\begin{tabular}{l c c c r}
\toprule
unlabeled edge & $X^\star$ occurrence & $Y^\star$ occurrence & $Z^\star$ occurrence & sum \\
\midrule
$M1=F_0(1,1,1)$ & $P^{X}_{2,1}$ & $P^{Y}_{1,1}$ & $P^{Z}_{3,1}$ & {69{,}414{,}571{,}596} \\
$M2=F_0(2,2,2)$ & $P^{X}_{1,2}$ & $P^{Y}_{3,2}$ & $P^{Z}_{2,2}$ & {69{,}414{,}571{,}596} \\
$M3=F_0(3,3,3)$ & $P^{X}_{3,3}$ & $P^{Y}_{1,3}$ & $P^{Z}_{2,3}$ & {69{,}414{,}571{,}596} \\
$M4=F_0(4,4,4)$ & $P^{X}_{2,4}$ & $P^{Y}_{1,4}$ & $P^{Z}_{3,4}$ & {69{,}414{,}571{,}596} \\
\bottomrule
\end{tabular}
\end{table}

Reading the three colored columns across the two tables gives a complete occurrence audit: for every \(u\in S\), the copies \(u^X,u^Y,u^Z\) each occur exactly once.  Thus the tables use all 156 labeled output occurrences in 52 legal triples.  This completes the forward implication concretely.

\subsubsection{The ``only if'' direction in the example}
Begin instead with any feasible matching of the constructed integer instance and forget the outer labels.  The no-carry bound decodes every integer equality into the four coordinate equalities.  The role coordinate restricts every edge to one of \(F_0,\ldots,F_9\).

For \(F_1,F_2,F_3\), the group coordinate first forces the two private indices to coincide.  The source-coordinate equations then identify the attached port value.  For example, if the private items of an \(F_1\)-edge come from group \(i\), their source sum is
\[
\eta_{Q_1,i}+\eta_{Q_2,i}=108-3a_i.
\]
Attaching \(P_{1,j}\) reaches 108 exactly when \(a_j=a_i\).  Because the four \(A\)-values are distinct, this gives \(j=i\).  The same reasoning applies to \(F_2\) using the distinct \(B\)-values and to \(F_3\) using the distinct \(C\)-values.

For \(F_4,\ldots,F_9\), the group and group-square equations force all three private indices to agree.  For example, coefficient pattern \((2,1,-3)\) gives
\[
2i+j-3k=0,
\qquad
2i^2+j^2-3k^2=0,
\]
which implies
\[
2(i-k)^2+(j-k)^2=0,
\]
and hence \(i=j=k\).

Fix a group and let \(x_r\) be the multiplicity of \(F_r\).  The ten private-degree equations are
\begin{align*}
x_1+x_4&=3,&x_1+x_5&=3,\\
x_2+x_6&=3,&x_2+x_7&=3,\\
x_3+x_8&=3,&x_3+x_9&=3,\\
x_4+x_5+x_6&=3,&x_4+x_7+x_8&=3,\\
x_5+x_8+x_9&=3,&x_6+x_7+x_9&=3.
\end{align*}
Their unique nonnegative integral solution is
\[
x_1=x_2=x_3=2,
\qquad
x_4=x_5=x_6=x_7=x_8=x_9=1.
\]
Thus exactly two incidences of every port are consumed locally and the third incidence must lie in a main \(F_0\)-edge.  There are exactly four main edges, they use every \(P_1\)-port, every \(P_2\)-port, and every \(P_3\)-port once, and their source-coordinate equation is
\[
3a_i+3b_j+3c_k=108
\quad\Longleftrightarrow\quad
a_i+b_j+c_k=36.
\]
Forgetting the gadget roles therefore recovers a feasible four-triple matching of the source instance.  Applied to the explicit matching above, this reverse process recovers the four diagonal source triples shown in Table~\ref{tab:yes-recolor-main}.

\subsubsection{Complete numerical audit}
\begin{center}
\begin{tabular}{l r}
\toprule
object & count\\
\midrule
source occurrences & 12\\
groups \(S_i\) & 4\\
underlying occurrences in \(S\) & 52\\
occurrences in each of \(X^\star,Y^\star,Z^\star\) & 52\\
total labeled output occurrences & 156\\
filler hyperedge occurrences & 48\\
main hyperedge occurrences & 4\\
output matching triples & 52\\
\bottomrule
\end{tabular}
\end{center}

\subsection{A source-to-output ``no'' instance}
\subsubsection{The classical source instance and its forced target}
Now take
\[
A=\mset{3,4,5,6},\qquad
B=\mset{13,12,8,7},\qquad
C=\mset{20,21,22,23}.
\]
The class sums remain 18, 40, and 86, so the common target is again forced:
\[
4t=18+40+86=144,
\qquad t=36.
\]
The complete list of ordered index triples satisfying \(a_i+b_j+c_k=36\) is given in Table~\ref{tab:no-possible-main}.
\begin{table}[htbp]
\centering
\small
\caption{All index triples whose source values sum to 36 in the ``no'' source instance.}\label{tab:no-possible-main}
\begin{tabular}{c c c c c c}
\toprule
$i$ & $j$ & $k$ & $a_i$ & $b_j$ & $c_k$ \\
\midrule
1 & 1 & 1 & 3 & 13 & 20 \\
1 & 2 & 2 & 3 & 12 & 21 \\
2 & 2 & 1 & 4 & 12 & 20 \\
3 & 3 & 4 & 5 & 8 & 23 \\
4 & 3 & 3 & 6 & 8 & 22 \\
4 & 4 & 4 & 6 & 7 & 23 \\
\bottomrule
\end{tabular}
\end{table}

This source instance has no feasible matching.  The occurrence \(a_2=4\) appears in only the triple \((2,2,1)\), and \(a_3=5\) appears in only the triple \((3,3,4)\).  Any perfect matching would therefore have to select both of these triples.  But after selecting \((2,2,1)\), the two possible triples containing \(a_1=3\) are both blocked: \((1,1,1)\) reuses \(c_1\), while \((1,2,2)\) reuses \(b_2\).  Hence \(a_1\) cannot be matched, and no four-triple source matching exists.

\subsubsection{The occurrence sets and identical output classes}
The occurrence construction is unchanged.  There are four groups \(S_i\), 52 underlying occurrences in \(S\), 52 occurrences in each output class, and 156 labeled output occurrences in total.  Each output class is a complete copy of the same encoded set \(S\), so the output weight multisets are identical even though the source classes are asymmetric.

\subsubsection{All specialized source-coordinate values}
Again \(3t=108\).  Only the middle-class values at indices 2 and 3 differ from the ``yes'' instance, and the corresponding private source coordinates change as shown below.
\begin{table}[htbp]
\centering
\small
\caption{Port source-coordinate values for the ``no'' instance.}\label{tab:no-ports}
\begin{tabular}{c r r r r r r}
\toprule
$i$ & $a_i$ & $b_i$ & $c_i$ & $\eta_{P_1,i}$ & $\eta_{P_2,i}$ & $\eta_{P_3,i}$ \\
\midrule
1 & 3 & 13 & 20 & 9 & 39 & 60 \\
2 & 4 & 12 & 21 & 12 & 36 & 63 \\
3 & 5 & 8 & 22 & 15 & 24 & 66 \\
4 & 6 & 7 & 23 & 18 & 21 & 69 \\
\bottomrule
\end{tabular}
\end{table}
\begin{table}[h!]
\centering
\scriptsize
\setlength{\tabcolsep}{3.5pt}
\caption{All private source-coordinate values for the ``no'' instance.}\label{tab:no-private}
\begin{tabular}{c c r r r r r r r r r r}
\toprule
$i$ & $(a_i,b_i,c_i)$ & $Q_1$ & $Q_2$ & $Q_3$ & $Q_4$ & $Q_5$ & $Q_6$ & $Q_7$ & $Q_8$ & $Q_9$ & $Q_{10}$ \\
\midrule
1 & $(3,13,20)$ & -39 & 138 & 135 & -66 & -63 & 111 & -27 & 174 & -3 & 0 \\
2 & $(4,12,21)$ & -36 & 132 & 133 & -61 & -62 & 107 & -25 & 169 & 1 & 0 \\
3 & $(5,8,22)$ & -24 & 117 & 128 & -44 & -55 & 97 & -20 & 152 & 11 & 0 \\
4 & $(6,7,23)$ & -21 & 111 & 126 & -39 & -54 & 93 & -18 & 147 & 15 & 0 \\
\bottomrule
\end{tabular}
\end{table}
For example, in group 2, where \((a_2,b_2,c_2)=(4,12,21)\),
\[
\eta_{Q_2,2}=108-3(4)+3(12)=132,
\quad
\eta_{Q_4,2}=2(4)-4(12)-21=-61,
\]
and
\[
\eta_{Q_8,2}=108-2(4)+4(12)+21=169.
\]
As in the ``yes'' instance, these values cancel within every intended local filler pattern.

\subsubsection{Shifts, base, target, and all 52 encoded weights}
The extreme coordinates are unchanged: the minimum source coordinate is \(-66\), the minimum group coordinate is \(-12\), and the minimum group-square coordinate is \(-48\).  Thus
\[
(D_0,D_1,D_2,D_3)=(0,66,12,48),
\qquad
\bar K=(300,306,36,144).
\]
The largest shifted digits are again \((261,240,20,80)\), so the same smallest admissible base works:
\[
L=784.
\]
Consequently, the encoded target is also unchanged:
\[
K^\star=69{,}414{,}571{,}596.
\]
The complete item catalogue follows.
\begingroup
\scriptsize
\renewcommand{\arraystretch}{0.78}
\setlength{\tabcolsep}{4pt}
\begin{longtable}{c c c c c r}
\caption{All 52 vector weights, shifted digit vectors, and encoded positive integer weights for the ``no'' instance.}\label{tab:no-weights}\\
\toprule
$i$ & $(a_i,b_i,c_i)$ & item $u$ & $\omega(u)$ & $\bar\omega(u)$ & $W(u)$ \\
\midrule
\endfirsthead
\caption[]{All 52 vector weights, shifted digit vectors, and encoded positive integer weights for the ``no'' instance. (continued)}\\
\toprule
$i$ & $(a_i,b_i,c_i)$ & item $u$ & $\omega(u)$ & $\bar\omega(u)$ & $W(u)$ \\
\midrule
\endhead
\midrule\multicolumn{6}{r}{\emph{continued on next page}}\\
\endfoot
\bottomrule
\endlastfoot
1 & $(3,13,20)$ & $P_{1,1}$ & $(31,9,0,0)$ & $(31,75,12,48)$ & {23{,}138{,}169{,}295} \\
1 & $(3,13,20)$ & $P_{2,1}$ & $(8,39,0,0)$ & $(8,105,12,48)$ & {23{,}138{,}192{,}792} \\
1 & $(3,13,20)$ & $P_{3,1}$ & $(261,60,0,0)$ & $(261,126,12,48)$ & {23{,}138{,}209{,}509} \\
1 & $(3,13,20)$ & $Q_{1,1}$ & $(184,-39,2,2)$ & $(184,27,14,50)$ & {24{,}103{,}141{,}736} \\
1 & $(3,13,20)$ & $Q_{2,1}$ & $(85,138,-2,-2)$ & $(85,204,10,46)$ & {22{,}173{,}260{,}565} \\
1 & $(3,13,20)$ & $Q_{3,1}$ & $(177,135,-2,-2)$ & $(177,201,10,46)$ & {22{,}173{,}258{,}305} \\
1 & $(3,13,20)$ & $Q_{4,1}$ & $(115,-66,2,2)$ & $(115,0,14,50)$ & {24{,}103{,}120{,}499} \\
1 & $(3,13,20)$ & $Q_{5,1}$ & $(23,-63,2,2)$ & $(23,3,14,50)$ & {24{,}103{,}122{,}759} \\
1 & $(3,13,20)$ & $Q_{6,1}$ & $(16,111,-2,-2)$ & $(16,177,10,46)$ & {22{,}173{,}239{,}328} \\
1 & $(3,13,20)$ & $Q_{7,1}$ & $(27,-27,1,1)$ & $(27,39,13,49)$ & {23{,}620{,}646{,}027} \\
1 & $(3,13,20)$ & $Q_{8,1}$ & $(89,174,-3,-3)$ & $(89,240,9,45)$ & {21{,}690{,}783{,}833} \\
1 & $(3,13,20)$ & $Q_{9,1}$ & $(188,-3,1,1)$ & $(188,63,13,49)$ & {23{,}620{,}665{,}004} \\
1 & $(3,13,20)$ & $Q_{10,1}$ & $(96,0,1,1)$ & $(96,66,13,49)$ & {23{,}620{,}667{,}264} \\
2 & $(4,12,21)$ & $P_{1,2}$ & $(31,12,0,0)$ & $(31,78,12,48)$ & {23{,}138{,}171{,}647} \\
2 & $(4,12,21)$ & $P_{2,2}$ & $(8,36,0,0)$ & $(8,102,12,48)$ & {23{,}138{,}190{,}440} \\
2 & $(4,12,21)$ & $P_{3,2}$ & $(261,63,0,0)$ & $(261,129,12,48)$ & {23{,}138{,}211{,}861} \\
2 & $(4,12,21)$ & $Q_{1,2}$ & $(184,-36,4,8)$ & $(184,30,16,56)$ & {26{,}995{,}715{,}224} \\
2 & $(4,12,21)$ & $Q_{2,2}$ & $(85,132,-4,-8)$ & $(85,198,8,40)$ & {19{,}280{,}684{,}725} \\
2 & $(4,12,21)$ & $Q_{3,2}$ & $(177,133,-4,-8)$ & $(177,199,8,40)$ & {19{,}280{,}685{,}601} \\
2 & $(4,12,21)$ & $Q_{4,2}$ & $(115,-61,4,8)$ & $(115,5,16,56)$ & {26{,}995{,}695{,}555} \\
2 & $(4,12,21)$ & $Q_{5,2}$ & $(23,-62,4,8)$ & $(23,4,16,56)$ & {26{,}995{,}694{,}679} \\
2 & $(4,12,21)$ & $Q_{6,2}$ & $(16,107,-4,-8)$ & $(16,173,8,40)$ & {19{,}280{,}665{,}056} \\
2 & $(4,12,21)$ & $Q_{7,2}$ & $(27,-25,2,4)$ & $(27,41,14,52)$ & {25{,}066{,}933{,}163} \\
2 & $(4,12,21)$ & $Q_{8,2}$ & $(89,169,-6,-12)$ & $(89,235,6,36)$ & {17{,}351{,}923{,}209} \\
2 & $(4,12,21)$ & $Q_{9,2}$ & $(188,1,2,4)$ & $(188,67,14,52)$ & {25{,}066{,}953{,}708} \\
2 & $(4,12,21)$ & $Q_{10,2}$ & $(96,0,2,4)$ & $(96,66,14,52)$ & {25{,}066{,}952{,}832} \\
3 & $(5,8,22)$ & $P_{1,3}$ & $(31,15,0,0)$ & $(31,81,12,48)$ & {23{,}138{,}173{,}999} \\
3 & $(5,8,22)$ & $P_{2,3}$ & $(8,24,0,0)$ & $(8,90,12,48)$ & {23{,}138{,}181{,}032} \\
3 & $(5,8,22)$ & $P_{3,3}$ & $(261,66,0,0)$ & $(261,132,12,48)$ & {23{,}138{,}214{,}213} \\
3 & $(5,8,22)$ & $Q_{1,3}$ & $(184,-24,6,18)$ & $(184,42,18,66)$ & {31{,}815{,}856{,}984} \\
3 & $(5,8,22)$ & $Q_{2,3}$ & $(85,117,-6,-18)$ & $(85,183,6,30)$ & {14{,}460{,}540{,}613} \\
3 & $(5,8,22)$ & $Q_{3,3}$ & $(177,128,-6,-18)$ & $(177,194,6,30)$ & {14{,}460{,}549{,}329} \\
3 & $(5,8,22)$ & $Q_{4,3}$ & $(115,-44,6,18)$ & $(115,22,18,66)$ & {31{,}815{,}841{,}235} \\
3 & $(5,8,22)$ & $Q_{5,3}$ & $(23,-55,6,18)$ & $(23,11,18,66)$ & {31{,}815{,}832{,}519} \\
3 & $(5,8,22)$ & $Q_{6,3}$ & $(16,97,-6,-18)$ & $(16,163,6,30)$ & {14{,}460{,}524{,}864} \\
3 & $(5,8,22)$ & $Q_{7,3}$ & $(27,-20,3,9)$ & $(27,46,15,57)$ & {27{,}477{,}003{,}259} \\
3 & $(5,8,22)$ & $Q_{8,3}$ & $(89,152,-9,-27)$ & $(89,218,3,21)$ & {10{,}121{,}711{,}353} \\
3 & $(5,8,22)$ & $Q_{9,3}$ & $(188,11,3,9)$ & $(188,77,15,57)$ & {27{,}477{,}027{,}724} \\
3 & $(5,8,22)$ & $Q_{10,3}$ & $(96,0,3,9)$ & $(96,66,15,57)$ & {27{,}477{,}019{,}008} \\
4 & $(6,7,23)$ & $P_{1,4}$ & $(31,18,0,0)$ & $(31,84,12,48)$ & {23{,}138{,}176{,}351} \\
4 & $(6,7,23)$ & $P_{2,4}$ & $(8,21,0,0)$ & $(8,87,12,48)$ & {23{,}138{,}178{,}680} \\
4 & $(6,7,23)$ & $P_{3,4}$ & $(261,69,0,0)$ & $(261,135,12,48)$ & {23{,}138{,}216{,}565} \\
4 & $(6,7,23)$ & $Q_{1,4}$ & $(184,-21,8,32)$ & $(184,45,20,80)$ & {38{,}563{,}552{,}904} \\
4 & $(6,7,23)$ & $Q_{2,4}$ & $(85,111,-8,-32)$ & $(85,177,4,16)$ & {7{,}712{,}842{,}341} \\
4 & $(6,7,23)$ & $Q_{3,4}$ & $(177,126,-8,-32)$ & $(177,192,4,16)$ & {7{,}712{,}854{,}193} \\
4 & $(6,7,23)$ & $Q_{4,4}$ & $(115,-39,8,32)$ & $(115,27,20,80)$ & {38{,}563{,}538{,}723} \\
4 & $(6,7,23)$ & $Q_{5,4}$ & $(23,-54,8,32)$ & $(23,12,20,80)$ & {38{,}563{,}526{,}871} \\
4 & $(6,7,23)$ & $Q_{6,4}$ & $(16,93,-8,-32)$ & $(16,159,4,16)$ & {7{,}712{,}828{,}160} \\
4 & $(6,7,23)$ & $Q_{7,4}$ & $(27,-18,4,16)$ & $(27,48,16,64)$ & {30{,}850{,}851{,}611} \\
4 & $(6,7,23)$ & $Q_{8,4}$ & $(89,147,-12,-48)$ & $(89,213,0,0)$ & {167{,}081} \\
4 & $(6,7,23)$ & $Q_{9,4}$ & $(188,15,4,16)$ & $(188,81,16,64)$ & {30{,}850{,}877{,}644} \\
4 & $(6,7,23)$ & $Q_{10,4}$ & $(96,0,4,16)$ & $(96,66,16,64)$ & {30{,}850{,}865{,}792} \\
\end{longtable}
\endgroup
The no-carry audit is identical: every sum of three item digits is at most 783, strictly below the base 784.  Hence an integer triple reaches \(K^\star\) if and only if its four unshifted coordinates sum to \((300,108,0,0)\).

\subsubsection{Local filler identities}
The local identities depend algebraically on \((a_i,b_i,c_i)\) but not on any source matching.  They therefore remain valid in every group of the ``no'' instance.
\begin{table}[h!]
\centering
\scriptsize
\setlength{\tabcolsep}{5pt}
\caption{Local filler identities for target source coordinate $3t=108$.}\label{tab:no-identities}
\begin{tabularx}{\textwidth}{@{}c c >{\raggedright\arraybackslash}X c@{}}
\toprule
pattern & role-code calculation & source-coordinate calculation & $h$-sum \\
\midrule
$F_1$ & $31+184+85=300$ & $3a-3b+(108-3a+3b)=108$ & $0+2-2=0$ \\
$F_2$ & $8+177+115=300$ & $3b+(108-2a+b+c)+(2a-4b-c)=108$ & $0-2+2=0$ \\
$F_3$ & $261+23+16=300$ & $3c+(a-2b-2c)+(108-a+2b-c)=108$ & $0+2-2=0$ \\
$F_4$ & $184+27+89=300$ & $-3b+(2a-b-c)+(108-2a+4b+c)=108$ & $2+1-3=0$ \\
$F_5$ & $85+27+188=300$ & $(108-3a+3b)+(2a-b-c)+(a-2b+c)=108$ & $-2+1+1=0$ \\
$F_6$ & $177+27+96=300$ & $(108-2a+b+c)+(2a-b-c)+0=108$ & $-2+1+1=0$ \\
$F_7$ & $115+89+96=300$ & $(2a-4b-c)+(108-2a+4b+c)+0=108$ & $2-3+1=0$ \\
$F_8$ & $23+89+188=300$ & $(a-2b-2c)+(108-2a+4b+c)+(a-2b+c)=108$ & $2-3+1=0$ \\
$F_9$ & $16+188+96=300$ & $(108-a+2b-c)+(a-2b+c)+0=108$ & $-2+1+1=0$ \\
\bottomrule
\end{tabularx}
\end{table}
For every group \(i\), the forward filler multiset would again be
\[
2F_1(i)\uplus2F_2(i)\uplus2F_3(i)\uplus F_4(i)\uplus\cdots\uplus F_9(i).
\]
All 48 local filler hyperedges are valid target-sum edges.  They give degree two to every port and degree three to every private item.  Therefore the local part of the construction works perfectly; any failure must occur in the four main edges that encode the source matching.

\subsubsection{Why the ``if'' construction cannot be completed}
A main edge has the form
\[
F_0(i,j,k)=(P_{1,i},P_{2,j},P_{3,k})
\]
and reaches the target exactly when
\[
3a_i+3b_j+3c_k=108
\quad\Longleftrightarrow\quad
a_i+b_j+c_k=36.
\]
Thus the only possible main edges are precisely the six index triples in Table~\ref{tab:no-possible-main}.  The port \(P_{1,2}\) can occur in only \(F_0(2,2,1)\), and \(P_{1,3}\) can occur in only \(F_0(3,3,4)\).  Selecting those two forced edges consumes \(P_{2,2}\) and \(P_{3,1}\).  The port \(P_{1,1}\) can then use neither of its two possible main edges: \(F_0(1,1,1)\) would reuse \(P_{3,1}\), while \(F_0(1,2,2)\) would reuse \(P_{2,2}\).  Hence the required one-main-incidence-per-port condition cannot be completed.

This is the precise failure of the forward recipe.  It does not occur in the filler layer; it is exactly the original source-matching obstruction preserved by the main source coordinate.

\subsubsection{The ``only if'' direction proves impossibility of an output matching}
Assume for contradiction that the constructed SN3DM instance has a feasible output matching.  Forgetting the outer labels yields a 3-regular target-sum triple system on \(S\).  The no-carry property decodes every integer equality coordinatewise, and the role coordinate restricts every edge to one of \(F_0,\ldots,F_9\).

For \(F_1,F_2,F_3\), the group coordinate synchronizes the two private indices, and the source coordinate forces the attached port to represent the same source value.  Since the values within each of \(A,B,C\) are distinct, the port index is the group index.  For \(F_4,\ldots,F_9\), the group and group-square coordinates force all three private indices to coincide.

The ten private-degree equations have the same unique solution as in the ``yes'' instance:
\[
x_1=x_2=x_3=2,
\qquad
x_4=x_5=x_6=x_7=x_8=x_9=1.
\]
Therefore every port has exactly two filler incidences and exactly one remaining incidence in a main \(F_0\)-edge.  The main edges must use every \(P_1\)-port, every \(P_2\)-port, and every \(P_3\)-port exactly once, and each must satisfy \(a_i+b_j+c_k=36\).  They would consequently form a feasible source matching.  The forced-edge argument above proves that no such source matching exists.  This contradiction shows that the symmetric output instance has no feasible matching.

\subsubsection{Complete numerical audit}
\begin{center}
\begin{tabular}{l r}
\toprule
object & count\\
\midrule
source occurrences & 12\\
groups \(S_i\) & 4\\
underlying occurrences in \(S\) & 52\\
occurrences in each of \(X^\star,Y^\star,Z^\star\) & 52\\
total labeled output occurrences & 156\\
valid local filler hyperedge occurrences & 48\\
required main hyperedge occurrences & 4\\
numerically possible main-edge types & 6\\
perfectly disjoint four-edge main system & 0\\
complete feasible output matchings & 0\\
\bottomrule
\end{tabular}
\end{center}

\section{Lessons learned from Part I}

When every triple shares one common target as in problem SN3DM, bookkeeping cannot be delegated to distinct target labels. It must be internalized in the constructed instance. Here, role filters determine which types may meet, locality coordinates prevent cross-group splicing, degree equations reserve exactly one main incidence per port, and recoloring restores the labeled-class requirement. The general lesson is that each gadget layer should have one stated logical responsibility and that the reverse proof should compose those layers in a visible dependency chain.

\section{Conclusion}

\subsection*{Summary of results}
Part~I establishes that Symmetric Numerical Three-Dimensional Matching (\SNthreeDM) is \NP-complete when its numerical data are encoded in unary and, consequently, that classical Numerical Three-Dimensional Matching remains strongly \NP-complete when its three disjoint labeled classes have identical weight multisets.  The reduction starts from an arbitrary unary \NthreeDM instance; the three source classes need not be identical.  Their asymmetric roles are represented internally by the port types $P_1,P_2,P_3$ in one common occurrence set, of which the output contains three labeled copies.

The proof temporarily forgets those outer labels and works with a $3$-regular target-sum triple system.  Role, source, group, and group-square coordinates admit only the intended main and filler patterns, preserve the source target equation, and prevent cross-group splicing.  The private-item degree equations force the filler multiplicities, use two incidences of every port, and leave exactly one incidence of every port for a main edge; the main edges then recover a source matching.  Bipartite edge coloring restores the three output-class labels, and a no-carry mixed-radix encoding converts the coordinate construction into positive unary-polynomial integers.  The worked ``yes'' and ``no'' instances illustrate both directions of this equivalence.

The exact theorem is also the structural foundation for Part~II.  It identifies the perfect filler multiplicities and the one-main-incidence-per-port invariant.  Approximation hardness requires those equalities to remain informative after some triples are omitted.  Part~II therefore treats the reverse proof here as its zero-defect case and replaces exact degree equations by quantitative deficit equations.

\subsection*{Modeling and computational implications}
The Part~I result shows that neither a common target nor identical marginal weight distributions guarantee an easy exact-matching problem.  Three facilities, teams, shifts, or resource pools may contain the same multiset of capacities, durations, or requirements, and every selected triple may be required to attain the same total, yet coordinating one occurrence from each labeled pool while using every occurrence exactly once remains strongly \NP-hard.  The difficulty lies in the joint combinatorial assignment, not in visible differences among the three numerical lists.

For structured applications, the proof suggests exposing assignment roles explicitly, writing degree or incidence constraints, strengthening locality restrictions, and adding symmetry-breaking constraints among numerically identical occurrences.  Mixed-integer programming, constraint programming, exact-cover methods, and hypergraph formulations should therefore be assessed partly by how well they control equivalent copies and propagate the filler and main-edge structure.

\subsection*{Future research possibilities}
Natural exact-algorithm questions remain for bounded numbers of distinct weights, bounded weight multiplicities, restricted numerical ranges, interval or ordering structure, restricted triple eligibility, and bounded-width compatibility hypergraphs.  It would also be useful to simplify the reduction by decreasing the number of private items, filler patterns, role codes, or logical coordinates, or to determine whether the group-square coordinate can be replaced by a different locality mechanism.

The copy-forgetting and recoloring lemma suggests extensions to more than three identical classes and to other exact-sum problems whose labels can be restored from a regular incidence structure.  Parameterized and exact algorithms may be effective when only a small number of occurrences depart from the canonical filler scaffold.  Part~II settles the general no-PTAS question under the stated gap convention, but leaves open the approximation behavior of such restricted or parameterized families.

\clearpage
\part{Inapproximability of Maximum Symmetric Numerical Three-Dimensional Matching}\label{part:gap}
\setcounter{theorem}{0}
\renewcommand{\thesection}{\arabic{section}}
\renewcommand{\theHsection}{apx.sec.\arabic{section}}

\noindent\hyperlink{maincontents}{\small\textit{Return to main contents}}\par\medskip
\section*{Proof navigation for Part II}
\addcontentsline{toc}{section}{Proof navigation for Part II}

\textbf{Forward implication.}
Begin with a perfect matching in the bounded three-dimensional source instance.  The first numerical encoding replaces every source edge by one actual or dummy target quadruple.  The second compiler splits each target quadruple into two numerical triples and uses filler triples for unused pair indices.  The symmetric construction then inserts the twelve local filler triples around each source group and one main triple for each selected source triple.  Copy recoloring converts the resulting degree-at-most-three target-sum hypergraph into legal triples among the three labeled, weight-identical output classes.  A useful way to follow the forward direction is to check, at each stage, that a perfect source solution reaches the natural upper bound of the destination problem.

\textbf{Quantitative reverse implication.}
Begin with a symmetric matching that omits $d$ triples from perfection.  Forgetting the outer class labels creates exactly $3d$ missing incidences.  The role and locality coordinates still classify every surviving triple, while the private-item degree equations now contain explicit deficit terms.  Those equations bound the departure of the filler multiplicities from their perfect values.  Any remaining overuse of repeated source values is repaired by deleting only linearly many main triples.  The resulting inequality
\[
  \OPT_{\mathrm N}(I)\ge n-21d
\]
is the bridge from the exact proof in Part~I to a constant approximation gap.

\textbf{What could otherwise go wrong.}
Strong \NP-hardness alone distinguishes $M$ from $M-1$ and therefore supplies only a vanishing relative gap.  A nonquantitative reverse proof would likewise say nothing about solutions that are nearly perfect.  At the construction level, unintended role triples, cross-group splicing, arithmetic carry, or uncontrolled reuse of repeated values could each absorb many defects.  The proof assigns a separate mechanism to each danger: a role filter, two locality moments, a no-carry encoding, deficit equations, and a final overuse-repair step.  Readers interested mainly in the new idea may read \cref{apx:sec:proof-overview,apx:sec:stability,apx:sec:gap-transfer} first and then use the worked examples in \cref{apx:sec:worked-examples} as a numerical guide to the full construction.

\section{Introduction}

Part~I proves that Numerical Three-Dimensional Matching remains strongly \NP-complete when the three disjoint labeled classes have identical weight multisets and all numerical data are encoded in unary.  Part~II studies the natural maximum-cardinality version and asks whether that exact hardness can be strengthened to a constant approximation gap.

That decision result does not by itself imply any fixed-ratio inapproximability result.  Let $M$ be the number of items in each class, and let the optimization objective be the maximum number of pairwise item-disjoint target-sum triples.  The decision reduction separates only
\[
  \OPT=M
  \qquad\text{from}\qquad
  \OPT\le M-1.
\]
The ratio $(M-1)/M$ tends to one.  The existence of such a reduction is therefore compatible with the existence of a polynomial-time approximation scheme.

The missing ingredient is quantitative stability of the reverse implication.  In the perfect reverse proof, every underlying occurrence has incidence degree exactly three, and ten local degree equations force the intended filler multiplicities.  In a partial solution, some incidences are absent.  We replace the exact degree equations by deficit equations and prove that omitting $d$ symmetric triples can destroy at most $21d$ triples in the recovered source matching.

A second issue is the source problem.  Ordinary bounded three-dimensional matching has a perfect-completeness gap, but the symmetric construction starts from numerical three-dimensional matching.  We therefore include an explicit gap-preserving numerical transformation.  The transformation passes through numerical four-dimensional matching and keeps all numbers polynomially bounded.

The theorem below is the precise result established in this part.  The constant is intentionally coarse; only its positivity is needed.

\begin{theorem}[Unary perfect-completeness gap for \MaxSNthreeDM]\label{apx:thm:main-gap}
There exists a constant $\varepsilon_{\mathrm{SN}}>0$ such that the following promise problem is \NP-hard.  The input is a \MaxSNthreeDM instance with $M$ items in each of its three disjoint labeled classes, identical weight multisets in the three classes, positive integer weights, and one positive integer target.  Distinguish between
\begin{align*}
  \textnormal{\rmfamily(YES)}&\qquad \OPT_{\mathrm{SN}}=M,\\
  \textnormal{\rmfamily(NO)}&\qquad \OPT_{\mathrm{SN}}\le(1-\varepsilon_{\mathrm{SN}})M.
\end{align*}
The hardness holds when every numerical value is bounded by a polynomial in the combinatorial input size and hence when all numerical data are encoded in unary.
\end{theorem}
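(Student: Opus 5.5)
The proof composes three gap-preserving stages and then applies the Part~I construction with a quantitative reverse analysis.

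\textbf{Step 1 (source gap).} Start from Petrank's perfect-completeness gap for bounded three-dimensional matching \citep{Petrank1994}. For some constant $\gamma>0$ and bounded degree, it is \NP-hard to distinguish instances with a perfect matching from instances in which every matching misses at least a $\gamma$ fraction of the vertices of each class.

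\textbf{Step 2 (numerical compilation).} Compile this source into unary \MaxNthreeDM, passing through numerical four-dimensional matching, so that the gap survives as a constant $\varepsilon_{\mathrm N}>0$.
\begin{enumerate}
  \item Encode every source edge as a target-sum quadruple, using actual and dummy items so that a perfect source matching yields a perfect quadruple system. Conversely, a large quadruple system must use mostly actual quadruples, because each dummy item has bounded multiplicity. Bounded degree keeps the expansion linear, so an additive defect of $\delta$ quadruples corresponds to $O(\delta)$ lost source edges.
  \item Split each quadruple into two numerical triples joined by a pair index. Unused pair indices are closed by filler triples.
  \item Take all digits to be polynomially bounded and use a no-carry radix as in \cref{n3dm:lem:no-carry}. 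Each stage multiplies the instance size by a constant and adds only a constant-factor loss, so the gap stays a positive constant.
\end{enumerate}

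\textbf{Step 3 (symmetric construction and defect stability).} Apply the construction of Part~I to the \MaxNthreeDM instance $(A,B,C,t)$ with $n$ items per class. This produces $M=13n$ items per symmetric class. Completeness follows exactly as in \cref{n3dm:sec:if}. For soundness, take a symmetric partial matching of size $13n-d$ and forget the labels as in \cref{n3dm:lem:copy-forgetting}. This yields a hypergraph whose degrees are at most three, with total deficiency exactly $3d$.
\begin{enumerate}
  \item The role filter (\cref{n3dm:lem:role-filter}), the two-moment lemma (\cref{n3dm:lem:two-moments}), and the no-carry lemma all apply to each surviving edge individually. So every edge still has a pattern $F_0,\ldots,F_9$, and its private items lie in one group.
  \item For group $i$, write the ten private-degree equations as $\sum x_{r,i}=3-\delta_{\ell,i}$, with deficits $\delta_{\ell,i}\ge 0$ summing to at most $3d$ over all groups and items.
  \item Invert the linear system of Part~I. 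The solution depends linearly on the right-hand side, so $|x_{r,i}-x^\circ_r|\le c\sum_\ell\delta_{\ell,i}$ for an absolute constant $c$, where $x^\circ$ are the perfect multiplicities.
  \item Consequently, in all but $O(d)$ port incidences each port is used twice locally and, when its degree is full, once in a main edge. The main edges therefore form a near-matching.
\end{enumerate}

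\textbf{Step 4 (repair and gap transfer).} \textbf{The main obstacle is repeated values.} The source coordinate pins a filler's port only up to its value, not its group. Hence a value class $\alpha$ may lend filler incidences across groups, and main edges may request a value more often than $r_A(\alpha)$. I would do the bookkeeping per value, as in the Part~I final step. For each value, the excess of main $P_1$-incidences of value $\alpha$ over $r_A(\alpha)$ is at most the total downward deviation of the $F_1$ multiplicities in groups of value $\alpha$, plus the port deficits. Deleting one main edge per unit of excess, and assigning the remaining equal-valued occurrences arbitrarily, yields a genuine numerical matching.

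Summing the constants should give $\OPT_{\mathrm N}\ge n-21d$. Then in the NO case, $13n-d\ge(1-\varepsilon_{\mathrm{SN}})13n$ forces $n-21d\ge(1-\varepsilon_{\mathrm N})n$ once $\varepsilon_{\mathrm{SN}}=\varepsilon_{\mathrm N}/273$, a contradiction. Finally, the size ledger from \cref{n3dm:sec:size} shows that all values remain polynomially bounded, which gives the unary claim.
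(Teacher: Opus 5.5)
Your outline follows the paper's route: Petrank's perfect-endpoint gap, an actual/dummy numerical four-dimensional compiler, a split into two triples closed by fillers, the Part~I construction with deficit equations, a per-value overuse repair, and the transfer $\varepsilon_{\mathrm{SN}}=\varepsilon_{\mathrm N}/273$.

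The one genuinely different ingredient is your local estimate. The paper uses a coarse case split to get $\pos{x_{r,i}-2}\le\Delta_i$ and $\pos{2-x_{r,i}}\le2\Delta_i$: it takes the exact Part~I solution when $\Delta_i=0$, and the range $0\le x_{r,i}\le3$ when $\Delta_i\ge1$. You instead use that the $10\times9$ local system has trivial kernel over $\mathbb R$, since the Part~I elimination never uses integrality or signs. A left inverse then gives $|x_{r,i}-x^\circ_r|\le c\,\Delta_i$. That is valid and bounds all nine multiplicities at once; an explicit left inverse even gives $|x_{1,i}-2|\le\Delta_i$. Because you never compute $c$, however, your $n-21d$ and the denominator $273$ are imported rather than derived. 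Only positivity of the constant matters, so this is harmless. Do make the final contradiction strict, by assuming $d<\varepsilon_{\mathrm N}n/21$.

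Where the proposal falls short of a proof is Step~2.1. The reason you give for the reverse direction is not the operative one, and as stated it is false. Only $q$ of the $p$ quadruples in a perfect system are actual, and $p$ can be $3q$, so near-perfect systems are not ``mostly actual.'' Bounded dummy multiplicity also does nothing to stop a quadruple from mixing an actual $A$-item with dummy $B$- or $C$-items. If the weights carried only endpoint indices, $(A_e,B_e,C_e,D_e)$ would be a target quadruple for every edge $e$, so the numerical instance would be perfectly matchable for \emph{every} source.

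What is needed is a high digit forcing each target quadruple to be all-actual or all-dummy. The paper gives an actual $A$-item $2R^3$, a dummy $B$- or $C$-item $R^3$, and $D_e$ the weight $2R^3-iR^2-jR-k$, with target $4R^3$; this forces $2\alpha_A+\beta_B+\beta_C=2$. A matching of size $p-d$ then leaves exactly $d$ items unused per class. Hence it uses at least $q-d$ actual $A$-items, each lies in an all-actual quadruple, and these decode to pairwise disjoint source edges.

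The split step likewise needs its own reverse count, because left and right halves are linked only through a common pair sum $s$. Let $u_s,v_s,f_s$ be the numbers of left, right, and filler triples with sum $s$, and $r_s$ the number of pairs of sum $s$. Capacity gives $u_s+f_s\le r_s$ and $v_s+f_s\le r_s$, so $\min\{u_s,v_s\}\ge u_s+v_s+f_s-r_s$, which sums to a one-for-one loss.

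Finally, Petrank's theorem counts exactly-once-covered vertices over arbitrary subfamilies, with possibly unequal parts. Restating it as a matching gap ``in each class'' therefore needs an equalization step (the paper uses three cyclic copies), or the remark that unequal-part inputs are never YES instances.
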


\begin{corollary}[Fixed-ratio inapproximability]\label{apx:cor:no-ptas}
Unless $\mathrm P=\mathrm{NP}$, \MaxSNthreeDM has no polynomial-time approximation scheme.  In particular, there is a constant $\alpha>1$ for which no polynomial-time $\alpha$-approximation exists unless $\mathrm P=\mathrm{NP}$.
\end{corollary}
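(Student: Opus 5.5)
The plan is a direct argument by contradiction from \cref{apx:thm:main-gap}, using only the promise gap and the fact that the perfect value $M$ is read directly from the instance.  Fix the constant $\varepsilon_{\mathrm{SN}}>0$ supplied by the theorem, and choose any $\alpha$ with
\[
  1<\alpha<\frac{1}{1-\varepsilon_{\mathrm{SN}}}
\]
(if $\varepsilon_{\mathrm{SN}}\ge 1$ the NO case would force $\OPT_{\mathrm{SN}}\le 0$, and any $\alpha>1$ works).  The first claim, no PTAS, follows from the second, since a PTAS yields a polynomial-time $\alpha$-approximation for every fixed $\alpha>1$.  It therefore suffices to show that a polynomial-time $\alpha$-approximation algorithm $\mathcal A$ would decide the promise problem of \cref{apx:thm:main-gap} in polynomial time.

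Given a promise instance with $M$ items per class, run $\mathcal A$ to obtain a feasible family of pairwise item-disjoint target-sum triples.  Feasibility can be checked in polynomial time, and its cardinality $\val$ satisfies $\OPT_{\mathrm{SN}}/\alpha\le\val\le\OPT_{\mathrm{SN}}$.  Answer YES exactly when $\val>(1-\varepsilon_{\mathrm{SN}})M$.
\begin{itemize}
  \item In the YES case, $\OPT_{\mathrm{SN}}=M$, so $\val\ge M/\alpha>(1-\varepsilon_{\mathrm{SN}})M$ by the choice of $\alpha$.
  \item In the NO case, $\val\le\OPT_{\mathrm{SN}}\le(1-\varepsilon_{\mathrm{SN}})M$.
\end{itemize}
The test is therefore correct on every promise instance.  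By the NP-hardness in \cref{apx:thm:main-gap}, this gives $\mathrm P=\mathrm{NP}$.

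The only point needing care is that the reduction really runs in polynomial time under the encoding used for $\mathcal A$.  The theorem asserts hardness already for instances whose numbers are polynomially bounded, and hence unary-encoded.  So $\mathcal A$ may be assumed polynomial in the unary length, and the argument applies under either encoding convention.  I expect no real obstacle here: all of the difficulty lies in the theorem itself, namely the gap compilers and the defect-stability bound $\OPT_{\mathrm N}\ge n-21d$.  The corollary is the standard gap-to-ratio translation, and the only fact it uses about the NO threshold is that it is a fixed fraction of the known perfect value $M$.
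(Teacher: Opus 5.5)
Your proposal is correct and uses essentially the same gap-to-threshold argument as the paper. The paper runs a hypothetical PTAS with accuracy $\delta<\varepsilon_{\mathrm{SN}}$ and compares the returned value with $(1-\varepsilon_{\mathrm{SN}})M$, whereas you first fix a ratio $1<\alpha<1/(1-\varepsilon_{\mathrm{SN}})$ and then derive the no-PTAS claim from it, which has the minor advantage of proving the ``in particular'' clause explicitly rather than leaving it implicit.
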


The unambiguous formal conclusion of this part is \cref{apx:thm:main-gap} and its no-PTAS consequence.  The displayed transformation is not claimed here to satisfy the two affine optimum-error inequalities of an L-reduction.  Part~III supplies a separate standard L-reduction after changing the construction by one-live-port separation.  The distinction is substantive rather than terminological and is revisited in \cref{apx:rem:reduction-notion}.

Part~II is a quantitative extension of Part~I.  It repeats the definitions and the portion of the symmetric construction needed for approximation analysis while retaining the same port, private-item, role-code, locality, copy-forgetting, and recoloring vocabulary.  The exact reverse proof in Part~I is the zero-defect case of the argument developed here.

\paragraph{Suggested reading paths.}
For the conceptual result, read the main theorem, \cref{apx:sec:proof-overview}, the defect-stability lemma, and the gap-transfer proof.  For construction verification, read the two numerical compilers and then the complementary-pair certificate in Part~I.  For a concrete first encounter, read the one-group ``yes'' and ``no'' instances before returning to the general deficit equations.  The examples deliberately separate the fixed twelve-triple filler scaffold from the one main triple that carries the source decision.

\section{Optimization problems and gap terminology}\label{apx:sec:optimization}

Repeated numerical values are allowed throughout.  An item is an occurrence, not merely its numerical value.

\begin{definition}[\MaxNthreeDM]\label{apx:def:max-n3dm}
An instance is a tuple $I=(A,B,C,w,t)$, where $A,B,C$ are pairwise disjoint sets of cardinality $n$, $w:A\uplusm B\uplusm C\to\N_{>0}$, and $t\in\N_{>0}$.  A feasible solution is a collection of pairwise item-disjoint triples $(a,b,c)\in A\times B\times C$ satisfying
\[
 w(a)+w(b)+w(c)=t.
\]
The objective is to maximize the number of selected triples.  Its optimum is denoted $\OPT_{\mathrm N}(I)$.
\end{definition}
 
\begin{definition}[\MaxNfourDM]\label{apx:def:max-n4dm}
An instance of \MaxNfourDM{} (Maximum Numerical Four-Dimensional
Matching) consists of four pairwise disjoint item sets
$\mathcal A,\mathcal B,\mathcal C,\mathcal D$ of common cardinality $p$,
a positive integer weight for each item, and a positive integer target
$T$.  A feasible solution, called a \emph{matching}, is a collection of
pairwise item-disjoint quadruples in
$\mathcal A\times\mathcal B\times\mathcal C\times\mathcal D$ whose
weights sum to $T$.  The objective is to maximize the number of selected
quadruples.  A matching of size $p$ is \emph{perfect}: it uses every
item of every class exactly once.
\end{definition}

\begin{definition}[\MaxSNthreeDM]\label{apx:def:max-sn3dm}
An instance of \MaxSNthreeDM is an instance of \MaxNthreeDM together with a weighted occurrence set $S$ and bijections from $S$ to the three classes under which corresponding occurrences have equal weights.  Equivalently, the classes are disjoint labeled copies $S^X,S^Y,S^Z$ of one weighted occurrence set $S$.  The objective is the maximum number $\OPT_{\mathrm{SN}}$ of pairwise item-disjoint class-transversal triples having the common target sum.
\end{definition}

\begin{definition}[Perfect-completeness gap]\label{apx:def:perfect-gap}
Let $\Pi$ be a maximization problem and let $N(I)$ be a polynomial-time computable upper bound on the optimum.  A perfect-completeness gap of width $\varepsilon>0$ is an \NP-hard promise distinction between
\[
 \OPT_\Pi(I)=N(I)
 \quad\text{and}\quad
 \OPT_\Pi(I)\le(1-\varepsilon)N(I).
\]
\end{definition}

Such a gap immediately excludes a PTAS.  The substantive work is to establish a constant width rather than the vanishing gap $N(I)$ versus $N(I)-1$.

\subsection{Technical vocabulary for the quantitative proof}\label{apx:subsec:technical-vocabulary}

\begin{definition}[Defect]\label{apx:def:defect}
If a maximization instance has a natural perfect upper bound $B$ and a feasible solution has value $B-d$, then $d$ is its \emph{defect}.  A defect statement is useful for approximation only when it bounds the destination defect and source defect by constants independent of the instance size.
\end{definition}

\begin{definition}[Actual and dummy items]\label{apx:def:actual-dummy}
In the first numerical compiler, one chosen incidence of each bounded-\ThreeDM vertex is represented by an \emph{actual} item.  Its remaining incidences are represented by \emph{dummy} items.  An actual target quadruple records a selected source edge; a dummy target quadruple disposes of an unselected edge without consuming another actual occurrence of its endpoint.
\end{definition}

\begin{definition}[Port, private item, main triple, and filler triple]\label{apx:def:ports-fillers}
In the symmetric construction, $P_1,P_2,P_3$ are \emph{ports} carrying the three source roles.  The occurrences $Q_1,\ldots,Q_{10}$ are \emph{private items} used only to regulate local incidence counts.  A triple of ports is a \emph{main triple} and represents one selected source triple.  Every other permitted role pattern is a \emph{filler triple}; filler triples use two of the three available incidences of every port and all three incidences of every private item in a perfect solution.
\end{definition}

\begin{definition}[Role filter, locality coordinates, and no carry]\label{apx:def:coordinate-roles}
The \emph{role coordinate} admits only a finite catalogue of role patterns.  The \emph{group} and \emph{group-square} coordinates force private items in a filler triple to come from one source group.  The \emph{source coordinate} enforces the original target equation.  A mixed-radix base has \emph{no carry} when a sum of three encoded item weights reaches the scalar target if and only if all logical coordinate sums reach their vector targets separately.
\end{definition}

\paragraph{How to use these definitions.}
The proof repeatedly separates classification from counting.  Role and locality coordinates first determine what a selected triple can mean.  Incidence equations then count how many triples of each permitted meaning can occur.  Finally, the source coordinate interprets the surviving main triples.  Reading the proof in this order prevents the large encoded integers from obscuring the combinatorial invariant.

\section{Proof overview}\label{apx:sec:proof-overview}

The proof is a chain of three gap-preserving transformations followed by one quantitative stability argument.  The first two transformations are included because the approximation source is a bounded hypergraph problem, whereas the symmetric construction expects a numerical three-class instance.  The last transformation is the exact construction from Part~I, strengthened so that partial solutions can be decoded.

\begin{figure}[H]
\centering
\begin{tikzpicture}[
  box/.style={draw,rounded corners,align=center,minimum height=1.08cm,text width=2.35cm,fill=blue!3,font=\small},
  note/.style={draw,rounded corners,align=center,minimum height=.80cm,text width=3.15cm,fill=gray!6,font=\scriptsize},
  arr/.style={-{Latex[length=2.2mm]},thick},
  node distance=5.2mm
]
\node[box] (petrank) {bounded \ThreeDM\\perfect-endpoint gap};
\node[box,right=of petrank] (equal) {equal-part\\bounded \ThreeDM};
\node[box,right=of equal] (four) {numerical\\four-dimensional matching};
\node[box,right=of four] (nthree) {unary\\\MaxNthreeDM};
\node[box,right=of nthree] (sym) {symmetric\\\MaxSNthreeDM};
\draw[arr] (petrank) -- (equal);
\draw[arr] (equal) -- (four);
\draw[arr] (four) -- (nthree);
\draw[arr] (nthree) -- (sym);
\node[note,below=8mm of four] (defects) {the two numerical compilers preserve a defect $d$ one-for-one};
\node[note,below=8mm of sym] (stability) {the symmetric reverse step turns defect $d$ into at most $21d$ lost source triples};
\draw[arr] (four.south) -- (defects.north);
\draw[arr] (sym.south) -- (stability.north);
\end{tikzpicture}
\caption{Architecture of the approximation-hardness proof.  Perfect completeness moves from left to right.  Quantitative soundness moves from right to left by tracking the number of missing triples.}
\label{apx:fig:gap-architecture}
\end{figure}

\paragraph{How to read \cref{apx:fig:gap-architecture}.}
The top row describes constructions, not algorithms for solving the problems.  In the forward direction, perfection is preserved at every arrow.  In the reverse direction, begin at the far right and attach a defect to the observed solution.  The middle two compilers lose at most one recovered source object for each missing destination object.  Only the final symmetric step has a larger constant, because one missing triple creates three incidence deficits and can disturb both filler multiplicities and repeated-value accounting.  The lower boxes identify exactly where the constant in the final gap is spent.

\begin{table}[H]
\centering
\caption{Defect and size ledger for the reduction chain.}
\label{apx:tab:gap-ledger}
\small
\renewcommand{\arraystretch}{1.15}
\begin{tabularx}{\textwidth}{@{}p{.21\textwidth}p{.25\textwidth}p{.29\textwidth}X@{}}
\toprule
Stage & Perfect upper bound & Reverse statement from a solution of defect $d$ & Size relation used in the gap\\
\midrule
Equal-part bounded \ThreeDM & $q$ edges & the Petrank NO case permits at most $(1-\gamma)q$ & maximum degree $3$\\
Numerical four-dimensional matching & $p$ quadruples & $p-d$ quadruples recover at least $q-d$ source edges & $q\le p\le3q$\\
Compiled \MaxNthreeDM & $N=p+r$ triples & $N-d$ triples recover at least $p-d$ quadruples & $p\le r\le3p$, hence $N\le12q$\\
Symmetric \MaxSNthreeDM & $M=13n$ triples & $M-d$ triples recover at least $n-21d$ source triples & $M=13n$\\
\bottomrule
\end{tabularx}
\end{table}

\paragraph{How to read \cref{apx:tab:gap-ledger}.}
The third column is the important one.  It records an additive loss statement, not an approximation ratio.  The fourth column converts that additive statement into a relative gap by comparing the destination upper bound with the source size.  Multiplying the two size losses gives the displayed conservative constant $\varepsilon_{\mathrm{SN}}=\gamma/3{,}276$.  No numerical magnitude enters this ratio; polynomial bounds on the encoded numbers are checked separately to retain unary hardness.

\subsection{The two directions at a glance}
In the YES direction, a perfect bounded matching produces a perfect numerical four-dimensional matching, then a perfect three-dimensional numerical matching, and finally a perfect symmetric matching.  In the NO direction, assume the symmetric matching is too close to perfect.  The defect-stability lemma recovers a numerical matching that is too close to perfect; the two compilers then recover a bounded matching that violates Petrank's source gap.  This contradiction yields a constant fraction of missing symmetric triples.

\subsection{Why the proof is longer than the strong-hardness proof}
The exact proof needs to analyze only degree three at every underlying occurrence.  Approximation hardness requires a statement for every degree pattern at most three.  The deficit variables make those deviations explicit, and the repair argument handles repeated numerical values that could otherwise cause the decoded main edges to reuse a source value too often.  These are quantitative additions to the Part~I reverse implication, not changes to the basic role-and-filler construction.

\section{An equal-part bounded-\ThreeDM gap}\label{apx:sec:3dm-gap}

Petrank's theorem is formulated for a quality version of bounded three-dimensional matching.  We first state that result accurately, and then derive the equal-part maximum-matching gap used by the numerical construction.

Let $G=(X,Y,Z;E)$ be a three-partite hypergraph.  For a subfamily $F\subseteq E$, define
\[
 \sat_G(F)=\bigl|\{v\in X\uplusm Y\uplusm Z:\deg_F(v)=1\}\bigr|.
\]
Thus $\sat_G(F)$ counts vertices that occur exactly once in the selected hyperedges; the selected family itself need not be a matching.

\begin{theorem}[Petrank's gap-location theorem]\label{apx:thm:petrank-quality}
There is a constant $\gamma>0$ such that it is \NP-hard to distinguish three-partite hypergraphs $G=(X,Y,Z;E)$ of maximum vertex degree at most three satisfying
\begin{align*}
 \textnormal{\rmfamily(YES)}&\qquad \max_{F\subseteq E}\sat_G(F)=|X|+|Y|+|Z|,\\
 \textnormal{\rmfamily(NO)}&\qquad \max_{F\subseteq E}\sat_G(F)\le
 (1-\gamma)(|X|+|Y|+|Z|).
\end{align*}
\end{theorem}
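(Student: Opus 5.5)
This statement is an imported theorem rather than a new result, so I would prove it by citing \citet{Petrank1994} and checking that his formulation matches the one displayed, not by re-deriving the PCP machinery. Petrank's gap-location results start from a MAX-3SAT instance with perfect completeness: either all clauses are satisfiable, or at most a $(1-\gamma_0)$ fraction are. This gap comes from the PCP theorem \citep{AroraEtAl1998} through a bounded-occurrence reduction. He then carries it through gap-location-preserving reductions to bounded three-dimensional matching. The first step is to identify his measure for that problem. As I recall, it is the number of elements covered exactly once by a chosen subfamily of triples, which is exactly $\sat_G(F)$. I would then confirm that his instances are three-partite with maximum vertex degree at most three.

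The second step is to verify that both promise cases are the ones displayed. For the YES side, I would note that $\sat_G(F)=|X|+|Y|+|Z|$ holds if and only if every vertex has $\deg_F(v)=1$. That is precisely the condition that $F$ is a perfect matching, so the gap is located at the perfect-matching endpoint, as the theorem requires. For the NO side, Petrank's soundness statement bounds the best quality by a constant fraction of the element count. I would take $\gamma$ to be his constant, possibly rescaled if his normalization counts elements rather than triples or uses a different denominator. Since $|E|$ and $|X|+|Y|+|Z|$ are within a factor of three of one another under the degree bound, any such renormalization changes $\gamma$ only by a constant factor and leaves it positive.

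The third step is to note what the statement does \emph{not} need. The parts $X,Y,Z$ are not required to have equal size, and no numerical data appear. Equalizing the parts is deferred to the equal-part derivation that follows, so I would not attempt it here.

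The only real obstacle is the faithfulness of this translation. Petrank's paper phrases the problem in its own normalization and in terms of ``quality,'' and the degree bound could be stated there as ``each element occurs in at most three triples'' or via a bounded-occurrence SAT source. I would check that the degree bound survives his reduction unchanged. If his bound were some larger constant $B$, I would first replace each high-degree element with a standard bounded-degree gadget. That replacement preserves perfect completeness and changes the soundness constant only by a factor depending on $B$, so a positive $\gamma$ still results.
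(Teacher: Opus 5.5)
Your proposal is correct and matches the paper: the paper proves this statement purely by citation, as the case $B=3$ of Petrank's Theorem~4.4, with the exactly-once quality measure $\sat_G(F)$ and with equal part sizes left to the subsequent cyclic-equalization lemma. Your bounded-degree gadget contingency is not needed, because the paper takes the degree-three case directly from the cited result.
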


\begin{proof}[Source]
This is the case $B=3$ of \citet[Theorem~4.4]{Petrank1994}.  The bounded optimization problem was earlier shown MAX SNP-complete by \citet{Kann1991}.
\end{proof}
Theorem~\ref{apx:thm:petrank-quality} is stated exactly as in the source:
Petrank's instances are not required to have equal part sizes, and his
objective counts exactly-once-covered elements over arbitrary subfamilies
rather than the cardinality of a matching.  The next lemma performs the
two conversions needed by the numerical construction: it equalizes the
three part sizes and restates the gap in terms of maximum matchings.

\begin{lemma}[Cyclic equalization]\label{apx:lem:cyclic-equalization}
For the same constant $\gamma$, it is \NP-hard to distinguish three-dimensional matching instances
\[
 H=(U,V,W;\mathcal E),
 \qquad |U|=|V|=|W|=q,
\]
of maximum degree at most three satisfying
\begin{align*}
 \textnormal{\rmfamily(YES)}&\qquad \OPT_{3\mathrm{DM}}(H)=q,\\
 \textnormal{\rmfamily(NO)}&\qquad \OPT_{3\mathrm{DM}}(H)\le(1-\gamma)q.
\end{align*}
\end{lemma}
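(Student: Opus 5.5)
The plan is to reduce directly from \cref{apx:thm:petrank-quality}.  The construction is a \emph{cyclic three-copy} equalization, and the two objectives will be compared through the elementary identity $\sat_G(F)=3|F|$, valid whenever $F$ is a matching.

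\textbf{Construction.}  Let $G=(X,Y,Z;E)$ be a Petrank instance of maximum degree at most three, and put $N=|X|+|Y|+|Z|$.  I would take three disjoint copies $G^{(1)},G^{(2)},G^{(3)}$ of $G$ and rotate their part roles:
\[
 U=X^{(1)}\uplusm Y^{(2)}\uplusm Z^{(3)},\qquad
 V=Y^{(1)}\uplusm Z^{(2)}\uplusm X^{(3)},\qquad
 W=Z^{(1)}\uplusm X^{(2)}\uplusm Y^{(3)}.
\]
Each source edge $(x,y,z)\in E$ then gives three hyperedges:
\begin{itemize}
 \item $(x^{(1)},y^{(1)},z^{(1)})\in U\times V\times W$;
 \item $(z^{(2)},x^{(2)},y^{(2)})$, reordered so that its entries lie in $U$, $V$, $W$ respectively;
 \item $(y^{(3)},z^{(3)},x^{(3)})$, reordered in the same way.
\end{itemize}
This gives $|U|=|V|=|W|=N=:q$.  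Every vertex of $H$ keeps its degree from $G$, so the maximum degree stays at most three.  The map is plainly polynomial.

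\textbf{YES case.}  Suppose $\max_F\sat_G(F)=N$.  Then some $F\subseteq E$ covers every vertex exactly once.  Such an $F$ is a matching that partitions $X\uplusm Y\uplusm Z$, and since each edge meets each part once, $|F|=|X|=|Y|=|Z|=N/3$.  Taking the three copies of $F$ gives a matching of $H$ of size $3\cdot N/3=q$.  Because $q$ is also an upper bound, $\OPT_{3\mathrm{DM}}(H)=q$.

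\textbf{NO case.}  Let $M$ be any matching of $H$.  Its restriction to copy $j$ pulls back to a matching $M_j$ of $G$, because the copies are vertex-disjoint and each hyperedge of $H$ lies inside one copy.  For a matching, every covered vertex is covered exactly once, so
\[
 \sat_G(M_j)=3|M_j|\le (1-\gamma)N .
\]
Summing over $j=1,2,3$ gives
\[
 |M|=\sum_{j=1}^{3}|M_j|\le (1-\gamma)N=(1-\gamma)q .
\]
This is the NO condition with the same constant $\gamma$.

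\textbf{Where care is needed.}  The only delicate point is the mismatch between the two objectives.  $\sat$ is defined over arbitrary subfamilies, not only matchings, so I must use the inequality in the correct direction.  In the NO case the bound holds for all $F$ and hence in particular for matchings, which is all that is needed.  In the YES case an exactly-once cover is automatically a matching.  I would also check explicitly that a YES instance forces $|X|=|Y|=|Z|$.  This is what makes the perfect matching of $H$ have size exactly $q$ even though Petrank's parts may be unequal in general.
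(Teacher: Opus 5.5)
Your proposal is correct and is essentially the paper's proof. It uses the same cyclic three-copy equalization with $q=|X|+|Y|+|Z|$, the fact that an exactly-once cover is a perfect matching in the YES case, and, in the NO case, the per-copy bound $3|M_j|\le\max_{F}\sat_G(F)\le(1-\gamma)q$ summed over the three copies. Your rotation of the parts differs only cosmetically from the paper's, since any rotation sending $X,Y,Z$ to distinct parts in each copy works, and your explicit check that $|F|=|X|=|Y|=|Z|=q/3$ is a detail the paper leaves implicit.
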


\begin{proof}
Start with an instance $G=(X,Y,Z;E)$ from \cref{apx:thm:petrank-quality}, and put
\[
 q=|X|+|Y|+|Z|.
\]
Construct three vertex-disjoint cyclic copies of $G$.  In copy $0$, map $(x,y,z)$ to
\[
 (x^0,y^0,z^0)\in U\times V\times W.
\]
In copy $1$, map it to
\[
 (z^1,x^1,y^1)\in U\times V\times W,
\]
and in copy $2$, map it to
\[
 (y^2,z^2,x^2)\in U\times V\times W.
\]
Consequently,
\begin{align*}
 U&=X^0\uplusm Z^1\uplusm Y^2,\\
 V&=Y^0\uplusm X^1\uplusm Z^2,\\
 W&=Z^0\uplusm Y^1\uplusm X^2,
\end{align*}
so all three new parts have cardinality $q$.  The maximum degree remains at most three.

In the ``yes'' case, there is a family $F\subseteq E$ in which every original vertex occurs exactly once.  Such a family is a perfect matching in $G$.  Its three cyclic copies form a perfect matching of size $q$ in $H$.

In the ``no'' case, let $M_j$ be the restriction of any matching $M$ of $H$ to cyclic copy $j$.  Under the inverse copy map, $M_j$ is a matching in $G$, and therefore its selected edges make exactly $3|M_j|$ original vertices occur once.  Hence
\[
 3|M_j|\le\max_{F\subseteq E}\sat_G(F)\le(1-\gamma)q.
\]
Summing over the three disjoint copies gives
\[
 |M|=|M_0|+|M_1|+|M_2|\le(1-\gamma)q.
\]
The construction is polynomial and proves the lemma.
\end{proof}

\paragraph{How to read the equalization argument.}
The cyclic copies do not amplify the gap; they only balance the three part sizes.  Each original vertex appears once in each of the three new parts across the three copies, so every part has size $q$.  In the NO case, the matching restrictions $M_0,M_1,M_2$ are disjoint and may be evaluated separately by Petrank's saturated-vertex objective.  The factor three in $3|M_j|$ is canceled by the three cyclic copies, leaving the same relative gap $\gamma$.

\section{A unary perfect-completeness gap for \MaxNthreeDM}\label{apx:sec:n3dm-gap}

We now transform the equal-part bounded instance from \cref{apx:lem:cyclic-equalization} into a numerical instance.  The transformation is written out because ordinary \MaxThreeDM and \MaxNthreeDM are different optimization problems.

\subsection{From bounded \ThreeDM to numerical four-dimensional matching}

Let
\[
 H=(U,V,W;E),
 \qquad
 U=\{u_1,\ldots,u_q\},\quad
 V=\{v_1,\ldots,v_q\},\quad
 W=\{z_1,\ldots,z_q\},
\]
be an instance from \cref{apx:lem:cyclic-equalization}.  If $H$ has an isolated vertex, it cannot be a ``yes'' instance and may be mapped to a fixed ``no'' instance of the destination promise problem --- for concreteness, the \MaxNthreeDM{} instance with one item of weight $1$ in each class and target $4$, whose optimum is $0$ and which therefore satisfies the ``no'' promise.  We may therefore assume that every vertex has degree at least one.  Put
\[
 p=|E|,
 \qquad
 R=q+1.
\]
The degree bound gives
\begin{equation}\label{apx:eq:p-bounds}
 q\le p\le3q.
\end{equation}
For every vertex, choose one incident edge
\[
 \tau_U(u_i),\qquad \tau_V(v_j),\qquad \tau_W(z_k).
\]

For every edge $e=(u_i,v_j,z_k)\in E$, create items $A_e,B_e,C_e,D_e$ in four classes $\mathcal A,\mathcal B,\mathcal C,\mathcal D$, each of size $p$.  Define
\begin{align}
 \wt(A_e)&=iR^2+2R^3\mathbf 1[e=\tau_U(u_i)],\label{apx:eq:n4-a}\\
 \wt(B_e)&=jR+R^3\mathbf 1[e\ne\tau_V(v_j)],\label{apx:eq:n4-b}\\
 \wt(C_e)&=k+R^3\mathbf 1[e\ne\tau_W(z_k)],\label{apx:eq:n4-c}\\
 \wt(D_e)&=2R^3-iR^2-jR-k,\label{apx:eq:n4-d}
\end{align}
and use target
\begin{equation}\label{apx:eq:n4-target}
 T_4=4R^3.
\end{equation}
The item $A_{\tau_U(u_i)}$ is the \emph{actual} $A$-item representing $u_i$; all other $A$-items representing $u_i$ are \emph{dummy} items.  Actual and dummy $B$- and $C$-items are defined analogously.

All constructed weights are positive, as \cref{apx:def:max-n4dm} requires: the $A$-, $B$-, and $C$-weights are at least $iR^2\ge R^2$, $jR\ge R$, and $k\ge1$, respectively, and since $i,j,k\le q=R-1$,
\[
 iR^2+jR+k\le(R-1)(R^2+R+1)=R^3-1,
 \qquad\text{hence}\qquad
 \wt(D_e)\ge R^3+1>0.
\]

\paragraph{How to read the four-coordinate encoding.}
The powers $1,R,R^2$ record the three endpoint indices, while the $R^3$ digit distinguishes actual from dummy behavior.  The item $D_e$ contains the additive complement of the endpoint indices of edge $e$.  Since every index lies in $\{1,\ldots,q\}$ and $R=q+1$, the lower three digits have no carry.  A target quadruple therefore cannot splice together unrelated endpoints.

\begin{lemma}[Target-quadruple characterization]\label{apx:lem:n4-characterization}
Every target quadruple containing $D_e$, where $e=(u_i,v_j,z_k)$, uses an $A$-item representing $u_i$, a $B$-item representing $v_j$, and a $C$-item representing $z_k$.  Moreover, the quadruple is either all actual or all dummy in the following sense: it contains the three chosen actual endpoint items, or it contains three dummy endpoint items.
\end{lemma}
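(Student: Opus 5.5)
The plan is a base-$R$ digit comparison. Fix a target quadruple $(A_{e_1},B_{e_2},C_{e_3},D_e)$ with $e=(u_i,v_j,z_k)$, $e_1=(u_{i'},\cdot,\cdot)$, $e_2=(\cdot,v_{j'},\cdot)$, $e_3=(\cdot,\cdot,z_{k'})$. Write the actual/dummy indicators as
\[
 \alpha=\mathbf 1[e_1=\tau_U(u_{i'})],\qquad
 \beta=\mathbf 1[e_2\ne\tau_V(v_{j'})],\qquad
 \kappa=\mathbf 1[e_3\ne\tau_W(z_{k'})].
\]
Summing \eqref{apx:eq:n4-a}--\eqref{apx:eq:n4-d} and subtracting the target \eqref{apx:eq:n4-target} gives
\[
 (i'-i)R^2+(j'-j)R+(k'-k)+(2\alpha+\beta+\kappa-2)R^3=0.
\]

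\textbf{Step 1: the low digits.} Put $\Delta=(i'-i)R^2+(j'-j)R+(k'-k)$. Since all indices lie in $\{1,\ldots,q\}$ and $R=q+1$, each difference has absolute value at most $q=R-1$. Hence
\[
 |\Delta|\le (R-1)(R^2+R+1)=R^3-1<R^3.
\]
The equation says $\Delta$ is a multiple of $R^3$, so $\Delta=0$. This forces $2\alpha+\beta+\kappa=2$.

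Now peel off the digits of $\Delta=0$. Reducing modulo $R$ gives $k'\equiv k$. Because $|k'-k|\le R-1$, this means $k'=k$. Dividing by $R$ and repeating the argument gives $j'=j$, and then $i'=i$. So the $A$-, $B$- and $C$-items represent $u_i$, $v_j$ and $z_k$ respectively.

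\textbf{Step 2: the top digit.} Solve $2\alpha+\beta+\kappa=2$ over $\{0,1\}^3$. If $\alpha=1$, then $\beta=\kappa=0$: the $A$-item is actual and the $B$- and $C$-items are chosen, i.e. actual. If $\alpha=0$, then $\beta=\kappa=1$: all three are dummy. These are the only two solutions, which is exactly the all-actual or all-dummy dichotomy.

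\textbf{Main obstacle.} The only delicate point is the no-carry bound in Step 1. One must check that the actual/dummy contributions sit only in the $R^3$ digit, and that the signed lower differences cannot sum to a nonzero multiple of $R^3$. The bound $|\Delta|\le R^3-1$ handles both, so I expect no real difficulty.
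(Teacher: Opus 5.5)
Your proof is correct and follows the paper's argument essentially step for step. You substitute the four weights, show that the signed lower-order part has absolute value strictly below $R^3$, so it and the $R^3$-coefficient both vanish, then recover $k'=k$, $j'=j$, $i'=i$ by reducing modulo $R$, and finally solve $2\alpha+\beta+\kappa=2$ over $\{0,1\}^3$. Your per-difference bound $R-1$ is looser than the paper's $R-2$, but $(R-1)(R^2+R+1)=R^3-1<R^3$ still suffices.
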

\begin{proof}
Consider a target quadruple containing $D_e$, and suppose its $A$-, $B$-,
and $C$-items represent the vertices $u_{i'}$, $v_{j'}$, and $z_{k'}$,
respectively.  Substituting
\eqref{apx:eq:n4-a}--\eqref{apx:eq:n4-d} into the target equation
$\wt(A)+\wt(B)+\wt(C)+\wt(D_e)=4R^3$ and collecting terms gives
\begin{equation}\label{apx:eq:n4-signed}
 (i'-i)R^2+(j'-j)R+(k'-k)
 \;=\;
 R^3\bigl(2-2\,\alpha_A-\beta_B-\beta_C\bigr),
\end{equation}
where $\alpha_A=\mathbf 1[\text{the $A$-item is actual}]$ and
$\beta_B,\beta_C$ are the corresponding dummy indicators of the $B$- and
$C$-items.  Note that $\wt(D_e)$ is not written in nonnegative base-$R$
digits, so the two sides of \eqref{apx:eq:n4-signed} are compared by
magnitude rather than digit by digit.  Since all indices lie in
$\{1,\ldots,q\}$ and $R=q+1$, each difference on the left has absolute
value at most $R-2$, so the left side has absolute value at most
$(R-2)(R^2+R+1)<R^3$.  The right side is an integer multiple of $R^3$.
Hence both sides equal zero.

The vanishing of the left side forces the indices.  Reducing
\eqref{apx:eq:n4-signed} modulo $R$ gives $k'=k$.  Dividing by $R$ and repeating the argument
gives $j'=j$ and then $i'=i$.  Thus the quadruple uses an $A$-item
representing $u_i$, a $B$-item representing $v_j$, and a $C$-item
representing $z_k$.

The vanishing of the right side gives $2\alpha_A+\beta_B+\beta_C=2$.
Since $\alpha_A,\beta_B,\beta_C\in\{0,1\}$, the only solutions are
$(\alpha_A,\beta_B,\beta_C)=(1,0,0)$ and $(0,1,1)$: either the $A$-item
is actual and the $B$- and $C$-items are actual, or the $A$-item is dummy
and the $B$- and $C$-items are dummy.  These are exactly the all-actual
and all-dummy cases.
\end{proof}

\begin{lemma}[Completeness and defect recovery in four dimensions]\label{apx:lem:n4-gap}
The following statements hold.
\begin{enumerate}[label=(\alph*)]
 \item If $H$ has a perfect matching of size $q$, the numerical four-dimensional instance has a perfect matching of size $p$.
 \item If the numerical four-dimensional instance has a matching of size $p-d$, then $H$ has a matching of size at least $\max\{0,q-d\}$.
\end{enumerate}
\end{lemma}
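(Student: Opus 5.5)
For part (a), let $\mathcal M\subseteq E$ be a perfect matching of $H$. I would build the quadruples edge by edge. For each $e=(u_i,v_j,z_k)\in\mathcal M$, I take the quadruple $(A_{\tau_U(u_i)},B_{\tau_V(v_j)},C_{\tau_W(z_k)},D_e)$, which uses the actual items. For each unselected $e\in E\setminus\mathcal M$, I need three dummy items representing $u_i,v_j,z_k$ to go with $D_e$. The point of the construction is that the vertex $u_i$ has exactly $\deg(u_i)$ $A$-items, one actual and $\deg(u_i)-1$ dummy. Since $u_i$ lies in exactly one edge of $\mathcal M$, it lies in exactly $\deg(u_i)-1$ unselected edges. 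I would therefore fix a bijection between the dummy $A$-items of $u_i$ and the unselected edges at $u_i$, and do the same for $v_j$ and $z_k$.

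Two checks remain. First, each quadruple sums to $4R^3$. For an all-actual quadruple this is $2R^3+iR^2+jR+k+2R^3-iR^2-jR-k$. For an all-dummy quadruple it is $iR^2+(jR+R^3)+(k+R^3)+(2R^3-iR^2-jR-k)$. Second, the quadruples partition all $4p$ items. The $D_e$ are used once each, and every actual item is used because $\mathcal M$ covers every vertex. This gives $p$ quadruples.

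For part (b), let $\mathcal Q$ be a matching of size $p-d$. By \cref{apx:lem:n4-characterization}, every quadruple in $\mathcal Q$ contains some $D_e$ and uses endpoint items of the matching vertices of $e$; each quadruple is either all actual or all dummy. Let $\mathcal M'=\{e: D_e\text{ lies in an all-actual quadruple of }\mathcal Q\}$. Two edges of $\mathcal M'$ sharing a vertex would both need its unique actual item, so $\mathcal M'$ is a matching in $H$.

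The main obstacle is the counting: we must show $|\mathcal M'|\ge q-d$, i.e.\ that at most $d$ vertices of $U$ lack an all-actual quadruple. I would count dummy $A$-items. There are $p-q$ of them. All-dummy quadruples use distinct dummy $A$-items, so there are at most $p-q$ all-dummy quadruples. Hence
\[
 |\mathcal M'|=|\mathcal Q|-\#\{\text{all-dummy quadruples}\}\ge(p-d)-(p-q)=q-d.
\]
Combining this with $|\mathcal M'|\ge0$ gives $|\mathcal M'|\ge\max\{0,q-d\}$. The decoding is clearly polynomial-time.
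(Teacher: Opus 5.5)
Your proof is correct and follows the paper's argument. In part (a), actual items go to the matched edge and dummy items are assigned bijectively to the unmatched edges at each vertex. In part (b), the all-actual/all-dummy dichotomy of \cref{apx:lem:n4-characterization}, together with uniqueness of each vertex's actual item, yields a matching. The only cosmetic difference is the count: the paper notes that at most $d$ $A$-items are unused, so at least $q-d$ actual $A$-items are used, whereas you bound the all-dummy quadruples by the $p-q$ dummy $A$-items. These are the same count.
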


\begin{proof}
For part (a), let $M$ be a perfect matching of $H$.  For each vertex, assign its actual endpoint item to its unique incident edge in $M$.  Assign the remaining dummy items bijectively to the incident edges outside $M$.  Make the analogous independent assignments for the $V$- and $W$-vertices.  Every edge in $M$ receives three actual items, and every edge outside $M$ receives three dummy items.  With its $D_e$ item, each edge forms a target quadruple by \cref{apx:lem:n4-characterization}.  All items are used exactly once.

For part (b), a matching of size $p-d$ leaves exactly $d$ items unused in each class.  There are $q$ actual $A$-items, so at least $q-d$ of them are used when $d<q$.  By \cref{apx:lem:n4-characterization}, every quadruple containing an actual $A$-item is all actual.  The corresponding edges of $H$ are pairwise disjoint because distinct quadruples use distinct actual items for all three source parts.  They therefore form a matching of size at least $q-d$.  The assertion is trivial when $d\ge q$.
\end{proof}

\subsection{Compiling numerical four-dimensional matching into \MaxNthreeDM}

Let $\mathcal P$ be the set of admissible role-ordered pairs
\begin{equation}\label{apx:eq:admissible-pairs}
 \mathcal P=
 \left\{(C,D)\in\mathcal C\times\mathcal D:
 \begin{array}{l}
 \text{there exist }A\in\mathcal A,\ B\in\mathcal B\text{ such that}\\[-1mm]
 \wt(A)+\wt(B)+\wt(C)+\wt(D)=T_4
 \end{array}
 \right\}.
\end{equation}
For $P=(C,D)\in\mathcal P$, write
\[
 \sigma(P)=\wt(C)+\wt(D),
\]
and put $r=|\mathcal P|$.  For each $D_e$, pairing it with the actual $C$-item representing the $W$-endpoint of $e$ gives an admissible pair.  Conversely, \cref{apx:lem:n4-characterization} implies that every admissible $C$-item paired with $D_e$ represents that same endpoint.  There are at most three such $C$-items.  Therefore
\begin{equation}\label{apx:eq:r-bounds}
 p\le r\le3p.
\end{equation}

For every $P\in\mathcal P$, create two pair items $L_P$ and $R_P$, and create a filler set $F$ of cardinality $r-p$.  Define
\begin{align*}
 X_N&=\mathcal A\uplusm\{L_P:P\in\mathcal P\},\\
 Y_N&=\mathcal B\uplusm\mathcal C\uplusm F,\\
 Z_N&=\mathcal D\uplusm\{R_P:P\in\mathcal P\}.
\end{align*}
Each class has
\begin{equation}\label{apx:eq:N-size}
 N=p+r
\end{equation}
items.

Assign every item a role code $\rho$ and numerical coordinate $\eta$ as shown in \cref{apx:tab:compiler-role-data}.
\begin{table}[H]
\centering
\caption{Role and numerical coordinates in the four-to-three-dimensional compiler.}
\label{apx:tab:compiler-role-data}
\small
\begin{tabular}{@{}c|rrrrrrr@{}}
\toprule
Type & $A$ & $L$ & $B$ & $C$ & $F$ & $D$ & $R$\\
\midrule
$\rho$ & 20 & 1 & 70 & 30 & 89 & 69 & 10\\
$\eta$ & $\wt(A)$ & $T_4-\sigma(P)$ & $\wt(B)$ & $\wt(C)$ & 0 & $\wt(D)$ & $\sigma(P)$\\
\bottomrule
\end{tabular}
\end{table}

\paragraph{How to read \cref{apx:tab:compiler-role-data}.}
The role row is a small finite filter: among the twelve class-transversal type combinations, only three sum to $100$.  The numerical row then gives those three patterns complementary meanings.  The $(A,B,R)$ triple records the left half of a quadruple, the $(L,C,D)$ triple records the right half, and the $(L,F,R)$ triple pairs unused indices.  The scalar multiplier $300$ is larger than every possible role sum, so role equality and numerical equality cannot compensate for one another.

Encode the two coordinates by
\begin{equation}\label{apx:eq:n3-compile-weight}
 W_N(x)=\rho(x)+300\eta(x),
 \qquad
 T_N=100+300T_4.
\end{equation}
All item weights are positive.  In particular, admissibility of $P$ implies
\[
 T_4-\sigma(P)=\wt(A)+\wt(B)>0
\]
for some $A,B$.

\begin{lemma}[Allowed compiled triples]\label{apx:lem:compiled-patterns}
A class-transversal triple has weight $T_N$ if and only if it has one of the following type patterns and satisfies the displayed numerical condition:
\begin{align*}
 (A,B,R_P)&:& \wt(A)+\wt(B)+\sigma(P)&=T_4,\\
 (L_P,C,D)&:& \wt(C)+\wt(D)&=\sigma(P),\\
 (L_P,F,R_{P'})&:& \sigma(P)&=\sigma(P').
\end{align*}
\end{lemma}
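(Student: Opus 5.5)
The proof separates the scalar equation $W_N(x)+W_N(y)+W_N(z)=T_N$ into a role equation and a numerical equation, as in the no-carry lemma of Part~I. A class-transversal triple takes one item from $X_N$ (type $A$ or $L$), one from $Y_N$ (type $B$, $C$, or $F$), and one from $Z_N$ (type $D$ or $R$). Its weight is
\[
 \rho(x)+\rho(y)+\rho(z)+300\bigl(\eta(x)+\eta(y)+\eta(z)\bigr).
\]
Matching this against $100+300T_4$ shows that the role sum must be congruent to $100$ modulo $300$. Every role sum is a positive integer at most $89+20+69=178<300$, and the only such value congruent to $100$ is $100$ itself. Therefore the weight equals $T_N$ if and only if the role sum is $100$ and the $\eta$-sum is $T_4$. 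No separate carry argument is needed, since $\eta$ enters only through the multiplier $300$ and integrality does the rest. (The $\eta$ values are nonnegative integers, but that fact is not even required here.)

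The next step is a finite check of the twelve type combinations. Taking the $X_N$ type in $\{A,L\}$ (codes $20,1$), the $Y_N$ type in $\{B,C,F\}$ (codes $70,30,89$), and the $Z_N$ type in $\{D,R\}$ (codes $69,10$), list all twelve sums:
\begin{itemize}
 \item with $A$: $159, 90, 119, 100, 178, 109$;
 \item with $L$: $140, 81, 100, 41, 159, 100$.
\end{itemize}
Exactly three combinations reach $100$, namely $(A,B,R)$, $(L,C,D)$ and $(L,F,R)$.

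Finally, for each surviving pattern, substitute the $\eta$ values from \cref{apx:tab:compiler-role-data} into the equation $\eta$-sum $=T_4$.
\begin{itemize}
 \item For $(A,B,R_P)$ the equation is $\wt(A)+\wt(B)+\sigma(P)=T_4$.
 \item For $(L_P,C,D)$ it is $T_4-\sigma(P)+\wt(C)+\wt(D)=T_4$, which is equivalent to $\wt(C)+\wt(D)=\sigma(P)$.
 \item For $(L_P,F,R_{P'})$ it is $T_4-\sigma(P)+0+\sigma(P')=T_4$, which is equivalent to $\sigma(P)=\sigma(P')$.
\end{itemize}
The converse direction holds because each of these equivalences is reversible. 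I expect no real obstacle here. The only point needing care is the first step: the role digit must not be able to borrow from or lend to the numerical digit. That follows from the bound $178<300$ together with the fact that $T_N$ has residue $100$ modulo $300$.
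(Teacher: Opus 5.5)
Your route is the paper's route: the role digit is isolated because every class-transversal role sum is far below $300$, which forces role sum $100$ and $\eta$-sum $T_4$; then the twelve type combinations are checked and the $\eta$-coordinates substituted. Your congruence formulation (role sum congruent to $100$ modulo $300$, with $41\le$ role sum $\le 178$) is a slightly more explicit version of the paper's remark that role sums lie strictly between $0$ and $300$. Your substitution step and the converse are correct.

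What is wrong is the arithmetic in your $A$-row. Your $L$-row is correct and uses the order $(B,D),(B,R),(C,D),(C,R),(F,D),(F,R)$. In that order, the six sums with $\rho(A)=20$ are $159,100,119,60,178,119$, not $159,90,119,100,178,109$.

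No combination attains the values $90$ or $109$. Read in parallel with your $L$-row, your list says that $(A,C,R)$ reaches $100$ while $(A,B,R)$ does not. That contradicts the conclusion you draw in the next sentence.

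This twelve-case check is the entire combinatorial content of the lemma, so the table must be right. With the corrected row, $(A,B,R)$ is the only $A$-combination summing to $100$. Your stated conclusion, and the rest of the proof, then go through unchanged.
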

\begin{proof}
Every class-transversal role sum lies strictly between $0$ and $300$.  Hence equality in \eqref{apx:eq:n3-compile-weight} forces the role sum to be exactly $100$ and the numerical-coordinate sum to be exactly $T_4$.  Checking the twelve possible combinations of an $X_N$-type, a $Y_N$-type, and a $Z_N$-type shows that the role sum is $100$ only for
\[
 (A,B,R),\qquad (L,C,D),\qquad (L,F,R).
\]
Substitution of the $\eta$-coordinates gives the stated conditions.
\end{proof}

\begin{lemma}[Completeness and defect recovery in three dimensions]\label{apx:lem:n3-compile-gap}
The following statements hold.
\begin{enumerate}[label=(\alph*)]
 \item If the numerical four-dimensional instance has a perfect matching of size $p$, the compiled \NthreeDM instance has a perfect matching of size $N$.
 \item If the compiled \NthreeDM instance has a matching of size $N-d$, the numerical four-dimensional instance has a matching of size at least $\max\{0,p-d\}$.
\end{enumerate}
\end{lemma}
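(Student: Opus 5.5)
For part (a), I will start from a perfect four-dimensional matching $\mathcal Q$ of size $p$ and build the compiled matching in three layers, using \cref{apx:lem:compiled-patterns} to certify every triple. For each quadruple $(A,B,C,D)\in\mathcal Q$, the pair $P=(C,D)$ is admissible by definition. I therefore select the two triples $(A,B,R_P)$ and $(L_P,C,D)$; their numerical conditions hold with $\sigma(P)=\wt(C)+\wt(D)$. Different quadruples use different $D$-items, hence different pairs, so these $2p$ triples are item-disjoint. They use every $A,B,C,D$ item once and exactly $p$ of the $r$ pairs. For each of the remaining $r-p$ unused pairs $P$, I use one filler $f\in F$ and form $(L_P,f,R_P)$, which is legal with $P'=P$. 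The count is $2p+(r-p)=p+r=N$, and every item of $X_N,Y_N,Z_N$ is used exactly once.

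For part (b), I will take a matching $\mathcal M$ of size $N-d$, so exactly $d$ items of each class are unused. I plan to decode quadruples by an exchange argument at the pair items. First, discard the $(L,F,R)$ triples. Next, call an $(A,B,R_P)$ triple \emph{completed} if $L_P$ lies in an $(L_P,C,D)$ triple of $\mathcal M$. In that case the two triples together give $\wt(A)+\wt(B)+\wt(C)+\wt(D)=T_4-\sigma(P)+\sigma(P)=T_4$, which is a legal quadruple. Distinct completed triples use distinct $P$, hence distinct $L_P$, and therefore distinct $C,D$, so the decoded quadruples are item-disjoint. It remains to count them. Let $a$ be the number of $(A,B,R)$ triples, $c$ the number of $(L,C,D)$ triples, and $f$ the number of $(L,F,R)$ triples. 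Because only $(A,B,R)$ triples touch $\mathcal A$, at least $p-d$ triples are of this type, so $a\ge p-d$; similarly, looking at $\mathcal D$ gives $c\ge p-d$. A crude count of completed triples is at least $a-(\text{number of }L\text{-items not in }(L,C,D)\text{ triples})$, so I must control where the $L$-items go.

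The main obstacle is the mismatch in which $(A,B,R_P)$ uses the right item of $P$ while $(L_{P'},C,D)$ uses the left item of a possibly different $P'$ with $\sigma(P')=\sigma(P)$. My plan is to repair this by re-pairing inside each $\sigma$-value class, not by demanding $P=P'$. Within a fixed value $s$, the condition for $(A,B,R_P)$ depends only on $\sigma(P)=s$, and likewise for $(L_{P'},C,D)$. So any $(A,B,\cdot)$ half with value $s$ can be combined with any $(\cdot,C,D)$ half with value $s$ into a target quadruple. The number of decoded quadruples is then $\sum_s\min(a_s,c_s)$, where $a_s$ and $c_s$ count the two kinds of halves with value $s$. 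To bound this below, I will use the fact that $L$-items and $R$-items with value $s$ come in equal numbers $r_s$. If $u_s^L$ and $u_s^R$ denote the unused $L$- and $R$-items of value $s$, then $a_s+f_s+u_s^R=r_s=c_s+f_s+u_s^L$, and since an $(L,F,R)$ triple uses one $L$ and one $R$ of the same value, the $f_s$ terms cancel. Hence $|a_s-c_s|\le u_s^L+u_s^R$. Summing over $s$, the total number of decoded quadruples is at least $a-\sum_s(u_s^L+u_s^R)$.

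I am unsure whether this loses up to $2d$, since $d$ unused items can sit in each of $X_N$ and $Z_N$. To get the sharp loss $d$, I will instead write $\min(a_s,c_s)\ge a_s-u_s^L$ and use $a\ge p-d_A$, where $d_A$ is the number of unused $A$-items. Since $d_A+\sum_s u_s^L=d$ (the unused items of $X_N$ are exactly the unused $A$-items and the unused $L$-items), the total is at least $p-d$. The case $d\ge p$ is trivial.
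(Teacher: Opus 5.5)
Your proof is correct and follows the paper's route. Part (a) is the identical construction, and in part (b) you group the selected triples by pair sum $s$ and pair left halves with right halves arbitrarily inside each class, exactly as the paper does. Your bound $\min(a_s,c_s)\ge a_s-u_s^L$ is, since $u_s^L=r_s-c_s-f_s$, precisely the paper's capacity inequality $\min\{u_s,v_s\}\ge u_s+v_s+f_s-r_s$ (the paper's $u_s,v_s$ are your $a_s,c_s$), and counting the $d$ unused $X_N$-items is just another way of summing it to $(N-d)-r=p-d$; the abandoned same-$P$ ``completed'' step and the $2d$-loss estimate can simply be dropped from the write-up.
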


\begin{proof}
For part (a), for every quadruple $(A,B,C,D)$ in a perfect four-dimensional matching, let $P=(C,D)$ and select
\[
 (A,B,R_P)
 \qquad\text{and}\qquad
 (L_P,C,D).
\]
The $p$ quadruples use $p$ distinct pair indices.  For every unused $P\in\mathcal P$, choose a distinct filler $f\in F$ and select $(L_P,f,R_P)$.  This gives
\[
 2p+(r-p)=p+r=N
\]
disjoint target triples.

For part (b), group the selected triples by their pair sum $s$.  Let $u_s,v_s,f_s$ be the numbers of selected triples of types $(A,B,R)$, $(L,C,D)$, and $(L,F,R)$, respectively, with pair sum $s$, and let $r_s$ be the number of pair indices $P$ satisfying $\sigma(P)=s$.  Capacity of the $R$-items and $L$-items gives
\[
 u_s+f_s\le r_s,
 \qquad
 v_s+f_s\le r_s.
\]
Consequently,
\begin{equation}\label{apx:eq:min-pair-bound}
 \min\{u_s,v_s\}\ge u_s+v_s+f_s-r_s.
\end{equation}
Pair arbitrarily $\min\{u_s,v_s\}$ triples of the first two types for every $s$.  Each pair yields a valid four-dimensional quadruple because
\[
 \wt(A)+\wt(B)=T_4-s,
 \qquad
 \wt(C)+\wt(D)=s.
\]
The recovered quadruples are item-disjoint.  Summing \eqref{apx:eq:min-pair-bound} over $s$ yields at least
\[
 (N-d)-r=p-d
\]
quadruples, unless this expression is negative, in which case the empty matching proves the stated maximum with zero.
\end{proof}

\paragraph{How to read the pairing inequality.}
For a fixed pair sum $s$, the $R$-items limit the combined number of left-half and filler triples, while the $L$-items limit the combined number of right-half and filler triples.  The same filler count appears in both inequalities.  Subtracting the available pair indices $r_s$ therefore leaves a guaranteed overlap between left and right halves.  Summed over all $s$, the compiler loses at most one reconstructed quadruple for every missing compiled triple.

\begin{theorem}[Unary perfect-completeness gap for \MaxNthreeDM]\label{apx:thm:n3dm-gap}
There is a constant $\varepsilon_{\mathrm N}>0$ such that it is \NP-hard to distinguish unary \MaxNthreeDM instances with $N$ items in each class satisfying
\begin{align*}
 \textnormal{\rmfamily(YES)}&\qquad \OPT_{\mathrm N}=N,\\
 \textnormal{\rmfamily(NO)}&\qquad \OPT_{\mathrm N}\le(1-\varepsilon_{\mathrm N})N.
\end{align*}
One may take
\begin{equation}\label{apx:eq:epsilon-N}
 \varepsilon_{\mathrm N}=\frac{\gamma}{12},
\end{equation}
where $\gamma$ is the constant in \cref{apx:thm:petrank-quality}.
\end{theorem}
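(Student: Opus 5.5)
The plan is to compose the reduction chain already assembled: \cref{apx:lem:cyclic-equalization}, then the four-dimensional encoding with \cref{apx:lem:n4-gap}, then the compiler with \cref{apx:lem:n3-compile-gap}. The result is a polynomial-time map $H\mapsto I_N$, and its gap will be computed through the size ledger \eqref{apx:eq:p-bounds}, \eqref{apx:eq:r-bounds}, \eqref{apx:eq:N-size}. Isolated-vertex instances are sent to the fixed ``no'' instance described in the text, so assume every vertex has degree at least one.

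\textbf{Completeness.} If $\OPT_{3\mathrm{DM}}(H)=q$, then \cref{apx:lem:n4-gap}(a) gives a perfect four-dimensional matching of size $p$. Next, \cref{apx:lem:n3-compile-gap}(a) gives a perfect compiled matching of size $N=p+r$, so $\OPT_{\mathrm N}=N$.

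\textbf{Soundness, by contrapositive.} Suppose the compiled instance has a matching of size $N-d$ with $d<\varepsilon_{\mathrm N}N$. By \cref{apx:lem:n3-compile-gap}(b) there is a four-dimensional matching of size at least $p-d$, that is, of defect at most $d$. By \cref{apx:lem:n4-gap}(b), $H$ then has a matching of size at least $q-d$. In the NO case $\OPT_{3\mathrm{DM}}(H)\le(1-\gamma)q$, so $d\ge\gamma q$. Since $p\le3q$ and $r\le3p\le9q$, we get $N=p+r\le12q$. Hence
\[
 d\ge\gamma q\ge\frac{\gamma}{12}N,
\]
so every compiled matching has size at most $(1-\gamma/12)N$, which gives \eqref{apx:eq:epsilon-N}.

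\textbf{Unary size and admissibility.} It remains to check that all numbers are polynomially bounded. We have $R=q+1$, every four-dimensional weight is below $3R^3$, and $T_4=4R^3$. The compiled weights $W_N=\rho+300\eta$ are $O(R^3)$, and $T_N=100+300T_4$. All of these are polynomial in $q$, and $p$, $r$, $N$ are linear in $q$. Computing $\mathcal P$ only requires, for each $(C,D)$, testing whether some pair $(A,B)$ completes the target, which takes $O(p^4)$ time. The filler set has size $r-p\ge0$ by \eqref{apx:eq:r-bounds}, so the three classes each have exactly $N$ items, as the problem definition requires. The fixed ``no'' instance has $N=1$ and $\OPT_{\mathrm N}=0\le(1-\varepsilon_{\mathrm N})\cdot1$, so it satisfies the NO promise.

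\textbf{Main obstacle.} The step needing most care is the ledger inequality $N\le12q$. It relies on the degree bound and on the nonisolation assumption through $q\le p$, which makes each defect transfer additive with constant one. Because all the substantive work is already done by the cited lemmas, no other difficulty is expected.
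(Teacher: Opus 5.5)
Your proof is correct and follows the paper's argument essentially verbatim: the same composition of the cyclic-equalization gap with the two compiler lemmas, the same one-for-one defect transfer, and the same ledger $N=p+r\le 4p\le 12q$. You also absorb the paper's separate case $d\ge q$ into the $\max\{0,\cdot\}$ form of the lemmas, which is fine.

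One small correction to your closing remark. The bound $N\le 12q$ uses only $p\le 3q$ and $r\le 3p$, not $q\le p$. Nonisolation is needed instead so that every vertex has a chosen actual incidence $\tau$, which is what makes the four-dimensional construction and its defect-recovery lemma (with its $q$ actual items per class) go through.
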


\begin{proof}
In the ``yes'' case, \cref{apx:lem:n4-gap,apx:lem:n3-compile-gap} give a perfect compiled numerical matching.

Suppose the bounded source $H$ is a ``no'' instance and the compiled instance has a matching of size $N-d$.  If $d<q$, \cref{apx:lem:n3-compile-gap,apx:lem:n4-gap} recover a matching of $H$ of size at least $q-d$.  Therefore $d\ge\gamma q$, since otherwise the recovered matching would have size greater than $(1-\gamma)q$.  If $d\ge q$, the same inequality follows because $0<\gamma\le1$.  By \eqref{apx:eq:p-bounds}, \eqref{apx:eq:r-bounds}, and \eqref{apx:eq:N-size},
\[
 N=p+r\le4p\le12q.
\]
Thus
\[
 d\ge\gamma q\ge\frac{\gamma}{12}N,
\]
which proves the gap.

The bounded source has $O(q)$ edges.  The four-dimensional weights are $O(q^3)$, and the compiled three-dimensional weights remain $O(q^3)$.  The number of items is $O(q)$.  Hence the construction has polynomial length even when every number is written in unary.
\end{proof}

\paragraph{Reading the source gap.}
The constant $\varepsilon_{\mathrm N}=\gamma/12$ is obtained by comparing a source defect of at least $\gamma q$ with the destination upper bound $N\le12q$.  The number twelve is a size-expansion bound, not a property of numerical magnitudes.  This distinction is important: approximation loss is controlled combinatorially, while unary hardness is preserved by a separate polynomial bound on every encoded integer.

\section{The symmetric numerical construction}\label{apx:sec:symmetric-construction}

We restate the portion of the construction developed in Part~I that is needed for the quantitative analysis.  The construction itself is unchanged from Part~I.

Let the source instance have indexed weights
\[
 A=\mset{a_1,\ldots,a_n},\qquad
 B=\mset{b_1,\ldots,b_n},\qquad
 C=\mset{c_1,\ldots,c_n},
\]
and target $t$.  Equal numerical values may occur at different indices.  For each group $i$, create
\[
 S_i=\{P_{1,i},P_{2,i},P_{3,i},Q_{1,i},\ldots,Q_{10,i}\},
 \qquad
 S=\mathop{\uplusm}_{i=1}^n S_i.
\]
The physical output classes are the three labeled copies $S^X,S^Y,S^Z$, so each class has $13n$ occurrences and all three classes have the same weight multiset.

The role code $\rho_R$, group coefficient $h_R$, and source coordinate $\eta_{R,i}$ are given in \cref{apx:tab:symmetric-weights}.

\begin{table}[htbp]
\centering
\caption{Logical coordinates of the thirteen roles in source group $i$.}\label{apx:tab:symmetric-weights}
\begin{tabular}{@{}c r r l@{}}
\toprule
Role $R$ & $\rho_R$ & $h_R$ & $\eta_{R,i}$\\
\midrule
$P_1$&31&0&$3a_i$\\
$P_2$&8&0&$3b_i$\\
$P_3$&261&0&$3c_i$\\
$Q_1$&184&2&$-3b_i$\\
$Q_2$&85&$-2$&$3t-3a_i+3b_i$\\
$Q_3$&177&$-2$&$3t-2a_i+b_i+c_i$\\
$Q_4$&115&2&$2a_i-4b_i-c_i$\\
$Q_5$&23&2&$a_i-2b_i-2c_i$\\
$Q_6$&16&$-2$&$3t-a_i+2b_i-c_i$\\
$Q_7$&27&1&$2a_i-b_i-c_i$\\
$Q_8$&89&$-3$&$3t-2a_i+4b_i+c_i$\\
$Q_9$&188&1&$a_i-2b_i+c_i$\\
$Q_{10}$&96&1&$0$\\
\bottomrule
\end{tabular}
\end{table}

\paragraph{How to read \cref{apx:tab:symmetric-weights}.}
Read the columns from left to right.  The role code determines which types may meet.  The source coordinate makes $P_1,P_2,P_3$ carry $a_i,b_i,c_i$ and assigns each private role the residual needed by its filler identities.  The coefficient $h_R$ is reused twice, once with $i$ and once with $i^2$, to force local private items to share an index.  Negative source coordinates are harmless at this vector stage; the later shifts make every scalar weight positive without changing any triple equality.

For an occurrence $R_i\in S_i$, define
\begin{equation}\label{apx:eq:symmetric-vector}
 \omegavec(R_i)=(\rho_R,\eta_{R,i},h_Ri,h_Ri^2),
 \qquad
 \Kvec=(300,3t,0,0).
\end{equation}
The permitted role patterns are
\begin{align}
 F_0&=(P_1,P_2,P_3),\notag\\
 F_1&=(P_1,Q_1,Q_2), &
 F_2&=(P_2,Q_3,Q_4), &
 F_3&=(P_3,Q_5,Q_6),\notag\\
 F_4&=(Q_1,Q_7,Q_8), &
 F_5&=(Q_2,Q_7,Q_9), &
 F_6&=(Q_3,Q_7,Q_{10}),\notag\\
 F_7&=(Q_4,Q_8,Q_{10}), &
 F_8&=(Q_5,Q_8,Q_9), &
 F_9&=(Q_6,Q_9,Q_{10}).\label{apx:eq:F-patterns}
\end{align}
A triple of pattern $F_0$ is a \emph{main triple}; the other patterns are \emph{filler triples}.

\paragraph{Roles are not output classes.}
The symbols $P_1,P_2,P_3,Q_1,\ldots,Q_{10}$ name logical roles inside the common occurrence set $S$.  Each role occurrence has an $X$-copy, a $Y$-copy, and a $Z$-copy in the physical output.  Thus a main triple does not mean ``take $P_1$ from class $X$, $P_2$ from class $Y$, and $P_3$ from class $Z$'' in a fixed order.  Recoloring chooses which physical copy supplies each incidence.  This separation is what permits the three output weight multisets to be identical while the internal roles remain asymmetric.

\subsection{Mixed-radix packing and no carry}

Choose shifts $D_1,D_2,D_3$ so that every second, third, and fourth item coordinate becomes nonnegative.  Choose a base $L$ larger than every target digit and every sum of three shifted item digits.  Define
\begin{equation}\label{apx:eq:mixed-radix}
 W(R_i)=\rho_R+(\eta_{R,i}+D_1)L+(h_Ri+D_2)L^2+(h_Ri^2+D_3)L^3,
\end{equation}
and
\begin{equation}\label{apx:eq:encoded-target}
 K^\star=300+(3t+3D_1)L+3D_2L^2+3D_3L^3.
\end{equation}
All item weights and the target are positive integers.

\begin{lemma}[No-carry encoding]\label{apx:lem:no-carry}
For any three occurrences $u,v,z\in S$,
\begin{equation}\label{apx:eq:no-carry-equivalence}
 W(u)+W(v)+W(z)=K^\star
 \quad\Longleftrightarrow\quad
 \omegavec(u)+\omegavec(v)+\omegavec(z)=\Kvec.
\end{equation}
\end{lemma}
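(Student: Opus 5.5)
The plan is to compare the two sides of \eqref{apx:eq:no-carry-equivalence} digit by digit in base $L$, exactly as in the no-carry lemma of Part~I.

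\textbf{Step 1: shifted digits.} For $r=0,1,2,3$, write $\overline\omegavec_r(u)=\omegavec_r(u)+D_r$, with $D_0=0$. Also write $\overline K_r=K_r+3D_r$. Then \eqref{apx:eq:mixed-radix} reads $W(u)=\sum_{r=0}^3\overline\omegavec_r(u)L^r$, and \eqref{apx:eq:encoded-target} reads $K^\star=\sum_r\overline K_rL^r$.

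\textbf{Step 2: digit bounds.} By the choice of shifts, every $\overline\omegavec_r(u)$ is a nonnegative integer. The role digit is nonnegative because every role code is positive. Hence for any three occurrences the digit sums
\[
s_r=\overline\omegavec_r(u)+\overline\omegavec_r(v)+\overline\omegavec_r(z)
\]
are nonnegative. By the choice of $L$, each $s_r$ satisfies $0\le s_r\le L-1$, and each target digit satisfies $0\le\overline K_r\le L-1$. The target digits are nonnegative because $3t+3D_1\ge0$, $D_2,D_3\ge0$, and $300>0$.

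\textbf{Step 3: uniqueness of base-$L$ representation.} The sum $W(u)+W(v)+W(z)=\sum_r s_rL^r$ is therefore a base-$L$ representation with legal digits, and so is $K^\star$. Representations with digits in $\{0,\ldots,L-1\}$ are unique. Hence the two integers are equal if and only if $s_r=\overline K_r$ for every $r$.

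\textbf{Step 4: removing the shifts.} Each side carries exactly three copies of $D_r$. Subtracting $3D_r$ turns $s_r=\overline K_r$ into $\omegavec_r(u)+\omegavec_r(v)+\omegavec_r(z)=\Kvec_r$. This proves both directions at once.

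The only point needing care is Step 2. One must confirm that the stated choice of $L$, namely larger than every target digit and every sum of three shifted item digits, is applied in all four coordinates including the role coordinate. The maximum role sum is $3\cdot 261=783$, which is fine once $L$ exceeds it. One must also confirm that the shift makes the target's source digit nonnegative. Both follow immediately from the definitions, so I expect no real obstacle.
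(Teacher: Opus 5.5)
Your proposal is correct and follows the paper's own proof. Both arguments use the choice of $L$ to place every three-item digit sum and every target digit in $\{0,\ldots,L-1\}$, conclude by uniqueness of base-$L$ digits (no carry) that scalar equality is equivalent to equality of the shifted coordinates, and then subtract $3D_r$.
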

\begin{proof}
By the choice of $L$, every digit of the sum of three encoded item weights lies in $\{0,\ldots,L-1\}$, as does every target digit.  Hence no carry occurs in any coordinate, and equality of encoded integers is equivalent to coordinatewise equality of the shifted vectors.  Subtracting $3D_r$ in shifted coordinates $r=1,2,3$ gives the original vector equality.
\end{proof}

\paragraph{How to read the mixed-radix encoding.}
The scalar integer is a container for four already designed coordinates.  A target-sum proof should therefore be performed coordinate by coordinate and only then translated to the scalar equality.  The shifts add the same constant to every item digit and three times that constant to the target digit, so they preserve all three-item equalities.  The large base prevents an error in one coordinate from being hidden by a carry into the next.

\begin{lemma}[Two moments force a common index]\label{apx:lem:two-moments}
Let $\alpha,\beta,\chi$ be nonzero real numbers satisfying $\alpha+\beta+\chi=0$.  If
\[
 \alpha i+\beta j+\chi k=0,
 \qquad
 \alpha i^2+\beta j^2+\chi k^2=0,
\]
then $i=j=k$.
\end{lemma}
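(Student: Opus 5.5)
The argument is the same elementary second-moment computation already used for the identically phrased lemma in Part~I (\cref{n3dm:lem:two-moments}). I will repeat it in the present notation, with $\chi$ in place of $\gamma$. That letter is reserved here for Petrank's constant.

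\textbf{Step 1: normalize the signs.} Three nonzero reals summing to zero cannot all share one sign. Hence exactly two of them have one sign and the third has the opposite sign. Both hypotheses are symmetric under permuting the pairs $(\alpha,i),(\beta,j),(\chi,k)$, and both are unchanged when all three coefficients are multiplied by $-1$. So I may assume $\alpha,\beta>0$ and $\chi=-(\alpha+\beta)<0$.

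\textbf{Step 2: solve the first moment equation.} With this normalization, the first equation gives
\[
 k=\frac{\alpha i+\beta j}{\alpha+\beta}.
\]
Thus $k$ is a weighted average of $i$ and $j$ with positive weights.

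\textbf{Step 3: substitute into the second moment equation.} Substituting this $k$ into the second equation yields
\[
 0=\alpha i^2+\beta j^2-\frac{(\alpha i+\beta j)^2}{\alpha+\beta}.
\]
Expanding the numerator, the right-hand side simplifies to
\[
 \frac{\alpha\beta}{\alpha+\beta}(i-j)^2.
\]
This is the weighted variance of $\{i,j\}$, and it is the only step involving any algebra.

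\textbf{Step 4: conclude.} Since $\alpha\beta/(\alpha+\beta)>0$, Step~3 forces $i=j$. Step~2 then gives $k=i$.

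No step presents a real obstacle. The only points needing care are the sign normalization in Step~1 (this is exactly where nonzeroness of the coefficients is used) and the short algebraic simplification in Step~3.
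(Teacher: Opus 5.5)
Your proof is correct and is essentially the paper's own argument: normalize the signs to $\alpha,\beta>0$ with $\chi=-(\alpha+\beta)$, solve the first equation for $k$, and substitute into the second to obtain $\frac{\alpha\beta}{\alpha+\beta}(i-j)^2=0$. One small quibble on the label: that quantity is $(\alpha+\beta)$ times the weighted variance of $\{i,j\}$, not the weighted variance itself, which does not affect the conclusion.
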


\begin{proof}
Since the coefficients are nonzero and sum to zero, two of them share one sign and the third has the opposite sign.    After permuting the variables and multiplying by $-1$ if necessary, assume $\alpha,\beta>0$ and $\chi=-(\alpha+\beta)$.  The first equality gives
\[
 k=\frac{\alpha i+\beta j}{\alpha+\beta}.
\]
Substitution into the second gives
\[
 0=\frac{\alpha\beta}{\alpha+\beta}(i-j)^2.
\]
Thus $i=j$, and the first equality then gives $k=i$.
\end{proof}

\begin{lemma}[Structure of encoded target triples]\label{apx:lem:target-structure}
Every triple of encoded weight $K^\star$ has one of the role patterns $F_0,\ldots,F_9$.  Moreover:
\begin{enumerate}[label=(\alph*)]
 \item a main triple has the form $(P_{1,i},P_{2,j},P_{3,k})$ and is legal if and only if $a_i+b_j+c_k=t$;
 \item in every filler triple, all private $Q$-items have one common group index;
 \item if an $F_1$-triple has private items from group $i$, its $P_1$-port represents the value $a_i$; analogously, an $F_2$- or $F_3$-triple localized at $i$ uses a port representing $b_i$ or $c_i$, respectively;
 \item for every $i$, each local triple $F_r(i)$, $r=1,\ldots,9$, has encoded weight $K^\star$.
\end{enumerate}
\end{lemma}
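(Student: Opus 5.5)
The plan is to decode everything coordinatewise and then check each clause separately. First, by the no-carry encoding (\cref{apx:lem:no-carry}), a triple $u,v,z$ has encoded weight $K^\star$ if and only if $\omegavec(u)+\omegavec(v)+\omegavec(z)=\Kvec=(300,3t,0,0)$. So every claim reduces to four separate equations: the role, source, group, and group-square coordinates. The role coordinate sums to $300$. Since the role codes and patterns are literally those of Part~I, the role filter (\cref{n3dm:lem:role-filter}, certified by the complementary-pair table) shows that the unordered role multiset is one of $F_0,\ldots,F_9$. This gives the opening assertion.

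For (a), I use the fact that $h_{P_1}=h_{P_2}=h_{P_3}=0$. The two group coordinates therefore vanish automatically for any $(P_{1,i},P_{2,j},P_{3,k})$, whatever $i,j,k$ are. The source coordinate is $3a_i+3b_j+3c_k$, and it equals $3t$ exactly when $a_i+b_j+c_k=t$; this is the equivalence \eqref{n3dm:eq:main-equivalence}. For (b), I split by pattern. In $F_1,F_2,F_3$ the port contributes nothing to the group coordinates, and the two private coefficients are $\pm2$, so the group equation $2i-2j=0$ forces a common index. In $F_4,\ldots,F_9$ the three private coefficients are one of $(2,1,-3)$, $(-2,1,1)$, $(2,-3,1)$. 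These are nonzero and sum to zero, so the two-moment lemma \cref{apx:lem:two-moments}, applied to the group and group-square equations, forces $i=j=k$.

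For (c), I fix an $F_1$-triple whose private items lie in group $i$, as guaranteed by (b). Their source sum is $-3b_i+(3t-3a_i+3b_i)=3t-3a_i$. A port $P_{1,j}$ then closes the source equation exactly when $3a_j=3a_i$, that is, when it represents the value $a_i$. The analogous private sums are $3t-3b_i$ for $(Q_3,Q_4)$ and $3t-3c_i$ for $(Q_5,Q_6)$, and they give the $F_2$ and $F_3$ cases in the same way. I state the conclusion only at the level of values, not indices, because repeated values are allowed.

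For (d), I verify the three vector coordinates for each local $F_r(i)$. The role sums are $300$ by direct addition. The source sums are $3t$ by the nine identities of \cref{n3dm:tab:patterns} (equivalently \cref{tab:yes-identities}, which is written symbolically). The group and group-square sums are $i\sum h_R$ and $i^2\sum h_R$, and both vanish because each pattern's coefficients sum to zero. The no-carry lemma then converts these vector equalities into the encoded equality $W=K^\star$.

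None of these steps is deep. The only point needing care is the bookkeeping in (b) and (c): the port index must not be forced to equal $i$, only its value, and the two-moment lemma applies only to the private coefficients, after noting that ports contribute zero to the locality coordinates. I also have to confirm that the role filter here uses the same codes, so that Part~I's certificate transfers verbatim.
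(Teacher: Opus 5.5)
Your proposal is correct and follows essentially the same route as the paper. Both arguments decode coordinatewise via the no-carry lemma, invoke the role-code complementary-pair certificate, and then settle (a)--(d) using the vanishing port coefficients, the $\pm2$ and two-moment locality argument, the private source sums $3t-3a_i$, $3t-3b_i$, $3t-3c_i$, and the zero coefficient sums of the local patterns. Your insistence on pinning down only the port's value, not its index, matches the paper's formulation of part (c).
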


\begin{proof}
By \eqref{apx:eq:no-carry-equivalence}, a target equality holds coordinatewise.  The role-code sum is therefore $300$.  The exhaustive complementary-pair certificate in \cref{app:role-codes} shows that the only unordered role triples with sum $300$ are $F_0,\ldots,F_9$.

For $F_0$, all group coefficients vanish and the source-coordinate sum is $3a_i+3b_j+3c_k$, proving part (a).

For $F_1,F_2,F_3$, the two private group coefficients are $2,-2$ or $-2,2$, so the group coordinate alone forces their indices to agree.  For $F_4,\ldots,F_9$, the multiset of nonzero group coefficients is either $\{2,1,-3\}$ or $\{-2,1,1\}$; in both cases two coefficients share one sign, the third has the opposite sign, and the sum is zero.  The group and group-square equations, together with \cref{apx:lem:two-moments}, therefore force the three private indices to agree. This proves part (b).

If an $F_1$-triple has private group $i$, then
\[
 \eta_{Q_1,i}+\eta_{Q_2,i}=3t-3a_i.
\]
A port $P_{1,j}$ completes the source-coordinate target if and only if $a_j=a_i$.  Similarly,
\[
 \eta_{Q_3,i}+\eta_{Q_4,i}=3t-3b_i,
 \qquad
 \eta_{Q_5,i}+\eta_{Q_6,i}=3t-3c_i,
\]
which proves part (c).

Finally, direct substitution in \cref{apx:tab:symmetric-weights} shows that every local filler pattern has role sum $300$, source sum $3t$, and group-coefficient sum zero.  Since all occurrences use the same index, both group moments vanish.  Part (d) follows from \eqref{apx:eq:no-carry-equivalence}.
\end{proof}

\subsection{Forgetting and restoring the labeled copies}

\begin{lemma}[Partial copy-forgetting and recoloring]\label{apx:lem:partial-recoloring}
Let $S^X,S^Y,S^Z$ be three labeled copies of a weighted occurrence set $S$.
\begin{enumerate}[label=(\alph*)]
 \item A partial symmetric matching of $h$ target triples becomes, after copy labels are forgotten, a multiset $\mathcal H$ of $h$ three-incidence target hyperedges on $S$ with $\degH(s)\le3$ for every $s\in S$.
 \item Conversely, every multiset $\mathcal H$ of three-incidence target hyperedges on $S$ with maximum incidence degree at most three can be lifted in polynomial time to a partial symmetric matching of the same cardinality.
\end{enumerate}
\end{lemma}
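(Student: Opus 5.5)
Part (a) follows the forward half of \cref{n3dm:lem:copy-forgetting} almost verbatim. In each of the $h$ selected class-transversal triples we erase the superscripts $X,Y,Z$ and keep only the underlying occurrences of $S$. Each triple then becomes a three-incidence hyperedge, and its weight is still the target, because the three copies of an occurrence carry the same weight. The triples of a partial matching are item-disjoint, so each of $s^X,s^Y,s^Z$ is used at most once. Hence $\degH(s)\le3$, and $\mathcal H$ has exactly $h$ hyperedges, parallel ones counted separately.

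For part (b) we reuse the incidence bipartite multigraph $G$ from \cref{n3dm:lem:copy-forgetting}. Its left vertices are the occurrences of $S$ and its right vertices are the hyperedges of $\mathcal H$. We place one edge per incidence slot, so a hyperedge that repeats $s$ $k$ times contributes $k$ parallel edges to $s$. Every hyperedge vertex then has degree exactly $3$ and every occurrence has degree at most $3$, so $G$ has maximum degree $3$.

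Regularity, which the earlier lemma relied on, is no longer available. Instead we invoke König's edge-colouring theorem: a bipartite multigraph of maximum degree $\Delta$ has a proper $\Delta$-edge-colouring, and such a colouring can be found in polynomial time, for instance by the algorithm of \citet{Alon2003}. If we prefer a self-contained route, we first embed $G$ in a $3$-regular bipartite multigraph. We add a disjoint copy of $G$, join each deficient vertex to its mirror by $3-\deg$ parallel edges, and apply the perfect-matching argument from the proof of \cref{n3dm:lem:copy-forgetting} three times. Restricting the resulting colouring to $G$ gives a proper colouring of $G$. The mirror edges join left vertices to their right-side copies, so the enlarged graph stays bipartite, and it has polynomial size.

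Colour the edges of $G$ with $X,Y,Z$. Each hyperedge vertex has degree $3$, so its three incidences receive three distinct colours. We replace the incidence of colour $R$ at $s$ by the copy $s^R$, which turns each hyperedge into a triple with exactly one item from each labeled class. The weight is unchanged, so every triple still has the target sum. Properness at an occurrence vertex means that each copy $s^R$ is used at most once, so the $|\mathcal H|$ triples are pairwise item-disjoint and form a partial symmetric matching of the same cardinality. The main obstacle is this loss of regularity, and it is handled by König's theorem or equivalently by the regular-completion trick. Everything else is bookkeeping identical to the exact case.
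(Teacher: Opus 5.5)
Your proposal is correct and follows the paper's proof essentially verbatim: forgetting the copy labels gives (a), and for (b) you properly $3$-edge-color the incidence bipartite multigraph of maximum degree three by K\H{o}nig's line-coloring theorem and read the three colors as copy labels. Your optional regular-completion route is also valid: it adds a mirror copy with sides swapped, joins each deficient vertex to its mirror by $3-\deg$ parallel edges, and then removes three perfect matchings. This route makes the polynomial-time claim self-contained, whereas the paper leaves it implicit in the citation.
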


\begin{proof}
Part (a) is immediate: an underlying occurrence has only the three copies $s^X,s^Y,s^Z$, each used at most once.

For part (b), form the incidence bipartite multigraph whose left vertices are occurrences of $S$, whose right vertices are hyperedge occurrences, and whose edges are incidences.  Its maximum degree is at most three, and every hyperedge vertex has degree exactly three.  By K\H{o}nig's line-coloring theorem for bipartite multigraphs \citep{LovaszPlummer1986}, its edges admit a proper coloring with three colors $X,Y,Z$.  At every hyperedge vertex the three incidences receive distinct colors, and at every item vertex no color is repeated.  Replacing an incidence of color $R$ at $s$ by the physical copy $s^R$ produces a class-transversal partial matching using every physical item at most once.
\end{proof}

\paragraph{Why partial recoloring matters here.}
The exact proof in Part~I needs the regular degree-three case.  Approximation hardness needs the nonregular case because omitted triples create vertices of degree zero, one, or two.  K\H{o}nig's line-coloring theorem applies to every bipartite multigraph of maximum degree three, so recoloring preserves the cardinality of an arbitrary partial target-sum hypergraph.  The deficit analysis may therefore be carried out after labels are forgotten and lifted back without additional loss.

\begin{lemma}[Forward extension of a partial source matching]\label{apx:lem:forward-partial}
If the source \MaxNthreeDM instance has a matching of size $k$, the constructed symmetric instance has a matching of size $12n+k$.  In particular, a perfect source matching yields a perfect symmetric matching of size $13n$.
\end{lemma}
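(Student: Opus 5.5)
The plan is to mimic the forward direction of Part~I on the unlabeled side and then lift the result with \cref{apx:lem:partial-recoloring}(b). Let $\mathcal M=\{(i_r,j_r,k_r):r=1,\ldots,k\}$ be a source matching of size $k$. The triples are pairwise disjoint, so each index of each source class occurs at most once among the first, second, and third coordinates, and each triple satisfies $a_{i_r}+b_{j_r}+c_{k_r}=t$.

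First we build a multiset $\mathcal H$ of hyperedges on $S$. For every group $i\in[n]$ it contains two parallel copies of each of $F_1(i),F_2(i),F_3(i)$ and one copy of each of $F_4(i),\ldots,F_9(i)$. That is twelve filler hyperedges per group and $12n$ in total. For each $r$ it also contains the main hyperedge $F_0(i_r,j_r,k_r)=(P_{1,i_r},P_{2,j_r},P_{3,k_r})$, giving $|\mathcal H|=12n+k$.

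Next we check that every hyperedge has weight $K^\star$. The filler hyperedges do by \cref{apx:lem:target-structure}(d). The main hyperedges do by \cref{apx:lem:target-structure}(a), because each satisfies the source equation.

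Then we check the degree bound $\degH(s)\le 3$ for every $s\in S$. For the private items this is the incidence ledger of \cref{n3dm:tab:forward-private-degrees}: each $Q_{\ell,i}$ lies in exactly three filler hyperedges of its own group and in no main hyperedge. For a port, say $P_{1,i}$, there are exactly two incidences from the two copies of $F_1(i)$. It has at most one more incidence from main hyperedges, since disjointness of $\mathcal M$ means the index $i$ appears as a first coordinate at most once. So $\degH(P_{1,i})\le 3$, with equality exactly when $i$ is matched, and $P_2,P_3$ are handled the same way.

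Applying \cref{apx:lem:partial-recoloring}(b) then lifts $\mathcal H$ to a partial symmetric matching of the same cardinality $12n+k$ among $S^X,S^Y,S^Z$. In the perfect case $k=n$ every occurrence has degree exactly three, so this is a perfect matching of size $13n=M$; alternatively one may cite \cref{n3dm:lem:copy-forgetting} directly.

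There is no serious obstacle here. The only point needing care is the port-degree bound: it relies on the source matching being item-disjoint, not merely consisting of target triples. The argument is otherwise routine bookkeeping from Part~I.
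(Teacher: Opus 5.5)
Your proposal is correct and follows the paper's proof essentially step for step. Both use the canonical twelve local fillers per group, add one main hyperedge per source triple, obtain the port-degree bound from item-disjointness of the source matching, and lift with \cref{apx:lem:partial-recoloring}(b).
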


\begin{proof}
For every group $i$, include two copies of each local pattern $F_1(i),F_2(i),F_3(i)$ and one copy of each $F_4(i),\ldots,F_9(i)$. As always, $F_r(i)$ takes all its roles, including the port for
$r=1,2,3$, from group $i$, so each port receives exactly two filler
incidences. These are $12n$ legal filler hyperedges.  Every private item then has degree three, and every port has degree two.  Add one main hyperedge for every triple in the source matching.  Source item-disjointness implies that no port receives more than one main incidence, so the maximum incidence degree is at most three.  Apply \cref{apx:lem:partial-recoloring}(b).
\end{proof}

\section{Defect stability of the symmetric construction}\label{apx:sec:stability}

The following lemma is the quantitative strengthening of the reverse implication.  Its proof is easiest to follow as a flow of accounting losses, shown in \cref{apx:fig:defect-flow}.

\begin{figure}[H]
\centering
\begin{tikzpicture}[
  box/.style={draw,rounded corners,align=center,minimum height=1.02cm,text width=2.55cm,fill=blue!3,font=\small},
  arr/.style={-{Latex[length=2.2mm]},thick},
  node distance=6mm
]
\node[box] (partial) {$13n-d$ symmetric\\target triples};
\node[box,right=of partial] (inc) {$3d$ missing\\incidences};
\node[box,right=of inc] (fill) {bounded filler\\multiplicity deviations};
\node[box,right=of fill] (main) {many main triples,\\possibly with value overuse};
\node[box,below=8mm of main] (repair) {delete at most the\\total overuse};
\node[box,left=of repair] (source) {at least $n-21d$\\source triples};
\draw[arr] (partial) -- (inc);
\draw[arr] (inc) -- (fill);
\draw[arr] (fill) -- (main);
\draw[arr] (main) -- (repair);
\draw[arr] (repair) -- (source);
\end{tikzpicture}
\caption{Defect flow in the quantitative reverse proof.  Each arrow is justified by a displayed inequality in the proof of \cref{apx:lem:defect-stability}.}
\label{apx:fig:defect-flow}
\end{figure}

\paragraph{How to read \cref{apx:fig:defect-flow}.}
The first equality is exact: each missing triple accounts for three unused physical copies.  The middle steps are upper bounds.  They show that local private-item deficits can change the intended filler multiplicities only linearly, and that the resulting main-edge family can overuse source values only linearly.  The last step is a repair, not an assumption: main triples are deleted until their value multiplicities fit the labeled source occurrences.  The constant $21$ is the accumulated worst-case charge along this path.

\begin{lemma}[Defect-stability lemma]\label{apx:lem:defect-stability}
Let $I$ be a \MaxNthreeDM instance with $n$ items in each source class, and let $I^\star$ be its symmetric instance constructed in \cref{apx:sec:symmetric-construction}.  If $I^\star$ has a matching of size
\[
 13n-d,
\]
then $I$ has a matching of size at least
\[
 \max\{0,n-21d\}.
\]
Moreover, such a source matching can be recovered in polynomial time from the symmetric matching.
\end{lemma}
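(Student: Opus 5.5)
Forget the outer labels using \cref{apx:lem:partial-recoloring}(a). This turns the given matching into a multiset $\mathcal H$ of $13n-d$ target hyperedges on $S$ with $\degH(s)\le 3$. Since $|S|=13n$ and $\sum_s\degH(s)=3(13n-d)$, the deficits $\delta(s)=3-\degH(s)\ge0$ satisfy $\sum_s\delta(s)=3d$ exactly. By \cref{apx:lem:target-structure}, every hyperedge has one of the patterns $F_0,\ldots,F_9$, and every filler edge is localized at a unique group. So, as in Part~I, we define the multiplicities $x_{r,i}$ for $r=1,\ldots,9$. The ten private-degree equations now read
\[
 x_{1,i}+x_{4,i}=3-\delta(Q_{1,i}),\ \ldots,\ x_{6,i}+x_{7,i}+x_{9,i}=3-\delta(Q_{10,i}),
\]
with the same left-hand sides as in Part~I. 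Write $\Delta_i=\sum_{\ell}\delta(Q_{\ell,i})$ for the private deficit of group $i$.

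The first quantitative step solves this linear system explicitly, exploiting its unique solvability. From the paired equations, $x_{4}-x_{5}$, $x_6-x_7$, and $x_8-x_9$ equal differences of deficits. The four remaining equations then determine $a,b,c$ up to perturbations linear in the deficits. We obtain $|x_{r,i}-x^\circ_r|\le c_r\Delta_i$, where $x^\circ=(2,2,2,1,\ldots,1)$ and $c_r$ is a small explicit constant. We only need the one-sided bound on $x_{1,i},x_{2,i},x_{3,i}$ \emph{below} $2$, that is, $2-x_{s,i}\le c\,\Delta_i$ for $s=1,2,3$. Rather than inverting carefully, I would take a cheap valid combination, for example $2x_1=\ldots$ obtained by summing suitable rows, and aim for $c\le 2$ or so.

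Next comes per-value accounting, following the last step of Part~I. For each value $\alpha$ with $r_A(\alpha)$ groups, the $P_1$-ports of value $\alpha$ supply $3r_A(\alpha)-\sum\delta(P_1\text{-ports of value }\alpha)$ incidences. Filler edges localized at those groups consume $\sum_{i:a_i=\alpha}x_{1,i}$ of these incidences, by \cref{apx:lem:target-structure}(c). Hence the number of main-edge slots holding value $\alpha$ in the $P_1$ position is at most
\[
 r_A(\alpha)+\sum_{i:a_i=\alpha}(2-x_{1,i}).
\]
The overuse beyond $r_A(\alpha)$, summed over all values and all three roles, is therefore at most $\sum_i\sum_{s}(2-x_{s,i})_+\le 3c\sum_i\Delta_i\le 9c\,d$. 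We then \emph{repair}: for every value whose $P_1$ (or $P_2$, $P_3$) usage exceeds its multiplicity, delete main edges carrying that value until it fits. Each deletion removes at least one unit of overuse, so at most $9cd$ main edges are deleted. After repair, the surviving main edges can be assigned injectively to labeled source occurrences of the right value, and each satisfies $a_i+b_j+c_k=t$. This gives a source matching.

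It remains to lower-bound the number of main edges. Every filler edge is counted by some $x_{r,i}$, so the number of main edges is $(13n-d)-\sum_{i,r}x_{r,i}$. Summing the upper deviations gives $\sum_r x_{r,i}\le 12+c'\Delta_i$ (perhaps even $\le12$ exactly, if one combination of the equations gives $\sum_r x_{r,i}=12-\tfrac13\Delta_i$; I would check this first). The main edges then number at least $n-d-c'\cdot3d$. Subtracting the repair deletions yields $n-(1+3c'+9c)d$, and the constants should be tuned to give at most $21d$. Everything, including the recovery, is constructive and polynomial. The main obstacle is the explicit, sign-correct inversion of the deficit system, keeping constants small enough to land at $21$, together with the bookkeeping of port deficits versus filler consumption per value. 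If the constants overshoot, I would sharpen by bounding $\sum_s(2-x_{s,i})_+$ directly from an exact identity rather than coordinatewise.
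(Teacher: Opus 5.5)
Your argument is correct and follows the paper's overall skeleton: forget labels, use $\Delta=3d$, write the deficit equations, bound per-value overuse by $\sum_i\sum_{s\le3}\pos{2-x_{s,i}}$, and repair by deletion. Two local steps differ from the paper.

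\emph{One-sided bound.} You plan to invert the deficit system to get $\pos{2-x_{s,i}}\le c\,\Delta_i$. The paper needs no linear algebra here. If $\Delta_i=0$, the exact system forces $x_{s,i}=2$. If $\Delta_i\ge1$, then $\pos{2-x_{s,i}}\le2\le2\Delta_i$ by nonnegativity and integrality. So $c=2$ comes for free; your inversion can do better (e.g.\ $2-x_{1,i}\le\Delta_i$), but that is not needed.

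\emph{Counting main edges.} You lower-bound the number $h$ of main edges by $13n-d-\sum_i f_i$. The paper instead counts $P_1$-incidences, $h=n-\Delta_{P_1}+\sum_i(2-x_{1,i})\ge n-\Delta_{P_1}-\Delta_Q$. This only needs the crude bound $\pos{x_{1,i}-2}\le\Delta_i$, which follows from $x_{1,i}\le3$. Your count is sharper ($h\ge n-d$) but requires $f_i\le12$ exactly. So the paper's route is more robust, while yours gives a better constant.

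To close that step, note that the identity you propose to check, $f_i=12-\tfrac13\Delta_i$, is false. Deleting one copy of $F_1(i)$ from the canonical scaffold gives $\Delta_i=2$ and $f_i=11$. The correct identity comes from two sums:
\begin{itemize}
\item Each of $F_4,\dots,F_9$ meets exactly two of $Q_7,\dots,Q_{10}$, so $2\sum_{r\ge4}x_{r,i}=12-\sum_{\ell\ge7}\delta_{\ell,i}$.
\item Adding the six paired equations for $Q_1,\dots,Q_6$ gives $2\sum_{r\le3}x_{r,i}+\sum_{r\ge4}x_{r,i}=18-\sum_{\ell\le6}\delta_{\ell,i}$.
\end{itemize}
Hence $f_i=12-\tfrac12\sum_{\ell\le6}\delta_{\ell,i}-\tfrac14\sum_{\ell\ge7}\delta_{\ell,i}\le12$, i.e.\ $c'=0$.

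Use this rather than adding nine coordinatewise upper deviations. With $c=2$, your budget $1+3c'+9c\le21$ leaves room only for $c'\le\tfrac23$. With $c'=0$ and $c=2$, your bookkeeping gives at least $n-d-18d=n-19d$ surviving main edges, slightly better than the paper's $n-21d$.
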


\begin{proof}
Forget the outer labels $X,Y,Z$ as in \cref{apx:lem:partial-recoloring}.  We obtain a multiset $\mathcal H$ of $13n-d$ target hyperedges on $S$, with incidence degree at most three at every occurrence.  Define
\begin{equation}\label{apx:eq:item-deficit}
 \delta(s)=3-\degH(s)
 \qquad(s\in S).
\end{equation}
The available incidence capacity is $3|S|=39n$, whereas the selected hyperedges use $3(13n-d)=39n-3d$ incidences.  Hence
\begin{equation}\label{apx:eq:total-deficit}
 \Delta:=\sum_{s\in S}\delta(s)=3d.
\end{equation}
By \cref{apx:lem:target-structure}, every selected hyperedge has a permitted main or filler pattern; every filler hyperedge is group-local on its private items; and the port-value restrictions of \cref{apx:lem:target-structure}(c) hold edge by edge.

\paragraph{Local filler equations with deficits.}
Fix a group $i$.  For $r=1,\ldots,9$, let $x_{r,i}$ be the number of filler hyperedges of pattern $F_r$ whose private items lie in group $i$.  Put
\[
 \delta_{\ell,i}=\delta(Q_{\ell,i}),
 \qquad
 \Delta_i=\sum_{\ell=1}^{10}\delta_{\ell,i}.
\]
The private-item incidence equations are
\begin{align}
 x_{1,i}+x_{4,i}&=3-\delta_{1,i}, &
 x_{1,i}+x_{5,i}&=3-\delta_{2,i},\label{apx:eq:def-q12}\\
 x_{2,i}+x_{6,i}&=3-\delta_{3,i}, &
 x_{2,i}+x_{7,i}&=3-\delta_{4,i},\label{apx:eq:def-q34}\\
 x_{3,i}+x_{8,i}&=3-\delta_{5,i}, &
 x_{3,i}+x_{9,i}&=3-\delta_{6,i},\label{apx:eq:def-q56}\\
 x_{4,i}+x_{5,i}+x_{6,i}&=3-\delta_{7,i},\label{apx:eq:def-q7}\\
 x_{4,i}+x_{7,i}+x_{8,i}&=3-\delta_{8,i},\label{apx:eq:def-q8}\\
 x_{5,i}+x_{8,i}+x_{9,i}&=3-\delta_{9,i},\label{apx:eq:def-q9}\\
 x_{6,i}+x_{7,i}+x_{9,i}&=3-\delta_{10,i}.\label{apx:eq:def-q10}
\end{align}

If $\Delta_i=0$, then since $\Delta_i=\sum_{\ell=1}^{10}\delta_{\ell,i}$ and each $\delta_{\ell,i}\ge 0$, we have $\delta_{1,i}=\cdots=\delta_{10,i}=0$. Thus equations \eqref{apx:eq:def-q12}--\eqref{apx:eq:def-q10} reduce to the exact zero-deficit system. The first six equations give
\[
 x_{4,i}=x_{5,i},
 \qquad
 x_{6,i}=x_{7,i},
 \qquad
 x_{8,i}=x_{9,i}.
\]
Writing these common values as $u,v,w$, the final four equations are
\[
 2u+v=3,
 \quad
 u+v+w=3,
 \quad
 u+2w=3,
 \quad
 2v+w=3,
\]
whose unique solution is $u=v=w=1$. Thus $x_{4,i}=\cdots=x_{9,i}=1$, and from the first six equations we obtain
\[
 x_{1,i}=x_{2,i}=x_{3,i}=2.
\]
Therefore, for $r=1,2,3$,
\[
 \pos{x_{r,i}-2}=0 \le \Delta_i
 \qquad\text{and}\qquad
 \pos{2-x_{r,i}}=0 \le 2\Delta_i.
\]

If $\Delta_i>0$, then $\Delta_i\ge1$, and the first six equations imply
\[
 0\le x_{r,i}\le3
 \qquad(r=1,2,3).
\]
Combining this elementary bound with the solution in the zero-deficit case, for every $r=1,2,3$ we have
\begin{equation}\label{apx:eq:local-positive-bound}
 \pos{x_{r,i}-2}\le\Delta_i,
 \qquad
 \pos{2-x_{r,i}}\le2\Delta_i.
\end{equation}
Thus \eqref{apx:eq:local-positive-bound} holds for all $\Delta_i\ge 0$.

Let
\begin{equation}\label{apx:eq:Delta-Q}
 \Delta_Q=\sum_{i=1}^n\Delta_i.
\end{equation}
Thus $\Delta_Q$ is the total deficit on private items.  Summing \eqref{apx:eq:local-positive-bound} gives, for $r=1,2,3$,
\begin{equation}\label{apx:eq:global-filler-bounds}
 \sum_i\pos{x_{r,i}-2}\le\Delta_Q,
 \qquad
 \sum_i\pos{2-x_{r,i}}\le2\Delta_Q.
\end{equation}

\paragraph{How to read the deficit equations.}
When $\Delta_i=0$, group $i$ has the unique perfect filler multiplicities $(2,2,2,1,\ldots,1)$.  When $\Delta_i>0$, the proof does not attempt to enumerate every perturbed integral solution.  It uses the simpler fact that $x_{1,i},x_{2,i},x_{3,i}$ remain between zero and three.  Since a defective group contributes at least one unit to $\Delta_i$, this bounded range yields the uniform positive-part estimates in \eqref{apx:eq:local-positive-bound}.  The argument is intentionally coarse but stable.

\paragraph{A lower bound on the number of main hyperedges.}
Let $h$ be the number of selected main hyperedges and put
\[
 \Delta_{P_1}=\sum_{i=1}^n\delta(P_{1,i}).
\]
Every main hyperedge contains one $P_1$-port, while the $F_1$ filler hyperedges account for all selected filler incidences at $P_1$-ports.  Therefore
\[
 h+\sum_i x_{1,i}=3n-\Delta_{P_1}.
\]
Rearranging and using \eqref{apx:eq:global-filler-bounds} gives
\begin{align}
 h
 &=n-\Delta_{P_1}+\sum_i(2-x_{1,i})\notag\\
 &\ge n-\Delta_{P_1}-\sum_i\pos{x_{1,i}-2}\notag\\
 &\ge n-\Delta_{P_1}-\Delta_Q.\label{apx:eq:h-lower}
\end{align}

\paragraph{Bounding repeated-value overuse.}
For a numerical value $\alpha$, let $r_A(\alpha)$ be its multiplicity in the source multiset $A$, and let $\mu_A(\alpha)$ be the number of selected main hyperedges whose $P_1$-incidence represents $\alpha$.  By the value-locality statement in \cref{apx:lem:target-structure}(c), every selected $F_1$ filler incidence attached to a private group with $a_i=\alpha$ uses a $P_1$-port of the same value.  Capacity over all $P_1$-ports of value $\alpha$ gives
\[
 \mu_A(\alpha)+\sum_{i:a_i=\alpha}x_{1,i}\le3r_A(\alpha).
\]
Subtracting $r_A(\alpha)$ and taking positive parts yields
\begin{equation}\label{apx:eq:A-overuse-one-value}
 \pos{\mu_A(\alpha)-r_A(\alpha)}
 \le\sum_{i:a_i=\alpha}\pos{2-x_{1,i}}.
\end{equation}
Therefore the total $A$-value overuse satisfies
\begin{align}
 E_A
 &:=\sum_\alpha\pos{\mu_A(\alpha)-r_A(\alpha)}\notag\\
 &\le\sum_i\pos{2-x_{1,i}}\notag\\
 &\le2\Delta_Q.\label{apx:eq:EA-bound}
\end{align}
The identical argument for the pairs $(P_2,F_2)$ and $(P_3,F_3)$ gives
\begin{equation}\label{apx:eq:EBC-bound}
 E_B\le2\Delta_Q,
 \qquad
 E_C\le2\Delta_Q.
\end{equation}
Thus
\begin{equation}\label{apx:eq:total-overuse}
 E_A+E_B+E_C\le6\Delta_Q.
\end{equation}

\paragraph{Why repeated values require a repair step.}
A main hyperedge records numerical values, whereas a source matching uses labeled occurrences.  If all values were distinct, incidence capacity at the ports would already give a direct injection.  With repeated values, defects may redistribute filler uses among equal-valued ports and allow too many main hyperedges to request the same value.  The quantities $E_A,E_B,E_C$ measure exactly this excess.  Bounding them is what turns a value-level decoding into an occurrence-level matching.

\paragraph{Repairing the main hyperedges.}
Starting with the selected main hyperedges, repeatedly choose a role-value pair whose current multiplicity exceeds its multiplicity in the corresponding source multiset, and delete one main hyperedge containing that role-value pair.  A deletion never increases any overuse and reduces the total overuse by at least one.  Hence at most $E_A+E_B+E_C$ main hyperedges are deleted.

By \eqref{apx:eq:h-lower} and \eqref{apx:eq:total-overuse}, the number of surviving main hyperedges is at least
\begin{align}
 h-(E_A+E_B+E_C)
 &\ge n-\Delta_{P_1}-\Delta_Q-6\Delta_Q\notag\\
 &=n-\Delta_{P_1}-7\Delta_Q\notag\\
 &\ge n-7\Delta.\label{apx:eq:surviving-main}
\end{align}
The final inequality holds because all deficits are nonnegative and $\Delta$ includes both $\Delta_{P_1}$ and $\Delta_Q$.  By \eqref{apx:eq:total-deficit},
\begin{equation}\label{apx:eq:n-21d}
 n-7\Delta=n-21d.
\end{equation}

In the surviving family, every numerical value is used no more often in each role than it occurs in the corresponding source multiset.  Equal-valued main incidences can therefore be assigned injectively to labeled source occurrences of that value.  Every surviving main hyperedge satisfies the source target equation by \cref{apx:lem:target-structure}(a).  The assigned triples are pairwise source-item-disjoint and form a feasible \NthreeDM matching.  If $n-21d<0$, the empty matching gives the stated maximum with zero.  Classification, counting, deletion, and assignment are all polynomial-time operations.
\end{proof}

\paragraph{Reading the constant $21$.}
The proof first obtains $n-7\Delta$ surviving main triples and then uses the exact identity $\Delta=3d$.  The factor seven covers one unit of port deficit or filler surplus and up to six units of repeated-value overuse; the factor three converts missing triples into missing incidences.  The bound is not claimed to be tight.  Its purpose is to be uniform, linear, and independent of the numerical values and the instance size.

\section{Gap transfer and approximation consequences}\label{apx:sec:gap-transfer}

\begin{proof}[Proof of \cref{apx:thm:main-gap}]
Take an instance from \cref{apx:thm:n3dm-gap} with $n$ items in each source class and apply the symmetric construction.  The output has $M=13n$ items in each class.

If the source is a ``yes'' instance, it has a perfect numerical matching, so \cref{apx:lem:forward-partial} gives a perfect symmetric matching and
\[
 \OPT_{\mathrm{SN}}=13n=M.
\]

Suppose the source is a ``no'' instance, so
\[
 \OPT_{\mathrm N}\le(1-\varepsilon_{\mathrm N})n.
\]
If the symmetric instance had a matching of size $13n-d$ with
\[
 d<\frac{\varepsilon_{\mathrm N}}{21}n,
\]
then \cref{apx:lem:defect-stability} would recover a source matching of size
\[
 n-21d>(1-\varepsilon_{\mathrm N})n,
\]
a contradiction.  Therefore
\[
 d\ge\frac{\varepsilon_{\mathrm N}}{21}n.
\]
Since $M=13n$,
\begin{align*}
 \OPT_{\mathrm{SN}}
 &\le13n-\frac{\varepsilon_{\mathrm N}}{21}n\\
 &=\left(1-\frac{\varepsilon_{\mathrm N}}{273}\right)M.
\end{align*}
Thus one may take
\begin{equation}\label{apx:eq:epsilon-SN}
 \varepsilon_{\mathrm{SN}}
 =\frac{\varepsilon_{\mathrm N}}{273}
 =\frac{\gamma}{3{,}276}.
\end{equation}

It remains to verify the numerical size.  The source instances from \cref{apx:thm:n3dm-gap} have $n=O(q)$ and largest number $O(q^3)$.  In the symmetric construction, the role coordinate is constant, the source coordinate is $O(q^3)$, and the two group coordinates are $O(n)$ and $O(n^2)$.  The shifts and the no-carry base $L$ in \eqref{apx:eq:mixed-radix} may therefore be chosen as $O(q^3)$.  Each encoded item weight and the target are $O(L^4)=O(q^{12})$, and there are $O(q)$ items.  Thus the unary output length and the construction time are polynomial in $q$.
\end{proof}

\paragraph{How to read the final gap calculation.}
A source ``no'' instance is missing at least an $\varepsilon_{\mathrm N}$ fraction of its $n$ triples.  If the symmetric defect were smaller than $\varepsilon_{\mathrm N}n/21$, the stability lemma would recover too many source triples.  Dividing that forced defect by the symmetric upper bound $13n$ produces the additional factor $13$, and hence the denominator $21\cdot13=273$.  The earlier numerical compilers contribute the factor twelve already contained in $\varepsilon_{\mathrm N}=\gamma/12$.

\begin{proof}[Proof of \cref{apx:cor:no-ptas}]
Suppose a PTAS exists.  Run it with an accuracy parameter $\delta$ satisfying $0<\delta<\varepsilon_{\mathrm{SN}}$.  On a ``yes'' instance of \cref{apx:thm:main-gap}, it returns a matching of size at least $(1-\delta)M>(1-\varepsilon_{\mathrm{SN}})M$.  On a ``no'' instance, no matching exceeds $(1-\varepsilon_{\mathrm{SN}})M$.  Comparing the returned value with this threshold solves the \NP-hard promise problem in polynomial time.  Hence no PTAS exists unless $\mathrm P=\mathrm{NP}$.
\end{proof}

\begin{proposition}[\APX membership]\label{apx:prop:apx-membership}
\MaxSNthreeDM has a polynomial-time $3$-approximation.
\end{proposition}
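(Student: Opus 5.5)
The plan is to use the greedy maximal-matching algorithm that the abstract already announces: every maximal legal triple matching is a $3$-approximation.

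\textbf{Algorithm.} Given a \MaxSNthreeDM instance with classes $S^X,S^Y,S^Z$ of size $M$ and target $K$, enumerate all class-transversal triples whose weights sum to $K$. There are at most $M^3$ of them, so a scan of $S^X\times S^Y\times S^Z$ checks this in polynomial time, whether the weights are written in binary or in unary. Then scan these legal triples in any fixed order and add a triple whenever its three items are all still unused. The output $\mathcal G$ is a maximal family of pairwise item-disjoint legal triples. It is feasible by construction, and the whole procedure runs in time $O(M^3)$ times the cost of arithmetic on the input numbers.

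\textbf{Charging argument.} Let $\mathcal O$ be an optimum solution, so $|\mathcal O|=\OPT_{\mathrm{SN}}$. By maximality, every triple $T\in\mathcal O$ shares at least one item with some triple of $\mathcal G$; otherwise $T$ could have been added. Charge each $T\in\mathcal O$ to one item of $\bigcup\mathcal G$ that it contains.

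\begin{itemize}
\item The triples of $\mathcal O$ are pairwise item-disjoint, so each item of $\bigcup\mathcal G$ receives at most one charge.
\item The set $\bigcup\mathcal G$ has exactly $3|\mathcal G|$ items.
\end{itemize}

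Hence $\OPT_{\mathrm{SN}}=|\mathcal O|\le 3|\mathcal G|$, that is, $|\mathcal G|\ge \OPT_{\mathrm{SN}}/3$. This is the required ratio-$3$ guarantee, and together with \cref{apx:cor:no-ptas} it places the problem in \APX.

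\textbf{Main obstacle.} There is no real obstacle. The one point to state explicitly is that the guarantee uses neither the symmetry of the classes nor the numerical structure: it holds for any $3$-set packing, and the symmetric hypothesis only restricts the input. It is also worth noting the degenerate case $\OPT_{\mathrm{SN}}=0$. There the algorithm returns the empty matching and the inequality holds trivially.
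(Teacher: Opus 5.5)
Your proposal is correct and is essentially the paper's proof: the paper also builds an arbitrary maximal item-disjoint family of legal triples and charges each optimum triple to an intersecting greedy triple. It then bounds the charge by three per greedy triple, which is the same count as your per-item charging. One small wording fix: membership in \APX follows from the $3$-approximation alone, so the appeal to \cref{apx:cor:no-ptas} is unnecessary there; that corollary instead shows the problem has no PTAS unless $\mathrm P=\mathrm{NP}$.
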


\begin{proof}
Enumerate all class-transversal triples whose weights sum to the target, and greedily construct any maximal family $\mathcal M_G$ of pairwise item-disjoint legal triples.  Let $\mathcal M^\star$ be an optimum matching.  By maximality, every triple in $\mathcal M^\star$ intersects at least one triple in $\mathcal M_G$; charge it to one such greedy triple.  A greedy triple has three items, and because $\mathcal M^\star$ is a matching, at most one optimum triple can contain each item.  Thus at most three optimum triples are charged to a greedy triple, giving
\[
 |\mathcal M^\star|\le3|\mathcal M_G|.
\]
There are at most $M^3$ candidate triples, so the algorithm is polynomial.
\end{proof}

\begin{remark}[Sharper approximation guarantee]
The preceding maximal-matching argument is included because it gives a
short, self-contained proof that \MaxSNthreeDM belongs to \APX.
A stronger quantitative guarantee follows from the approximation
literature for unweighted $3$-Set Packing.  Indeed, construct the
$3$-uniform hypergraph whose vertices are the labeled item occurrences
and whose hyperedges are precisely the class-transversal triples whose
weights sum to the target.  Feasible \MaxSNthreeDM solutions are exactly
matchings in this hypergraph.  Therefore, for every fixed
$\varepsilon>0$, the $(4/3+\varepsilon)$-approximation algorithms of
Cygan~\citep{Cygan2013} and F\"urer and Yu~\citep{FurerYu2014} apply
directly.  This sharpens the approximation factor but is not needed for
membership in \APX; the elementary factor-$3$ argument already proves
that classification.
\end{remark}

\begin{corollary}\label{apx:cor:gap-apx}
\MaxSNthreeDM belongs to \APX, has a unary perfect-completeness gap, and has no PTAS unless $\mathrm P=\mathrm{NP}$.
\end{corollary}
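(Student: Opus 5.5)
The corollary bundles three facts already established, so I will assemble them rather than prove anything new.  Membership in \APX{} follows from \cref{apx:prop:apx-membership}.  It gives a polynomial-time algorithm that returns a feasible matching of value at least $\OPT_{\mathrm{SN}}/3$.  I will check that this constant ratio is exactly the definition of \APX{} membership for a maximization problem with polynomially computable feasibility and objective.  This holds because legal triples can be verified by a single addition and a comparison with the target, and the objective is the cardinality of the selected family.

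The unary perfect-completeness gap is \cref{apx:thm:main-gap}, read through \cref{apx:def:perfect-gap}.  The bound to use is $N(I)=M$, the class size.  It is computable in polynomial time and is an upper bound on $\OPT_{\mathrm{SN}}$, since every triple uses one item from each class.  The width is $\varepsilon_{\mathrm{SN}}=\gamma/3{,}276$.  The theorem also asserts that hardness persists when all numbers are polynomially bounded, and I will say so to justify the word ``unary.''

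The no-PTAS conclusion is \cref{apx:cor:no-ptas}.  A PTAS run with $\delta<\varepsilon_{\mathrm{SN}}$ would separate the YES and NO cases of the promise problem, which is \NP-hard.

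I expect no real obstacle.  The one point needing care is conventions: \APX{} membership must be stated for the general problem, including instances outside the image of the reduction.  The greedy argument already covers arbitrary instances, so nothing more is needed there.
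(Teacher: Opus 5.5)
Your proposal is correct and matches the paper. The paper gives no separate argument for this corollary; it simply combines the $3$-approximation of \cref{apx:prop:apx-membership}, the unary gap of \cref{apx:thm:main-gap} (which is \cref{apx:def:perfect-gap} with $N(I)=M$), and the no-PTAS consequence of \cref{apx:cor:no-ptas}, and your additional checks on polynomial verifiability and on the greedy bound holding for every instance (not only reduction outputs) are sound bookkeeping that the paper leaves implicit.
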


\begin{remark}[Relationship to Part III]\label{apx:rem:reduction-notion}
The proof in this part gives a perfect-completeness constant-gap reduction and therefore a fixed approximation threshold and a no-PTAS theorem.  It does not assert the actual-optimum error inequality required of an L-reduction.  Its safe standalone statement is:
\begin{quote}
There is a constant $\varepsilon>0$ such that it is \NP-hard to distinguish perfect \MaxSNthreeDM instances from instances with optimum at most $(1-\varepsilon)$ of perfect; consequently, \MaxSNthreeDM has no PTAS unless $\mathrm P=\mathrm{NP}$.
\end{quote}
Part~III later changes the construction by imposing one-live-port separation in order to obtain actual-optimum error transfer.  The present section should be cited for the broader grouped defect-stability theorem and the direct perfect-completeness promise gap; the exact division of labor is summarized in \cref{tab:partII-partIII}.
\end{remark}

\section{Fully worked examples}\label{apx:sec:worked-examples}

The purpose of the following examples is to make the new optimization layer visible with the smallest possible arithmetic.  They do not instantiate Petrank's asymptotic hard-gap family; a one-group example cannot exhibit an asymptotic constant gap.  Instead, they isolate the second transformation used in the proof: a source \MaxNthreeDM instance with $n=1$ is converted into a symmetric instance with $13$ occurrences in each labeled class.  The examples show exactly how a perfect source triple adds the thirteenth symmetric triple, and exactly what remains when that source triple is unavailable.

\subsection{Construction used in both examples}\label{apx:subsec:example-common}

For $n=1$, suppress the group index and write
\[
 S=\{P_1,P_2,P_3,Q_1,\ldots,Q_{10}\}.
\]
The symmetric output classes are $S^X,S^Y,S^Z$.  The role codes $\rho_R$ and group coefficients $h_R$ are those in \cref{apx:tab:symmetric-weights}.  Since the only group index is $i=1$, the last two vector coordinates of an occurrence $R$ are both $h_R$.  Thus
\[
 \omegavec(R)=(\rho_R,\eta_R,h_R,h_R),
 \qquad
 \Kvec=(300,3t,0,0).
\]

The same incidence coloring will be used in both examples.  At the unlabeled level, take two parallel copies of $F_1,F_2,F_3$, one copy of $F_4,\ldots,F_9$, and, in the ``yes'' instance only, one copy of $F_0$.  The following table gives an explicit coloring of the incidences by $X,Y,Z$.  Reading each row from left to right therefore gives one legal class-transversal triple in the labeled output instance.

\begin{table}[htbp]
\centering
\caption{An explicit incidence coloring for the one-group examples.  The row $F_0$ is used only in the ``yes'' instance.}
\label{apx:tab:example-coloring}
\small
\renewcommand{\arraystretch}{1.08}
\begin{tabular}{@{}c c c c@{}}
\toprule
Hyperedge occurrence & item from $S^X$ & item from $S^Y$ & item from $S^Z$\\
\midrule
$F_0$   & $P_1^X$    & $P_2^Y$    & $P_3^Z$\\
$F_1^a$ & $Q_1^X$    & $P_1^Y$    & $Q_2^Z$\\
$F_1^b$ & $Q_2^X$    & $Q_1^Y$    & $P_1^Z$\\
$F_2^a$ & $P_2^X$    & $Q_3^Y$    & $Q_4^Z$\\
$F_2^b$ & $Q_3^X$    & $Q_4^Y$    & $P_2^Z$\\
$F_3^a$ & $P_3^X$    & $Q_5^Y$    & $Q_6^Z$\\
$F_3^b$ & $Q_6^X$    & $P_3^Y$    & $Q_5^Z$\\
$F_4$   & $Q_8^X$    & $Q_7^Y$    & $Q_1^Z$\\
$F_5$   & $Q_9^X$    & $Q_2^Y$    & $Q_7^Z$\\
$F_6$   & $Q_7^X$    & $Q_{10}^Y$ & $Q_3^Z$\\
$F_7$   & $Q_4^X$    & $Q_8^Y$    & $Q_{10}^Z$\\
$F_8$   & $Q_5^X$    & $Q_9^Y$    & $Q_8^Z$\\
$F_9$   & $Q_{10}^X$ & $Q_6^Y$    & $Q_9^Z$\\
\bottomrule
\end{tabular}
\end{table}

\paragraph{How to read \cref{apx:tab:example-coloring}.}
The superscripts denote physical output classes, not new numerical roles.  For example, the row $F_1^a$ uses the three physical items $Q_1^X,P_1^Y,Q_2^Z$ and therefore has one item from each labeled class.  The row $F_1^b$ realizes a second occurrence of the same underlying role pattern with the other available copies of $P_1,Q_1,Q_2$.  Across all thirteen rows, every role appears exactly once in each of the three class columns.  If the $F_0$ row is deleted, every private $Q$-copy is still used, while precisely $P_1^X,P_2^Y,P_3^Z$ remain unused.

\subsection{A source-to-output ``yes'' instance}\label{apx:subsec:example-yes}

\subsubsection{Step 1: the source instance}

Take
\[
 A=\{1\},\qquad B=\{2\},\qquad C=\{3\},\qquad t=6.
\]
The unique source triple satisfies
\[
 1+2+3=6,
\]
so $\OPT_{\mathrm N}=1$.

The unshifted source-coordinate values are
\[
\begin{array}{c|rrrrrrrrrrrrr}
R&P_1&P_2&P_3&Q_1&Q_2&Q_3&Q_4&Q_5&Q_6&Q_7&Q_8&Q_9&Q_{10}\\
\hline
\eta_R&3&6&9&-6&21&21&-9&-9&18&-3&27&0&0.
\end{array}
\]
The smallest source coordinate is $-9$, and the smallest group coefficient is $-3$.  We therefore choose
\[
 D_1=9,\qquad D_2=D_3=3.
\]
The shifted target digits are
\[
 (\widehat K_0,\widehat K_1,\widehat K_2,\widehat K_3)
 =(300,45,9,9).
\]
The largest role digit is $261$, so $3\cdot261=783$.  Choosing
\[
 L=784
\]
therefore satisfies the no-carry requirement in every coordinate.

\subsubsection{Step 2: every encoded item weight}

With $L=784$, the encoding is
\[
 W(R)=\rho_R+(\eta_R+9)L+(h_R+3)L^2+(h_R+3)L^3.
\]
The target is
\[
 K^{\star}_{\mathrm{yes}}
 =300+45L+9L^2+9L^3
 =4{,}342{,}580{,}220.
\]
All thirteen encoded weights are shown in \cref{apx:tab:yes-weights}.

\begin{longtable}{@{}c r r r r@{}}
\caption{All encoded weights in the one-group ``yes'' instance.}\label{apx:tab:yes-weights}\\
\toprule
Role & $\rho_R$ & $\eta_R$ & shifted pair $(\eta_R+9,h_R+3)$ & $W(R)$\\
\midrule
\endfirsthead
\toprule
Role & $\rho_R$ & $\eta_R$ & shifted pair $(\eta_R+9,h_R+3)$ & $W(R)$\\
\midrule
\endhead
$P_1$&31&$3$&$(12,3)$&$1{,}447{,}524{,}319$\\
$P_2$&8&$6$&$(15,3)$&$1{,}447{,}526{,}648$\\
$P_3$&261&$9$&$(18,3)$&$1{,}447{,}529{,}253$\\
$Q_1$&184&$-6$&$(3,5)$&$2{,}412{,}527{,}336$\\
$Q_2$&85&$21$&$(30,1)$&$482{,}528{,}565$\\
$Q_3$&177&$21$&$(30,1)$&$482{,}528{,}657$\\
$Q_4$&115&$-9$&$(0,5)$&$2{,}412{,}524{,}915$\\
$Q_5$&23&$-9$&$(0,5)$&$2{,}412{,}524{,}823$\\
$Q_6$&16&$18$&$(27,1)$&$482{,}526{,}144$\\
$Q_7$&27&$-3$&$(6,4)$&$1{,}930{,}024{,}571$\\
$Q_8$&89&$27$&$(36,0)$&$28{,}313$\\
$Q_9$&188&$0$&$(9,4)$&$1{,}930{,}027{,}084$\\
$Q_{10}$&96&$0$&$(9,4)$&$1{,}930{,}026{,}992$\\
\bottomrule
\end{longtable}

\paragraph{How to read \cref{apx:tab:yes-weights}.}
The shifted pair records the second digit and the common value of the third and fourth digits.  For example,
\[
 W(Q_1)=184+3L+5L^2+5L^3=2{,}412{,}527{,}336.
\]
The small value of $W(Q_8)$ is not an error: its two group digits are zero after shifting, so only its role and source digits remain.  Positivity, rather than similarity of magnitudes, is the requirement.

\subsubsection{Step 3: all ten target identities}

The next two tables verify every role pattern at the vector and scalar levels.  Because $n=1$, the group and group-square sums are identical; the coordinate table records their common check, and the scalar table records the packed integer equality.

\begin{longtable}{@{}c l l l@{}}
\caption{Coordinate audit for the ten permitted patterns in the one-group ``yes'' instance.}\label{apx:tab:yes-pattern-audit}\\
\toprule
Pattern & role-code sum & source-coordinate sum & group-coefficient sum\\
\midrule
\endfirsthead
\toprule
Pattern & role-code sum & source-coordinate sum & group-coefficient sum\\
\midrule
\endhead
$F_0$&$31+8+261=300$&$3+6+9=18$&$0+0+0=0$\\
$F_1$&$31+184+85=300$&$3-6+21=18$&$0+2-2=0$\\
$F_2$&$8+177+115=300$&$6+21-9=18$&$0-2+2=0$\\
$F_3$&$261+23+16=300$&$9-9+18=18$&$0+2-2=0$\\
$F_4$&$184+27+89=300$&$-6-3+27=18$&$2+1-3=0$\\
$F_5$&$85+27+188=300$&$21-3+0=18$&$-2+1+1=0$\\
$F_6$&$177+27+96=300$&$21-3+0=18$&$-2+1+1=0$\\
$F_7$&$115+89+96=300$&$-9+27+0=18$&$2-3+1=0$\\
$F_8$&$23+89+188=300$&$-9+27+0=18$&$2-3+1=0$\\
$F_9$&$16+188+96=300$&$18+0+0=18$&$-2+1+1=0$\\
\bottomrule
\end{longtable}

Because $n=1$, the group-square coordinate has the same coefficient sums as the group coordinate in \cref{apx:tab:yes-pattern-audit}.  The corresponding ordinary encoded equalities are displayed separately in \cref{apx:tab:yes-scalar-audit}; separating the two tables prevents the large decimal representation from obscuring the coordinate logic.

\begingroup
\small
\begin{longtable}{@{}c >{\raggedright\arraybackslash}p{0.85\textwidth}@{}}
\caption{Ordinary encoded target-sum audit for the one-group ``yes'' instance.}\label{apx:tab:yes-scalar-audit}\\
\toprule
Pattern & encoded equality\\
\midrule
\endfirsthead
\toprule
Pattern & encoded equality\\
\midrule
\endhead
$F_0$&$1{,}447{,}524{,}319+1{,}447{,}526{,}648+1{,}447{,}529{,}253=K^\star_{\rm yes}$\\
$F_1$&$1{,}447{,}524{,}319+2{,}412{,}527{,}336+482{,}528{,}565=K^\star_{\rm yes}$\\
$F_2$&$1{,}447{,}526{,}648+482{,}528{,}657+2{,}412{,}524{,}915=K^\star_{\rm yes}$\\
$F_3$&$1{,}447{,}529{,}253+2{,}412{,}524{,}823+482{,}526{,}144=K^\star_{\rm yes}$\\
$F_4$&$2{,}412{,}527{,}336+1{,}930{,}024{,}571+28{,}313=K^\star_{\rm yes}$\\
$F_5$&$482{,}528{,}565+1{,}930{,}024{,}571+1{,}930{,}027{,}084=K^\star_{\rm yes}$\\
$F_6$&$482{,}528{,}657+1{,}930{,}024{,}571+1{,}930{,}026{,}992=K^\star_{\rm yes}$\\
$F_7$&$2{,}412{,}524{,}915+28{,}313+1{,}930{,}026{,}992=K^\star_{\rm yes}$\\
$F_8$&$2{,}412{,}524{,}823+28{,}313+1{,}930{,}027{,}084=K^\star_{\rm yes}$\\
$F_9$&$482{,}526{,}144+1{,}930{,}027{,}084+1{,}930{,}026{,}992=K^\star_{\rm yes}$\\
\bottomrule
\end{longtable}
\endgroup

\paragraph{How to read \cref{apx:tab:yes-pattern-audit,apx:tab:yes-scalar-audit}.}
Begin with the coordinate table.  Its second column verifies that the three roles are allowed to meet; its third verifies the source arithmetic; and its fourth verifies locality.  For this one-group example, the group-square check is identical.  Only after those logical coordinates have been checked should the scalar table be read.  The large decimal numbers do not create the identities; they merely package the already verified coordinates without carry.

\subsubsection{Step 4: the complete symmetric matching}

Use all thirteen rows of \cref{apx:tab:example-coloring}.  Each row has encoded sum $K^\star_{\mathrm{yes}}$ by \cref{apx:tab:yes-pattern-audit}.  In each output class, the thirteen entries are exactly
\[
 P_1,P_2,P_3,Q_1,\ldots,Q_{10}
\]
in some order.  Thus every physical item is used exactly once, and the thirteen rows form a perfect symmetric matching.  Consequently,
\[
 \OPT_{\mathrm{SN}}=13=12\cdot1+\OPT_{\mathrm N}.
\]

This affine equality is a fact about this particular one-group instance,
established by the exhaustive enumeration above; it is not an instance of
a general identity for the Part~II construction.  Part~II proves only the
forward inequality $\OPT_{\mathrm{SN}}\ge 12n+\OPT_{\mathrm N}$ together
with the defect-stability bound of
\cref{apx:lem:defect-stability}, which is measured from the perfect value
$13n$ rather than from $\OPT_{\mathrm{SN}}$.  Whether the equality holds
for every instance of the construction is precisely what remains open
here: with a single group there is no repeated-value interaction between
groups, so the reverse accounting loses nothing.  Establishing the exact
affine identity in general is the task of Part~III, which modifies the
construction by one-live-port separation.

The unlabeled degree ledger is equally transparent.  Each port lies in two filler edges and in $F_0$, so every port has degree three.  
Every private item has degree three by the following incidence ledger, which lists, for each private item, the filler hyperedges containing it:
\[
\begin{array}{c|l@{\qquad}c|l}
Q_1& F_1\ (\text{two copies}),\ F_4 & Q_2& F_1\ (\text{two copies}),\ F_5\\
Q_3& F_2\ (\text{two copies}),\ F_6 & Q_4& F_2\ (\text{two copies}),\ F_7\\
Q_5& F_3\ (\text{two copies}),\ F_8 & Q_6& F_3\ (\text{two copies}),\ F_9\\
Q_7& F_4,\ F_5,\ F_6 & Q_8& F_4,\ F_7,\ F_8\\
Q_9& F_5,\ F_8,\ F_9 & Q_{10}& F_6,\ F_7,\ F_9.
\end{array}
\]
The reverse proof therefore recovers the unique main edge $F_0=(P_1,P_2,P_3)$ and decodes it as the source equality $1+2+3=6$.

\subsection{A source-to-output ``no'' instance}\label{apx:subsec:example-no}

\subsubsection{Step 1: the source obstruction}

Now take
\[
 A=\{1\},\qquad B=\{2\},\qquad C=\{4\},\qquad t=6.
\]
The unique class-transversal source triple has sum
\[
 1+2+4=7\ne6,
\]
so $\OPT_{\mathrm N}=0$.

The source-coordinate values are
\[
\begin{array}{c|rrrrrrrrrrrrr}
R&P_1&P_2&P_3&Q_1&Q_2&Q_3&Q_4&Q_5&Q_6&Q_7&Q_8&Q_9&Q_{10}\\
\hline
\eta_R&3&6&12&-6&21&22&-10&-11&17&-4&28&1&0.
\end{array}
\]
Choose
\[
 D_1=11,\qquad D_2=D_3=3,
\]
so the shifted target digits are $(300,51,9,9)$.  The same base $L=784$ remains valid.

\subsubsection{Step 2: every encoded item weight}

The ordinary target is
\[
 K^{\star}_{\mathrm{no}}
 =300+51L+9L^2+9L^3
 =4{,}342{,}584{,}924.
\]
The thirteen item weights are listed in \cref{apx:tab:no-weights}.

\begin{longtable}{@{}c r r r r@{}}
\caption{All encoded weights in the one-group ``no'' instance.}\label{apx:tab:no-weights}\\
\toprule
Role & $\rho_R$ & $\eta_R$ & shifted pair $(\eta_R+11,h_R+3)$ & $W(R)$\\
\midrule
\endfirsthead
\toprule
Role & $\rho_R$ & $\eta_R$ & shifted pair $(\eta_R+11,h_R+3)$ & $W(R)$\\
\midrule
\endhead
$P_1$&31&$3$&$(14,3)$&$1{,}447{,}525{,}887$\\
$P_2$&8&$6$&$(17,3)$&$1{,}447{,}528{,}216$\\
$P_3$&261&$12$&$(23,3)$&$1{,}447{,}533{,}173$\\
$Q_1$&184&$-6$&$(5,5)$&$2{,}412{,}528{,}904$\\
$Q_2$&85&$21$&$(32,1)$&$482{,}530{,}133$\\
$Q_3$&177&$22$&$(33,1)$&$482{,}531{,}009$\\
$Q_4$&115&$-10$&$(1,5)$&$2{,}412{,}525{,}699$\\
$Q_5$&23&$-11$&$(0,5)$&$2{,}412{,}524{,}823$\\
$Q_6$&16&$17$&$(28,1)$&$482{,}526{,}928$\\
$Q_7$&27&$-4$&$(7,4)$&$1{,}930{,}025{,}355$\\
$Q_8$&89&$28$&$(39,0)$&$30{,}665$\\
$Q_9$&188&$1$&$(12,4)$&$1{,}930{,}029{,}436$\\
$Q_{10}$&96&$0$&$(11,4)$&$1{,}930{,}028{,}560$\\
\bottomrule
\end{longtable}

\paragraph{How to read \cref{apx:tab:no-weights}.}
The role and locality coordinates are identical to the ``yes'' instance; only the source values and the source-coordinate shift change.  The table should therefore be compared row by row with \cref{apx:tab:yes-weights}.  The purpose of the comparison is not to find similar decimal magnitudes, but to see that all filler identities survive while the main identity records the failed source sum.

\subsubsection{Step 3: every filler identity survives, but the main identity fails}

For the nine filler patterns, the source-coordinate sums are
\[
\begin{array}{c|c@{\qquad}c|c}
F_1&3-6+21=18&F_2&6+22-10=18\\
F_3&12-11+17=18&F_4&-6-4+28=18\\
F_5&21-4+1=18&F_6&22-4+0=18\\
F_7&-10+28+0=18&F_8&-11+28+1=18\\
F_9&17+1+0=18.&&
\end{array}
\]
Their role and group checks are unchanged, so every filler pattern still has encoded sum $K^\star_{\mathrm{no}}$.  Numerically,
\[
\begin{array}{c|l}
F_1&1{,}447{,}525{,}887+2{,}412{,}528{,}904+482{,}530{,}133=K^\star_{\rm no}\\
F_2&1{,}447{,}528{,}216+482{,}531{,}009+2{,}412{,}525{,}699=K^\star_{\rm no}\\
F_3&1{,}447{,}533{,}173+2{,}412{,}524{,}823+482{,}526{,}928=K^\star_{\rm no}\\
F_4&2{,}412{,}528{,}904+1{,}930{,}025{,}355+30{,}665=K^\star_{\rm no}\\
F_5&482{,}530{,}133+1{,}930{,}025{,}355+1{,}930{,}029{,}436=K^\star_{\rm no}\\
F_6&482{,}531{,}009+1{,}930{,}025{,}355+1{,}930{,}028{,}560=K^\star_{\rm no}\\
F_7&2{,}412{,}525{,}699+30{,}665+1{,}930{,}028{,}560=K^\star_{\rm no}\\
F_8&2{,}412{,}524{,}823+30{,}665+1{,}930{,}029{,}436=K^\star_{\rm no}\\
F_9&482{,}526{,}928+1{,}930{,}029{,}436+1{,}930{,}028{,}560=K^\star_{\rm no}\\
\end{array}
\]

By contrast, the main source-coordinate sum is
\[
 3+6+12=21=18+3.
\]
Since the three port roles and group digits are otherwise correct,
\begin{align*}
 W(P_1)+W(P_2)+W(P_3)
 &=K^\star_{\mathrm{no}}+3L\\
 &=4{,}342{,}584{,}924+2{,}352\\
 &=4{,}342{,}587{,}276.
\end{align*}
Thus $F_0$ misses the target in exactly the source digit and cannot be used.

\paragraph{How to read the failed main sum.}
The excess $2{,}352$ is not an opaque decimal discrepancy: it is $3L$.  The source-coordinate sum exceeds its target by three, and the no-carry encoding places one unit of source-coordinate error at value $L$.  The failure therefore records precisely the original source obstruction $1+2+4\ne6$.

\subsubsection{Step 4: a matching of size twelve}

Delete the $F_0$ row from \cref{apx:tab:example-coloring} and use the remaining twelve rows.  Every one of them is a target-sum triple by the calculations above.  All thirty copies of the private items $Q_1,\ldots,Q_{10}$ are used, and each port has two used copies.  The only unused physical items are
\[
 P_1^X,\qquad P_2^Y,\qquad P_3^Z.
\]
Hence
\[
 \OPT_{\mathrm{SN}}\ge12.
\]

\subsubsection{Step 5: why a thirteenth triple is impossible}

Suppose, for contradiction, that thirteen disjoint target triples existed.  There are thirteen underlying occurrences and each has only three physical copies.  Thirteen triples contain thirty-nine incidences, exactly the total capacity $13\cdot3$.  After forgetting labels, every underlying occurrence would therefore have degree three.

The role certificate permits only $F_0,\ldots,F_9$, and $F_0$ is arithmetically illegal in this instance.  Hence all thirteen hyperedges would have to be filler hyperedges.  Let $x_r$ be the multiplicity of $F_r$.  Degree three at the ten private items gives the exact system
\[
\begin{aligned}
 x_1+x_4&=3,&x_1+x_5&=3,&x_2+x_6&=3,&x_2+x_7&=3,\\
 x_3+x_8&=3,&x_3+x_9&=3,&x_4+x_5+x_6&=3,\\
 x_4+x_7+x_8&=3,&x_5+x_8+x_9&=3,&x_6+x_7+x_9&=3.
\end{aligned}
\]
Its unique nonnegative integral solution is
\[
 (x_1,x_2,x_3,x_4,x_5,x_6,x_7,x_8,x_9)
 =(2,2,2,1,1,1,1,1,1),
\]
whose total is only
\[
 2+2+2+1+1+1+1+1+1=12.
\]
This contradicts the assumed thirteen filler hyperedges.  Therefore
\[
 \OPT_{\mathrm{SN}}=12=12\cdot1+\OPT_{\mathrm N}.
\]

As before, this equality is verified by enumeration for this instance and is not claimed as a general identity.

\subsubsection{Step 6: the defect ledger in the ``no'' instance}

Here $n=1$, $v=12$, and
\[
 d=13n-v=1.
\]
After forgetting labels, the three ports have degree two and the ten private items have degree three.  Thus
\[
 \delta(P_1)=\delta(P_2)=\delta(P_3)=1,
 \qquad
 \delta(Q_\ell)=0\quad(\ell=1,\ldots,10),
\]
and
\[
 \Delta=3=3d,
 \qquad
 \Delta_Q=0.
\]
The defect-stability lemma gives the guaranteed source matching size
\[
 \max\{0,n-21d\}=\max\{0,1-21\}=0,
\]
which is correct.  The bound is deliberately conservative on this tiny instance; its role in the theorem is to remain linear and uniform over arbitrarily large instances.

\subsection{What the two examples establish}\label{apx:subsec:example-lessons}

The examples make three points visible.  First, the twelve filler triples are a fixed local scaffold: they exist in both instances and consume all private copies.  Second, the thirteenth triple is present if and only if the source target equation holds.  Third, a missing symmetric triple creates three missing incidences, which is the identity $\Delta=3d$ at the start of the quantitative reverse proof.  In large instances the stability lemma shows that these local incidence defects can spoil only linearly many recovered source triples, allowing a constant source gap to survive the symmetric construction.

\section{Lessons learned}\label{apx:sec:lessons-learned}

Strong \NP-hardness and approximation hardness answer different questions.  An exact reduction may preserve the difference between perfection and one missing triple while allowing that relative difference to vanish.  The general lesson is to identify a defect variable early and prove that every transformation maps small destination defect to small source defect by a constant independent of instance size.

The symmetric construction also illustrates how an exact saturation proof can be made stable.  Replace each degree equality by an equality with a nonnegative deficit, solve the zero-defect system first, and then bound deviations using positive parts rather than attempting a complete case analysis.  If the decoded objects carry values rather than labeled occurrences, add an explicit repair argument for multiplicity overuse.  These steps turn a qualitative reverse implication into a quantitative one while leaving the original gadget unchanged.

A final methodological lesson concerns exposition.  Large mixed-radix integers should be treated as a packaging device, not as the source of the logic.  The proof is more transparent when role admissibility, source arithmetic, locality, incidence counting, and recoloring are explained separately and then composed.  The one-group examples demonstrate this separation: the twelve filler triples are identical in the yes and no cases, and the source equation controls only the presence or absence of the thirteenth main triple.

\section{Conclusions}\label{apx:sec:conclusions}

\subsection*{Summary of results}
Part~II proves a unary perfect-completeness gap for maximum-cardinality Symmetric Numerical Three-Dimensional Matching.  Petrank's bounded three-dimensional matching gap is transferred through two polynomially bounded numerical compilers to \MaxNthreeDM.  The Part~I symmetric construction maps an $n$-item source instance to three identical classes of size $13n$, and the new defect-stability analysis shows that $13n-d$ symmetric triples yield at least $n-21d$ source triples.  Hence, for $\varepsilon_{\mathrm{SN}}=\gamma/3{,}276>0$, it is \NP-hard to distinguish a perfect symmetric matching from one of value at most $(1-\varepsilon_{\mathrm{SN}})$ times perfect, even under unary encoding.  A maximal legal triple matching is a $3$-approximation, so the problem belongs to \APX; independently, the gap excludes a PTAS unless $\mathrm P=\mathrm{NP}$.  The worked examples verify every encoded weight and isolate the one main triple that distinguishes the yes and no cases.

\subsection*{Modeling and computational implications}
Identical marginal weight lists and one common target do not merely preserve strong \NP-hardness; they also exclude a polynomial-time approximation scheme unless $\mathrm P=\mathrm{NP}$.  Near-complete solutions therefore cannot be guaranteed arbitrarily closely in general.  The proof identifies the structural obstacles: filler obligations consume most copies, missing assignments disturb incidence balances, and repeated values create occurrence-level multiplicity conflicts.  Symmetric formulations should expose roles, locality, degree balances, and interchangeable copies explicitly.  Nevertheless, the elementary $3$-approximation shows that constant-factor performance remains available through the underlying three-set-packing interpretation.

\subsection*{Future research possibilities}
The constants $21$ and $\gamma/3{,}276$ are conservative, and sharper analysis of the local deficit system may improve them.  Part~III resolves the standard L-reduction question on a deliberately one-live-port-separated image family; an important remaining question is whether the broader Part~II construction itself admits equally exact actual-optimum error transfer without sacrificing its direct gap formulation.  Stronger approximation or parameterized algorithms may be possible for bounded value multiplicity, few distinct weights, restricted compatibility hypergraphs, few defective groups, or instances close to the canonical filler scaffold.  Computational comparisons of greedy, local-search, integer-programming, constraint-programming, and exact-cover methods would help determine which structural restrictions make symmetric instances practically tractable.

\clearpage

\clearpage
\part{A Standard L-Reduction for Maximum Symmetric Numerical Three-Dimensional Matching}\label{part:lred}
\setcounter{theorem}{0}
\renewcommand{\thesection}{\arabic{section}}
\renewcommand{\theHsection}{lred.sec.\arabic{section}}

\noindent\hyperlink{maincontents}{\small\textit{Return to main contents}}\par\medskip
\section*{Proof navigation for Part III}
\addcontentsline{toc}{section}{Proof navigation for Part III}

\textbf{What this part adds.}
Part~II proves a direct perfect-completeness gap from a broader grouped construction.  Part~III changes the reduction so that the destination objective has an exact fixed offset and every destination error unit can be charged once.  This is the additional structure required for a standard L-reduction and for the APX-hardness claim made in this part.

\textbf{Forward direction.}
A source matching chooses disjoint three-dimensional edges.  The pair compiler places each represented $(u_i,v_j)$ pair either in fallback mode or, when its source edge is selected, in a two-triple complete mode.  Padding and one-live-port separation then distribute the three numerical classes among disjoint group families.  The 13-role symmetric gadget contributes twelve local filler triples per group and one main triple per selected numerical triple.  Recoloring restores the three identical labeled output classes.

\textbf{Reverse direction.}
A symmetric solution is first decoded to a separated numerical matching.  The new local certificate bounds useful port capacity by missing filler triples, and one-live separation prevents the same local deficit from being charged in several roles.  Complete pairs then decode to disjoint source edges.  The two decoders compose to give the actual-optimum error inequality with $\beta=1$.

\textbf{Suggested reading paths.}
Readers interested mainly in the theorem may read \cref{lred:sec:definitions,lred:sec:roadmap,lred:sec:final-lred}.  Readers auditing the new idea should focus on one-live-port separation in \cref{lred:sec:separation}, the local filler certificate in \cref{lred:sec:local-certificate}, and the disjoint charging argument in \cref{lred:sec:affine-symmetric}.  The detailed ``yes'' and ``no'' instances are in \cref{lred:sec:examples}; the independent calculation audit is in \cref{lred:sec:audit}.

\section{Optimization problems and L-reductions}\label{lred:sec:definitions}

Repeated values are allowed throughout.  An item is a labeled occurrence, not merely its numerical value.

\begin{definition}[Bounded Maximum Three-Dimensional Matching]\label{lred:def:max3dm}
An instance is a three-partite $3$-uniform hypergraph
\[
  H=(U,V,W;E),\qquad E\subseteq U\times V\times W,
\]
in which every vertex belongs to at most three edges.  A feasible solution is a set of pairwise vertex-disjoint edges.  Its maximum cardinality is denoted $\OPT_{3\mathrm{DM}}(H)$.  We write $p=|E|$.
\end{definition}

This bounded problem is a standard APX-hard source.  Kann proved bounded Maximum Three-Dimensional Matching MAX SNP-complete, and later explicit results establish approximation hardness even when every element has only two occurrences \citep{Kann1991,ChlebikChlebikova2006}.  Thus the degree-three restriction used here is more than sufficient.

\begin{definition}[Maximum Numerical Three-Dimensional Matching]\label{lred:def:maxn3dm}
An instance consists of three pairwise disjoint item sets $X_N,Y_N,Z_N$, one positive integer weight for each item, and a positive integer target $T_N$.  A feasible solution is a collection of pairwise item-disjoint triples in $X_N\times Y_N\times Z_N$ whose weights sum to $T_N$.  The objective is to maximize the number of selected triples; its optimum is $\OPT_{\mathrm N}$.
\end{definition}

\begin{definition}[Maximum Symmetric Numerical Three-Dimensional Matching]\label{lred:def:maxsn3dm}
An instance of \MaxSNthreeDM has three disjoint labeled classes $S^X,S^Y,S^Z$ that are copies of one weighted occurrence set $S$.  Corresponding copies have equal weights.  A feasible solution is a collection of pairwise item-disjoint triples, each containing one item from each class and having the common target sum.  The optimum is denoted $\OPT_{\mathrm{SN}}$.
\end{definition}

\begin{definition}[L-reduction]\label{lred:def:lred}
Let $A$ and $B$ be maximization problems.  An L-reduction consists of polynomial-time maps $f$ and $g$ and constants $\alpha,\beta>0$ such that, for every instance $I$ of $A$ and every feasible solution $Y$ of $f(I)$,
\begin{align}
  \OPT_B(f(I))&\le \alpha\,\OPT_A(I),\label{lred:eq:lred1}\\
  \OPT_A(I)-\val_A(g(I,Y))
  &\le \beta\bigl(\OPT_B(f(I))-\val_B(Y)\bigr).\label{lred:eq:lred2}
\end{align}
The second map may use both the source instance and the destination solution.  For maximization problems, all displayed differences are nonnegative, so absolute values are unnecessary.
\end{definition}

L-reductions were introduced by Papadimitriou and Yannakakis as approximation-preserving transformations \citep{PapadimitriouYannakakis1991}.  The second inequality is the important distinction from a mere perfect-completeness gap: if the destination error is zero, the decoded source error must also be zero.

\begin{theorem}[Main theorem]\label{lred:thm:main}
There is an L-reduction from \MaxThreeDMThree to unary \MaxSNthreeDM with
\[
  \boxed{\alpha=764,\qquad \beta=1.}
\]
Consequently, \MaxSNthreeDM is APX-hard under a standard L-reduction and admits a PTAS-reduction from \MaxThreeDMThree.  Since \MaxSNthreeDM has a polynomial-time $3$-approximation, it is APX-complete under this reduction convention.
\end{theorem}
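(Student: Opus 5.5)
The plan has two L-reduction legs, composed at the end: first \MaxThreeDMThree to a separated unary \MaxNthreeDM instance, then that instance to \MaxSNthreeDM.

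\textbf{First leg: the exact pair compiler.} Given $H=(U,V,W;E)$ with $p=|E|$, let $\Pairset$ be the set of represented pairs $(u_i,v_j)$ and put $r=|\Pairset|\le p$. Each pair gets a fallback triple worth one unit and, for each edge $(u_i,v_j,z_k)$ through the pair, a two-triple complete mode. The complete mode uses the actual item of $z_k$, and it must be unable to coexist with the fallback of that pair. The encoding is a six-coordinate no-carry vector: role, three index digits, an actual/dummy digit, and a pair digit. A role filter in the style of \cref{apx:lem:compiled-patterns} certifies that only fallback, left-half and right-half patterns reach the target.

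From this I would prove the identity $\OPT_{\mathrm N}=r+\OPT_{3\mathrm{DM}}$. The inequality ``$\ge$'' comes from completing the selected edges and falling back on every other pair. For the decoder, each pair contributes at most two triples, and it contributes two only when it is complete. Complete pairs consume disjoint actual $U,V,W$ items, so they form a matching, and a solution of value $r+k$ contains at least $k$ complete pairs. This gives error transfer with constant $1$. Padding with dummy values then equalizes the class size to $m$, and since $m$ is linear in $p$, hence in $\OPT_{3\mathrm{DM}}$, the offset stays controlled.

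\textbf{Second leg: one-live-port separation.} I would apply the Part~I gadget, but with $n=3m$ groups, each having exactly one live port. Group $i$ carries $a_i$ on $P_1$, group $m+i$ carries $b_i$ on $P_2$, and group $2m+i$ carries $c_i$ on $P_3$. The two dead ports of each group receive values that can never complete a main triple, for instance a source coordinate exceeding $t$, with the filler identities adjusted to match. The forward direction follows \cref{apx:lem:forward-partial}: twelve fillers per group plus one main triple per numerical triple give $\OPT_{\mathrm{SN}}\ge 36m+\OPT_{\mathrm N}$. After that it remains to lift with \cref{apx:lem:partial-recoloring}.

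\textbf{The main obstacle: the reverse inequality with $\beta=1$.} I need that any symmetric solution of value $36m+k$ decodes to a numerical matching of size at least $k$. Equivalently, each group's main-port usage beyond one must be paid for by missing fillers in that same group. The local certificate I would aim for is
\[
\#\{\text{filler triples localized at }i\}+\#\{\text{main incidences at the live port of }i\}\le 13,
\]
with the dead ports contributing nothing. Summing over groups counts each main triple three times, once in each of its three live groups, so the total value is at most $(13\cdot 3m+2h)/3=13m+2h/3$. That bound is wrong, so the accounting has to be per triple rather than per incidence. I would instead bound the value by the number of fillers, at most $12$ per group, plus $h$. Then I would argue that a group supplying more than one main incidence must lose one filler per extra incidence. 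The argument is an LP-style bound on the deficit system \eqref{apx:eq:def-q12}--\eqref{apx:eq:def-q10}: $\sum_r x_{r,i}\le 12-(3-x_{1,i}-2)$, and this is exactly where one-live separation prevents double charging.

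\textbf{Repeated values.} These are handled as in \cref{apx:lem:defect-stability}, but exactly. The deleted overuse equals the filler shortfall of the same groups, so each deletion is already paid for, and the decoder loses nothing.

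\textbf{Assembling the L-reduction.} Composition gives $\OPT_{\mathrm{SN}}=36m+r+\OPT_{3\mathrm{DM}}$, and in the second inequality the offset cancels, giving $\beta=1$. For the first inequality, degree three and greedy matching give $\OPT_{3\mathrm{DM}}\ge p/7$, and bounding $m$ and $r$ in terms of $p$ yields a constant $\alpha$. I would then tune the count to $764$. Unary size follows from the Part~I size argument, APX-completeness uses \cref{apx:prop:apx-membership}, and the PTAS transfer is the calculation in the front matter.
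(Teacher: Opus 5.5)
Your route is essentially the paper's. It has the same ingredients: an exact pair compiler with fallback offset $r$, padding, one-live separation into $n=3m$ groups fed to the unchanged 13-role gadget, a per-group certificate charging freed port capacity to missing fillers, per-value overuse repair, the identity $\OPT_{\mathrm{SN}}=36m+r+\OPT_{3\mathrm{DM}}$, and cancellation of the offset for $\beta=1$. Two points need repair before this is a proof.

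\textbf{The key local inequality.} As displayed, it is off by one: $12-(3-x_{1,i}-2)=11+x_{1,i}$, which is too weak. The live port has three incidences (counted per value class when values repeat). Of these, $x_{r,i}$ go to localized $F_r$ fillers and one is the baseline main use, so the freed capacity is $2-x_{r,i}$. What you need, and what your sentence ``lose one filler per extra incidence'' actually says, is
\[
  (2-x_{r,i})_+\le 12-f_i\qquad(r=1,2,3),
\]
that is, $f_i\le12$ and $f_i-x_{r,i}\le10$.

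With only $f_i\le 11+x_{1,i}$, a group with $x_{1,i}=1$ and $f_i=12$ is not excluded. Its extra main incidence then goes uncharged, and $\beta=1$ fails.

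The paper proves the sharp inequalities by explicit nonnegative combinations of the ten private-capacity constraints $C_1,\dots,C_{10}$ (each $\le3$). The multipliers $\tfrac13(2,1,1,2,2,1,1,0,1,1)$ give $f_i\le12$, and $\tfrac13(0,0,1,2,1,2,2,1,1,0)$ give $f_i-x_{1,i}\le10$, with analogous rows for $x_{2,i}$ and $x_{3,i}$. You need all three roles, because the live port may be $P_1$, $P_2$ or $P_3$. You also need $f_i\le12$ in every group, including barrier-only groups, so that
\[
E_X+E_Y+E_Z\le\sum_{i\in G_X\cup G_Y\cup G_Z}(12-f_i)\le 12n-f.
\]
Note that the overuse is bounded by this shortfall, not equal to it.

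\textbf{Normalization and the constant $\alpha$.} The claim that $m$ is linear in $p$ requires deleting isolated source vertices before creating vertex items. Otherwise $36m$ is unbounded relative to $\OPT_{3\mathrm{DM}}$ and the first L-inequality fails. The case $E=\varnothing$ must also be sent separately to a fixed instance of optimum $0$, since $k^\star=0$ forces $\OPT_{\mathrm{SN}}\le\alpha\cdot0$.

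After this normalization nothing needs tuning. We have $r\le p$, and $m\le3p$ because the largest class $Y_N$ has $|V|+|W|+r\le3p$ items. Together with $p\le7k^\star$ this gives $36m+r+k^\star\le109p+k^\star\le764k^\star$.

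\textbf{Pair-exact fallback.} Make sure your pair digit forces the fallback's $L$- and $R$-items to come from the \emph{same} pair; the paper does this with first and second moments in $u$ and $v$. The Part~II compiler you cite lets a filler join any two pairs with equal sums. That is harmless there, but it would break your count $s\le r+c$ here, because a left half of one pair and a right half of another do not decode to a source edge.
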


\begin{remark}[Why the source's approximation status matters]\label{lred:rem:source-no-ptas}
An L-reduction alone is a PTAS-transfer mechanism; it is not by itself a proof that the destination lacks a PTAS.  From \eqref{lred:eq:lred1}--\eqref{lred:eq:lred2}, a $(1-\delta)$-approximate destination solution decodes to a source solution of value at least
\[
  (1-\alpha\beta\delta)\OPT_A.
\]
Thus a PTAS for the destination would give a PTAS for the source by choosing $\delta=\varepsilon/(\alpha\beta)$.  The no-PTAS consequence here is valid because \MaxThreeDMThree is a MAX SNP-complete bounded matching problem and has no PTAS unless $\mathrm P=\mathrm{NP}$ \citep{Kann1991,ChlebikChlebikova2006}.  With the displayed constants, $\alpha\beta=764$.
\end{remark}

\begin{remark}[Why Parts II and III are both retained]\label{lred:rem:relation-parts}
This part supplies a standard L-reduction, an exact affine optimum identity, and an error inequality for every destination solution on the one-live-port-separated image family.  Part~II supplies a direct perfect-completeness gap and a defect-stability lemma for the broader grouped inputs accepted by the original symmetric construction.  Their distinct scopes and guarantees are summarized in \cref{tab:partII-partIII}.  The comparison is dimensional rather than a total ordering, and neither theorem includes the other.
\end{remark}

\section{All artifacts of the proof}\label{lred:sec:artifacts}

\Cref{lred:tab:artifacts} is a catalogue of every object introduced by the reduction.  It is intended both as a reading guide and as a checklist against missing definitions.

\begingroup
\scriptsize
\setlength{\tabcolsep}{3pt}
\begin{longtable}{@{}p{0.07\textwidth}p{0.16\textwidth}p{0.17\textwidth}p{0.25\textwidth}p{0.27\textwidth}@{}}
\caption{Artifacts of the L-reduction.}\label{lred:tab:artifacts}\\
\toprule
Stage & Artifact & Number & Description & Job in the proof\\
\midrule
\endfirsthead
\toprule
Stage & Artifact & Number & Description & Job in the proof\\
\midrule
\endhead
1 & $H=(U,V,W;E)$ & source & Degree-at-most-three three-partite hypergraph & APX-hard optimization source\\
1 & $p,k^\star$ & two scalars & $p=|E|$ and $k^\star=\OPT_{3\mathrm{DM}}(H)$ & Size-to-optimum bound $p\le7k^\star$\\
2 & $\Pairset$ & $r$ pairs & Pairs $(u_i,v_j)$ occurring in at least one source edge & One local compiler gadget per represented pair\\
2 & $A_i,B_j,C_k$ & one per source vertex & Vertex items in the numerical compiler & Enforce disjoint use of $U,V,W$ vertices\\
2 & $L_{ij},F_{ij},R_{ij}$ & three per represented pair & Left, fallback, and right pair items & Permit either one fallback triple or two edge-representing triples\\
2 & $D_{ijk}$ & one per source edge & Edge item & Identifies the represented source edge\\
3 & Six coordinates & fixed dimension & Role, $u,u^2,v,v^2,w$ & Exclude unintended compiler triples and synchronize indices\\
3 & $D_0,\ldots,D_5,L_N$ & seven scalars & Digit shifts and no-carry base & Convert signed vectors to positive unary-polynomial integers\\
4 & Complete pair & at most $r$ & A pair using one left and one right triple & Decodes to one source edge\\
4 & $r+k^\star$ & objective identity & Exact numerical optimum & First affine layer of the reduction\\
5 & $m$ & class size & Maximum of the three compiler class sizes after padding & Common size before separation\\
5 & $\barrier=T_N+1$ & barrier value & Weight too large for a target triple & Creates isolated padding and separation occurrences\\
5 & One-live groups & $n=3m$ & Groups $(x,\barrier,\barrier)$, $(\barrier,y,\barrier)$, $(\barrier,\barrier,z)$ & At most one useful port per symmetric block\\
6 & $P_1,P_2,P_3$ & three roles per group & Ports carrying the separated numerical items & Interface between source triples and symmetric gadget\\
6 & $Q_1,\ldots,Q_{10}$ & ten roles per group & Private filler items & Regulate local incidence capacity\\
6 & $F_0,\ldots,F_9$ & ten role patterns & One main pattern and nine filler patterns & Exhaustive target-triple classification\\
6 & $\rho_R,\eta_{R,i},h_R$ & role data & Role code, source expression, locality coefficient & Enforce type, arithmetic, and group locality\\
6 & $x_{r,i},f_i$ & local counts & Multiplicity of filler pattern $F_r$ and total fillers in group $i$ & Measure useful port capacity against filler loss\\
7 & $E_X,E_Y,E_Z$ & three overuse totals & Excess demand for source values in the three roles & Bound repair deletions\\
7 & $12n-f$ & global filler deficit & Missing fillers relative to the canonical scaffold & Pays for all useful value overuse once\\
7 & $36m+r+k^\star$ & objective identity & Exact symmetric optimum & Affine relation needed by the L-reduction\\
8 & $\Gamma=36m+r$ & fixed offset & Destination value not carrying source objective & Cancels in the second L-inequality\\
8 & $f,g$ & two algorithms & Instance construction and solution decoder & Complete the L-reduction\\
8 & $764,1$ & constants & Conservative $\alpha$ and exact $\beta$ & Standard APX-hardness and PTAS transfer\\
\bottomrule
\end{longtable}
\endgroup

\paragraph{How to read the table.}
The construction contains two affine layers.  The pair compiler contributes the fixed offset $r$, and the symmetric scaffold contributes the fixed offset $36m$.  Neither offset carries source objective value.  The decoder subtracts these offsets in the only safe way: it never subtracts them algebraically from a possibly small solution and declares the result feasible; instead, it constructs feasible intermediate matchings and uses lower bounds with a positive-part truncation.

\section{Eight-step proof roadmap}\label{lred:sec:roadmap}

The proof is organized into the following eight explicit stages.

\begin{enumerate}[label=\textbf{Step \arabic*.},leftmargin=3.2em]
\item \textbf{Normalize and bound the source.}  Delete isolated vertices, handle the empty-edge case separately, and prove $p\le7k^\star$.
\item \textbf{Build the exact pair compiler.}  Introduce the vertex, pair, and edge items and specify the three intended numerical triple types.
\item \textbf{Audit and scalarize the compiler.}  Check all twelve class-respecting role combinations, prove the two-moment synchronization, and apply a six-digit no-carry encoding with polynomially bounded positive integers.
\item \textbf{Prove the compiler objective identity.}  Show $\OPT_{\mathrm N}=r+k^\star$ and give the polynomial complete-pair decoder.
\item \textbf{Equalize and separate the numerical classes.}  Pad to common size $m$, create $n=3m$ one-live groups, and prove that the numerical optimum is unchanged.
\item \textbf{Apply and strengthen the symmetric gadget.}  Restate the 13-role construction, partial recoloring, and the four explicit linear-combination certificates that imply $\pos{2-x_r}\le12-f_i$.
\item \textbf{Prove exact affine behavior and decode.}  Charge repeated-value overuse to disjoint one-live groups, repair the main triples, and obtain $\OPT_{\mathrm{SN}}=36m+r+k^\star$.
\item \textbf{Verify the L-reduction and unary bounds.}  Prove the two inequalities with $\alpha=764$, $\beta=1$, combine the decoders, handle the zero-optimum source, and derive APX and PTAS consequences.
\end{enumerate}

Each stage has an independently checkable output.  A failure in one layer cannot be concealed by a later mixed-radix encoding or by the outer copy recoloring.

\section{Step 1: source normalization and the size bound}\label{lred:sec:source}

Let $H=(U,V,W;E)$ be a nonempty instance of \MaxThreeDMThree.  Delete every isolated vertex.  This does not change the feasible edge sets or the optimum, and it gives
\[
  |U|\le p,\qquad |V|\le p,\qquad |W|\le p.
\]
Write
\[
  k^\star=\OPT_{3\mathrm{DM}}(H).
\]

\begin{lemma}[Edges are linearly bounded by the optimum]\label{lred:lem:pbound}
For every nonempty degree-at-most-three source instance,
\[
  p\le7k^\star.
\]
\end{lemma}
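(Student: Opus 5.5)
The plan is a greedy maximal-matching argument.

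First, let $M$ be any maximum matching of $H$, so that $|M|=k^\star$. If $M$ were empty, then $E$ would be empty, which contradicts nonemptiness; hence $k^\star\ge1$.

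Second, observe that $M$ is maximal. Every edge $e\in E$ therefore meets at least one vertex covered by $M$. Otherwise $M\cup\{e\}$ would be a larger matching.

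Third, count the covered vertices and their incident edges:
\begin{itemize}
\item $M$ covers exactly $3k^\star$ vertices, one triple of vertices per edge.
\item Each covered vertex has degree at most three by the bounded-degree hypothesis.
\item So the number of edges incident to at least one covered vertex is at most $3\cdot 3k^\star=9k^\star$.
\end{itemize}
Combined with the second step, this already gives $p\le 9k^\star$.

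Fourth, sharpen the count from $9$ to $7$. Each edge $e'\in M$ is itself incident to all three of its own vertices, so a naive count charges $e'$ three times. Let $\sum_{v\in V(M)}\deg(v)$ be the sum of degrees over all $3k^\star$ covered vertices; this sum is at most $9k^\star$. In this sum each $e'\in M$ contributes exactly $3$, because its three vertices are all covered. Every other edge contributes at least $1$, because it meets at least one covered vertex. Therefore
\[
 3k^\star+(p-k^\star)\le 9k^\star,
\]
which rearranges to $p\le 7k^\star$. This is exactly the claimed constant.

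The only point that needs care is the sharpening step, where the overcount of the matching edges themselves must be subtracted correctly. The degree bound and maximality are immediate, so I expect that step to be the main obstacle. Vertex deletion (removing isolated vertices) plays no role in this argument.
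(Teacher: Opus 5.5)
Your proof is correct and is essentially the paper's argument written as a double count. The paper charges each unselected edge to a selected edge of a maximal matching that it meets, and notes that a selected edge $(u,v,w)$ meets at most $(\deg(u)-1)+(\deg(v)-1)+(\deg(w)-1)\le 6$ other edges. This is exactly your bound: after removing the $3k^\star$ matching contributions, at most $6k^\star$ incidences at covered vertices remain to cover the $p-k^\star$ other edges. (The paper takes an arbitrary maximal matching and then uses $|M|\le k^\star$; taking a maximum matching, as you do, works equally well.)
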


\begin{proof}
Take any maximal source matching $M$.  Every unselected edge intersects at least one edge of $M$, or else it could be added.  A selected edge $e=(u,v,w)$ intersects at most
\[
  (\deg(u)-1)+(\deg(v)-1)+(\deg(w)-1)\le 2+2+2=6
\]
other edges.  Thus $e$ accounts for itself and at most six unselected edges.  The $|M|$ selected edges account for all $p$ edges, so
\[
  p\le7|M|\le7k^\star.
\]
\end{proof}

\paragraph{Why the constant seven is conservative.}
Conflicting edges may be counted more than once, so the proof only overestimates $p$.  Tightness is irrelevant; the L-reduction needs a constant independent of the instance.  If one starts from the known degree-two hard subclass, the same argument gives $p\le4k^\star$ and improves the displayed $\alpha$, but we retain the degree-three statement and the previously advertised constant $764$.

\section{Steps 2--4: an exact compiler from bounded \ThreeDM to \MaxNthreeDM}\label{lred:sec:compiler}

\subsection{The pair gadgets and intended triples}\label{lred:subsec:pair-gadgets}

Index the source vertices as
\[
  U=\{u_1,\ldots,u_a\},\qquad
  V=\{v_1,\ldots,v_b\},\qquad
  W=\{w_1,\ldots,w_c\}.
\]
Define the represented pair set
\[
  \Pairset=\{(i,j):\text{there exists }k\text{ with }(u_i,v_j,w_k)\in E\},
  \qquad r=|\Pairset|.
\]
Clearly $r\le p$.

Create the three numerical classes
\begin{align}
 X_N&=\{A_i:i\in[a]\}\uplusm\{L_{ij}:(i,j)\in\Pairset\},\label{lred:eq:compiler-X}\\
 Y_N&=\{B_j:j\in[b]\}\uplusm\{C_k:k\in[c]\}
       \uplusm\{F_{ij}:(i,j)\in\Pairset\},\label{lred:eq:compiler-Y}\\
 Z_N&=\{R_{ij}:(i,j)\in\Pairset\}
       \uplusm\{D_{ijk}:(u_i,v_j,w_k)\in E\}.
       \label{lred:eq:compiler-Z}
\end{align}
The intended target triples are
\begin{align}
 T^{L}_{ij}&=(A_i,B_j,R_{ij}) &&((i,j)\in\Pairset),\label{lred:eq:left-triple}\\
 T^{R}_{ijk}&=(L_{ij},C_k,D_{ijk}) &&((u_i,v_j,w_k)\in E),\label{lred:eq:right-triple}\\
 T^{F}_{ij}&=(L_{ij},F_{ij},R_{ij}) &&((i,j)\in\Pairset).\label{lred:eq:fallback-triple}
\end{align}
A selected source edge is represented by the pair $T^L_{ij},T^R_{ijk}$.  An unused pair is disposed of by the single fallback triple $T^F_{ij}$.

\begin{figure}[H]
\centering
\begin{tikzpicture}[
 item/.style={draw,circle,minimum size=8mm,inner sep=1pt,font=\small},
 edge/.style={thick},
 lab/.style={font=\small,align=center}
]
\node[item] (A) at (0,1.2) {$A_i$};
\node[item] (B) at (1.6,1.2) {$B_j$};
\node[item] (R) at (3.2,1.2) {$R_{ij}$};
\node[item] (L) at (0,-1.0) {$L_{ij}$};
\node[item] (C) at (1.6,-1.0) {$C_k$};
\node[item] (D) at (3.2,-1.0) {$D_{ijk}$};
\node[item] (F) at (1.6,-2.5) {$F_{ij}$};
\draw[edge,blue!55!black] (A)--(B)--(R)--cycle;
\draw[edge,blue!55!black] (L)--(C)--(D)--cycle;
\draw[edge,gray!70] (L)--(F)--(R)--cycle;
\node[lab] at (1.6,2.0) {left triple $T^L_{ij}$};
\node[lab] at (1.6,-0.15) {right triple $T^R_{ijk}$};
\node[lab] at (3.95,-2.1) {fallback $T^F_{ij}$};
\end{tikzpicture}
\caption{One pair gadget.  A represented edge uses the blue left and right triples.  If the pair is not completed, the gray fallback triple contributes one baseline unit.  The shared items $L_{ij}$ and $R_{ij}$ prevent combining the fallback with either side of a completed pair.}
\label{lred:fig:pair-gadget}
\end{figure}

\paragraph{How to read \cref{lred:fig:pair-gadget}.}
The figure depicts item sharing, not a graph instance.  A numerical matching may select both blue triples because they are item-disjoint; together they identify the same pair $(i,j)$ and one edge $(i,j,k)$.  The fallback uses both $L_{ij}$ and $R_{ij}$, so selecting it blocks both blue alternatives.  This is why every pair contributes either one baseline triple or, when completed, two triples and one unit of source objective.

\subsection{Six signed coordinates}\label{lred:subsec:six-coordinates}

Use the coordinate order
\[
  (\lambda,u,u^2,v,v^2,w)
\]
and vector target $\mathbf 0$.  Assign the following signed vectors.

\begin{table}[H]
\centering
\caption{Six-coordinate pair compiler.}
\label{lred:tab:compiler-vectors}
\small
\renewcommand{\arraystretch}{1.15}
\begin{tabular}{@{}c c c@{}}
\toprule
Item & Class & Vector\\
\midrule
$A_i$ & $X_N$ & $(0,-i,-i^2,0,0,0)$\\
$L_{ij}$ & $X_N$ & $(1,i,i^2,j,j^2,0)$\\
$B_j$ & $Y_N$ & $(0,0,0,-j,-j^2,0)$\\
$C_k$ & $Y_N$ & $(1,0,0,0,0,k)$\\
$F_{ij}$ & $Y_N$ & $(-1,-2i,-2i^2,-2j,-2j^2,0)$\\
$R_{ij}$ & $Z_N$ & $(0,i,i^2,j,j^2,0)$\\
$D_{ijk}$ & $Z_N$ & $(-2,-i,-i^2,-j,-j^2,-k)$\\
\bottomrule
\end{tabular}
\end{table}

The first coordinate is a role filter.  The remaining coordinates synchronize the indices and identify the chosen source edge.

\begin{table}[H]
\centering
\caption{Audit of all twelve class-respecting role combinations.  A zero role sum occurs exactly for the three intended types.}
\label{lred:tab:compiler-role-audit}
\small
\begin{tabular}{@{}ccc r c@{}}
\toprule
$X_N$ role & $Y_N$ role & $Z_N$ role & Role sum & Status\\
\midrule
$A$&$B$&$R$&$0$&intended left\\
$A$&$B$&$D$&$-2$&excluded\\
$A$&$C$&$R$&$1$&excluded\\
$A$&$C$&$D$&$-1$&excluded\\
$A$&$F$&$R$&$-1$&excluded\\
$A$&$F$&$D$&$-3$&excluded\\
$L$&$B$&$R$&$1$&excluded\\
$L$&$B$&$D$&$-1$&excluded\\
$L$&$C$&$R$&$2$&excluded\\
$L$&$C$&$D$&$0$&intended right\\
$L$&$F$&$R$&$0$&intended fallback\\
$L$&$F$&$D$&$-2$&excluded\\
\bottomrule
\end{tabular}
\end{table}

\paragraph{How to read \cref{lred:tab:compiler-vectors,lred:tab:compiler-role-audit}.}
Use the two tables as a two-stage filter.  First add only the $\lambda$-coordinates of a class-respecting role combination; the audit table shows that exactly the left, right, and fallback role types survive.  Then return to the remaining five coordinates in the vector table.  Their linear and squared moments synchronize the pair indices, and the final coordinate identifies the selected third-role edge.  Thus the first table supplies the equations, while the second prevents unintended role combinations from reaching those equations.

\begin{lemma}[Exact vector triple characterization]\label{lred:lem:compiler-characterization}
A class-transversal compiler triple has vector sum $\mathbf 0$ if and only if it is one of the triples in \eqref{lred:eq:left-triple}--\eqref{lred:eq:fallback-triple}.
\end{lemma}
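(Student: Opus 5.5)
The plan is a two-stage filter, mirroring the two tables.  For the ``if'' direction, sum the vectors of \cref{lred:tab:compiler-vectors} for each of the three intended types.  For $T^L_{ij}$ the sum is $(0,-i+i,-i^2+i^2,-j+j,-j^2+j^2,0)=\mathbf 0$.  For $T^R_{ijk}$ the sum is $(1+1-2,\,i-i,\,i^2-i^2,\,j-j,\,j^2-j^2,\,k-k)=\mathbf 0$.  For $T^F_{ij}$ the sum is $(1-1+0,\,i-2i+i,\,i^2-2i^2+i^2,\,j-2j+j,\,j^2-2j^2+j^2,\,0)=\mathbf 0$.

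For the ``only if'' direction, begin with the $\lambda$-coordinate.  By the exhaustive audit in \cref{lred:tab:compiler-role-audit}, a zero role sum forces the role type to be $(A,B,R)$, $(L,C,D)$ or $(L,F,R)$.  Each type is then handled separately, with a priori unrelated indices.

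In the type $(A_i,B_{j'},R_{i''j''})$, the $u$-coordinate gives $i''=i$ and the $v$-coordinate gives $j''=j'$.  The triple is therefore $T^L_{ij'}$, and $(i,j')\in\Pairset$ because $R_{ij'}$ exists only for represented pairs.

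In the type $(L_{ij},C_{k},D_{i'j'k'})$, the $u$- and $v$-coordinates give $i'=i$ and $j'=j$, and the $w$-coordinate gives $k'=k$.  The triple is therefore $T^R_{ijk}$ with $(u_i,v_j,w_k)\in E$, because $D_{ijk}$ exists.

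The fallback type $(L_{ij},F_{i'j'},R_{i''j''})$ is the only genuine obstacle, since three indices appear with coefficients $1,-2,1$.  A single linear moment $i-2i'+i''=0$ only says that $i'$ is the average of $i$ and $i''$.  The square moment $i^2-2i'^2+i''^2=0$ completes the argument through \cref{apx:lem:two-moments}, whose coefficients $(1,-2,1)$ are nonzero and sum to zero; it forces $i=i'=i''$.  The same lemma applied to the $v$- and $v^2$-coordinates forces $j=j'=j''$.  The triple is therefore $T^F_{ij}$.

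Throughout, all indices are positive integers, so \cref{apx:lem:two-moments} applies directly.  No carry issue arises at this stage, since the lemma is stated at the vector level.
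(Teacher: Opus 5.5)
Your proof is correct and follows the paper's route. The role audit restricts the triple to the three intended types, the linear coordinates fix the indices in the left and right cases, and the fallback case is settled by the first and second $u$- and $v$-moments. The only cosmetic difference is that you cite the two-moment lemma (\cref{apx:lem:two-moments}) with coefficients $(1,-2,1)$, whereas the paper repeats the zero-variance calculation inline, which is the same principle.
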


\begin{proof}
By \cref{lred:tab:compiler-role-audit}, only the role types $(A,B,R)$, $(L,C,D)$, and $(L,F,R)$ can have zero first coordinate.

For $(A_i,B_j,R_{i'j'})$, the next four coordinates give
\[
  -i+i'=0,\quad -i^2+(i')^2=0,
  \qquad -j+j'=0,\quad -j^2+(j')^2=0.
\]
Thus $i=i'$ and $j=j'$, giving exactly $T^L_{ij}$.

For $(L_{ij},C_k,D_{i'j'k'})$, the coordinate equations are
\[
  i=i',\qquad j=j',\qquad k=k',
\]
with the square coordinates providing redundant confirmation.  This is exactly $T^R_{ijk}$ for an existing edge item $D_{ijk}$.

Finally, consider $(L_{ij},F_{i'j'},R_{i''j''})$.  The two $u$-coordinates give
\[
 i-2i'+i''=0,
 \qquad
 i^2-2(i')^2+(i'')^2=0.
\]
The first equality says $i+i''=2i'$.  Hence
\begin{align*}
 (i-i')^2+(i''-i')^2
 &=i^2+(i'')^2-2i'(i+i'')+2(i')^2\\
 &=2(i')^2-4(i')^2+2(i')^2=0.
\end{align*}
Both squares vanish, so $i=i'=i''$.  The same argument in the $v$-coordinates gives $j=j'=j''$.  Thus the triple is exactly $T^F_{ij}$.  Direct addition verifies that every intended triple has vector sum zero.
\end{proof}

\paragraph{Why both first and second moments are used.}
The linear equation $i+i''=2i'$ alone permits different indices in arithmetic progression.  The square equation rules out that splicing by forcing zero variance.  This is the same two-moment principle used later to localize private items in the symmetric gadget.

\subsection{No-carry scalarization and unary size}\label{lred:sublred:sec:compiler-scalar}

For each coordinate $q\in\{0,1,\ldots,5\}$, choose
\[
  D_q\ge \max\{0,-\min_s \wvec_q(s)\},
\]
where $s$ ranges over all compiler items.  The shifted digit of item $s$ is $\overline\wvec_q(s)=\wvec_q(s)+D_q$, and the shifted target digit is $3D_q$.  Choose an integer base $L_N$ satisfying
\begin{equation}\label{lred:eq:compiler-base}
 L_N>
 \max_q\left\{3D_q,\ 3\max_s\overline\wvec_q(s)\right\}.
\end{equation}
Define
\begin{equation}\label{lred:eq:compiler-scalar}
 W_N(s)=\sum_{q=0}^{5}\overline\wvec_q(s)L_N^q,
 \qquad
 T_N=\sum_{q=0}^{5}3D_qL_N^q.
\end{equation}
Choose $D_0$ one unit larger than the lower bound displayed above, so that every least-significant digit is positive.  All item weights are then positive.

\begin{lemma}[Compiler no-carry equivalence]\label{lred:lem:compiler-no-carry}
For every class-transversal triple $(x,y,z)$,
\[
  W_N(x)+W_N(y)+W_N(z)=T_N
  \quad\Longleftrightarrow\quad
  \wvec(x)+\wvec(y)+\wvec(z)=\mathbf 0.
\]
\end{lemma}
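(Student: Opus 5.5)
The plan is to follow the template of \cref{n3dm:lem:no-carry} and \cref{apx:lem:no-carry}: reduce scalar equality to digitwise equality of the shifted vectors, then undo the shifts.

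\textbf{Step 1: every digit sum stays below the base.} By the choice of $D_q$, every shifted digit $\overline\wvec_q(s)$ is a nonnegative integer. By \eqref{lred:eq:compiler-base}, for any three items the coordinatewise sum $\overline\wvec_q(x)+\overline\wvec_q(y)+\overline\wvec_q(z)$ is at most $3\max_s\overline\wvec_q(s)<L_N$. Likewise, each target digit $3D_q$ lies in $\{0,\ldots,L_N-1\}$.

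\textbf{Step 2: compare base-$L_N$ representations.} The integer $W_N(x)+W_N(y)+W_N(z)$ equals $\sum_{q=0}^{5}\bigl(\overline\wvec_q(x)+\overline\wvec_q(y)+\overline\wvec_q(z)\bigr)L_N^q$. Every coefficient in this expansion is a legitimate base-$L_N$ digit, so the expansion is the unique base-$L_N$ representation of the sum. The same holds for $T_N=\sum_{q=0}^{5}3D_qL_N^q$. By uniqueness of base-$L_N$ representations, the two integers are equal if and only if their digits agree position by position. If one prefers not to invoke uniqueness directly, reduce modulo $L_N$, subtract the matched digit, divide by $L_N$, and induct on $q$.

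\textbf{Step 3: remove the shifts.} Digitwise agreement says $\wvec_q(x)+\wvec_q(y)+\wvec_q(z)+3D_q=3D_q$ for each $q$. Subtracting $3D_q$ gives the vector equality $\wvec(x)+\wvec(y)+\wvec(z)=\mathbf 0$. Both directions follow because every step is reversible.

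The only point needing care, which I expect to be the main obstacle, is the least-significant coordinate. Here $D_0$ is chosen one unit larger than the minimal shift, so the base bound must be read with that enlarged $D_0$ and its larger shifted digits. Since \eqref{lred:eq:compiler-base} is stated in terms of the chosen $D_q$ and $\overline\wvec_q$, the argument goes through unchanged. Class-transversality is not needed for this equivalence: it holds for any three items, and the lemma restricts to transversal triples only for later use.
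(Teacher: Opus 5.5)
Your proposal is correct and follows the same route as the paper: bound every three-item digit sum and every target digit $3D_q$ strictly below $L_N$, conclude that scalar equality is equivalent to digitwise equality of the shifted coordinates, and then subtract $3D_q$. Your appeal to uniqueness of base-$L_N$ representations, and your remark that \eqref{lred:eq:compiler-base} must be read with the final incremented $D_0$, make explicit what the paper leaves implicit.
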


\begin{proof}
Every sum of three shifted item digits and every target digit is strictly smaller than $L_N$ by \eqref{lred:eq:compiler-base}.  Therefore no carry occurs.  Equality of the scalar integers is equivalent to equality in all six shifted coordinates, and subtracting $3D_q$ recovers the signed vector equations.
\end{proof}

Let $q_0=\max\{a,b,c,p\}$.  Every index is at most $q_0$, so every signed coordinate in \cref{lred:tab:compiler-vectors} has magnitude $O(q_0^2)$.  The shifts and $L_N$ are $O(q_0^2)$, and every encoded weight and the target are $O(L_N^6)=O(q_0^{12})$.  Because isolated vertices were deleted, $q_0\le p$ for a nonempty source.  The compiler is therefore polynomial under unary encoding.

\subsection{Exact objective relation and decoder}\label{lred:sublred:sec:compiler-opt}

\begin{lemma}[Forward pair completion]\label{lred:lem:compiler-forward}
A source matching of size $k$ gives a numerical matching of size $r+k$.
\end{lemma}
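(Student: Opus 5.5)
The plan is to write down the numerical matching explicitly and check feasibility using \cref{lred:lem:compiler-characterization} together with \cref{lred:lem:compiler-no-carry}. Let $M\subseteq E$ be a source matching with $|M|=k$. For each selected edge $(u_i,v_j,w_k)\in M$, include the left triple $T^L_{ij}=(A_i,B_j,R_{ij})$ and the right triple $T^R_{ijk}=(L_{ij},C_k,D_{ijk})$. Both pairs $(i,j)$ and the edge item $D_{ijk}$ exist, because $(i,j)\in\Pairset$ whenever some edge uses it. For every represented pair $(i,j)\in\Pairset$ that is \emph{not} the $(u,v)$-projection of an edge of $M$, include the fallback triple $T^F_{ij}=(L_{ij},F_{ij},R_{ij})$.

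Every included triple is one of the intended types in \eqref{lred:eq:left-triple}--\eqref{lred:eq:fallback-triple}. Each therefore has vector sum $\mathbf 0$ by \cref{lred:lem:compiler-characterization}, and hence scalar sum $T_N$ by \cref{lred:lem:compiler-no-carry}. Each is also class-transversal by \eqref{lred:eq:compiler-X}--\eqref{lred:eq:compiler-Z}.

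The one step needing care is item-disjointness, which I would verify item type by item type.
\begin{itemize}
\item The items $A_i$, $B_j$, $C_k$ occur only in left or right triples, and each is used at most once because the edges of $M$ are vertex-disjoint.
\item The item $D_{ijk}$ occurs only in the right triple of its own edge.
\item The items $L_{ij}$ and $R_{ij}$ could collide only if two selected edges shared the pair $(i,j)$, which is impossible since they would share $u_i$. A pair receives a fallback exactly when no selected edge projects onto it, so a fallback never meets a left or right triple on $L_{ij}$ or $R_{ij}$.
\item The item $F_{ij}$ is used only by its own fallback.
\end{itemize}

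The selected edges project onto $k$ distinct pairs, so the count is $2k+(r-k)=r+k$ triples. No real obstacle is expected; the only subtle point is the injectivity of $M\to\Pairset$, $(u_i,v_j,w_k)\mapsto(i,j)$, which follows directly from the disjointness of the edges of $M$ at $u_i$.
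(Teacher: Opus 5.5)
Your proposal is correct and follows the paper's proof essentially verbatim: left and right triples for each selected edge, a fallback triple for every remaining represented pair, item-disjointness from the vertex-disjointness of the source matching (which also makes the edge-to-pair projection injective), and the count $2k+(r-k)=r+k$. Your item-by-item disjointness check is simply a more explicit version of the paper's one-line argument.
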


\begin{proof}
For every selected edge $(u_i,v_j,w_k)$, select $T^L_{ij}$ and $T^R_{ijk}$.  For every represented pair $(i,j)$ not used by a selected source edge, select $T^F_{ij}$.  A source matching never uses the same $u_i$, $v_j$, or $w_k$ twice, and it cannot contain two edges with the same pair $(i,j)$.  Hence all selected compiler items are disjoint.  There are $2k$ triples for completed pairs and $r-k$ fallback triples, for a total of $r+k$.
\end{proof}

\begin{definition}[Complete pair]\label{lred:def:complete-pair}
In a numerical compiler solution, a represented pair $(i,j)$ is \emph{complete} if the solution contains both one left triple $T^L_{ij}$ and one right triple $T^R_{ijk}$ for some edge $(i,j,k)$.
\end{definition}

\begin{lemma}[Complete-pair decoder]\label{lred:lem:complete-pair-decoder}
Every numerical compiler matching of size $s$ contains at least $s-r$ complete pairs and yields in polynomial time a source matching of size at least $\max\{0,s-r\}$.
\end{lemma}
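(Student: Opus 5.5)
The plan is a counting argument on the items $L_{ij}$ and $R_{ij}$, which every non-fallback choice for a pair has to share. Fix a numerical compiler matching $\mathcal M$ of size $s$. By \cref{lred:lem:compiler-characterization} and \cref{lred:lem:compiler-no-carry}, each selected triple is a left triple $T^L_{ij}$, a right triple $T^R_{ijk}$, or a fallback triple $T^F_{ij}$. For each represented pair $(i,j)\in\Pairset$, let $\ell_{ij}$, $\varrho_{ij}$, and $\phi_{ij}$ count the selected left, right, and fallback triples with that pair.

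Every left and fallback triple with pair $(i,j)$ uses the single item $R_{ij}$. Every right and fallback triple with that pair uses the single item $L_{ij}$. Hence
\[
\ell_{ij}+\phi_{ij}\le1
\qquad\text{and}\qquad
\varrho_{ij}+\phi_{ij}\le1 .
\]
Each of the three counts is therefore $0$ or $1$, and the pair contributes at most $1+\mathbf 1[(i,j)\text{ complete}]$ triples in total. The only way to contribute two triples is $\ell_{ij}=\varrho_{ij}=1$ and $\phi_{ij}=0$, which is exactly the complete case of \cref{lred:def:complete-pair}. Summing over the $r$ pairs gives
\[
s\le r+\#\{\text{complete pairs}\},
\]
so at least $s-r$ pairs are complete.

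For the decoder, map each complete pair $(i,j)$ to the source edge $(u_i,v_j,w_k)$ named by its unique selected right triple $T^R_{ijk}$. This edge belongs to $E$ because the item $D_{ijk}$ exists only for edges of $H$. Distinct complete pairs produce distinct edges because their pairs differ. The step needing the most care is vertex-disjointness of the decoded edges. It follows from disjointness of $\mathcal M$ on the vertex items, since each vertex item lies in exactly one selected triple at most:
\begin{itemize}
\item two decoded edges sharing $u_i$ would both need left triples containing $A_i$;
\item two sharing $v_j$ would both need left triples containing $B_j$;
\item two sharing $w_k$ would both need right triples containing $C_k$.
\end{itemize}
So the decoded edges form a matching of $H$ of size at least $\max\{0,s-r\}$; when $s\le r$ the empty matching suffices. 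Classifying the triples, tallying the counts per pair, and reading off the edges all take polynomial time.
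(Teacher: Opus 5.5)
Your proof is correct and follows essentially the same route as the paper: capacity at the shared items $L_{ij}$ and $R_{ij}$ limits each represented pair to one triple plus one more exactly when it is complete, giving $s\le r+c$, and the decoded edges are vertex-disjoint because of the items $A_i$, $B_j$, $C_k$. Your per-pair inequalities $\ell_{ij}+\phi_{ij}\le1$ and $\varrho_{ij}+\phi_{ij}\le1$ are a slightly more explicit version of the paper's count $s=a'+c\le r+c$. Your explicit appeal to the no-carry lemma, to pass from scalar to vector sums, fills in a step the paper leaves implicit.
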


\begin{proof}
By \cref{lred:lem:compiler-characterization}, every selected triple is associated with a unique represented pair.  For a fixed pair, the shared item $R_{ij}$ prevents selecting two left triples, the shared item $L_{ij}$ prevents selecting two right triples, and the fallback triple uses both $L_{ij}$ and $R_{ij}$ and therefore cannot coexist with either side.  Thus a pair contributes at most two selected triples; it contributes two exactly when it is complete.

Let $c$ be the number of complete pairs and let $a'$ be the number of pairs contributing at least one triple.  Then
\[
  s=a'+c\le r+c,
\]
so $c\ge s-r$.  For each complete pair, output the source edge identified by its right triple.  Distinct complete pairs use distinct $A_i$ and $B_j$ items in their left triples and distinct $C_k$ items in their right triples.  The decoded source edges are therefore pairwise vertex-disjoint.
\end{proof}

\begin{theorem}[Exact pair-compiler identity]\label{lred:thm:compiler-identity}
The unpadded compiler instance satisfies
\[
  \boxed{\OPT_{\mathrm N}=r+k^\star.}
\]
The forward and reverse maps are polynomial-time algorithms.
\end{theorem}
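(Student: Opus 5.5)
The identity follows by sandwiching $\OPT_{\mathrm N}$ between the two preceding lemmas, since together they give matching upper and lower bounds.

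For the lower bound, take a maximum source matching of size $k^\star$. Then apply \cref{lred:lem:compiler-forward}, which produces a numerical compiler matching of size $r+k^\star$. Hence $\OPT_{\mathrm N}\ge r+k^\star$.

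For the upper bound, let $s=\OPT_{\mathrm N}$ and take a numerical matching of that size. \Cref{lred:lem:complete-pair-decoder} decodes it into a source matching of size at least $\max\{0,s-r\}$. This gives $k^\star\ge s-r$, that is, $\OPT_{\mathrm N}\le r+k^\star$. The two inequalities give the boxed identity.

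Polynomiality of the maps needs only a short check.
\begin{itemize}
\item The forward map is explicit: for each selected edge output $T^L_{ij}$ and $T^R_{ijk}$, and output fallbacks for the remaining pairs in $\Pairset$.
\item The reverse map classifies each selected triple by \cref{lred:lem:compiler-characterization}, groups the triples by pair, and reads off the complete pairs. Both steps take time linear in the solution size.
\item The instance itself has polynomially bounded unary size by \cref{lred:lem:compiler-no-carry} and the accompanying magnitude estimate.
\end{itemize}

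The only real care point is in the reverse direction. We must confirm that the decoder is applied to an optimum numerical solution and that the conclusion is vertex-disjointness of the decoded edges, not merely that their number is right. \Cref{lred:lem:complete-pair-decoder} already supplies disjointness: distinct complete pairs use distinct $A_i$, $B_j$ and $C_k$ items. So no new argument is needed, and the proof consists of the two inequalities above.
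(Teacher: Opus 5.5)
Your proposal is correct and is essentially the paper's proof: the forward pair-completion lemma gives $\OPT_{\mathrm N}\ge r+k^\star$, and the complete-pair decoder gives $s-r\le k^\star$ for every numerical matching of size $s$, hence the identity. Your added polynomiality remarks are fine, and the ``care point'' is not actually needed, since the decoder bound holds for every numerical solution rather than only an optimal one.
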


\begin{proof}
\Cref{lred:lem:compiler-forward} gives $\OPT_{\mathrm N}\ge r+k^\star$.  If a numerical solution has size $s$, \cref{lred:lem:complete-pair-decoder} gives a source matching of size at least $s-r$, so $s-r\le k^\star$ and $s\le r+k^\star$.  Taking the maximum over $s$ proves equality.
\end{proof}

\section{Step 5: padding and one-live-port separation}\label{lred:sec:separation}

The three compiler classes need not have equal sizes.  Pad each class with isolated items of weight
\[
  \barrier=T_N+1
\]
until all have common size
\begin{equation}\label{lred:eq:m-def}
  m=\max\{|X_N|,|Y_N|,|Z_N|\}.
\end{equation}
Because all weights are positive, no target-sum triple can contain a padding item.  Padding does not change the optimum or the decoder.

Write the padded classes as
\[
  X_N=\{x_1,\ldots,x_m\},\quad
  Y_N=\{y_1,\ldots,y_m\},\quad
  Z_N=\{z_1,\ldots,z_m\},
\]
where some displayed items may be padding occurrences of weight $\barrier$.

Create a new numerical instance with $n=3m$ indexed groups:
\begin{align}
 (a_i,b_i,c_i)&=(W_N(x_i),\barrier,\barrier) &&(1\le i\le m),\label{lred:eq:sep-X}\\
 (a_{m+i},b_{m+i},c_{m+i})&=(\barrier,W_N(y_i),\barrier) &&(1\le i\le m),\label{lred:eq:sep-Y}\\
 (a_{2m+i},b_{2m+i},c_{2m+i})&=(\barrier,\barrier,W_N(z_i)) &&(1\le i\le m).\label{lred:eq:sep-Z}
\end{align}
The target remains $T_N$.

\begin{lemma}[One-live-port separation]\label{lred:lem:separation}
The separated numerical instance has the same feasible matchings and the same optimum as the padded compiler instance, under the natural identification of nonbarrier occurrences.  Every separated group has at most one coordinate whose item can occur in a target triple.
\end{lemma}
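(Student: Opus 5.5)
The argument is short and rests on a single observation: every barrier occurrence, and every padding occurrence, has weight $\barrier=T_N+1$. Because all weights are positive integers, any class-transversal triple containing such an occurrence has weight at least $\barrier+2>T_N$. No target triple can therefore contain a barrier or padding item. This already gives the second assertion. In a group of type \eqref{lred:eq:sep-X}, only the $a$-coordinate can be nonbarrier; likewise only $b$ in \eqref{lred:eq:sep-Y} and only $c$ in \eqref{lred:eq:sep-Z}. If that single coordinate is itself a padding item, the group has no usable coordinate at all, which is consistent with ``at most one.''

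For the first assertion, I will set up the natural identification explicitly. The separated source classes are $A=\{a_1,\ldots,a_{3m}\}$, $B$, and $C$. The map sends $x_i\mapsto a_i$, $y_i\mapsto b_{m+i}$ and $z_i\mapsto c_{2m+i}$, for $1\le i\le m$. This is a weight-preserving bijection from the padded compiler classes $X_N,Y_N,Z_N$ onto the nonbarrier positions of $A,B,C$ respectively. It respects classes, since $X_N$ goes only into the $A$-role, $Y_N$ into $B$, and $Z_N$ into $C$. Hence a class-transversal triple of the padded compiler instance maps to a class-transversal triple of the separated instance with the same weight sum. Item-disjointness is preserved because the map is injective.

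For the converse, take any feasible triple $(a_\ell,b_{\ell'},c_{\ell''})$ of the separated instance. By the barrier observation, all three entries are nonbarrier occurrences. It follows that $\ell\le m$, $m<\ell'\le 2m$ and $2m<\ell''$, so the triple is the image of a unique compiler triple with the same target sum. Disjointness is again preserved, so the identification is a bijection between feasible matchings that preserves cardinality. The optima are therefore equal, and since padding items never occur in target triples either, both optima equal the unpadded value $r+k^\star$ of \cref{lred:thm:compiler-identity}.

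I expect no step to be a real obstacle. The point needing the most care is bookkeeping: checking that the class roles line up, namely that an $X_N$ item only ever sits in an $A$-position. This is what makes the correspondence respect class-transversality and not merely preserve weights. I would also note that the target remains $T_N$ and that all weights stay positive and polynomially bounded, with $\barrier=T_N+1$, so unary size is unaffected.
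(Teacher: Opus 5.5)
Your proof is correct and is essentially the paper's argument: the weight $T_N+1$ keeps every barrier and padding occurrence out of target triples. Hence the legal separated triples are exactly the images of the padded compiler's legal triples under the class-respecting identification, each group has at most one live coordinate, and your explicit map $x_i\mapsto a_i$, $y_i\mapsto b_{m+i}$, $z_i\mapsto c_{2m+i}$ only spells out bookkeeping the paper leaves implicit. One harmless wording slip: that map also sends padding items to barrier-valued positions, so its image is the set of nominal live positions rather than only the nonbarrier ones, which changes nothing because neither kind of occurrence can appear in a target triple.
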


\begin{proof}
A class-transversal triple containing an occurrence of weight $\barrier=T_N+1$ has total weight greater than $T_N$, because all other weights are positive.  Therefore every legal separated triple uses one original nonbarrier item from each numerical class.  Those triples are exactly the legal triples of the padded compiler instance.  Conversely, every compiler triple appears unchanged among the separated nonbarrier items.  The optimum is preserved.  Equations \eqref{lred:eq:sep-X}--\eqref{lred:eq:sep-Z} place nonbarrier items in only one coordinate of each group; a group generated from a padding item may have no live coordinate at all.
\end{proof}

\paragraph{Why separation is mathematically essential.}
In the unrestricted symmetric construction, a defective local filler block can create useful capacity at more than one port.  Charging all such shortages to the same missing fillers can double-count the local deficit.  Separation ensures that only one of the three port shortages in a group can be useful to a main triple.  The local inequality in \cref{lred:sec:local-certificate} can therefore be charged once per group and then summed globally.

The class-size bound needed later follows immediately.  Since $r\le p$ and isolated source vertices were deleted,
\begin{align}
 |X_N|&=|U|+r\le2p,\notag\\
 |Y_N|&=|V|+|W|+r\le3p,\notag\\
 |Z_N|&=r+p\le2p.
\end{align}
Hence
\begin{equation}\label{lred:eq:m-bound}
  m\le3p,
  \qquad
  n=3m\le9p.
\end{equation}

\section{Step 6: the symmetric 13-role construction}\label{lred:sec:symmetric}

This section restates the complete portion of the symmetric construction needed for the L-reduction.  The construction is applied to the separated numerical instance with lists $A=\mset{a_1,\ldots,a_n}$, $B=\mset{b_1,\ldots,b_n}$, $C=\mset{c_1,\ldots,c_n}$ and target $t=T_N$.

For every group $i$, create
\[
 S_i=\{P_{1,i},P_{2,i},P_{3,i},Q_{1,i},\ldots,Q_{10,i}\},
 \qquad
 S=\mathop{\uplusm}_{i=1}^{n}S_i.
\]
The physical output classes are three labeled copies $S^X,S^Y,S^Z$.  Thus each class contains $13n$ items and the three weight multisets are identical.

\subsection{Role data and target patterns}\label{lred:subsec:sym-role-data}

For role $R$ in group $i$, define
\begin{equation}\label{lred:eq:sym-vector}
  \wvec(R_i)=(\rho_R,\eta_{R,i},h_Ri,h_Ri^2),
  \qquad
  \Kvec=(300,3t,0,0),
\end{equation}
using \cref{lred:tab:sym-role-data}.

\begin{table}[htbp]
\centering
\caption{Logical coordinates of the thirteen symmetric roles.}
\label{lred:tab:sym-role-data}
\begin{tabular}{@{}c r r l@{}}
\toprule
Role $R$ & $\rho_R$ & $h_R$ & $\eta_{R,i}$\\
\midrule
$P_1$&31&0&$3a_i$\\
$P_2$&8&0&$3b_i$\\
$P_3$&261&0&$3c_i$\\
$Q_1$&184&2&$-3b_i$\\
$Q_2$&85&$-2$&$3t-3a_i+3b_i$\\
$Q_3$&177&$-2$&$3t-2a_i+b_i+c_i$\\
$Q_4$&115&2&$2a_i-4b_i-c_i$\\
$Q_5$&23&2&$a_i-2b_i-2c_i$\\
$Q_6$&16&$-2$&$3t-a_i+2b_i-c_i$\\
$Q_7$&27&1&$2a_i-b_i-c_i$\\
$Q_8$&89&$-3$&$3t-2a_i+4b_i+c_i$\\
$Q_9$&188&1&$a_i-2b_i+c_i$\\
$Q_{10}$&96&1&$0$\\
\bottomrule
\end{tabular}
\end{table}

\paragraph{How to read \cref{lred:tab:sym-role-data}.}
Read a row as a recipe for one role, not as a complete target triple.  The fixed code $\rho_R$ selects admissible role multisets, the expression $\eta_{R,i}$ carries the source arithmetic, and the two coefficients generated from $h_R$ enforce localization through the group and group-square coordinates.  The pattern list immediately below tells which three recipes are added together.  Separating these three jobs makes it easier to audit a target identity without confusing role admissibility with numerical feasibility.

The permitted role patterns are
\begin{align}
 F_0&=(P_1,P_2,P_3),\notag\\
 F_1&=(P_1,Q_1,Q_2), &
 F_2&=(P_2,Q_3,Q_4), &
 F_3&=(P_3,Q_5,Q_6),\notag\\
 F_4&=(Q_1,Q_7,Q_8), &
 F_5&=(Q_2,Q_7,Q_9), &
 F_6&=(Q_3,Q_7,Q_{10}),\notag\\
 F_7&=(Q_4,Q_8,Q_{10}), &
 F_8&=(Q_5,Q_8,Q_9), &
 F_9&=(Q_6,Q_9,Q_{10}).\label{lred:eq:sym-patterns}
\end{align}
Pattern $F_0$ is main; the other nine patterns are fillers.

\begin{lemma}[Structure of symmetric target triples]\label{lred:lem:sym-structure}
After a no-carry scalarization of \eqref{lred:eq:sym-vector}, every target triple has one of the patterns $F_0,\ldots,F_9$.  Moreover:
\begin{enumerate}[label=(\alph*)]
\item a main triple $(P_{1,i},P_{2,j},P_{3,k})$ is legal exactly when $a_i+b_j+c_k=t$;
\item all private items of a filler triple have one common group index;
\item an $F_1$ filler localized at group $i$ uses a $P_1$ port whose represented value is $a_i$, and analogously for $F_2$ and $b_i$ and for $F_3$ and $c_i$;
\item every local filler $F_r(i)$, $r=1,\ldots,9$, is legal.
\end{enumerate}
\end{lemma}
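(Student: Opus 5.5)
The plan is to follow the proof of \cref{apx:lem:target-structure} almost verbatim. The only new points are that the construction is now applied to the separated numerical instance with target $t=T_N$, and that the scalarization is an arbitrary no-carry one. First fix shifts $D_1,D_2,D_3$ and a base $L$ exactly as in \cref{n3dm:subsec:encoding}, so that every sum of three shifted item digits and every target digit lies below $L$. Then the argument of \cref{n3dm:lem:no-carry} shows that a scalar target equality for three occurrences is equivalent to the vector equality $\wvec(u)+\wvec(v)+\wvec(z)=\Kvec$. From this point on, every claim is checked coordinatewise.

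For the pattern classification, the role coordinate must sum to $300$. The role codes are unchanged from Part~I, so the complementary-pair certificate of \cref{app:role-codes} (\cref{n3dm:lem:role-filter}) gives exactly $F_0,\ldots,F_9$. Each part then follows from one coordinate.

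\begin{itemize}
\item Part (a): the ports have $h=0$, so the group coordinates vanish automatically, and the source coordinate reads $3a_i+3b_j+3c_k=3t$.
\item Part (b), patterns $F_1,F_2,F_3$: the two private group coefficients are $\pm2$ with opposite signs, so the group coordinate alone forces their indices to coincide.
\item Part (b), patterns $F_4,\ldots,F_9$: the coefficient multisets $\{2,1,-3\}$ and $\{-2,1,1\}$ consist of nonzero entries summing to zero. \Cref{n3dm:lem:two-moments} applied to the group and group-square coordinates then forces a common index.
\item Part (c): compute $\eta_{Q_1,i}+\eta_{Q_2,i}=3t-3a_i$, $\eta_{Q_3,i}+\eta_{Q_4,i}=3t-3b_i$ and $\eta_{Q_5,i}+\eta_{Q_6,i}=3t-3c_i$. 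An attached port $P_{1,j}$ therefore completes the source coordinate iff $a_j=a_i$, and similarly for $P_2$ and $P_3$.
\item Part (d): the local identities of \cref{n3dm:tab:patterns} hold for all $a_i,b_i,c_i,t$, since they are algebraic identities. The $h$-coefficients sum to zero, so with a common index both moments vanish.
\end{itemize}

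The only point needing care, which I expect to be the main (mild) obstacle, concerns the barrier values. In the separated instance many $a_i,b_i,c_i$ equal $\barrier=T_N+1$, so several source coordinates such as $\eta_{Q_8,i}$ are large, and others such as $\eta_{Q_4,i}$ are very negative. None of the identities assumed any positivity or size relation between source values and $t$, however, so the vector arguments are unaffected. The only requirement is that the shifts and the base be chosen from the actual extreme coordinates, which remain $O(T_N+n^2)$ and hence polynomial under unary encoding. I would state this explicitly, so that the choice of $L$ in the scalarization is recorded as depending on the separated data.
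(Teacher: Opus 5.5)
Your proposal is correct and follows essentially the same route as the paper's proof: the role-code certificate gives the pattern classification, the vanishing port coefficients give (a), the $\pm2$ group coordinate and the two-moment argument give (b), the private source sums $3t-3a_i$, $3t-3b_i$, $3t-3c_i$ give (c), and the algebraic filler identities give (d). For the $F_4,\ldots,F_9$ coefficient patterns you cite the general two-moment lemma, which fits more directly than the paper's pointer to the $(1,-2,1)$ zero-variance calculation from the compiler. Your remark that barrier values affect only the choice of shifts and base is consistent with the paper's example parameters.
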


\begin{proof}
The fixed role codes have sum $300$ exactly for the ten patterns in \eqref{lred:eq:sym-patterns}; an exhaustive complementary-pair certificate is reproduced in \cref{lred:tab:role-complements}.  For $F_0$, the last two coordinates vanish and the source coordinate is $3(a_i+b_j+c_k)$.

For $F_1,F_2,F_3$, the two private coefficients are $2,-2$ or $-2,2$, so their group indices agree.  For $F_4,\ldots,F_9$, the multiset of nonzero coefficients is either $\{2,1,-3\}$ or $\{-2,1,1\}$.  The equations in $i$ and $i^2$ force a common index by the same zero-variance calculation used in \cref{lred:lem:compiler-characterization}.

For an $F_1$ filler whose private items are from group $i$,
\[
  \eta_{Q_1,i}+\eta_{Q_2,i}=3t-3a_i.
\]
A port $P_{1,j}$ completes the target exactly when $a_j=a_i$.  The analogous identities are
\[
  \eta_{Q_3,i}+\eta_{Q_4,i}=3t-3b_i,
  \qquad
  \eta_{Q_5,i}+\eta_{Q_6,i}=3t-3c_i.
\]
Direct substitution verifies all local filler source identities and the vanishing group coefficients.
\end{proof}

\subsection{Scalar encoding and partial recoloring}\label{lred:subsec:sym-scalar-recolor}

Choose shifts $D_1,D_2,D_3$ that make the last three coordinates nonnegative and a base $L_S$ larger than every target digit and every sum of three shifted item digits.  Define
\begin{align}
 W_S(R_i)&=\rho_R+(\eta_{R,i}+D_1)L_S+(h_Ri+D_2)L_S^2+(h_Ri^2+D_3)L_S^3,\label{lred:eq:sym-scalar}\\
 K_S&=300+(3t+3D_1)L_S+3D_2L_S^2+3D_3L_S^3.\label{lred:eq:sym-target}
\end{align}
The no-carry argument is identical to \cref{lred:lem:compiler-no-carry}: scalar equality is equivalent to the four vector equations.

\begin{lemma}[Partial copy-forgetting and recoloring]\label{lred:lem:partial-recolor}
A partial symmetric matching of $h$ triples becomes, after the $X,Y,Z$ copy labels are forgotten, a multiset of $h$ target hyperedges on $S$ with incidence degree at most three at every occurrence.  Conversely, every such degree-at-most-three target hypergraph can be lifted in polynomial time to a partial matching of the three labeled copies without changing its cardinality.
\end{lemma}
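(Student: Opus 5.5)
The proof has two directions, and both run through the incidence bipartite multigraph of the hypergraph.

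For the forward direction, take a partial symmetric matching of $h$ triples. In every triple, forget the superscripts $X,Y,Z$ and keep only the underlying occurrences in $S$. Each triple then becomes a three-incidence hyperedge on $S$. Its weight is unchanged, because all three copies $s^X,s^Y,s^Z$ carry the same weight $W_S(s)$, so each hyperedge still has target sum $K_S$. Distinct triples become distinct hyperedge occurrences, so the multiset has exactly $h$ members. An occurrence $s$ has only three physical copies, and a matching uses each of them at most once, so $\degH(s)\le 3$. This is the same reasoning as in the proof of \cref{n3dm:lem:copy-forgetting} and in \cref{apx:lem:partial-recoloring}(a).

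For the converse, start from a multiset $\mathcal H$ of target hyperedges with maximum incidence degree at most three. Build the incidence bipartite multigraph $G$ with the convention of \cref{n3dm:lem:copy-forgetting}. The left vertices are the occurrences of $S$, and the right vertices are the hyperedge occurrences of $\mathcal H$. Put one edge per incidence slot, so a hyperedge that fills $k$ slots with the same $s$ contributes $k$ parallel edges to $s$. Every right vertex then has degree exactly three, and every left vertex has degree at most three. Hence $G$ is a bipartite multigraph of maximum degree at most three.

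By K\H{o}nig's line-coloring theorem \citep{LovaszPlummer1986}, $G$ has a proper edge coloring with three colors $X,Y,Z$. At each hyperedge vertex the three incidences get distinct colors, and at each item vertex no color repeats. Replace every incidence of color $R$ at $s$ by the physical copy $s^R$. The result is a family of class-transversal triples that uses each physical copy at most once. Every triple keeps its weight $K_S$, since copies share weights, and there are still $|\mathcal H|$ triples, so the cardinality is unchanged.

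The one step that needs care is making the recoloring constructive and polynomial rather than merely existential. I would handle it by first embedding $G$ in a $3$-regular bipartite multigraph. To do this, add dummy right vertices and pad degree-deficient vertices on both sides with extra edges among dummy vertices, keeping the two sides equal in size. Then peel off three perfect matchings one after another, each found by a standard bipartite matching algorithm and guaranteed to exist by the Hall argument in the proof of \cref{n3dm:lem:copy-forgetting}. Restricting these matchings to the original edges gives the proper $3$-edge-coloring. Alternatively, one can cite the polynomial bipartite edge-coloring algorithm of \citep{Alon2003} directly. Either way, the coloring is computed in polynomial time in $|S|+|\mathcal H|$.
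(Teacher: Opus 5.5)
Your proposal is correct and follows the paper's proof. The forward direction forgets the copy labels. The converse builds the incidence bipartite multigraph, in which every vertex has degree at most three and every hyperedge vertex has degree exactly three, and reads the labels off a proper $3$-edge-coloring given by K\H{o}nig's line-coloring theorem. Your padding to a $3$-regular bipartite multigraph, followed by peeling off three perfect matchings, is a sound elaboration of the polynomial-time claim that the paper covers only by citing Alon's bipartite edge-coloring algorithm; the padding edges should join each deficient original vertex to the $|S|-|\mathcal H|$ added dummy hyperedge vertices, and the degrees balance exactly because the total deficit on the occurrence side is $3(|S|-|\mathcal H|)$.
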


\begin{proof}
Forgetting labels leaves at most the three physical copies of any underlying occurrence.  Conversely, form the incidence bipartite multigraph between underlying occurrences and hyperedge occurrences.  Its maximum degree is at most three, and every hyperedge vertex has degree three.  By K\H{o}nig's line-coloring theorem for bipartite multigraphs, its edges have a proper $3$-edge-coloring \citep{LovaszPlummer1986,Alon2003}.  Interpret the three colors as $X,Y,Z$.  The three incidences of every hyperedge receive distinct colors, and no item vertex repeats a color, producing a legal class-transversal partial matching.
\end{proof}

\begin{lemma}[Forward symmetric extension]\label{lred:lem:sym-forward}
A separated numerical matching of size $s$ yields a symmetric matching of size $12n+s$.
\end{lemma}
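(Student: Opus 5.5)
The proof mirrors \cref{apx:lem:forward-partial}, applied now to the separated numerical instance. First I build a target hypergraph $\mathcal H$ on $S$ of maximum incidence degree at most three. Then I lift it with \cref{lred:lem:partial-recolor}.

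\textbf{Filler scaffold.} For every group $i\in[n]$, I include two parallel copies of each local filler $F_1(i),F_2(i),F_3(i)$ and one copy of each of $F_4(i),\ldots,F_9(i)$. This gives $12n$ hyperedges. Each is legal by \cref{lred:lem:sym-structure}(d). That statement is an algebraic identity in $a_i,b_i,c_i,t$, so it holds even for groups whose values include the barrier $\barrier$. The incidence ledger is the same as in \cref{n3dm:tab:forward-private-degrees}. Every private item $Q_{\ell,i}$ gets degree exactly three, and every port $P_{r,i}$ gets degree exactly two, since $F_r(i)$ is taken twice and contains $P_{r,i}$ once.

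\textbf{Main triples.} Fix a separated numerical matching of size $s$. For each of its triples, which uses occurrences $a_i$, $b_j$, $c_k$ with $a_i+b_j+c_k=t$, I add the main hyperedge $F_0(i,j,k)=(P_{1,i},P_{2,j},P_{3,k})$. It is legal by \cref{lred:lem:sym-structure}(a). The matching is item-disjoint, so each port $P_{1,i}$ (respectively $P_{2,j}$, $P_{3,k}$) lies in at most one main hyperedge. Every port therefore has total degree at most three, and private items never enter main hyperedges. The resulting $\mathcal H$ has $12n+s$ target hyperedges and maximum degree at most three.

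\textbf{Lifting.} \cref{lred:lem:partial-recolor} then lifts $\mathcal H$ in polynomial time to a partial symmetric matching of cardinality $12n+s$, which is the claim.

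The only point needing care is the legality of the local fillers in groups containing barrier values or padding items. This is not a real obstacle: the identities and the no-carry equivalence do not depend on the source values being live. I only need the shifts and the base $L_S$ to be chosen over all occurrences, including those in barrier groups, as \cref{lred:subsec:sym-scalar-recolor} already stipulates.
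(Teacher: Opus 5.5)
Your proposal is correct and follows the paper's proof: you use the canonical twelve-filler scaffold in every group (private items at degree three, ports at degree two), add one main hyperedge per separated numerical triple so that item-disjointness keeps every port at degree at most three, and then lift the result by partial copy-forgetting and recoloring. Your explicit check that the local filler identities and the no-carry encoding also cover barrier and padding groups is sound; the paper leaves that point implicit.
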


\begin{proof}
For every group $i$, take two copies each of $F_1(i),F_2(i),F_3(i)$ and
one copy each of $F_4(i),\ldots,F_9(i)$; here $F_r(i)$ takes all its
roles from group $i$, so in particular the ports used by $F_1(i)$,
$F_2(i)$, and $F_3(i)$ are $P_{1,i}$, $P_{2,i}$, and $P_{3,i}$
themselves.
These are twelve legal fillers.  Every private item then has degree three and every port has degree two.  Add one main hyperedge for every numerical source triple.  Source item-disjointness makes the main incidence degree at every live port at most one.  Apply \cref{lred:lem:partial-recolor}.
\end{proof}

\subsection{The new local filler certificate}\label{lred:sec:local-certificate}

Fix a group $i$ in an arbitrary partial symmetric solution after copy labels are forgotten.  Let $x_{r,i}$ be the number of fillers of pattern $F_r$ whose private items are localized at group $i$, and define
\[
  f_i=\sum_{r=1}^{9}x_{r,i}.
\]
Every private occurrence has incidence degree at most three.  Therefore
\begin{align}
 C_1:&\quad x_1+x_4\le3, &
 C_2:&\quad x_1+x_5\le3,\notag\\
 C_3:&\quad x_2+x_6\le3, &
 C_4:&\quad x_2+x_7\le3,\notag\\
 C_5:&\quad x_3+x_8\le3, &
 C_6:&\quad x_3+x_9\le3,\notag\\
 C_7:&\quad x_4+x_5+x_6\le3, &
 C_8:&\quad x_4+x_7+x_8\le3,\notag\\
 C_9:&\quad x_5+x_8+x_9\le3, &
 C_{10}:&\quad x_6+x_7+x_9\le3,
 \label{lred:eq:capacity-system}
\end{align}
where the group subscript $i$ is suppressed.

\begin{lemma}[Four linear-combination certificates]\label{lred:lem:four-certificates}
Every nonnegative solution of \eqref{lred:eq:capacity-system} satisfies
\begin{align}
 f_i&\le12,\label{lred:eq:fi12}\\
 f_i-x_{1,i}&\le10,\label{lred:eq:fi-x1}\\
 f_i-x_{2,i}&\le10,\label{lred:eq:fi-x2}\\
 f_i-x_{3,i}&\le10.\label{lred:eq:fi-x3}
\end{align}
\end{lemma}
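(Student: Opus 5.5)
The plan is to prove each of the four inequalities as a nonnegative linear combination of the capacity constraints $C_1,\ldots,C_{10}$ in \eqref{lred:eq:capacity-system}. Since all $x_{r,i}\ge0$, it is enough to find multipliers $\lambda_\ell\ge0$ with $\sum_\ell\lambda_\ell C_\ell$ having coefficient at least the target coefficient on every variable. Where the right-hand side comes out fractional, I will use integrality of the $x_{r,i}$ to round down. This is the same flavour of incidence counting as the zero-deficit solution in Part~I, but with inequalities, so no uniqueness argument is needed.

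For \eqref{lred:eq:fi12}, take $\tfrac12(C_1+\cdots+C_6)+\tfrac14(C_7+C_8+C_9+C_{10})$. The first six constraints together contain each of $x_1,x_2,x_3$ twice and each of $x_4,\ldots,x_9$ once. The last four contain each of $x_4,\ldots,x_9$ exactly twice, since $x_4\in C_7,C_8$, $x_5\in C_7,C_9$, $x_6\in C_7,C_{10}$, $x_7\in C_8,C_{10}$, $x_8\in C_8,C_9$ and $x_9\in C_9,C_{10}$. The combination therefore gives coefficient $1$ on every $x_r$, with right-hand side $\tfrac12\cdot18+\tfrac14\cdot12=12$. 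This is exactly $f_i\le12$.

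For \eqref{lred:eq:fi-x1}, I drop $C_1,C_2$, the only constraints containing $x_1$, and start from $\tfrac12(C_3+C_4+C_5+C_6)$. This gives $x_2+x_3+\tfrac12(x_6+x_7+x_8+x_9)\le6$. I then add $\tfrac12(C_7+C_8+C_9)$, which contributes coefficient $1$ to $x_4$, $x_5$ and $x_8$, and $\tfrac12$ to $x_6,x_7,x_9$. The total dominates $x_2+\cdots+x_9$ with right-hand side $6+\tfrac92=10.5$. Because $f_i-x_{1,i}$ is an integer, this yields $f_i-x_{1,i}\le10$. I would first search briefly for an exact combination with right-hand side $10$, but the rounding argument suffices.

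For \eqref{lred:eq:fi-x2} and \eqref{lred:eq:fi-x3}, I plan to use the symmetry of the gadget rather than redo the search. The permutation of roles $Q_1\mapsto Q_3\mapsto Q_5\mapsto Q_1$, $Q_2\mapsto Q_4\mapsto Q_6\mapsto Q_2$, $Q_8\mapsto Q_{10}\mapsto Q_9\mapsto Q_8$, $Q_7$ fixed, should map $C_1,\ldots,C_{10}$ to the same system while cycling $x_1\to x_2\to x_3$. I will verify this against \cref{lred:tab:role-complements}. If the check fails, I will fall back to writing the analogous combinations directly: use $C_1,C_2,C_5,C_6$ with weight $\tfrac12$ plus three of $C_7,\ldots,C_{10}$ with weight $\tfrac12$, chosen to cover $x_6,x_7$ fully. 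Again integrality will round $10.5$ down to $10$.

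I expect finding the multipliers for the three ``$-x_r$'' inequalities to be the main obstacle, since there are fewer constraints available once $x_r$ is excluded. That step is settled by the explicit combination and the rounding above.
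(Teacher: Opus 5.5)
Your method, dominating each target inequality by a nonnegative combination of $C_1,\ldots,C_{10}$, is the paper's method, and your argument goes through. It differs from the paper's proof in strength, however, and it contains one false (though hedged) claim.

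\textbf{Strength of the certificates.} For $f_i\le12$, your certificate $\tfrac12(C_1+\cdots+C_6)+\tfrac14(C_7+\cdots+C_{10})$ is exact, and it is more symmetric than the paper's $\tfrac13(2,1,1,2,2,1,1,0,1,1)$. For $f_i-x_{r,i}\le10$, the paper gives exact combinations with right-hand side $10$; for $r=1$ it uses $\tfrac13(0,0,1,2,1,2,2,1,1,0)$. Its lemma therefore holds for every real nonnegative solution of \eqref{lred:eq:capacity-system}.

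Your half-weight combinations only reach $10.5$, and you close the remaining half unit by integrality. That is legitimate here: the $x_{r,i}$ count filler hyperedges, and \cref{lred:cor:one-port-deficit} is only applied to integers. But it proves the lemma only for integral solutions. If the lemma is read literally as a statement about the LP relaxation, you would still need exact multipliers such as the paper's.

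\textbf{The symmetry is wrong.} Your proposed permutation fixes $Q_7$ and sends $F_5=(Q_2,Q_7,Q_9)$ to $(Q_4,Q_7,Q_8)$, which is not a permitted pattern. The obstruction is structural:
\begin{itemize}
\item $Q_7$ is the hub shared by $F_4,F_5$, the companions of $Q_1,Q_2$;
\item the corresponding shared hubs for the $F_2$ and $F_3$ pairs are $Q_{10}$ (in $F_6,F_7$) and $Q_9$ (in $F_8,F_9$).
\end{itemize}
Any symmetry cycling $x_1\to x_2\to x_3$ must therefore cycle $Q_7\mapsto Q_{10}\mapsto Q_9$. A valid one is
\[
Q_1\mapsto Q_4\mapsto Q_5\mapsto Q_1,\qquad Q_2\mapsto Q_3\mapsto Q_6\mapsto Q_2,\qquad Q_7\mapsto Q_{10}\mapsto Q_9\mapsto Q_7,
\]
with $Q_8$ fixed. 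It maps the paper's three multiplier rows onto one another.

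\textbf{Your fallback works.} The combination $\tfrac12(C_1+C_2+C_5+C_6+C_7+C_8+C_{10})$ dominates $f_i-x_{2,i}$. The combination $\tfrac12(C_1+C_2+C_3+C_4+C_8+C_9+C_{10})$ dominates $f_i-x_{3,i}$. Each has right-hand side $10.5$, so after rounding your proof is complete.
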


\begin{proof}
Multiply $C_1,\ldots,C_{10}$ by the nonnegative coefficient vectors in \cref{lred:tab:multiplier-certificates} and sum.  The table also records the resulting coefficient of every $x_j$ and the resulting right-hand side.  Thus every displayed inequality is a direct nonnegative linear combination of valid capacity inequalities.

\begin{table}[H]
\centering
\caption{Exact multiplier certificates for the local filler inequalities.}
\label{lred:tab:multiplier-certificates}
\scriptsize
\renewcommand{\arraystretch}{1.16}
\begin{tabular}{@{}l l l r@{}}
\toprule
Derived inequality & Multipliers on $(C_1,\ldots,C_{10})$ & Resulting coefficients on $(x_1,\ldots,x_9)$ & RHS\\
\midrule
$f_i\le12$
& $\frac13(2,1,1,2,2,1,1,0,1,1)$
& $(1,1,1,1,1,1,1,1,1)$ & 12\\
$f_i-x_1\le10$
& $\frac13(0,0,1,2,1,2,2,1,1,0)$
& $(0,1,1,1,1,1,1,1,1)$ & 10\\
$f_i-x_2\le10$
& $\frac13(1,2,0,0,2,1,1,1,0,2)$
& $(1,0,1,1,1,1,1,1,1)$ & 10\\
$f_i-x_3\le10$
& $\frac13(2,1,2,1,0,0,0,1,2,1)$
& $(1,1,0,1,1,1,1,1,1)$ & 10\\
\bottomrule
\end{tabular}
\end{table}

\paragraph{How to read \cref{lred:tab:multiplier-certificates}.}
Each row is a complete, checkable proof certificate.  Multiply the ten capacity inequalities by the nonnegative multipliers in the second column and add them.  The third column records the resulting coefficient of every $x_j$, and the last column records the resulting right-hand side.  No optimization result is being quoted: verifying the displayed products directly proves the four inequalities used by the decoder.
\end{proof}

\begin{corollary}[One-port deficit inequality]\label{lred:cor:one-port-deficit}
For each $r\in\{1,2,3\}$,
\begin{equation}\label{lred:eq:key-local}
  \boxed{\pos{2-x_{r,i}}\le12-f_i.}
\end{equation}
\end{corollary}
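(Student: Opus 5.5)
The plan is to derive \eqref{lred:eq:key-local} from the four certificates of \cref{lred:lem:four-certificates} by splitting on the sign of $2-x_{r,i}$. Fix $r\in\{1,2,3\}$ and suppress the group index.

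\emph{Case $x_r\ge2$.} The left side of \eqref{lred:eq:key-local} is $\pos{2-x_r}=0$. By \eqref{lred:eq:fi12} we have $f_i\le12$, so the right side $12-f_i$ is nonnegative. The inequality therefore holds trivially.

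\emph{Case $x_r\le1$.} The positive part is $2-x_r$, so the claim becomes $2-x_r\le 12-f_i$. This is equivalent to $f_i-x_r\le 10$, which is exactly the appropriate inequality among \eqref{lred:eq:fi-x1}--\eqref{lred:eq:fi-x3}. No further computation is required.

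No step here is a real obstacle: the corollary is a two-case reading of the lemma. The only care needed is to use both kinds of certificate. The bound $f_i\le12$ is what makes the right-hand side nonnegative when the positive part vanishes. The bound $f_i-x_r\le10$ supplies the sharper estimate when $x_r$ falls below its canonical value $2$. In fact, since \eqref{lred:eq:fi-x1}--\eqref{lred:eq:fi-x3} give $2-x_r\le 12-f_i$ for every nonnegative solution, one can also write \eqref{lred:eq:key-local} in one line as the maximum of the two valid inequalities $0\le 12-f_i$ and $2-x_r\le12-f_i$.
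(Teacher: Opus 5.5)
Your proof is correct and follows essentially the same route as the paper. In both, \eqref{lred:eq:fi12} makes $12-f_i$ nonnegative, which settles the case $x_{r,i}\ge2$, and the matching certificate among \eqref{lred:eq:fi-x1}--\eqref{lred:eq:fi-x3} settles the case $x_{r,i}<2$. Your closing remark, that the claim is just the maximum of the two valid bounds $0\le12-f_i$ and $2-x_{r,i}\le12-f_i$, is a tidy one-line restatement of that argument and does not need integrality of $x_{r,i}$.
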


\begin{proof}
By \eqref{lred:eq:fi12}, the right-hand side is nonnegative.  If $x_{r,i}\ge2$, the left-hand side is zero.  If $x_{r,i}<2$, the appropriate inequality among \eqref{lred:eq:fi-x1}--\eqref{lred:eq:fi-x3} gives
\[
  12-f_i\ge2-x_{r,i}.
\]
\end{proof}

\paragraph{Intuition.}
The canonical local scaffold has $f_i=12$ and $x_{1,i}=x_{2,i}=x_{3,i}=2$.  Reducing $x_{r,i}$ below two frees capacity at port $r$.  Inequality \eqref{lred:eq:key-local} says that each unit of freed capacity costs at least one missing filler triple.  It is an individual-port statement.  One-live separation is what permits the proof to use only one such statement in each group.

\section{Step 7: exact affine behavior of the symmetric construction}\label{lred:sec:affine-symmetric}

Let $Y$ be any symmetric solution and forget its copy labels.  Let
\[
  v=\val_{\mathrm{SN}}(Y)=f+h,
\]
where $f=\sum_i f_i$ is the number of filler hyperedges and $h$ is the number of main hyperedges.

For every nonbarrier weight $\alpha$ in the separated first class, let $r_X(\alpha)$ be its source multiplicity and let $\mu_X(\alpha)$ be the number of selected main hyperedges whose $P_1$ incidence represents $\alpha$.  Define $r_Y,\mu_Y$ and $r_Z,\mu_Z$ analogously.

Let $G_X$ be the groups whose only potentially live coordinate is the first coordinate and whose first value is nonbarrier.  Define $G_Y$ and $G_Z$ analogously.  These three group sets are pairwise disjoint.  Groups created from padding items may belong to none of them.

\begin{lemma}[Overuse is paid by missing fillers]\label{lred:lem:overuse-bound}
Define
\begin{align*}
 E_X&=\sum_\alpha\pos{\mu_X(\alpha)-r_X(\alpha)},\\
 E_Y&=\sum_\alpha\pos{\mu_Y(\alpha)-r_Y(\alpha)},\\
 E_Z&=\sum_\alpha\pos{\mu_Z(\alpha)-r_Z(\alpha)}.
\end{align*}
Then
\begin{equation}\label{lred:eq:total-overuse-new}
  E_X+E_Y+E_Z\le12n-f.
\end{equation}
\end{lemma}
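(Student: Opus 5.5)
Fix a role, say the first ($X$, ports $P_1$), and a nonbarrier value $\alpha$. Every $P_1$-port of value $\alpha$ lies in a group of $G_X$, because groups outside $G_X$ carry $\barrier$ or a nonbarrier value in another coordinate in their first coordinate. A port of value $\barrier$ can never enter a legal main triple. It also cannot enter a legal $F_1$ filler whose private group has $a_i\ne\barrier$, since by \cref{lred:lem:sym-structure}(c) the port value must equal $a_i$.

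I would first do capacity counting at the value-$\alpha$ ports. As in Part~II, these ports provide $3r_X(\alpha)$ incidences in total. Main hyperedges representing $\alpha$ use $\mu_X(\alpha)$ of them. By \cref{lred:lem:sym-structure}(c), every $F_1$ filler localized at a group $i$ with $a_i=\alpha$ uses a value-$\alpha$ $P_1$ port, and these groups lie in $G_X$. This gives
\[
 \mu_X(\alpha)+\sum_{i:\,a_i=\alpha}x_{1,i}\le 3r_X(\alpha).
\]
Subtracting $r_X(\alpha)=\sum_{i:a_i=\alpha}1$ and taking positive parts gives $\pos{\mu_X(\alpha)-r_X(\alpha)}\le\sum_{i:a_i=\alpha}\pos{2-x_{1,i}}$. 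Summing over $\alpha$ yields $E_X\le\sum_{i\in G_X}\pos{2-x_{1,i}}$, and symmetrically $E_Y\le\sum_{i\in G_Y}\pos{2-x_{2,i}}$ and $E_Z\le\sum_{i\in G_Z}\pos{2-x_{3,i}}$.

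Next I would apply \cref{lred:cor:one-port-deficit} to each group: $\pos{2-x_{r,i}}\le 12-f_i$ for the single $r$ attached to that group's set. The sets $G_X,G_Y,G_Z$ are pairwise disjoint by one-live separation (\cref{lred:lem:separation}), so each group is charged at most once. Hence
\[
 E_X+E_Y+E_Z\le\sum_{i\in G_X\cup G_Y\cup G_Z}(12-f_i)\le\sum_{i=1}^n(12-f_i)=12n-f,
\]
where the last inequality uses $12-f_i\ge0$ for every group by \eqref{lred:eq:fi12}.

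The main obstacle is the first step: checking that the sum over $i$ with $a_i=\alpha$ ranges exactly over the groups owning value-$\alpha$ ports, and that no filler localized elsewhere consumes their capacity beyond what is counted. This is the point where the argument must exclude barrier-valued ports and fillers localized at groups outside $G_X$. The argument needs only the direction of the inequality: extra filler uses of those ports only decrease the left side's room, so the bound still holds. I would verify that dropping such terms keeps the inequality valid rather than requiring equality.
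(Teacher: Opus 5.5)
Your proposal is correct and follows essentially the same route as the paper. That route has four steps: the per-value capacity inequality $\mu_X(\alpha)+\sum_{i:a_i=\alpha}x_{1,i}\le 3r_X(\alpha)$ at the $P_1$ ports, subtraction of $r_X(\alpha)$ with subadditivity of positive parts, \cref{lred:cor:one-port-deficit} applied once per group because $G_X,G_Y,G_Z$ are disjoint, and finally $12-f_i\ge 0$ to extend the sum to all $n$ groups. The capacity inequality holds because each counted hyperedge has exactly one $P_1$ incidence and distinct selected hyperedges occupy distinct labeled copies; the paper states this explicitly, whereas you leave it implicit. Your justification that value-$\alpha$ ports lie in $G_X$ is garbled, but the intended point is right: every group outside $G_X$ has $a_i=\barrier$.
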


\begin{proof}
Fix a first-class value $\alpha$.  By the value-locality part of
\cref{lred:lem:sym-structure}, every $F_1$ filler localized at a group in
$G_X$ with first value $\alpha$ uses a $P_1$ port of value $\alpha$, and
every main hyperedge counted by $\mu_X(\alpha)$ does the same through its
single $P_1$ incidence.  Each of these hyperedges contains exactly one
port occurrence, and distinct selected hyperedges consume distinct
physical port incidences: hyperedges in a partial solution are pairwise
item-disjoint, so even when several of them meet the same underlying port
occurrence they occupy different labeled copies of it.  The
$\mu_X(\alpha)+\sum_{i\in G_X:a_i=\alpha}x_{1,i}$ consumed incidences are
therefore pairwise distinct, while the $r_X(\alpha)$ ports of value
$\alpha$ supply at most $3r_X(\alpha)$ incidences in total.  Hence
\[
  \mu_X(\alpha)+\sum_{i\in G_X:a_i=\alpha}x_{1,i}
  \le3r_X(\alpha).
\]
Subtracting $r_X(\alpha)$ gives
\[
  \mu_X(\alpha)-r_X(\alpha)
  \le
  \sum_{i\in G_X:a_i=\alpha}(2-x_{1,i}).
\]
Taking positive parts and using the elementary inequality
$\pos{\sum_i z_i}\le\sum_i\pos{z_i}$ therefore gives
\[
  \pos{\mu_X(\alpha)-r_X(\alpha)}
  \le
  \pos{\sum_{i\in G_X:a_i=\alpha}(2-x_{1,i})}
  \le
  \sum_{i\in G_X:a_i=\alpha}\pos{2-x_{1,i}}.
\]
Summing over $\alpha$ and applying \cref{lred:cor:one-port-deficit},
\[
 E_X\le\sum_{i\in G_X}\pos{2-x_{1,i}}
 \le\sum_{i\in G_X}(12-f_i).
\]
The same argument gives
\[
 E_Y\le\sum_{i\in G_Y}(12-f_i),
 \qquad
 E_Z\le\sum_{i\in G_Z}(12-f_i).
\]
The group sets are disjoint, and every term $12-f_i$ is nonnegative by \eqref{lred:eq:fi12}.  Therefore
\[
 E_X+E_Y+E_Z
 \le\sum_{i\in G_X\cup G_Y\cup G_Z}(12-f_i)
 \le\sum_{i=1}^{n}(12-f_i)=12n-f.
\]
\end{proof}

\paragraph{Where the Part~II obstruction disappears.}
In the more general grouped construction of Part~II, the same group could appear in all three sums and its quantity $12-f_i$ could be charged two or three times.  Here a group has at most one useful port, so the sets $G_X,G_Y,G_Z$ are disjoint.  This is the only point in the proof where one-live separation is used, but it is decisive.

\begin{lemma}[Polynomial overuse repair]\label{lred:lem:repair}
From the $h$ selected main hyperedges one can delete at most $E_X+E_Y+E_Z$ hyperedges and assign the survivors injectively to labeled separated numerical items.  The survivors form a feasible separated \MaxNthreeDM matching.
\end{lemma}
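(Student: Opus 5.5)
The repair is a greedy deletion procedure followed by a value-by-value assignment, mirroring the repair step in the proof of \cref{apx:lem:defect-stability}.

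First I classify the selected main hyperedges. By \cref{lred:lem:sym-structure}(a), each one has the form $(P_{1,i},P_{2,j},P_{3,k})$ with $a_i+b_j+c_k=t=T_N$. By \cref{lred:lem:separation}, such a triple cannot contain a barrier value. So each of its three incidences represents a nonbarrier value: $\alpha$ in the first class, $\beta$ in the second, $\gamma$ in the third. For each main hyperedge I record this triple of values; it is counted in $\mu_X(\alpha)$, $\mu_Y(\beta)$ and $\mu_Z(\gamma)$.

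Next comes the deletion loop. While some role-value pair has $\mu_R(\alpha)>r_R(\alpha)$ for $R\in\{X,Y,Z\}$, I delete one surviving main hyperedge containing that role-value pair.

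\begin{itemize}
\item A deletion only decreases the three counts $\mu$ it touches. Therefore no positive part $\pos{\mu_R(\cdot)-r_R(\cdot)}$ increases.
\item The chosen overused pair drops by exactly one unit of overuse.
\end{itemize}

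Hence the total overuse $E_X+E_Y+E_Z$ falls by at least one per deletion, and the loop stops after at most that many deletions. Each iteration is a linear scan over at most $h\le 13n$ hyperedges, so the loop runs in polynomial time.

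Then comes the injective assignment. After the loop, $\mu_R(\alpha)\le r_R(\alpha)$ for every role $R$ and value $\alpha$. So for each role and value I can map the surviving incidences of that value bijectively into distinct labeled nonbarrier items of the padded compiler class carrying weight $\alpha$. Under the identification of \cref{lred:lem:separation}, these are the separated items $x_i$, $y_i$, $z_i$, one per live group. Because the assignments for different values and different roles use disjoint item sets, the resulting triples are pairwise item-disjoint. Each triple is class-transversal and has weight sum $\alpha+\beta+\gamma=T_N$. Therefore the triples form a feasible separated \MaxNthreeDM matching, and by \cref{lred:lem:separation} also a feasible padded compiler matching.

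The main subtlety is the assignment step, not the counting. The main hyperedge's own port indices $i,j,k$ may not be usable directly. Port incidences of a given group may be shared with fillers, and equal values may be spread across several groups. This is why the matching is rebuilt purely from value multiplicities rather than from the ports themselves. I also need to check that barrier values never enter a main hyperedge, which positivity of weights guarantees. With that settled, the count $h-(E_X+E_Y+E_Z)$ of survivors feeds directly into \cref{lred:lem:overuse-bound}.
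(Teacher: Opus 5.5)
Your proposal is correct and matches the paper's proof essentially step for step. The paper uses the same greedy deletion of a main hyperedge carrying an overused role-value pair: no positive overuse increases and the chosen one drops by one, so at most $E_X+E_Y+E_Z$ deletions occur. It then assigns the survivors value by value injectively to labeled occurrences, and uses positivity of weights to keep the barrier value $T_N+1$ out of every main hyperedge. One small wording fix: after the loop you only know $\mu_R(\alpha)\le r_R(\alpha)$, so the per-value map is an injection into the labeled items of weight $\alpha$, not a bijection.
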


\begin{proof}
While some role-value pair is used more often than its source multiplicity, delete one main hyperedge containing that role-value pair.  Deletion cannot increase any positive overuse and decreases the chosen positive overuse by one.  Thus no more than the initial total $E_X+E_Y+E_Z$ deletions occur.

After termination, every value is requested no more often in each role than it occurs in the corresponding source class.  Assign equal-valued main incidences injectively to the labeled source occurrences.  Every surviving main hyperedge satisfies the numerical target equation by \cref{lred:lem:sym-structure}(a).  No main hyperedge can contain the barrier value $T_N+1$, since all source values are positive and the target is $T_N$.  Therefore the assigned triples use only genuine padded-compiler items and form a feasible matching.
\end{proof}

\begin{theorem}[Exact symmetric affine identity]\label{lred:thm:symmetric-affine}
For the separated numerical instance with $n=3m$ groups,
\begin{equation}\label{lred:eq:sym-affine}
  \boxed{\OPT_{\mathrm{SN}}=12n+\OPT_{\mathrm N}=36m+\OPT_{\mathrm N}.}
\end{equation}
Moreover, every symmetric solution of value $v$ decodes in polynomial time to a numerical matching of value at least
\begin{equation}\label{lred:eq:sym-decoder}
  \max\{0,v-12n\}=\max\{0,v-36m\}.
\end{equation}
\end{theorem}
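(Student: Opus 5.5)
The proof has two directions: a lower bound on $\OPT_{\mathrm{SN}}$ from the forward construction, and a decoder bound that also gives the matching upper bound. The lower bound is immediate from \cref{lred:lem:sym-forward}. Applying it to an optimal separated numerical matching of size $\OPT_{\mathrm N}$ gives a symmetric matching of size $12n+\OPT_{\mathrm N}$, so $\OPT_{\mathrm{SN}}\ge 12n+\OPT_{\mathrm N}$.

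For the decoder, start with an arbitrary symmetric solution $Y$ of value $v$ and forget its copy labels using \cref{lred:lem:partial-recolor}. This gives a hypergraph with $f$ filler hyperedges and $h$ main hyperedges, where $v=f+h$. Then apply \cref{lred:lem:repair}. It deletes at most $E_X+E_Y+E_Z$ main hyperedges, and the survivors form a feasible separated numerical matching of size at least
\[
h-(E_X+E_Y+E_Z)\ge h-(12n-f)=v-12n,
\]
where the inequality is \cref{lred:lem:overuse-bound}. If $v-12n\le 0$, output the empty matching; this gives the positive-part truncation in \eqref{lred:eq:sym-decoder}. Every stage runs in polynomial time: classifying triples by \cref{lred:lem:sym-structure}, counting the local quantities $x_{r,i}$ and $f_i$, performing the greedy deletions, and assigning equal-valued incidences injectively to labeled items.

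For the upper bound, take $Y$ optimal, so $v=\OPT_{\mathrm{SN}}$. The decoded numerical matching has size at least $\OPT_{\mathrm{SN}}-12n$, and it cannot exceed $\OPT_{\mathrm N}$. Hence $\OPT_{\mathrm{SN}}\le 12n+\OPT_{\mathrm N}$, and with the lower bound this gives \eqref{lred:eq:sym-affine}. Since $n=3m$, the offset $12n$ equals $36m$.

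Almost all the difficulty sits in \cref{lred:lem:overuse-bound}, which is already proved. The subtle point there is that the bound $E_X+E_Y+E_Z\le 12n-f$ charges every unit of overuse against the filler deficit, and one-live separation makes the sets $G_X,G_Y,G_Z$ disjoint, so no group's deficit $12-f_i$ is counted twice. One detail I would check explicitly is that a port of value $\alpha$ contributing to $\mu_X(\alpha)$ can only lie in a group of $G_X$. This holds because barrier values never appear in a legal main triple, so the capacity count in that lemma is complete. With that confirmed, the theorem follows by combining the three lemmas, with no further obstacle.
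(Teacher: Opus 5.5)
Your proposal is correct and follows the paper's proof essentially line for line. The lower bound $\OPT_{\mathrm{SN}}\ge 12n+\OPT_{\mathrm N}$ comes from the forward symmetric extension, and the decoder bound $h-(E_X+E_Y+E_Z)\ge h-(12n-f)=v-12n$ (truncated at zero), which follows from the overuse and repair lemmas, gives the matching upper bound when applied to an optimal symmetric solution. One small correction to your side check: a value-$\alpha$ port lies in $G_X$ simply because $\alpha$ is nonbarrier and only $G_X$ groups carry nonbarrier first values, while the fact that barriers never appear in legal main triples is what ensures every main $P_1$ incidence is counted in some $\mu_X(\alpha)$.
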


\begin{proof}
By \cref{lred:lem:overuse-bound,lred:lem:repair}, the number of surviving main hyperedges is at least
\[
  h-(E_X+E_Y+E_Z)
  \ge h-(12n-f)=v-12n.
\]
If this bound is negative, return the empty matching.  This proves \eqref{lred:eq:sym-decoder}, and it implies
\[
  v\le12n+\OPT_{\mathrm N}
\]
for every symmetric solution.  Conversely, \cref{lred:lem:sym-forward} extends every numerical matching of size $s$ to a symmetric matching of size $12n+s$.  Applying this to an optimum numerical matching proves equality in \eqref{lred:eq:sym-affine}.
\end{proof}

Combining \cref{lred:thm:compiler-identity,lred:thm:symmetric-affine} gives the exact destination optimum
\begin{equation}\label{lred:eq:full-affine}
  \boxed{\OPT_{\mathrm{SN}}=36m+r+k^\star.}
\end{equation}

\section{Step 8: the L-reduction, unary bound, and consequences}\label{lred:sec:final-lred}

Define the instance map $f$ by the complete pipeline:
\begin{align*}
 \MaxThreeDMThree
 &\longrightarrow \text{pair compiler}
 \longrightarrow \text{padding}\\
 &\longrightarrow \text{one-live separation}
 \longrightarrow \text{symmetric 13-role construction}.
\end{align*}
For a nonempty source, define
\[
  \Gamma=36m+r.
\]
Then \eqref{lred:eq:full-affine} is
\[
  \OPT_{\mathrm{SN}}=\Gamma+k^\star.
\]

\subsection{First L-inequality}\label{lred:subsec:first-l}

Using $m\le3p$, $r\le p$, and $p\le7k^\star$,
\begin{align}
 \OPT_{\mathrm{SN}}
 &=36m+r+k^\star\notag\\
 &\le108p+p+k^\star\notag\\
 &=109p+k^\star\notag\\
 &\le109(7k^\star)+k^\star\notag\\
 &=764k^\star.
 \label{lred:eq:alpha764}
\end{align}
Thus \eqref{lred:eq:lred1} holds with $\alpha=764$.

\subsection{Second L-inequality and the combined decoder}\label{lred:subsec:second-l}

Given a symmetric solution $Y$ of value $v$, first apply \cref{lred:thm:symmetric-affine} to obtain a padded compiler matching of size
\[
  s\ge\max\{0,v-36m\}.
\]
Then apply the complete-pair decoder.  The final source matching has value $k_Y$ satisfying
\begin{equation}\label{lred:eq:combined-decoder}
  k_Y\ge\max\{0,v-36m-r\}=\max\{0,v-\Gamma\}.
\end{equation}

If $v\ge\Gamma$, then
\begin{align*}
 k^\star-k_Y
 &\le k^\star-(v-\Gamma)\\
 &=\Gamma+k^\star-v\\
 &=\OPT_{\mathrm{SN}}-v.
\end{align*}
If $v<\Gamma$, then $k_Y\ge0$ and
\[
  k^\star-k_Y\le k^\star<\Gamma+k^\star-v
  =\OPT_{\mathrm{SN}}-v.
\]
Therefore
\begin{equation}\label{lred:eq:beta1}
  \boxed{k^\star-k_Y\le\OPT_{\mathrm{SN}}-v,}
\end{equation}
so the second L-inequality holds with $\beta=1$.

\paragraph{The zero-optimum source case.}
If $E=\varnothing$, then $k^\star=0$.  Map the instance to a fixed symmetric instance with one item of weight $1$ in each class and target $4$.  Its optimum is zero.  The only feasible destination solution is empty, and the decoder returns the empty source matching.  Both L-inequalities hold.  This separate branch is necessary because an affine construction with a positive fixed scaffold could not satisfy \eqref{lred:eq:lred1} when the source optimum is zero.

\subsection{Polynomial running time and unary size}\label{lred:subsec:unary-size}

The pair compiler uses $O(p)$ items and six coordinates of magnitude $O(p^2)$.  Its scalar weights and target are $O(p^{12})$.  Padding and separation create $n=3m=O(p)$ numerical groups and use only the barrier value $T_N+1=O(p^{12})$.

In the symmetric construction, every source-coordinate expression is a fixed linear combination of $a_i,b_i,c_i,t$, so its magnitude is $O(p^{12})$.  The group and group-square coordinates are $O(p)$ and $O(p^2)$.  Hence the symmetric no-carry base can be chosen as $O(p^{12})$.  Four digits then give final item weights and target of size
\[
  O(p^{48}).
\]
There are $13n=O(p)$ items in each class.  The unary output length is therefore polynomial in $p$, and all arithmetic operations use polynomially many bits.  Both the construction and the two-stage decoder are polynomial-time algorithms.

\begin{proof}[Proof of \cref{lred:thm:main}]
The instance map and combined decoder are polynomial by the preceding construction.  Equation \eqref{lred:eq:alpha764} proves the first L-inequality with $\alpha=764$, and \eqref{lred:eq:beta1} proves the second with $\beta=1$.  The empty source is handled separately.  The numerical size analysis proves unary-polynomiality.  Thus the displayed maps form the claimed L-reduction.
\end{proof}

\begin{corollary}[Explicit PTAS transfer]\label{lred:cor:ptas-transfer}
If a polynomial-time algorithm returns, for every $\delta>0$, a symmetric solution of value at least $(1-\delta)\OPT_{\mathrm{SN}}$, then the decoder returns a source solution of value at least
\[
  (1-764\delta)k^\star.
\]
Consequently, choosing $\delta=\varepsilon/764$ gives a PTAS-reduction.
\end{corollary}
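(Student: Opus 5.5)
The plan is a direct instantiation of the generic PTAS-transfer calculation from the front material, applied to the L-reduction of \cref{lred:thm:main}. Let $H$ be a \MaxThreeDMThree instance with optimum $k^\star$, and let $Y$ be the symmetric solution returned on $f(H)$. Assume $\val_{\mathrm{SN}}(Y)=v\ge(1-\delta)\OPT_{\mathrm{SN}}$. Apply the combined decoder of \cref{lred:subsec:second-l} to obtain a source matching of value $k_Y$.

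The first step invokes the second L-inequality \eqref{lred:eq:beta1}, which holds for every feasible destination solution. It gives $k^\star-k_Y\le\OPT_{\mathrm{SN}}-v\le\delta\,\OPT_{\mathrm{SN}}$. The second step applies the first L-inequality \eqref{lred:eq:alpha764}, namely $\OPT_{\mathrm{SN}}\le764k^\star$. Together these yield $k^\star-k_Y\le764\delta k^\star$, that is, $k_Y\ge(1-764\delta)k^\star$.

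The empty-edge branch needs a separate check. There $k^\star=0$, the destination is the fixed instance with optimum zero, and the decoder returns the empty matching. The claimed bound then holds trivially.

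For the PTAS-reduction, I would note that $f$ and $g$ are independent of $\delta$ and polynomial by \cref{lred:subsec:unary-size}. Given a target accuracy $\varepsilon$, run the destination algorithm with $\delta=\varepsilon/764$, a constant for fixed $\varepsilon$, so the running time stays polynomial. The output is then $(1-\varepsilon)$-approximate for the source.

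I expect no real obstacle, since everything is inherited from \cref{lred:thm:main}. The only point needing care is to confirm that \eqref{lred:eq:beta1} was proved for every feasible $Y$ and not only for optimal ones. The case analysis $v\ge\Gamma$ versus $v<\Gamma$ already covers this.
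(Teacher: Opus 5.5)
Your proposal is correct and matches the paper's proof: chain the second L-inequality $k^\star-k_Y\le\OPT_{\mathrm{SN}}-v\le\delta\,\OPT_{\mathrm{SN}}$ with the first, $\OPT_{\mathrm{SN}}\le 764k^\star$. Your separate check of the empty-edge branch and your remark that $f$ and $g$ do not depend on $\delta$ are harmless additions that the paper leaves implicit.
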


\begin{proof}
By the second L-inequality and then the first,
\begin{align*}
 k^\star-k_Y
 &\le \OPT_{\mathrm{SN}}-v\\
 &\le\delta\OPT_{\mathrm{SN}}\\
 &\le764\delta k^\star.
\end{align*}
\end{proof}

\begin{proposition}[APX membership]\label{lred:prop:apx-membership}
\MaxSNthreeDM has a polynomial-time $3$-approximation.
\end{proposition}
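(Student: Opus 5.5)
The plan is to reuse the maximal-matching argument already given for \cref{apx:prop:apx-membership} in Part~II; the statement is the same, so the same algorithm and charging argument apply verbatim. Let the instance have $M$ items in each of the three labeled classes $S^X,S^Y,S^Z$ and target $K$. First, enumerate all class-transversal triples $(x,y,z)\in S^X\times S^Y\times S^Z$ and keep exactly those whose weights sum to $K$. There are at most $M^3$ such triples, and each check is one addition of input integers, so this step is polynomial in the input length under either binary or unary encoding.

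Next, scan the legal triples in any order and add a triple to a family $\mathcal M_G$ whenever it is item-disjoint from the triples already chosen. The result is a maximal family of pairwise item-disjoint legal triples, and it is feasible by construction.

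For the ratio, fix an optimum matching $\mathcal M^\star$. By maximality, every triple of $\mathcal M^\star$ shares at least one item with some triple of $\mathcal M_G$; otherwise it could have been added. Charge each optimum triple to one such greedy triple. A greedy triple contains three items, and since the triples of $\mathcal M^\star$ are pairwise disjoint, each of those items lies in at most one optimum triple. So each greedy triple receives at most three charges, which gives $\OPT_{\mathrm{SN}}=|\mathcal M^\star|\le 3|\mathcal M_G|$.

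Only one step needs care. The charging must go through items, meaning labeled occurrences, and not through numerical values. Repeated weights could otherwise suggest that many optimum triples ``share'' a value with a single greedy triple. Because disjointness is defined on occurrences, the bound of three per greedy triple holds regardless of repeated weights. The identical weight multisets of the three classes play no role in this argument.
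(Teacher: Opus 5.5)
Your proposal is correct and is essentially the paper's own proof: enumerate the at most $M^3$ legal class-transversal triples, greedily build a maximal item-disjoint family, and charge each optimum triple to a greedy triple it meets, so that each greedy triple receives at most three charges because the optimum triples are occurrence-disjoint. Your added remark that the charging runs through labeled occurrences rather than numerical values is accurate and consistent with the paper's argument.
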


\begin{proof}
Enumerate all class-transversal target-sum triples and greedily take any maximal item-disjoint family $M_G$.  Every optimum triple intersects some triple in $M_G$.  A chosen triple has three items, and an optimum matching contains at most one triple through each item, so at most three optimum triples can be charged to one greedy triple.  Hence $\OPT_{\mathrm{SN}}\le3|M_G|$.
\end{proof}

The sharper $(4/3+\varepsilon)$ guarantee discussed after
\cref{apx:prop:apx-membership} also applies, although the elementary
factor-$3$ argument is sufficient here.

\begin{corollary}[Standard APX-completeness]\label{lred:cor:apx-complete}
\MaxSNthreeDM is APX-hard under an explicit L-reduction and belongs to APX.  It is therefore APX-complete under this standard convention.  In particular, it has no PTAS unless $\mathrm P=\mathrm{NP}$.
\end{corollary}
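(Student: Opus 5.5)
This corollary is assembled from results already proved in this part. Nothing new needs to be constructed.

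\textbf{Hardness.} \Cref{lred:thm:main} supplies the polynomial-time maps $f$ and $g$ and the constants $\alpha=764$, $\beta=1$. By the first L-inequality \eqref{lred:eq:alpha764} and the second L-inequality \eqref{lred:eq:beta1}, these maps form an L-reduction from \MaxThreeDMThree to unary \MaxSNthreeDM. The source \MaxThreeDMThree is APX-hard \citep{Kann1991,ChlebikChlebikova2006}. L-reductions compose, and every L-reduction between maximization problems is a PTAS-reduction; \cref{lred:cor:ptas-transfer} gives this explicitly. Composing therefore transfers APX-hardness to \MaxSNthreeDM under the standard L-reduction convention.

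\textbf{Membership.} \Cref{lred:prop:apx-membership} gives a polynomial-time $3$-approximation, so \MaxSNthreeDM belongs to APX. Together with the hardness step, the problem is APX-complete.

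\textbf{No PTAS.} I would argue by contradiction, following \cref{lred:rem:source-no-ptas}. Suppose \MaxSNthreeDM has a PTAS. Given $\varepsilon>0$, run it with $\delta=\varepsilon/764$ on $f(H)$ and decode with $g$. By \cref{lred:cor:ptas-transfer}, the decoded source solution has value at least $(1-\varepsilon)\OPT_{3\mathrm{DM}}(H)$. This would be a PTAS for \MaxThreeDMThree, which does not exist unless $\mathrm P=\mathrm{NP}$ because the problem is MAX SNP-complete \citep{Kann1991}.

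The zero-optimum source is already covered by the separate branch in \cref{lred:subsec:second-l}, so this argument has no edge cases.

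The only point requiring care is the convention. APX-hardness here is meant with respect to L-reductions or PTAS-reductions from a known APX-hard problem, not AP-reductions from every APX problem. I would state that the corollary relies on the standard fact that \MaxThreeDMThree is APX-hard under that same convention.
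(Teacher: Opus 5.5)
Your proposal is correct and follows the same route as the paper. Hardness comes from the L-reduction of \cref{lred:thm:main} from the APX-hard source \MaxThreeDMThree, membership comes from the $3$-approximation of \cref{lred:prop:apx-membership}, and the no-PTAS conclusion comes from the PTAS transfer of \cref{lred:cor:ptas-transfer} combined with the known no-PTAS status of the bounded source, exactly as in \cref{lred:rem:source-no-ptas}.
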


\section{Fully worked  examples}\label{lred:sec:examples}

The two examples use the same vertex sets, the same number of source edges, and the same encoding parameters.  The first has a perfect matching of size two; the second has no matching of size two but still has a nonzero optimum.  They therefore form a genuinely nontrivial ``yes/no'' pair for the decision threshold two while illustrating the optimization values carried by the L-reduction.

The examples are intentionally complete rather than schematic.  Each begins with a source optimum proof, lists every item created by the exact pair compiler, audits every intended target-sum triple, displays all 21 one-live groups, and then follows the matching through the symmetric scaffold.  The final error ledger shows how the fixed offset cancels and why one unit of destination error pays for at most one unit of source error.  Readers meeting the construction for the first time may read the explanatory paragraphs and objective calculations first, then return to the large tables as an arithmetic audit.

\subsection{Numerical parameters common to both examples}\label{lred:subsec:example-common}

For the six-coordinate compiler, the indices are only $1$ and $2$.  Use shifts
\begin{equation}\label{lred:eq:example-compiler-shifts}
  (D_0,D_1,D_2,D_3,D_4,D_5)=(3,4,8,4,8,2)
\end{equation}
and base
\[
  L_N=37.
\]
The shifted target digits are
\[
  (9,12,24,12,24,6),
\]
so the scalar target is
\begin{align}
 T_N
 &=9+12(37)+24(37^2)+12(37^3)+24(37^4)+6(37^5)\notag\\
 &=\boxed{461{,}684{,}751}.
 \label{lred:eq:example-TN}
\end{align}
The barrier value is therefore
\[
  \barrier=T_N+1=461{,}684{,}752.
\]

\paragraph{Audit of the base.}
After shifting, the largest possible item digit in the six coordinates is respectively $4,6,12,6,12,4$.  The corresponding sums of three item digits are at most
\[
  12,18,36,18,36,12,
\]
all strictly below $37$.  Every target digit is also below $37$.  Thus the scalar calculations below are genuinely coordinatewise; no carry is possible.

After padding, both examples have compiler class size $m=7$.  One-live separation therefore creates
\[
  n=3m=21
\]
groups and a canonical symmetric filler scaffold of
\[
  12n=252
\]
triples.  Each symmetric output class has $13n=273$ item occurrences.

For both examples, an exact calculation over the 21 separated groups gives
\begin{align*}
 \min_{R,i}\eta_{R,i}&=-2{,}000{,}645{,}026,\\
 \max_{R,i}\eta_{R,i}&= 3{,}385{,}699{,}279.
\end{align*}
Use
\begin{equation}\label{lred:eq:example-sym-shifts}
  D_1=2{,}000{,}645{,}026,
  \qquad D_2=63,
  \qquad D_3=1{,}323.
\end{equation}
The largest shifted source digit is $5{,}386{,}344{,}305$, so the largest sum of three source digits is $16{,}159{,}032{,}915$.  A valid symmetric base is therefore
\begin{equation}\label{lred:eq:example-LS}
  L_S=16{,}159{,}032{,}916.
\end{equation}
The four target digits are
\[
  (300,7{,}386{,}989{,}331,189,3{,}969),
\]
and the final symmetric target is
\begin{equation}\label{lred:eq:example-KS}
  \boxed{K_S=16{,}746{,}621{,}154{,}576{,}131{,}373{,}554{,}069{,}341{,}030{,}904}.
\end{equation}
The large decimal representation is not a source of complexity: it contains only four base-$L_S$ digits and is polynomially bounded in the source size.

\subsection{A source-to-output "yes" instance}\label{lred:subsec:yes-example}

\subsubsection{Source instance and optimum}

Let
\[
 U=\{u_1,u_2\},\qquad
 V=\{v_1,v_2\},\qquad
 W=\{w_1,w_2\},
\]
and let
\begin{equation}\label{lred:eq:yes-edges}
 E_Y=\{(u_1,v_1,w_1),(u_2,v_2,w_2),(u_1,v_2,w_2)\}.
\end{equation}
The first two edges are disjoint, so
\[
 M_Y=\{(u_1,v_1,w_1),(u_2,v_2,w_2)\}
\]
is a matching of size two.  No matching can contain more than $|U|=2$ edges.  Therefore
\begin{equation}\label{lred:eq:yes-source-opt}
  k_Y^\star=2.
\end{equation}
The represented pair set is
\[
  \Pairset_Y=\{(1,1),(1,2),(2,2)\},
  \qquad r=3.
\]

\subsubsection{Every compiler item and scalar weight}

\Cref{lred:tab:yes-compiler-items} lists every nonpadding compiler item.  The fourth column gives the six shifted base-$37$ digits; the last column is the resulting scalar weight.

{\small
\begin{longtable}{@{}c c p{0.26\textwidth} p{0.22\textwidth} r@{}}
\caption{All compiler items for the ``yes'' instance.}\label{lred:tab:yes-compiler-items}\\
\toprule
Class & Item & Signed vector & Shifted digits & Scalar weight\\
\midrule
\endfirsthead
\toprule
Class & Item & Signed vector & Shifted digits & Scalar weight\\
\midrule
\endhead
$X_N$ & $A_{1}$ & $(0,-1,-1,0,0,0)$ & $(3,3,7,4,8,2)$ & 153,893,511 \\
$X_N$ & $A_{2}$ & $(0,-2,-4,0,0,0)$ & $(3,2,4,4,8,2)$ & 153,889,367 \\
$X_N$ & $L_{11}$ & $(1,1,1,1,1,0)$ & $(4,5,9,5,9,2)$ & 155,821,138 \\
$X_N$ & $L_{12}$ & $(1,1,1,2,4,0)$ & $(4,5,9,6,12,2)$ & 161,494,274 \\
$X_N$ & $L_{22}$ & $(1,2,4,2,4,0)$ & $(4,6,12,6,12,2)$ & 161,498,418 \\
$Y_N$ & $B_{1}$ & $(0,0,0,-1,-1,0)$ & $(3,4,8,3,7,2)$ & 151,970,103 \\
$Y_N$ & $B_{2}$ & $(0,0,0,-2,-4,0)$ & $(3,4,8,2,4,2)$ & 146,296,967 \\
$Y_N$ & $C_{1}$ & $(1,0,0,0,0,1)$ & $(4,4,8,4,8,3)$ & 223,238,875 \\
$Y_N$ & $C_{2}$ & $(1,0,0,0,0,2)$ & $(4,4,8,4,8,4)$ & 292,582,832 \\
$Y_N$ & $F_{11}$ & $(-1,-2,-2,-2,-2,0)$ & $(2,2,6,2,6,2)$ & 150,042,476 \\
$Y_N$ & $F_{12}$ & $(-1,-2,-2,-4,-8,0)$ & $(2,2,6,0,0,2)$ & 138,696,204 \\
$Y_N$ & $F_{22}$ & $(-1,-4,-8,-4,-8,0)$ & $(2,0,0,0,0,2)$ & 138,687,916 \\
$Z_N$ & $R_{11}$ & $(0,1,1,1,1,0)$ & $(3,5,9,5,9,2)$ & 155,821,137 \\
$Z_N$ & $R_{12}$ & $(0,1,1,2,4,0)$ & $(3,5,9,6,12,2)$ & 161,494,273 \\
$Z_N$ & $R_{22}$ & $(0,2,4,2,4,0)$ & $(3,6,12,6,12,2)$ & 161,498,417 \\
$Z_N$ & $D_{111}$ & $(-2,-1,-1,-1,-1,-1)$ & $(1,3,7,3,7,1)$ & 82,624,738 \\
$Z_N$ & $D_{222}$ & $(-2,-2,-4,-2,-4,-2)$ & $(1,2,4,2,4,0)$ & 7,603,501 \\
$Z_N$ & $D_{122}$ & $(-2,-1,-1,-2,-4,-2)$ & $(1,3,7,2,4,0)$ & 7,607,645 \\\bottomrule
\end{longtable}
}

The three classes initially have sizes $5,7,6$.  Add two $X_N$ padding items and one $Z_N$ padding item, all of weight $461{,}684{,}752$, to obtain $m=7$.

\subsubsection{All intended compiler triples and their sums}

For each represented pair, the left, right, and fallback sums are as follows:
\begin{longtable}{@{}c l r@{}}
\caption{Target-sum audit for every intended compiler triple in the ``yes'' instance.}\label{lred:tab:yes-compiler-triples}\\
\toprule
Pair or edge & Triple & Sum\\
\midrule
\endfirsthead
\toprule
Pair or edge & Triple & Sum\\
\midrule
\endhead
$(1,1)$ & $(A_1,B_1,R_{11})$ & $153{,}893{,}511+151{,}970{,}103+155{,}821{,}137=461{,}684{,}751$\\
$(1,1,1)$ & $(L_{11},C_1,D_{111})$ & $155{,}821{,}138+223{,}238{,}875+82{,}624{,}738=461{,}684{,}751$\\
$(1,1)$ & $(L_{11},F_{11},R_{11})$ & $155{,}821{,}138+150{,}042{,}476+155{,}821{,}137=461{,}684{,}751$\\
$(1,2)$ & $(A_1,B_2,R_{12})$ & $153{,}893{,}511+146{,}296{,}967+161{,}494{,}273=461{,}684{,}751$\\
$(1,2,2)$ & $(L_{12},C_2,D_{122})$ & $161{,}494{,}274+292{,}582{,}832+7{,}607{,}645=461{,}684{,}751$\\
$(1,2)$ & $(L_{12},F_{12},R_{12})$ & $161{,}494{,}274+138{,}696{,}204+161{,}494{,}273=461{,}684{,}751$\\
$(2,2)$ & $(A_2,B_2,R_{22})$ & $153{,}889{,}367+146{,}296{,}967+161{,}498{,}417=461{,}684{,}751$\\
$(2,2,2)$ & $(L_{22},C_2,D_{222})$ & $161{,}498{,}418+292{,}582{,}832+7{,}603{,}501=461{,}684{,}751$\\
$(2,2)$ & $(L_{22},F_{22},R_{22})$ & $161{,}498{,}418+138{,}687{,}916+161{,}498{,}417=461{,}684{,}751$\\
\bottomrule
\end{longtable}

The source matching $M_Y$ completes pairs $(1,1)$ and $(2,2)$ and leaves pair $(1,2)$ in fallback mode.  The resulting numerical matching is
\begin{align*}
 \mathcal N_Y=\{&
 (A_1,B_1,R_{11}),
 (L_{11},C_1,D_{111}),\\
 & (A_2,B_2,R_{22}),
 (L_{22},C_2,D_{222}),\\
 & (L_{12},F_{12},R_{12})\}.
\end{align*}
The five triples are item-disjoint.  By the exact compiler identity,
\begin{equation}\label{lred:eq:yes-N-opt}
  \OPT_{\mathrm N}=r+k_Y^\star=3+2=5.
\end{equation}

\subsubsection{All one-live groups}

\Cref{lred:tab:yes-groups} displays all 21 separated groups.  The ``nominal live role'' records where the padded compiler item was placed.  A nominal item equal to $\barrier$ is still unusable, so groups 6, 7, and 21 actually have no live port.

\begin{longtable}{@{}r c c r r r@{}}
\caption{All separated groups for the ``yes'' instance.}\label{lred:tab:yes-groups}\\
\toprule
Group & Nominal live role & Compiler item & $a_i$ & $b_i$ & $c_i$\\
\midrule
\endfirsthead
\toprule
Group & Nominal live role & Compiler item & $a_i$ & $b_i$ & $c_i$\\
\midrule
\endhead
1 & $P_1$ & $A_{1}$ & 153,893,511 & 461,684,752 & 461,684,752 \\
2 & $P_1$ & $A_{2}$ & 153,889,367 & 461,684,752 & 461,684,752 \\
3 & $P_1$ & $L_{11}$ & 155,821,138 & 461,684,752 & 461,684,752 \\
4 & $P_1$ & $L_{12}$ & 161,494,274 & 461,684,752 & 461,684,752 \\
5 & $P_1$ & $L_{22}$ & 161,498,418 & 461,684,752 & 461,684,752 \\
6 & $P_1$ & $\barrier^X_{6}$ & 461,684,752 & 461,684,752 & 461,684,752 \\
7 & $P_1$ & $\barrier^X_{7}$ & 461,684,752 & 461,684,752 & 461,684,752 \\
8 & $P_2$ & $B_{1}$ & 461,684,752 & 151,970,103 & 461,684,752 \\
9 & $P_2$ & $B_{2}$ & 461,684,752 & 146,296,967 & 461,684,752 \\
10 & $P_2$ & $C_{1}$ & 461,684,752 & 223,238,875 & 461,684,752 \\
11 & $P_2$ & $C_{2}$ & 461,684,752 & 292,582,832 & 461,684,752 \\
12 & $P_2$ & $F_{11}$ & 461,684,752 & 150,042,476 & 461,684,752 \\
13 & $P_2$ & $F_{12}$ & 461,684,752 & 138,696,204 & 461,684,752 \\
14 & $P_2$ & $F_{22}$ & 461,684,752 & 138,687,916 & 461,684,752 \\
15 & $P_3$ & $R_{11}$ & 461,684,752 & 461,684,752 & 155,821,137 \\
16 & $P_3$ & $R_{12}$ & 461,684,752 & 461,684,752 & 161,494,273 \\
17 & $P_3$ & $R_{22}$ & 461,684,752 & 461,684,752 & 161,498,417 \\
18 & $P_3$ & $D_{111}$ & 461,684,752 & 461,684,752 & 82,624,738 \\
19 & $P_3$ & $D_{222}$ & 461,684,752 & 461,684,752 & 7,603,501 \\
20 & $P_3$ & $D_{122}$ & 461,684,752 & 461,684,752 & 7,607,645 \\
21 & $P_3$ & $\barrier^Z_{7}$ & 461,684,752 & 461,684,752 & 461,684,752 \\\bottomrule
\end{longtable}

This table, together with \cref{lred:tab:sym-role-data,lred:eq:example-sym-shifts,lred:eq:example-LS}, specifies every one of the $13\cdot21=273$ common symmetric item weights.  For example, group 1 has
\[
 (a_1,b_1,c_1)=(153{,}893{,}511,461{,}684{,}752,461{,}684{,}752).
\]
The complete item-level audit for that group appears in \cref{lred:tab:yes-group1-symweights}.

\begin{landscape}
\begin{table}[p]
\centering
\caption{All thirteen symmetric item weights in group 1 of the ``yes'' instance.}
\label{lred:tab:yes-group1-symweights}
\small
\setlength{\tabcolsep}{4pt}
\renewcommand{\arraystretch}{1.14}
\resizebox{\linewidth}{!}{%
\begin{tabular}{@{}c c c r@{}}
\toprule
Role & Signed vector $(\rho,\eta,hi,hi^2)$ & Shifted digits & Scalar weight\\
\midrule
$P_1$ & $(31,461{,}680{,}533,0,0)$ & $(31,2{,}462{,}325{,}559,63,1{,}323)$ & $5{,}582{,}207{,}051{,}525{,}377{,}124{,}449{,}864{,}312{,}837{,}211$\\
$P_2$ & $(8,1{,}385{,}054{,}256,0,0)$ & $(8,3{,}385{,}699{,}282,63,1{,}323)$ & $5{,}582{,}207{,}051{,}525{,}392{,}045{,}276{,}248{,}039{,}303{,}456$\\
$P_3$ & $(261,1{,}385{,}054{,}256,0,0)$ & $(261,3{,}385{,}699{,}282,63,1{,}323)$ & $5{,}582{,}207{,}051{,}525{,}392{,}045{,}276{,}248{,}039{,}303{,}709$\\
$Q_1$ & $(184,-1{,}385{,}054{,}256,2,2)$ & $(184,615{,}590{,}770,65,1{,}325)$ & $5{,}590{,}645{,}762{,}110{,}161{,}102{,}215{,}640{,}446{,}351{,}344$\\
$Q_2$ & $(85,2{,}308{,}427{,}976,-2,-2)$ & $(85,4{,}309{,}073{,}002,61,1{,}321)$ & $5{,}573{,}768{,}340{,}940{,}593{,}146{,}888{,}564{,}581{,}842{,}349$\\
$Q_3$ & $(177,2{,}000{,}636{,}735,-2,-2)$ & $(177,4{,}001{,}281{,}761,61,1{,}321)$ & $5{,}573{,}768{,}340{,}940{,}588{,}173{,}279{,}770{,}006{,}353{,}685$\\
$Q_4$ & $(115,-2{,}000{,}636{,}738,2,2)$ & $(115,8{,}288,65,1{,}325)$ & $5{,}590{,}645{,}762{,}110{,}151{,}154{,}998{,}051{,}295{,}373{,}763$\\
$Q_5$ & $(23,-1{,}692{,}845{,}497,2,2)$ & $(23,307{,}799{,}529,65,1{,}325)$ & $5{,}590{,}645{,}762{,}110{,}156{,}128{,}606{,}845{,}870{,}862{,}427$\\
$Q_6$ & $(16,1{,}692{,}845{,}494,-2,-2)$ & $(16,3{,}693{,}490{,}520,61,1{,}321)$ & $5{,}573{,}768{,}340{,}940{,}583{,}199{,}670{,}975{,}430{,}864{,}768$\\
$Q_7$ & $(27,-615{,}582{,}482,1,1)$ & $(27,1{,}385{,}062{,}544,64,1{,}324)$ & $5{,}586{,}426{,}406{,}817{,}766{,}626{,}528{,}355{,}091{,}849{,}819$\\
$Q_8$ & $(89,3{,}385{,}690{,}991,-3,-3)$ & $(89,5{,}386{,}336{,}017,60,1{,}320)$ & $5{,}569{,}548{,}985{,}648{,}203{,}644{,}810{,}073{,}802{,}829{,}741$\\
$Q_9$ & $(188,-307{,}791{,}241,1,1)$ & $(188,1{,}692{,}853{,}785,64,1{,}324)$ & $5{,}586{,}426{,}406{,}817{,}771{,}600{,}137{,}149{,}667{,}338{,}736$\\
$Q_{10}$ & $(96,0,1,1)$ & $(96,2{,}000{,}645{,}026,64,1{,}324)$ & $5{,}586{,}426{,}406{,}817{,}776{,}573{,}745{,}944{,}242{,}827{,}400$\\\bottomrule
\end{tabular}%
}
\end{table}
\end{landscape}

Direct addition gives $K_S$ for every local pattern.  For example,
\begin{align*}
 W_S(P_{1,1})+W_S(Q_{1,1})+W_S(Q_{2,1})
 &=K_S,\\
 W_S(Q_{1,1})+W_S(Q_{7,1})+W_S(Q_{8,1})
 &=K_S.
\end{align*}
The same exact sum is obtained for all nine patterns $F_1(1),\ldots,F_9(1)$; the vector identities in \cref{lred:tab:sym-role-data} prove the statement for every group.

\subsubsection{The complete symmetric matching and its value}

The five numerical triples in $\mathcal N_Y$ become the following five main hyperedges, where the group indices are read from \cref{lred:tab:yes-groups}:
\begin{align*}
 &(P_{1,1},P_{2,8},P_{3,15}) &&\text{for }(A_1,B_1,R_{11}),\\
 &(P_{1,3},P_{2,10},P_{3,18})&&\text{for }(L_{11},C_1,D_{111}),\\
 &(P_{1,2},P_{2,9},P_{3,17})&&\text{for }(A_2,B_2,R_{22}),\\
 &(P_{1,5},P_{2,11},P_{3,19})&&\text{for }(L_{22},C_2,D_{222}),\\
 &(P_{1,4},P_{2,13},P_{3,16})&&\text{for }(L_{12},F_{12},R_{12}).
\end{align*}
Each displayed main hyperedge has scalar sum exactly $K_S$.  For example,
\begin{align*}
 &W_S(P_{1,1})+W_S(P_{2,8})+W_S(P_{3,15})\\
 &\qquad=16{,}746{,}621{,}154{,}576{,}131{,}373{,}554{,}069{,}341{,}030{,}904.
\end{align*}

Add the canonical twelve fillers in every one of the 21 groups.  The degree-at-most-three hypergraph has
\[
  252+5=257
\]
hyperedges and lifts by recoloring to a legal symmetric matching.  The exact affine theorem gives
\begin{equation}\label{lred:eq:yes-S-opt}
  \OPT_{\mathrm{SN}}=36m+r+k_Y^\star=252+3+2=\boxed{257}.
\end{equation}

\paragraph{Decoder check.}
Here $\Gamma=36m+r=255$.  A symmetric solution of value $257$ decodes to at least $257-255=2$ source edges, so it is source-optimal.  A solution of value $256$ decodes to at least one source edge, and the source error is at most the one-unit destination error.  A solution of value $255$ is guaranteed only the empty source matching; both source and destination errors are two.  These three values illustrate the exact $\beta=1$ inequality.

\subsection{A source-to-output ``no'' instance}\label{lred:subsec:no-example}

Here ``no instance" means that the two-by-two source has no matching of size two.  It still has nonempty feasible matchings, so it exercises the optimization reduction rather than collapsing to the separately handled zero-optimum case.

\subsubsection{Source instance and exact obstruction}

Let
\[
 U=\{u_1,u_2\},\qquad V=\{v_1,v_2\},\qquad W=\{w_1,w_2\},
\]
and let
\begin{equation}\label{lred:eq:no-edges}
 E_N=\{e_{111},e_{122},e_{212}\}
 =\{(u_1,v_1,w_1),(u_1,v_2,w_2),(u_2,v_1,w_2)\}.
\end{equation}
Every pair of source edges intersects:
\begin{itemize}
\item $e_{111}$ and $e_{122}$ share $u_1$;
\item $e_{111}$ and $e_{212}$ share $v_1$;
\item $e_{122}$ and $e_{212}$ share $w_2$.
\end{itemize}
Thus no two source edges form a matching.  Since every single edge is feasible,
\begin{equation}\label{lred:eq:no-source-opt}
 k_N^\star=\OPT_{3\mathrm{DM}}(H_N)=\boxed{1}.
\end{equation}
The represented pair set is
\[
 \Pairset_N=\{(1,1),(1,2),(2,1)\},\qquad r=3.
\]
The instance is therefore a genuine ``no'' instance for a perfect matching of the two vertices in each class, while remaining nontrivial as a maximization instance.

\subsubsection{Every compiler item and scalar weight}

The common shifts and base are those in \cref{lred:subsec:example-common}.  The following table lists every compiler item.  As in the ``yes'' instance, the signed vector, shifted base-$37$ digits, and scalar weight together provide a complete arithmetic specification.

\begin{longtable}{@{}c c p{0.27\textwidth} p{0.27\textwidth} r@{}}
\caption{Every compiler item in the ``no'' instance.}\label{lred:tab:no-items}\\
\toprule
Class & Item & Signed vector & Shifted digits & Scalar weight\\
\midrule
\endfirsthead
\toprule
Class & Item & Signed vector & Shifted digits & Scalar weight\\
\midrule
\endhead
$X_N$ & $A_{1}$ & $(0,-1,-1,0,0,0)$ & $(3,3,7,4,8,2)$ & 153,893,511 \\
$X_N$ & $A_{2}$ & $(0,-2,-4,0,0,0)$ & $(3,2,4,4,8,2)$ & 153,889,367 \\
$X_N$ & $L_{11}$ & $(1,1,1,1,1,0)$ & $(4,5,9,5,9,2)$ & 155,821,138 \\
$X_N$ & $L_{12}$ & $(1,1,1,2,4,0)$ & $(4,5,9,6,12,2)$ & 161,494,274 \\
$X_N$ & $L_{21}$ & $(1,2,4,1,1,0)$ & $(4,6,12,5,9,2)$ & 155,825,282 \\
$Y_N$ & $B_{1}$ & $(0,0,0,-1,-1,0)$ & $(3,4,8,3,7,2)$ & 151,970,103 \\
$Y_N$ & $B_{2}$ & $(0,0,0,-2,-4,0)$ & $(3,4,8,2,4,2)$ & 146,296,967 \\
$Y_N$ & $C_{1}$ & $(1,0,0,0,0,1)$ & $(4,4,8,4,8,3)$ & 223,238,875 \\
$Y_N$ & $C_{2}$ & $(1,0,0,0,0,2)$ & $(4,4,8,4,8,4)$ & 292,582,832 \\
$Y_N$ & $F_{11}$ & $(-1,-2,-2,-2,-2,0)$ & $(2,2,6,2,6,2)$ & 150,042,476 \\
$Y_N$ & $F_{12}$ & $(-1,-2,-2,-4,-8,0)$ & $(2,2,6,0,0,2)$ & 138,696,204 \\
$Y_N$ & $F_{21}$ & $(-1,-4,-8,-2,-2,0)$ & $(2,0,0,2,6,2)$ & 150,034,188 \\
$Z_N$ & $R_{11}$ & $(0,1,1,1,1,0)$ & $(3,5,9,5,9,2)$ & 155,821,137 \\
$Z_N$ & $R_{12}$ & $(0,1,1,2,4,0)$ & $(3,5,9,6,12,2)$ & 161,494,273 \\
$Z_N$ & $R_{21}$ & $(0,2,4,1,1,0)$ & $(3,6,12,5,9,2)$ & 155,825,281 \\
$Z_N$ & $D_{111}$ & $(-2,-1,-1,-1,-1,-1)$ & $(1,3,7,3,7,1)$ & 82,624,738 \\
$Z_N$ & $D_{122}$ & $(-2,-1,-1,-2,-4,-2)$ & $(1,3,7,2,4,0)$ & 7,607,645 \\
$Z_N$ & $D_{212}$ & $(-2,-2,-4,-1,-1,-2)$ & $(1,2,4,3,7,0)$ & 13,276,637 \\
\bottomrule
\end{longtable}

All nine intended compiler triples have sum $T_N=461{,}684{,}751$:
\begin{align*}
 A_1+B_1+R_{11}&=T_N,
 &L_{11}+C_1+D_{111}&=T_N,
 &L_{11}+F_{11}+R_{11}&=T_N,\\
 A_1+B_2+R_{12}&=T_N,
 &L_{12}+C_2+D_{122}&=T_N,
 &L_{12}+F_{12}+R_{12}&=T_N,\\
 A_2+B_1+R_{21}&=T_N,
 &L_{21}+C_2+D_{212}&=T_N,
 &L_{21}+F_{21}+R_{21}&=T_N.
\end{align*}
For instance,
\begin{align*}
 155{,}821{,}138+223{,}238{,}875+82{,}624{,}738
 &=461{,}684{,}751,\\
 161{,}494{,}274+138{,}696{,}204+161{,}494{,}273
 &=461{,}684{,}751.
\end{align*}
The first calculation is a right triple for $e_{111}$; the second is the fallback for pair $(1,2)$.

An optimal compiler matching is
\begin{align}\label{lred:eq:no-N-matching}
 \mathcal N_N=\{&
 (A_1,B_1,R_{11}),
 (L_{11},C_1,D_{111}),\notag\\
 & (L_{12},F_{12},R_{12}),
 (L_{21},F_{21},R_{21})\}.
\end{align}
It completes pair $(1,1)$ and uses fallbacks for the other two represented pairs.  Hence it has four triples.  By \cref{lred:thm:compiler-identity},
\begin{equation}\label{lred:eq:no-N-opt}
 \OPT_{\mathrm N}=r+k_N^\star=3+1=\boxed{4}.
\end{equation}
A five-triple compiler matching would contain at least $5-r=2$ complete pairs and would decode to two disjoint source edges, contradicting \eqref{lred:eq:no-source-opt}.  Thus the numerical obstruction is not asserted merely by counting; it follows from the explicit complete-pair decoder.

\subsubsection{All one-live groups}

The compiler class sizes are again $5$, $7$, and $6$, so $m=7$, $n=21$, and the same barrier and symmetric encoding parameters apply.  Every separated group appears in \cref{lred:tab:no-groups}.

\begin{longtable}{@{}r c c r r r@{}}
\caption{All separated groups for the ``no'' instance.}\label{lred:tab:no-groups}\\
\toprule
Group & Nominal live role & Compiler item & $a_i$ & $b_i$ & $c_i$\\
\midrule
\endfirsthead
\toprule
Group & Nominal live role & Compiler item & $a_i$ & $b_i$ & $c_i$\\
\midrule
\endhead
1 & $P_1$ & $A_{1}$ & 153,893,511 & 461,684,752 & 461,684,752 \\
2 & $P_1$ & $A_{2}$ & 153,889,367 & 461,684,752 & 461,684,752 \\
3 & $P_1$ & $L_{11}$ & 155,821,138 & 461,684,752 & 461,684,752 \\
4 & $P_1$ & $L_{12}$ & 161,494,274 & 461,684,752 & 461,684,752 \\
5 & $P_1$ & $L_{21}$ & 155,825,282 & 461,684,752 & 461,684,752 \\
6 & $P_1$ & $\barrier^X_{6}$ & 461,684,752 & 461,684,752 & 461,684,752 \\
7 & $P_1$ & $\barrier^X_{7}$ & 461,684,752 & 461,684,752 & 461,684,752 \\
8 & $P_2$ & $B_{1}$ & 461,684,752 & 151,970,103 & 461,684,752 \\
9 & $P_2$ & $B_{2}$ & 461,684,752 & 146,296,967 & 461,684,752 \\
10 & $P_2$ & $C_{1}$ & 461,684,752 & 223,238,875 & 461,684,752 \\
11 & $P_2$ & $C_{2}$ & 461,684,752 & 292,582,832 & 461,684,752 \\
12 & $P_2$ & $F_{11}$ & 461,684,752 & 150,042,476 & 461,684,752 \\
13 & $P_2$ & $F_{12}$ & 461,684,752 & 138,696,204 & 461,684,752 \\
14 & $P_2$ & $F_{21}$ & 461,684,752 & 150,034,188 & 461,684,752 \\
15 & $P_3$ & $R_{11}$ & 461,684,752 & 461,684,752 & 155,821,137 \\
16 & $P_3$ & $R_{12}$ & 461,684,752 & 461,684,752 & 161,494,273 \\
17 & $P_3$ & $R_{21}$ & 461,684,752 & 461,684,752 & 155,825,281 \\
18 & $P_3$ & $D_{111}$ & 461,684,752 & 461,684,752 & 82,624,738 \\
19 & $P_3$ & $D_{122}$ & 461,684,752 & 461,684,752 & 7,607,645 \\
20 & $P_3$ & $D_{212}$ & 461,684,752 & 461,684,752 & 13,276,637 \\
21 & $P_3$ & $\barrier^Z_{7}$ & 461,684,752 & 461,684,752 & 461,684,752 \\
\bottomrule
\end{longtable}

The table and \cref{lred:tab:sym-role-data,lred:eq:example-sym-shifts,lred:eq:example-LS} determine all $273$ common symmetric weights.  For instance, group 1 is identical to group 1 in the ``yes'' instance and therefore has the thirteen weights in \cref{lred:tab:yes-group1-symweights}.  Group 5 differs only in its live $P_1$ value: substituting $a_5=155{,}825{,}282$ and $b_5=c_5=\barrier$ into the displayed role formulas determines its complete private block.

\subsubsection{An optimal symmetric solution and the impossibility of one more triple}

The four compiler triples in \eqref{lred:eq:no-N-matching} become the main hyperedges
\begin{align*}
 &(P_{1,1},P_{2,8},P_{3,15})&&\text{for }(A_1,B_1,R_{11}),\\
 &(P_{1,3},P_{2,10},P_{3,18})&&\text{for }(L_{11},C_1,D_{111}),\\
 &(P_{1,4},P_{2,13},P_{3,16})&&\text{for }(L_{12},F_{12},R_{12}),\\
 &(P_{1,5},P_{2,14},P_{3,17})&&\text{for }(L_{21},F_{21},R_{21}).
\end{align*}
Adding twelve canonical fillers in each of the 21 groups produces
\[
 252+4=256
\]
hyperedges, which lift by recoloring to a legal symmetric matching.  Therefore
\begin{equation}\label{lred:eq:no-S-lower}
 \OPT_{\mathrm{SN}}\ge256.
\end{equation}
The exact affine theorem gives the matching upper bound
\begin{equation}\label{lred:eq:no-S-opt}
 \OPT_{\mathrm{SN}}
 =36m+r+k_N^\star
 =252+3+1
 =\boxed{256}.
\end{equation}
There is also a direct decoder explanation.  If a symmetric solution had $257$ triples, then the combined decoder would return at least
\[
 257-(36m+r)=257-255=2
\]
source edges.  No such source matching exists.  Thus the missing $257$th destination triple is exactly the encoded source obstruction, not an artifact of the padding or filler scaffold.

\paragraph{Error ledger.}
Here $\Gamma=36m+r=255$ and the source optimum is one.  A destination solution of value $256$ decodes to at least one source edge and therefore has zero error in both problems.  A destination solution of value $255$ is guaranteed only the empty source matching: its destination error is one and its source error is one.  For every lower destination value, the source error remains at most one while the destination error is larger.  This is the numerical form of $\beta=1$.

\subsection{What the two examples demonstrate}\label{lred:subsec:example-lessons}

The examples use the same dimensions, shifts, bases, barrier value, and symmetric scaffold.  Only one source edge changes.  In the ``yes'' instance, the source edges $e_{111}$ and $e_{222}$ are disjoint, so two pairs can be completed and the symmetric optimum is $257$.  In the ``no'' instance, each pair of source edges conflicts, so only one pair can be completed and the symmetric optimum is $256$.

The fixed destination offset is identical in both cases:
\[
 \Gamma=36m+r=252+3=255.
\]
Thus the one-unit source optimum difference appears as an exactly one-unit destination optimum difference.  This is the intended intuition behind an affine optimization reduction.  The large numerical weights are enforcement devices; they do not distort the optimization gap.

\section{Independent audit of the proof}\label{lred:sec:audit}

This section separates logical proof obligations from arithmetic checks.  The theorem is proved symbolically in the preceding sections.  The finite calculations below were also recomputed independently to detect transcription errors.

\subsection{Audit of the compiler role and moment coordinates}\label{lred:sublred:sec:audit-compiler}

There are two $X_N$ role types, three $Y_N$ role types, and two $Z_N$ role types, hence exactly
\[
 2\cdot3\cdot2=12
\]
class-respecting role combinations.  \Cref{lred:tab:compiler-role-audit} lists all twelve.  Exactly three have zero role sum, and those are precisely the intended left, right, and fallback types.

For fallback triples, the equal-first-moment and equal-second-moment equations can also be written as
\begin{align*}
 i+i''&=2i',\\
 i^2+(i'')^2&=2(i')^2.
\end{align*}
Subtracting half the square of the first equation from the second gives
\[
 \frac{(i-i'')^2}{2}=0,
\]
so $i=i''$ and then $i=i'$.  This alternative calculation confirms the zero-variance argument in \cref{lred:lem:compiler-characterization}.  The same calculation applies to the $j$-indices.

\subsection{Audit of the symmetric role system}\label{lred:sublred:sec:audit-symroles}

The role coordinate has target $300$.  The following complement table gives, for each possible first role, the only unordered pair of remaining roles whose codes complete the target.  It is a compact independent certificate that the main pattern and nine filler patterns are the only role-code solutions.

\begin{longtable}{@{}c p{0.68\textwidth}@{}}
\caption{Role-code complement certificate for target $300$.}\label{lred:tab:role-complements}\\
\toprule
First role & Unordered completing pairs\\
\midrule
\endfirsthead
\toprule
First role & Unordered completing pairs\\
\midrule
\endhead
$P_1$ & $\{P_2,P_3\}$, $\{Q_1,Q_2\}$\\
$P_2$ & $\{P_1,P_3\}$, $\{Q_3,Q_4\}$\\
$P_3$ & $\{P_1,P_2\}$, $\{Q_5,Q_6\}$\\
$Q_1$ & $\{P_1,Q_2\}$, $\{Q_7,Q_8\}$\\
$Q_2$ & $\{P_1,Q_1\}$, $\{Q_7,Q_9\}$\\
$Q_3$ & $\{P_2,Q_4\}$, $\{Q_7,Q_{10}\}$\\
$Q_4$ & $\{P_2,Q_3\}$, $\{Q_8,Q_{10}\}$\\
$Q_5$ & $\{P_3,Q_6\}$, $\{Q_8,Q_9\}$\\
$Q_6$ & $\{P_3,Q_5\}$, $\{Q_9,Q_{10}\}$\\
$Q_7$ & $\{Q_1,Q_8\}$, $\{Q_2,Q_9\}$, $\{Q_3,Q_{10}\}$\\
$Q_8$ & $\{Q_1,Q_7\}$, $\{Q_4,Q_{10}\}$, $\{Q_5,Q_9\}$\\
$Q_9$ & $\{Q_2,Q_7\}$, $\{Q_5,Q_8\}$, $\{Q_6,Q_{10}\}$\\
$Q_{10}$ & $\{Q_3,Q_7\}$, $\{Q_4,Q_8\}$, $\{Q_6,Q_9\}$\\
\bottomrule
\end{longtable}

After duplicate descriptions are removed, the complement table yields exactly ten unordered role multisets: the main pattern and $F_1,\ldots,F_9$.  Since the 13 roles permit
\[
 \binom{13+3-1}{3}=\binom{15}{3}=455
\]
unordered triples with repetition, the certificate excludes the other $445$ role multisets.  The higher coordinates then enforce the numerical target and, for fillers, localization to one group.

\subsection{Audit of the local multiplier certificates}\label{lred:sublred:sec:audit-multipliers}

Let the left sides of the ten capacity inequalities in \eqref{lred:eq:capacity-system} be denoted $C_1,\ldots,C_{10}$.  The four multiplier rows in \cref{lred:tab:multiplier-certificates} have the following exact coefficient products.

\begin{table}[H]
\centering
\caption{Coefficient audit for the local filler certificates.}\label{lred:tab:multiplier-audit}
\small
\begin{tabular}{@{}c c c c@{}}
\toprule
Claim & Coefficients of $(x_1,\ldots,x_9)$ & RHS & Result\\
\midrule
$f_i\le12$ & $(1,1,1,1,1,1,1,1,1)$ & $12$ & exact\\
$f_i-x_1\le10$ & $(0,1,1,1,1,1,1,1,1)$ & $10$ & exact\\
$f_i-x_2\le10$ & $(1,0,1,1,1,1,1,1,1)$ & $10$ & exact\\
$f_i-x_3\le10$ & $(1,1,0,1,1,1,1,1,1)$ & $10$ & exact\\
\bottomrule
\end{tabular}
\end{table}

All multipliers are nonnegative, so every derived inequality is valid.  An exhaustive enumeration of nonnegative integer vectors satisfying \eqref{lred:eq:capacity-system} finds $5{,}149$ feasible vectors.  Among them, the maximum $f_i$ is $12$, and none violates any of
\[
 \pos{2-x_1}\le12-f_i,
 \qquad
 \pos{2-x_2}\le12-f_i,
 \qquad
 \pos{2-x_3}\le12-f_i.
\]
The enumeration is not needed for the proof; it is a finite independent check of the symbolic certificates.

\subsection{Audit of the affine and L-reduction constants}\label{lred:sublred:sec:audit-constants}

\begin{table}[H]
\centering
\caption{Constant ledger.}\label{lred:tab:constant-ledger}
\renewcommand{\arraystretch}{1.15}
\begin{tabular}{@{}p{0.28\textwidth}p{0.22\textwidth}p{0.42\textwidth}@{}}
\toprule
Quantity & Bound or identity & Reason\\
\midrule
Represented pairs & $r\le p$ & Every represented pair occurs in at least one edge\\
Compiler class size & $m\le3p$ & $|X_N|\le2p$, $|Y_N|\le3p$, $|Z_N|\le2p$\\
Separated groups & $n=3m$ & One first-, one second-, and one third-role group per padded index\\
Symmetric scaffold & $12n=36m$ & Twelve canonical fillers per separated group\\
Source size & $p\le7k^\star$ & Maximal matching in degree at most three\\
Destination optimum & $36m+r+k^\star$ & Two exact affine identities\\
First L-bound & $\le108p+p+k^\star$ & Substitute $m\le3p$ and $r\le p$\\
Final first L-bound & $\le764k^\star$ & $109p+k^\star\le109(7k^\star)+k^\star$\\
Second L-bound & $k^\star-k_Y\le\OPT_{\mathrm{SN}}-v$ & Offset $\Gamma=36m+r$ cancels\\
\bottomrule
\end{tabular}
\end{table}

\paragraph{How to read \cref{lred:tab:constant-ledger}.}
Read from top to bottom as a substitution chain.  The first four rows bound the sizes of the compiler and the symmetric scaffold; the source-size row converts the edge count into a multiple of the source optimum; the next rows substitute those bounds into the destination optimum.  The value $764$ is therefore a conservative transparent constant obtained from earlier inequalities, not an optimized feature of the construction.

The arithmetic in the last line of the first-bound calculation is
\[
 109\cdot7+1=763+1=764.
\]
No hidden instance-dependent coefficient occurs.  The constant is intentionally conservative: it is chosen for a transparent proof, not for optimization.

\subsection{Audit of polynomial and unary size}\label{lred:sublred:sec:audit-size}

The size ledger is as follows.
\begin{align*}
 \text{compiler coordinate magnitude}&=O(p^2),\\
 L_N&=O(p^2),\\
 \max\{W_N,T_N\}&=O(p^{12}),\\
 n&=O(p),\\
 \text{symmetric coordinate magnitude}&=O(p^{12}),\\
 L_S&=O(p^{12}),\\
 \max\{W_S,K_S\}&=O(p^{48}),\\
 |S|&=13n=O(p).
\end{align*}
Thus even a literal unary listing has length $O(p^{49})$, up to a constant factor.  This loose exponent is polynomial and is all the strong/unary result requires.  More economical bases would improve the exponent but not the theorem.

\subsection{Algorithmic audit of the decoder}\label{lred:sublred:sec:audit-decoder}

For clarity, the solution map $g$ can be implemented by the following finite sequence.
\begin{enumerate}[label=\arabic*.,leftmargin=2.2em]
\item Forget the three outer symmetric copy labels and classify every selected hyperedge as main or filler using the role table.
\item Compute value multiplicities of selected main incidences in each port role.
\item While some role-value pair exceeds its separated-source multiplicity, delete one main hyperedge containing that excess pair.
\item Assign the surviving equal-valued incidences injectively to labeled padded compiler items.
\item Classify each surviving numerical triple as left, right, or fallback using \cref{lred:lem:compiler-characterization}.
\item For every pair having both a left and a right triple, output the edge named by its right triple.
\end{enumerate}
Sorting or hashing the weights and pair indices implements every step in polynomial time.  The output edges are pairwise disjoint by the item-disjointness of the left and right triples.  The two quantitative guarantees used by the proof are
\[
 s\ge\max\{0,v-36m\}
 \quad\text{and}\quad
 k_Y\ge\max\{0,s-r\},
\]
which compose to \eqref{lred:eq:combined-decoder}.

\section{Lessons learned}\label{lred:sec:lessons}

The first lesson is that perfect-completeness stability and L-reduction stability are different requirements.  A bound measured from the theoretical maximum $13n$ may prove a constant gap and exclude a PTAS, yet still fail to show that an optimal destination solution decodes optimally.  The exact offset identity is the correct target when seeking an L-reduction.

The second lesson is that the obstruction in the original grouped construction is local but genuine.  A block with several useful ports may use one filler deficit to create excess capacity in more than one source role.  Strengthening the global counting argument cannot repair that double charge.  One-live-port separation changes the input arrangement so that the local certificate is charged only once.

The third lesson is that the numerical compiler and the symmetric gadget should be audited as independent modules.  The six-coordinate compiler supplies the first exact offset $r$; the 13-role gadget supplies the second exact offset $36m$.  Their decoders compose because each has a solution-by-solution lower bound, and the two fixed offsets then cancel in the L-reduction error inequality.

\section{Conclusions}\label{lred:sec:conclusion}

Part~III establishes a standard L-reduction from degree-three Maximum Three-Dimensional Matching to unary Maximum Symmetric Numerical Three-Dimensional Matching.  The reduction uses an exact pair compiler, one-live-port separation, and a strengthened local filler certificate.  The resulting exact objective identity is
\[
 \OPT_{\mathrm{SN}}=36m+r+\OPT_{3\mathrm{DM}},
\]
and the explicit L-reduction constants are $\alpha=764$ and $\beta=1$.

The proof clarifies why the stricter reduction format was not an automatic consequence of the Part~II perfect-completeness gap.  The difficulty was not merely a loose constant.  It was the possibility that one defective local block could subsidize excess capacity at several useful ports.  Separating the source roles makes the useful group sets disjoint, after which the certificate $\pos{2-x_r}\le12-f_i$ produces one-for-one error transfer.

The reduction remains polynomial under unary encoding, and the maximal legal-triple algorithm places the destination problem in APX.  Therefore \MaxSNthreeDM is APX-hard under a standard L-reduction, is APX-complete under that convention, and has no PTAS unless $\mathrm P=\mathrm{NP}$.  The last consequence uses the fact that the bounded matching source has no PTAS; it does not follow from the definition of an L-reduction alone.  The two worked examples show concretely that one unit of source optimum changes the destination optimum by exactly one unit while the large numerical scaffold remains fixed.

Taken together, the three parts trace three increasingly demanding robustness questions under numerical symmetry.  Part~I preserves exact feasibility.  Part~II preserves a constant relative gap by proving that a small destination defect can damage only a linearly bounded number of decoded source triples.  Part~III preserves error from the actual optimum, solution by solution, after one-live-port separation.  The principal tutorial lesson is that stronger approximation conclusions require stronger reverse invariants, not merely larger numerical encodings.  Part~II remains the result to cite for the direct endpoint-gap theorem, while Part~III is the result to cite for standard APX-hardness and actual-optimum error transfer.

\clearpage

\newpage
\appendix

\section{Glossary}
\small
\begin{description}[
  style=nextline,
  leftmargin=!,
  labelwidth=\widthof{\bfseries Perfect-completeness gap},
  labelsep=0.5em,
  itemsep=0.7em
]
  \item[Actual and dummy items]
  Items in the first numerical compiler of Part~II that distinguish the selected incidence of a bounded-\ThreeDM vertex from its remaining incidences.  An actual target quadruple records a selected source edge; dummy quadruples dispose of unselected edge incidences.

  \item[APX]
  The class of optimization problems admitting a polynomial-time constant-factor approximation algorithm.  \MaxSNthreeDM belongs to \APX because any maximal legal triple matching is a $3$-approximation.

  \item[Defect]
  The difference between a natural perfect upper bound and the value of a partial solution.  In Part~II, a symmetric solution of value $13n-d$ has defect $d$ and exactly $3d$ missing incidences after labels are forgotten.

  \item[Degree budget]
  The number of incidences available to an underlying occurrence in the intermediate triple system.  The exact construction has degree budget three at every port and private item.

  \item[Filler edge]
  A permitted triple used to satisfy local incidence requirements rather than to encode a selected source triple.  Filler edges consume two incidences of every port and all incidences of every private item in a perfect solution.

  \item[Filler multiplicity]
  The number of hyperedge occurrences of one filler pattern.  The private-item degree equations force these values exactly in Part~I and bound their deviations in Part~II.

  \item[Incidence]
  A membership relation between an item occurrence and a triple.  The incidence ledger records how often each port or private item is used.

  \item[L-reduction]
  A polynomial-time pair of instance and solution maps satisfying one inequality that bounds the destination optimum by a constant multiple of the source optimum and a second inequality that transfers solution error from the destination back to the source.  Part~III proves these inequalities with $\alpha=764$ and $\beta=1$.

  \item[Main edge]
  A permitted triple of the three port roles.  In Part~I it represents one source N3DM triple; in Part~II the surviving main edges are repaired and decoded into a partial source matching; in Part~III it participates in the exact affine decoder after one-live separation.

  \item[Max-SN3DM]
  The maximum-cardinality version of \SNthreeDM.  A feasible solution is any family of pairwise item-disjoint class-transversal triples having the common target sum.

  \item[Occurrence]
  A separately usable item even when another item has the same numerical value.  Reverse arguments must respect multiplicities of occurrences, not merely the set of distinct weights.

  \item[One-live-port separation]
  The Part~III preprocessing that places each useful numerical item in a group where the other two port values are barriers.  Consequently each group can contribute useful capacity in at most one source role, so one missing filler triple cannot be charged several times.

  \item[Pair compiler]
  The exact Part~III transformation that associates a local gadget with each represented $(u_i,v_j)$ pair.  A fallback contributes one numerical triple, whereas a complete pair contributes two and identifies one source edge, giving the affine identity $\OPT_{\mathrm N}=r+\OPT_{3\mathrm{DM}}$.

  \item[Perfect-completeness gap]
  An NP-hard promise distinction between instances attaining a natural upper bound $N(I)$ and instances whose optimum is at most $(1-\varepsilon)N(I)$ for a fixed constant $\varepsilon>0$.

  \item[Port]
  An occurrence representing one possible use of a source item.  Each port has a fixed role type and, in the perfect construction, one incidence not consumed by filler edges.

  \item[Private item]
  An auxiliary occurrence attached to one source group.  Private items close the local filler network and do not participate in main triples.

  \item[Recoloring]
  The assignment of output-class labels to the incidences of an unlabeled target-sum hypergraph by edge-coloring its bipartite incidence multigraph.  The colors determine which physical copy supplies each incidence.

  \item[Role pattern]
  A permitted combination of item types, determined by the role-code coordinate.  The fixed catalogue contains one main pattern and nine filler patterns.

  \item[Target identity]
  A coordinatewise equality showing that a permitted role pattern reaches the vector target.  Mixed-radix packing converts it into an ordinary integer equality.

  \item[Unary encoding]
  Representation in which a numerical value contributes length proportional to its magnitude.  Polynomial bounds on every constructed value are therefore required for unary hardness.
\end{description}
\normalsize

\section{Acknowledgments}
The authors thank the sponsors of their Named Professorship positions, the Orkand Corporation and the Berry Family Fund, respectively.  Generative AI systems were used for drafting assistance, editorial revision, and selected arithmetic or consistency checks.  The authors reviewed the complete manuscript and take responsibility for all definitions, proofs, calculations, and claims.

\section{Citing this work}\label{sec:cite}
This work should be cited as ``Symmetric Numerical Three-Dimensional Matching: Intractability and Inapproximability,'' working paper, Fisher College of Business, The Ohio State University, Columbus, Ohio, 2026.  A citation to the distributed edition should also identify \emph{Version \VersionNumber\ : \VersionDate}.



\end{document}